\documentclass[10pt]{report}
\usepackage[utf8]{inputenc}

\usepackage[T1]{fontenc}
\usepackage[brazil,english]{babel}
\usepackage{graphics,graphicx}
\usepackage{geometry}
\usepackage{amsmath}
\usepackage{amssymb}
\usepackage{amsthm}
\usepackage{multicol} 
\usepackage[all]{xy} 
\usepackage{color} 
\usepackage[pdftex]{hyperref}
\usepackage[dvipsnames]{xcolor}
\usepackage{float}
\usepackage{faktor} 
\usepackage{xfrac} 
\usepackage{indentfirst}
\usepackage{graphicx}
\usepackage{mathtools}
\usepackage{comment}
\usepackage{pdfpages}
\usepackage{caption}
\usepackage{faktor}
\usepackage{geometry}
\usepackage{enumerate}
\usepackage[shortlabels]{enumitem}
\usepackage{wasysym}
\usepackage{tikz} 
\usepackage{dynkin-diagrams} 
\usepackage{epic} 
\usepackage{algorithm}
\usepackage{algorithmic}
\DeclareMathAlphabet{\mathpzc}{OT1}{pzc}{m}{it}

\usepackage{csquotes}
\usepackage[
    backend=biber,
    style=abnt-numeric,
    natbib=true,
    sorting=nyt,      
    sortcites=true    
]{biblatex}
\DeclareFieldFormat{labelnumberwidth}{[#1]}
\usepackage{ragged2e}
 \usepackage{quiver}
\usepackage{longtable}
\usepackage{titlesec}
\usepackage{setspace}
\usepackage{tocloft}

\makeatletter
\newcommand*\lofnumberline[1]{Figure~#1 -- }
\renewcommand*\l@figure[2]{%
  \begingroup
  \let\numberline\lofnumberline
  \@dottedtocline{1}{0em}{0em}{#1}{#2}%
  \endgroup
}
\makeatother

\usepackage[most]{tcolorbox}
\newtcolorbox[auto counter,number within=chapter]{codebox}[2][]{enhanced,  breakable,  colback=gray!8,  colframe=gray!20,  fonttitle=\bfseries, coltitle=black,  title=Box~\thetcbcounter: #2,  #1}

\newtheorem{teo}{Theorem}[chapter]
\newtheorem{defi}[teo]{Definition}
\newtheorem{prop}[teo]{Proposition}

\newtheorem{Ex}[teo]{Example}

\newtheorem{corollary}[teo]{Corolary}

\newcommand{\R}{\mathbb{R}}

\newcommand{\N}{\mathbb{N}}

\newcommand{\fim}{\begin{flushright} $\square$ \end{flushright}}

\usepackage{eso-pic}

\titleformat{\chapter}[block]
{\raggedright\bfseries\huge}
{\thechapter\space}          {0pt}                   
{}         

\makeatletter
\let\origps@plain\ps@plain
\newcommand{\pretextual}{%
  \pagestyle{empty}%
  \let\ps@plain\ps@empty%
}
\newcommand{\textual}{%
  \cleardoublepage%
  \let\ps@plain\origps@plain%
  \pagestyle{plain}%
  \setcounter{page}{14}%
}
\makeatother

\DeclareCaptionLabelFormat{dash}{#1~#2~--~}
\begin{document}
\pretextual

\vspace*{0.8cm}
\begin{center}
\large{\textsc{\textbf{Leticia Becher}}}

\vspace{2cm}  

\begin{minipage}[t]{\textwidth}
\begin{center}
\huge{\textbf{A Data-Constrained Framework for Marine Biogeochemistry Modeling with Applications to the Paranaguá Estuarine Complex}
}
\end{center}
\end{minipage}

\vspace{2cm}
\large{\textsc{\textbf{PhD Thesis}}}

\vspace{2cm}
\large{\textsc{\textbf{Department of Mathematics}}}

\large{\textsc{\textbf{Federal University of Paraná}}}

\vspace{2cm}
\includegraphics[width=0.4\linewidth]{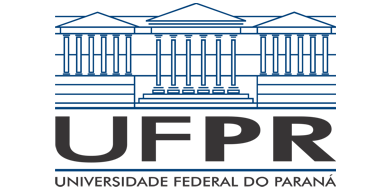}

\vspace{0.5cm}
\large{\textsc{January 2026}}
\end{center}

\newpage
\thispagestyle{empty}

\begin{flushleft}

Leticia Becher (Letícia Becher Yamashita)

\texttt{leticiabecher2017@gmail.com}

Department of Mathematics \\
Federal University of Paraná \\
Curitiba, Brazil

\vspace{1.5cm}
\begin{tabular}{p{3.5cm} p{11cm}}
\textbf{Thesis title:} & A Data-Constrained Framework for Marine Biogeochemistry Modeling with Applications to the Paranaguá Estuarine Complex \\\\
\textbf{Supervisor:} & Prof. Dr. Francisco de Melo Viríssimo \\
& LSE, United Kingdom and INCLINE-USP, Brazil \\\\
\textbf{Co-supervisors:} & Prof. Dr. Ademir Alves Ribeiro \\
& Federal University of Paraná \\\\
& Prof. Dr. Roberto Ribeiro Santos Júnior \\
& Federal University of Paraná \\\\
\textbf{Code repository:} & \url{https://github.com/leticiabecher/ThesisExperiments} \\\\
\textbf{Defense date:} & August 2025 \\\\
\textbf{Final version:} & January 2026
\end{tabular}

\end{flushleft}

\thispagestyle{empty}
\justify

\centerline{\Large\bf \textsc{Acknowledgements}}

\vspace{1.5cm}

\noindent I would like to thank my supervisors, the members of the thesis committee, and the coordinator of the Graduate Program in Mathematics, Professor Marcelo Muniz, for their support, availability, and confidence in my potential throughout the doctoral program.

\vspace{0.5cm}
\noindent In particular, I am especially grateful to Marines for her essential partnership in the conception of the initial form of this project, and to Francisco for his careful and constant supervision throughout the development of this work.

\vspace{0.5cm}
\noindent I am also deeply grateful to Rick, my husband, for his care, support, and encouragement throughout this journey.

\vspace{0.5cm}
\noindent I thank Dr. Sophy Oliver, Professor Iris Kriest, and PhD candidate Rafaela Farias do Nascimento for their support during the first iteration of this project.

\vspace{0.5cm}
\noindent I also thank Carlos Carvalho and the team of the Central Laboratory for High Performance Processing (LCPAD) at UFPR for their support with the use of the computing cluster.

\vfill
\noindent This work was carried out with the support of the Coordination for the Improvement of Higher Education Personnel – Brazil (CAPES) – Financing Code 001.

\vspace{0.5cm}
\noindent This work was partially funded by FINEP through the CT-INFRA/UFPR projects.

\newpage

\centerline{\Large\bf ABSTRACT}

\vspace{1cm}

\begin{center}
\begin{minipage}[t]{13.5cm} 
\justify
Over the last two decades, mathematical and computational models have become one of the primary tools used in the study of marine biogeochemical systems, enabling researchers to test hypotheses and to investigate scenarios of interest in the present, past, and future – previously impossible without dedicated oceanographic expeditions. These include the study of climate change, water quality prediction, nutrient flows, harmful algal bloom risk, and port and estuarine management, among others.
In general, developing models that reliably represent the environment of interest is a complex and computationally expensive task, which hinders the application of these models in specific environments, such as plumes and estuarine complexes – such as the Paranaguá Estuarine Complex (PEC) on the Paraná State coast, in Southern Brazil. Furthermore, biogeochemical models have several parameters that need to be calibrated for the region of interest, which are generally tuned empirically, based on heuristics and inferences from available data, due to the high computational cost involved in the simulations.
In this work, we propose advancing the state of the art in the development and calibration of marine biogeochemical models, within the Brazilian context, with three main contributions. First, we developed a conceptual model for the nutrient-phytoplankton dynamics of the PEC – one of the first models developed in Brazil for the region. The model is simple and computationally inexpensive, accessible to any home laptop. Next, the work proposes a systematic calibration approach for marine biogeochemical models using tracer datasets from the study regions and optimization. Finally, we present a practical application of this approach to the PEC model, where model calibration is performed using in-situ tracer data. The results show that the model, despite its simplicity, is capable of reproducing the observations provided it is properly calibrated, demonstrating the power of using optimization techniques – even in a conceptual model setting.
As future opportunities, we note that both the model and the approach are generalizable, enabling multi-parameter calibration, seasonal variation of parameters, and biochemical models coupled with higher-fidelity hydrodynamic models, with further automation potential via machine learning.


\vspace{10pt}
\noindent {\bf Keywords:} mathematical and computational modeling; marine biogeochemical modeling; estuarine dynamics; applied optimization; derivative-free optimization for least squares; parameter calibration in models; Paranaguá Estuarine Complex.

\vspace{10pt}
\noindent \textbf{Code repository:}  \url{https://github.com/leticiabecher/ThesisExperiments}

\end{minipage}
\end{center}

\cleardoublepage

\centerline{\Large\bfseries LIST OF FIGURES} 
\vspace{1cm}

\begingroup
\makeatletter
\@starttoc{lof} 
\makeatother
\endgroup

\cleardoublepage

\thispagestyle{empty}
\centerline{\Large\bf LIST OF ABBREVIATIONS, ACRONYMS AND SYMBOLS}

\vspace{1cm}
\renewcommand{\arraystretch}{1.2}
\begin{longtable}{l  p{0.7\textwidth}}
  $C_\mathrm{N}$ & Mean nitrate concentration in a given water volume (mmol m$^{-3}$). \\
  $C_\mathrm{PHY}$ & Mean phytoplankton concentration in a given water volume (mmol m$^{-3}$). \\
  $V_{\max}$ & Maximum specific growth rate of phytoplankton (d$^{-1}$). \\
  $\lambda$ & Linear mortality rate of phytoplankton (d$^{-1}$). \\
  $K$ & Monod half-saturation constant for nitrogen (mmol m$^{-3}$). \\
  $r$ & Nitrogen remineralization rate (d$^{-1}$). \\
  $I$ & Solar irradiance (kWh m$^{-3}$ d$^{-1}$). \\
  $T$ & Mean water temperature ($^o$C). \\
  $S$ & Mean water salinity (dimensionless). \\
  $\eta(C_\mathrm{N})$ & Limiting factor on phytoplankton growth due to nitrate availability, \\
                       & dimensionless, $0 \leq \eta(C_\mathrm{N}) \leq 1$. \\
  $\gamma(I)$          & Limiting factor on phytoplankton growth due to solar irradiance, \\
                       & dimensionless, $0 \leq \gamma(I) \leq 1$. \\
  $\alpha(T)$          & Limiting factor on phytoplankton growth due to water temperature, \\
                       & dimensionless, $0 \leq \alpha(T) \leq 1$. \\
  $\beta(S)$           & Limiting factor on phytoplankton growth due to water salinity, \\
                       & dimensionless, $0 \leq \beta(S) \leq 1$. \\
  $g(C_\mathrm{N},T,S,I)$ & Total limiting factor on phytoplankton growth, \\
                          & dimensionless, $0 \leq g(C_\mathrm{N},T,S,I) \leq 1$. \\
  $L_x$ & Horizontal extent of the estuarine box in the $x$-direction (m). \\
  $L_y$ & Horizontal extent of the estuarine box in the $y$-direction (m). \\
  $H$ & Mean total water depth in the model domain (m). \\
  $H_\text{up}$ & Mean depth of the upper (surface) layer in the two-box representation (m). \\
  $H_\text{low}$ & Mean depth of the lower (bottom) layer in the two-box representation (m). \\
  $S_\text{up}(t)$ & Salinity in the upper box as a function of time $t$ (dimensionless). \\
  $S_\text{low}(t)$ & Salinity in the lower box as a function of time $t$ (dimensionless). \\
  $Q_\text{ocean}(t)$ & Water volume flux exchanged with the adjacent ocean as a function of time $t$ (m$^{3}$ s$^{-1}$). \\
  $Q_\text{river}(t)$ & Freshwater inflow from rivers as a function of time $t$ (m$^{3}$ s$^{-1}$). \\
  $Q_\text{ebm}(t)$ & Net volume flux associated with the estuarine box model (EBM) forcing as a function of time $t$ (m$^{3}$ s$^{-1}$). \\
  $\text{Vol}_\text{box}$ & Water volume of a model box (m$^{3}$). \\
  $T(t)$ & Upper-box temperature, in $^o$C, as a function of time $t$. \\
  $C_\mathrm{N}^\text{river}(t)$ & Average nitrogen concentration in the river inflows to the PEC, as a function of time $t$ (mmol m$^{-3}$). \\
  $\omega$ & A point in the parameter space $\R^n$. \\
  $n$ & Dimension of the parameter vector $\omega$. \\
  $p+1$ & Number of candidate parameters in the candidate pool. \\
  $\omega^k_j$ & Candidate parameter at iteration $k$ of DFO-LS, $j = 0,1,\dots,p$. \\
  $\mathbf{\Omega}^k$ & Candidate pool in the parameter space $\R^n$ at iteration $k$, \\
                      & $\mathbf{\Omega}^k = \lbrace \omega^k_0, \omega^k_1, \dots , \omega^k_p \rbrace$. \\
  $f_\text{Misfit}(\omega)$ & Misfit function minimized by DFO-LS, $f_\text{Misfit}: \R^n \rightarrow \R_+$. \\
  $r_\text{Misfit}(\omega)$ & Misfit residual vector for $f_\text{Misfit}(\omega)$,  $r_\text{Misfit}: \R^n \rightarrow \R^m$. \\
  $m$ & Dimension of the misfit residual vector $r_\text{Misfit}(\omega)$. \\
  PEC & Paranaguá Estuarine Complex. \\
  SMS & Source-minus-sink term describing net biogeochemical sources and sinks. \\
  NP & Nutrient-phytoplankton interactions. \\
  TMM & Transport Matrix Method. \\
  KEBM & Knudsen Estuarine Box Model. \\
  ZMT & Zone of Maximum Turbidity. \\
  ODE & Ordinary Differential Equation. \\
  PDE & Partial Differential Equation. \\
  DFO & Derivative-Free Optimization. \\
  DFO-LS & Derivative-Free Optimization for Least Squares. \\
  CMA-ES & Covariance Matrix Adaptation Evolution Strategy. \\
  MOPS & Model of Oceanic Pelagic Stoichiometry. \\
\end{longtable}


\newpage

\cleardoublepage
\thispagestyle{empty}
\centerline{\Large\bfseries CONTENTS}

\vspace{1cm}

\begingroup
\makeatletter
\@starttoc{toc} 
\makeatother
\endgroup

\textual
\chapter{Introduction}

This thesis concerns the development of conceptual mathematical models in marine biogeochemistry, and the practical use and implementation of state-of-the-art derivative-free optimization techniques to calibrate the parameters in the model. Specifically, the goal of the thesis is to develop a bespoke, data-constrained conceptual model for the nutrient-phytoplankton dynamics of the Paranagu\'a Estuarine Complex (PEC), which involves deriving governing equations for the biogeochemistry, and to use in-situ observations from phytoplankton and nitrate (as the nutrient) to calibrate the model using a systematic optimization framework.

Marine biogeochemistry is an interdisciplinary branch of science that investigates how biological, chemical, and geophysical processes regulate the transformation and transport of biological and chemical components in the ocean, such as planktonic organisms and carbon. Marine biogeochemical models, in turn, translate these interactions into mathematical equations, which can then be solved (usually with the help of a computer), enabling the simulation of scenarios, hypothesis testing, and informed decision-making \cite{sarmientogruber,carbonbiogeochem,modelling,Gruber2019}. These models range from simple "box" models, such as the ones presented in this thesis, to highly complex four-dimensional models representing in detail the marine biogeochemical cycles at global scales \cite{phytoBIB,kriestoschlies2010,Kriest2012,mops,principal,onesize,TMM2007,TMMaplication,khatiwala2008,Wilson2022,Francisco2022,Francisco2024,websiteMITgcm,CoupledModelHarris,Kwiatkowski2014}. 

The applications of marine biogeochemical models are also diverse and include estimating water quality, carbon and nutrient cycles, risk of harmful algal blooms, and port management \cite{modelling,Gruber2019,carbonbiogeochem,phytoBIB,Verri2020,Davidson1999,Canuel2016}. In particular, these models are also widely used to assess the potential effects of anthropogenic climate change and pollution on the ocean ecosystem, informing public policy and management \cite{Wilson2022,Francisco2022,Francisco2024,principal,onesize,mops,kriestoschlies2010}. The latter is particularly relevant to PEC, where the anthropogenic discharge of nutrients into PEC from the surrounding urban and port environments leads to the bloom of toxic algae, which impacts the local ecosystem and compromises the quality of water \cite{marines2023,Machado1997ParanaguaBay,Martins2010,Martins2015,Mizerkowski2012,Brandini1985,Procopiak}.

Indeed, the Paranaguá Estuarine Complex has long been recognized as a system highly sensitive to nutrient enrichment, mainly because rivers, urban effluents, port activities, and hydrodynamic modifications intensify the input and retention of nitrogen- and phosphorus-rich waters \cite{Machado1997ParanaguaBay,Marone2005,Mizerkowski2012,Martins2010,Martins2015}.  Several studies in the region have documented elevated nutrient concentrations, alterations in phytoplankton structure, and recurrent episodes of harmful algal blooms \cite{Brandini1985,Procopiak,Brandini2022}. However, most of this knowledge is based on in situ measurements collected at specific times and locations. While these observations are essential, they often provide a fragmented view of a dynamic, highly variable system strongly modulated by tides, circulation, and estuarine mixing. In this context, marine biogeochemical models allow researchers to test hypotheses and to explore the consequences of different management interventions before they occur in the real system \cite{modelling,Gruber2019,carbonbiogeochem,Verri2020}. This thesis is motivated precisely by the need to fill the gap between observational studies and predictive understanding through the development of a simple conceptual model, designed to represent the local biogeochemical dynamics of the PEC in a realistic, interpretable, and computationally efficient manner \cite{kriestoschlies2010,Kriest2012,simplencomplex}. Such a framework provides a basis for future scenario analyses and can ultimately support evidence-based environmental policies and estuarine management.

Through this approach, this thesis emphasizes data-fitting and optimized modeling, seeking reproducible procedures for calibrating parameters and quantifying uncertainties under settings relevant to PEC. Before proceeding, we will briefly expand on the two main elements of this thesis: biogeochemistry and modeling.

\section*{Biogeochemistry for mathematicians}

Before exploring the concept of a biogeochemical cycle, we define tracers and compartments, using examples that are likely familiar to most readers. Some of the most well-known chemical element cycles are the water cycle, the phosphate cycle, and the carbon cycle. In any biogeochemical cycle, we refer to the elements being tracked, in their various forms and locations, as \textit{tracers}. Thus, in the examples mentioned, the tracers could be identified as H$_2$O, PO$_4$, and CO$_2$, respectively. A biogeochemical cycle typically considers a set of tracers, some of which have biological origins. For example, phytoplankton populations, which are algae and therefore realize photosynthesis, can be tracked by the amount of Chlorophyll-A present on the surface of a marine region. In this case, we could define Chlorophyll-A as the tracer, although it would be equivalent to considering the phytoplankton itself as a tracer.

On the other hand, we will now discuss \textit{compartments}. Returning to the example of the water cycle, as it passes through different states of matter, we can affirm that water also travels through different compartments, which is the definition of \textit{cycling}. In its liquid state, it is present in bodies of water, such as rivers and the ocean. In its gaseous state, it is present in the form of clouds in the Earth's atmosphere. We could then consider the ocean as one large compartment, and the atmosphere as another large compartment. Thus, a possible model for the global water cycle would involve the transport of this tracer between these two compartments.

The complexity of a biogeochemical cycle increases even further when multiple tracers and compartments are taken into account. Not only are tracers transported between compartments (geophysical part), but they also interact with one another (biochemical part). 
Due to their complex nature, it is common for mathematical models representing marine biogeochemical cycles to be obtained by coupling a biochemical conceptual model, described in diagrammatic form, with a physical transport model that represents advection and diffusion processes, described as a Dirichlet boundary problem.
The resulting mathematical model is a set of ODEs (or PDEs, depending on the approach), which rely on a set of variables and parameters. More details on the process of mathematical modeling of a marine-biogeochemical cycle are presented in \textit{Chapter \ref{chapterModelref}}. Additionally, the values to be set in each category are not static and can vary depending on the purpose for which the model is being used. This will be explored in the optimization framework steps of validation and calibration, in \textit{Chapter \ref{chapterframework}}.

\section*{Mathematical modeling for biogeochemists}

In our context, modeling refers to the mathematical representation of the interactions between elements of a biogeochemical cycle. It's virtually impossible to represent everything that occurs in an environment, just as it's unrealistic to expect results without any precision errors. Thus, creating a mathematical model involves identifying the most relevant elements of a biogeochemical cycle and quantifying the primary interactions between them.

The biochemical part of quantifying interactions between elements is generally performed based on empirical estimates obtained through laboratory experiments, while the geophysical part usually derives from a computational mathematical model that uses input data, such as ocean current flows in a few locations, to produce accurate estimates about the current flow at each point in an entire region. Even so, it is also common to use theoretical hypotheses in modeling as a way of obtaining more information from less data.

The strategy of assuming hypotheses is particularly useful when data is scarce. In some cases, data scarcity occurs due to difficult access to sampling regions. However, it is also common for the motivation behind sampling data to be using it as input in a mathematical model. In this case, the development of a mathematical model can be carried out prior to data sampling, without this factor being an impediment. In this context, in \textit{Chapter \ref{chapterPEC}}, we present an application on the biogeochemical mathematical modeling of a marine region.

Once a biogeochemical cycle has been translated into a mathematical model, this model is rewritten in the form of a computer program and processed computationally. The way in which model and data are processed by the computer is not unique and can indeed be refined. \textit{Chapter \ref{chapterCalibratingPEC}} presents an application on how to apply the framework presented in \textit{Chapter \ref{chapterframework}} for refining a marine biogeochemical model using optimization and the data available.

In the following, we provide a concise overview of the thesis's context. Some technical terms may be unfamiliar to the reader, depending on their background in research. Nevertheless, more details about each of the presented concepts will be explored in the subsequent chapters.

\section*{Motivation, Objectives, and Contributions}

Biogeochemical modeling spans from conceptual biochemical models, aimed at addressing mechanistic
questions and hypothesis testing, to high-fidelity three-dimensional configurations in which conceptual
biochemistry is coupled to complex hydrodynamics.
Typical marine biogeochemical models include dozens of parameters. Additionally, the high dimensionality of the parameter space induces correlations and non-identifiability, as different parameter combinations can fit observations equally well (equifinality), as documented across simple and 3-D calibrated models \cite{kriestoschlies2010,Kriest2012, principal, onesize}. 
Many parameters are not directly observable at the rates and scales required for operational applications. Time series may be short, sparse, and heterogeneous; laboratory estimates do not always translate to natural settings. 

Particularly, in coastal and estuarine environments, strong spatial–temporal variability and anthropogenic forcings necessitate careful scoping and process selection to ensure the model remains parsimonious and computationally tractable, producing reliable and reproducible estimates \cite{Marone2005,Martins2010}.

Due to the level of complexity in determining parameter values, it has historically been common for parameters to be inferred theoretically, based on information obtained from previous studies. However, recently an approach has been presented that estimates a set of parameters indirectly through data fitting \cite{Kriest2012}.

While manual trial-and-error tuning remains common, it is hard to reproduce, is dependent on the modeler's experience, and tends to explore only a small fraction of the parameter space - the latter being a crucial issue in the context of climate modeling \cite{Francisco2025}. On the other hand, systematic calibration specifies an objective function (model–data misfit), defines search rules (global and local), and documents choices (weights, normalizations, constraints), enabling comparability across scenarios and methodological review \cite{Kriest2012, DFOGN, DFOLSmops2022}.

When dealing with complex models, a computational approach for calibrating parameters can be costly and may exhibit numerical noise, as well as unreliable or unavailable gradients. Beyond computational expense, multiple local minima and process–resolution interactions affect fitting. These realities motivate cost-reduction strategies (for example, transport-matrix methods and accelerated spin-up) and derivative-free optimization tailored to least-squares structure \cite{TMM2007,khatiwala2008, TMMaplication, DFOGN, DFOLSmops2022}.
In this context, studying parameter calibration in a simple model provides a controlled laboratory environment for addressing targeted questions, generating mechanistic insights, and facilitating rapid experimentation at a low cost. They enable broad parameter-space exploration, structural testing, and intuition-building that can inform more complex set-ups \cite{Verri2020, Lalli1997}. In estuaries and ports -- where interest often centers on specific processes and seasonal/anthropogenic responses -- such models are particularly useful \cite{Marone2005, Martins2010, Zem2008}.

At lower computational cost, we combine heuristic global search (to locate promising initial points) with local derivative-free refinement, such as DFO-LS, a Gauss–Newton-type method for nonlinear least squares minimization \cite{DFOGN, DFOLSoriginal, DFOLSmops2022}. This arrangement facilitates studies of identifiability, noise robustness, and objective-function design, and supports post-fit sensitivity analysis \cite{Kriest2012}. In this sense, the guiding questions are: (i) Which combination of objective function, search strategy, and constraints yields reproducible and interpretable fits? (ii) Which parameters (or parameter groups) are identifiable with available data, and under what conditions? (iii) How do observational uncertainties and normalization/weighting choices affect fits and process inference? This focus aligns with prior PEC studies on circulation and biogeochemistry \cite{Marone2005, Mizerkowski2012, Souza2015, Martins2010}.

The central purpose of this thesis is to connect enthusiasts from different research areas, such as mathematics and oceanography, by providing a step-by-step guide on how to develop a systematic, data-informed parameter-calibration methodology and apply it to build a conceptual, data-constrained model of marine biogeochemical dynamics. 

The scope centers on (a) conceptual models applied to PEC; (b) multi-parameter calibration using tracer data; (c) derivative-free strategies suitable for potentially noisy objectives; and (d) cross-validation and sensitivity protocols.

The outcome is an interdisciplinary and practice-oriented study that bridges different scientific languages, making the content approachable to readers from multiple fields and pointing to several possible extensions of this work.

\newpage
\section*{Structure of this Thesis}
 This thesis is organized to move from modeling basics to optimization methods and, finally, to the application of these principles to develop data-constrained model for the PEC.

\vspace{0.5cm}
\textbf{Chapter \ref{chapterModelref}} reviews marine biogeochemical modeling at a conceptual level with illustrative experiments.

\vspace{0.5cm}
\textbf{Chapter \ref{chapterPEC}} develops a conceptual nutrient-phytoplankton model tailored to the PEC. It defines the domain, derives biogeochemical equations and simple circulation components, couples them, and implements the model numerically; results and limitations are discussed.

\vspace{0.5cm}
\textbf{Chapter \ref{chapterdfols}} introduces the optimization background used throughout the thesis, formulates parameter calibration as a (weighted) least-squares problem from tracer observations, and explores the theory and computational usage of a derivative-free strategy -- DFO-LS -- motivating its use when gradients are unreliable or simulations are noisy/expensive.

\vspace{0.5cm}
\textbf{Chapter \ref{chapterframework}} presents a systematic calibration workflow for conceptual models. It details the construction and minimization of a quadratic regression model within a trust-region framework, and demonstrates the pipeline (problem set-up, heuristic search, and local refinement) through calibration tests and case studies, with results and diagnostics.  

\vspace{0.5cm}
\textbf{Chapter \ref{chapterCalibratingPEC}} applies the calibration framework to PEC in a data-constrained setting, including a twin benchmark, a fit to one-year nitrate and phytoplankton observations, and a scenario simulation with increased riverine nitrate load.

\vspace{0.5cm}
Lastly, we summarize the main contributions of this thesis and outline potential directions for future research.

\chapter{Modeling marine  biogeochemical cycles}\label{chapterModelref}

Biogeochemical mathematical models represent the processes that govern the cycling of biological and chemical elements in an environment. They are widely used to study environmental issues such as climate change and pollution \citep{modelling,sarmientogruber,carbonbiogeochem}, serving as a useful tool for gaining greater insights into the human impact on nature and, thus, facilitating the conscious planning of these interactions. Since the 1960s, the advent of computers has enabled the development of the first computational biogeochemical models. Given the limited knowledge and computing power of the time, these were relatively simple conceptual models that simulated only a few processes \citep{modelling}. Over time, both scientific understanding and computational capacity increased, enabling the development of more complex biogeochemical models. In this chapter, we will review the foundations of biogeochemical modeling, exemplified by simple conceptual models. This basic understanding is fundamental and generalizable for the study and development of more complex models, which may be of interest to the reader in the future \citep{kriestoschlies2010, Kriest2012}.

In the following section, we will introduce some key elements mentioned in the introduction that are essential for creating a conceptual biogeochemical model.

\section{Tracers and parameters}

For context, let us imagine a situation where a swimming pool presents a leak. A technician is called, and he pours an amount of colored dye into the pool. Tracking the path taken by the dye allows the technician to estimate the location of the leak. Now, imagine someone collects a bucket of water from this pool after the technician has added the dye. Analyzing the color of the collected water allows us to infer the average concentration of the dye present in the water at the time of collection, at least in the part of the pool where the water was collected.    

In the context of marine biogeochemical modeling, a tracer is a chemical or biological component that exhibits behavior similar to that of a dye added to a swimming pool. The tracer dynamics is modeled using a geophysical model, which typically considers advection and diffusion, in addition to the environmental geography. Furthermore, the concentration of a tracer in a given region can be quantified through various sampling or identification methods (see Box \ref{box:GatheringData}). In addition to the behavior described, tracers also interact with each other. Mathematically, a computational marine-biogeochemical model predicts or estimates tracer concentrations in given marine regions over time.

\setlength{\fboxsep}{8pt}   
\setlength{\fboxrule}{0.4pt}

\vspace{0.6cm}
\begin{codebox}[label=box:GatheringData]{\textbf{Gathering tracer data: \\ How are phytoplankton concentrations estimated in marine regions?}}

Chlorophyll-A is the most abundant photosynthetic pigment in phytoplankton and is useful as an indirect indicator of these microorganisms' biomass \cite{Parsons1984}. Different sampling and analysis techniques are applied to estimate its concentration in ocean waters:
\begin{itemize}
\item In-situ collections:
A CTD (Conductivity, Temperature, and Depth) system is an oceanographic instrument used to measure the electrical conductivity, temperature, and depth of seawater throughout the water column. A Niskin bottle is a PVC or acrylic cylinder with end caps that remain open during descent and are closed remotely (by a mechanical messenger or electrical system) at the desired depth, ensuring that other water layers do not contaminate a water sample. Traditionally, researchers use oceanographic cylinders (such as Niskin bottles) attached to CTD systems to collect water samples at different depths. In the laboratory, Chlorophyll-A is extracted with solvents (usually acetone or methanol) and quantified by fluorometry or spectrophotometry (which are precise methods, but require time and careful handling \cite{UNESCO1994}).
\item Fluorometers and optical sensors:
Submersible equipment allows real-time measurements of fluorescence associated with Chlorophyll-A. These sensors are widely used in vertical profiles and in continuous monitoring systems on board ships, buoys, or gliders, providing high-resolution spatial and temporal data \cite{UNESCO1994}.
\item Remote sensing:
Satellites such as MODIS and Sentinel-3 estimate surface Chlorophyll-A concentrations based on light reflectance in the ocean, allowing for mapping of large ocean areas. Although less accurate locally, this is a crucial tool for collecting data in areas of difficult access \cite{IOCCG2014}.
\end{itemize}
Each of these methods has its own advantages and limitations: while laboratory analyses are more accurate, sensors and satellites expand the range of observations. Together, they determine the baseline data on phytoplankton concentrations in different marine regions.
\end{codebox}

\vspace{0.5cm}
On the other hand, parameters are model features that usually cannot be measured directly. They are inferred values based on models and theoretical estimates that may be constant, depend on the tracer's features, or depend on interactions between tracers. For example, the phytoplankton growth rate is a parameter that varies among species and environmental conditions; however, in most models, it is represented by an average laboratory-derived estimate rather than a directly measurable field value.

A conceptual model is a diagram that represents the main components of the biogeochemical system and how they interact with each other \citep{ modelling, sarmientogruber, Gruber2019, phytoBIB}. Mathematically, it can be translated into a set of differential equations representing changes in tracers' concentrations depending on time and interactions between them \citep{modelling,sarmientogruber, Gruber2019}.

The complexity of a conceptual model implies a necessary level of complexity in the mathematical model obtained from it. In this sense, conceptual models that consider a simplified domain, with few processes and parameters, often can be translated to simple mathematical models, with low computational costs \citep{kriestoschlies2010, principal,onesize, simplencomplex, mops}. To illustrate a simple model, we present the following example.

\vspace{0.5cm}

\begin{Ex}[Conceptual model for nitrate-phytoplankton dynamics]\label{exonebox01}

Consider a well-mixed, homogeneous oceanic region represented by a single box for the entire water column. The model tracks two tracers over time: nitrate and phytoplankton. Phytoplankton grow by assimilating nitrate during the photosynthesis process. When phytoplankton die, a fraction of their organic nitrogen is remineralized to dissolved inorganic nitrate within the box. In this example, we assume that $70\%$ of dead phytoplankton is remineralized locally (parameter $r=0.7$), while the remaining $30\%$ is lost from the modeled region (for example, export/sinking). A conceptual model for this situation is presented in Figure \ref{figConceptualModel}.

\vspace{0.7cm}
\begin{figure}[htpb!]
\caption{Diagram representing a conceptual biochemical NP-model within an aquatic environment.} 
\begin{center}
\tikzset{every picture/.style={line width=0.75pt}} 

\begin{tikzpicture}[x=0.75pt,y=0.75pt,yscale=-1,xscale=1]

\draw  [fill={rgb, 255:red, 110; green, 190; blue, 14 }  ,fill opacity=1 ] (64,47.47) .. controls (64,40.03) and (70.03,34) .. (77.47,34) -- (161.63,34) .. controls (169.07,34) and (175.1,40.03) .. (175.1,47.47) -- (175.1,87.9) .. controls (175.1,95.34) and (169.07,101.37) .. (161.63,101.37) -- (77.47,101.37) .. controls (70.03,101.37) and (64,95.34) .. (64,87.9) -- cycle ;

\draw  [fill={rgb, 255:red, 117; green, 175; blue, 239 }  ,fill opacity=1 ] (300,47.47) .. controls (300,40.03) and (306.03,34) .. (313.47,34) -- (397.63,34) .. controls (405.07,34) and (411.1,40.03) .. (411.1,47.47) -- (411.1,87.9) .. controls (411.1,95.34) and (405.07,101.37) .. (397.63,101.37) -- (313.47,101.37) .. controls (306.03,101.37) and (300,95.34) .. (300,87.9) -- cycle ;

\draw  [fill={rgb, 255:red, 248; green, 231; blue, 28 }  ,fill opacity=1 ] (108.1,130.9) -- (115.61,130.9) -- (115.61,107.95) -- (130.63,107.95) -- (130.63,130.9) -- (138.14,130.9) -- (123.12,146.19) -- cycle ;
\draw  [fill={rgb, 255:red, 248; green, 231; blue, 28 }  ,fill opacity=1 ] (206.1,54.05) -- (206.1,61.56) -- (288.1,61.56) -- (288.1,76.58) -- (206.1,76.58) -- (206.1,84.09) -- (184,69.07) -- cycle ;
\draw   (69,160.52) .. controls (69,155.81) and (72.81,152) .. (77.52,152) -- (167.48,152) .. controls (172.19,152) and (176,155.81) .. (176,160.52) -- (176,186.08) .. controls (176,190.79) and (172.19,194.6) .. (167.48,194.6) -- (77.52,194.6) .. controls (72.81,194.6) and (69,190.79) .. (69,186.08) -- cycle ;
\draw  [fill={rgb, 255:red, 248; green, 231; blue, 28 }  ,fill opacity=1 ] (109.63,226.84) -- (117.14,226.84) -- (117.14,203.9) -- (132.16,203.9) -- (132.16,226.84) -- (139.67,226.84) -- (124.65,242.14) -- cycle ;
\draw  [fill={rgb, 255:red, 248; green, 231; blue, 28 }  ,fill opacity=1 ] (188,170) -- (350.3,170) -- (350.3,133) -- (341,133) -- (357,108) -- (373,133) -- (363.7,133) -- (363.7,183.4) -- (188,183.4) -- cycle ;

\draw (72,59) node [anchor=north west][inner sep=0.75pt]   [align=left] {Phytoplankton};
\draw (335,59) node [anchor=north west][inner sep=0.75pt]   [align=left] {Nitrate};
\draw (191,16) node [anchor=north west][inner sep=0.75pt]   [align=left] {\begin{minipage}[lt]{67.35pt}\setlength\topsep{0pt}
Phytoplankton
\begin{center}
growth
\end{center}

\end{minipage}};
\draw (10,108) node [anchor=north west][inner sep=0.75pt]   [align=left] {\begin{minipage}[lt]{67.35pt}\setlength\topsep{0pt}
\begin{center}
Phytoplankton\\mortality
\end{center}

\end{minipage}};
\draw (74,157) node [anchor=north west][inner sep=0.75pt]   [align=left] {\begin{minipage}[lt]{67.9pt}\setlength\topsep{0pt}
\begin{center}
Dissolved \\organic matter
\end{center}

\end{minipage}};
\draw (134.16,206.9) node [anchor=north west][inner sep=0.75pt]   [align=left] {\begin{minipage}[lt]{36.16pt}\setlength\topsep{0pt}
\begin{center}
Sinking
\end{center}

\end{minipage}};
\draw (220,189) node [anchor=north west][inner sep=0.75pt]   [align=left] {\begin{minipage}[lt]{78.67pt}\setlength\topsep{0pt}
\begin{center}
Remineralization
\end{center}

\end{minipage}};

\end{tikzpicture}    
\end{center}
\label{figConceptualModel}

\smallskip
The mean concentrations of the tracers in the environment are represented by colored boxes, namely nitrate and phytoplankton. The arrows represent interactions between model components that alter the concentration of each tracer. Arrows pointing toward a box indicate processes that increase the concentration of that component, while arrows leaving a box indicate processes that decrease it. Dissolved organic matter is not represented as a tracer but as a conceptual component that explains a process modeled implicitly. 

\smallskip
\textbf{Source:} the author (2026).
\end{figure}
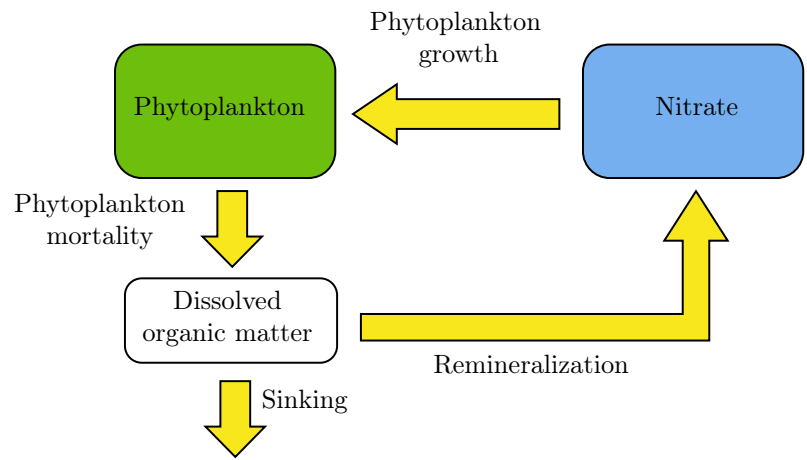
\end{Ex}

\newpage
In Example \ref{exonebox01}, the conceptual model presented is treated as a closed aquarium containing a set of interacting tracers, so the physical processes are dealt implicitly: all elements within the modeled domain are assumed to interact instantaneously. Due to this simplification, the geophysical framework can be omitted from the model representation, which can then be summarized as a diagram of biochemical interactions between tracers. The system represented in Example \ref{exonebox01} corresponds to a simple NP (Nutrient-Phytoplankton) model
(see Box~\ref{box:NPnNPZ} for a broader discussion of NP and NPZ models).

\vspace{0.5cm}
\begin{codebox}[label=box:NPnNPZ]{\textbf{NP and NPZ models: \\ Simplified Representations of Marine Ecosystem Dynamics}}

The NP (Nutrient-Phytoplankton) and NPZ (Nutrient-Phytoplankton-Zooplankton) models are among the most widely used simplified representations of marine biogeochemical interactions. Nutrients are chemical substances that living organisms need to grow and survive. In the ocean, nutrients feed phytoplankton - microscopic algae that float near the surface and use sunlight to produce energy through photosynthesis. Although tiny, phytoplankton are the base of the marine food web and play a crucial role in producing oxygen and absorbing CO$_2$ from the atmosphere.

An NP model is a system comprising two tracers: a main nutrient (N), which, for example, can be set as nitrate or phosphorus, depending on data availability, and phytoplankton (P), a simplified representation of an entire population of algae containing different species. Phytoplankton consume the nutrients in a process known as primary production, which depends on light, temperature, and the availability of nutrients. The loss of phytoplankton - through respiration, mortality, or aggregation - returns nutrients to the inorganic pool. This minimal structure captures the basic feedback between nutrient limitation and phytoplankton growth.

On the other hand, an NPZ model additionally considers zooplankton (Z) as a tracer, explicitly representing the grazing and predation of phytoplankton. Phytoplankton serve as food for zooplankton, whose metabolic losses (excretion, mortality, and sloppy feeding) regenerate nutrients and particulate organic matter. By considering zooplankton as a tracer, nonlinear trophic interactions, such as predator-prey cycles, are introduced.

These models strike a balance between simplicity and ecological realism. While an NP model helps explore nutrient limitation and productivity under equilibrium conditions, an NPZ model provides a more dynamic and detailed view of ecosystem regulation and energy transfer within the planktonic food web \cite{sarmientogruber}.
\end{codebox}

\vspace{0.5cm}
In the following section, we explore details about the mathematical representation obtained from the diagrammatic form of a conceptual model.

\section{Source Minus Sink equations}

In biogeochemical modeling, a \textit{source} refers to any process that increases the concentration of a tracer, whereas a \textit{sink} refers to any process that decreases it. For example, in the diagram depicted in Figure \ref{figConceptualModel}, each arrow entering the box representing nitrate tracer concentrations represents a source of this tracer, while each arrow exiting this box represents a sink. The source-minus-sink (SMS) equation of a tracer quantifies the aftermath concentrations of the tracer as the source and sink processes occur. Below, we discuss how to obtain SMS equations for the conceptual model presented in diagrammatic form on Example \ref{exonebox01}.

\vspace{0.5cm}
\begin{Ex}\label{exSMSmodel}
The conceptual model represented in Figure \ref{figConceptualModel} can be mathematically translated by the following ODE equations:
\begin{align}
\dfrac{d C_\mathrm{N}}{d t} = -V_{\text{max}} \cdot \dfrac{C_\mathrm{N}}{K + C_\mathrm{N}} \ C_\mathrm{PHY}   + r \cdot \lambda \cdot C_\mathrm{PHY} \label{eqN01}\\
\dfrac{d C_\mathrm{PHY}}{d t} = V_{\text{max}} \cdot \dfrac{C_\mathrm{N}}{K + C_\mathrm{N}} \ C_\mathrm{PHY} - \lambda \cdot C_\mathrm{PHY} \label{eqP01}    
\end{align}

In the system of equations  (\ref{eqN01})-(\ref{eqP01}), the nitrate concentration decreases due to phytoplankton uptake, $V_{\max}\tfrac{C_\mathrm{N}}{K+C_\mathrm{N}}\ C_\mathrm{PHY}$, and increases through remineralization of a fraction $r$ of phytoplankton mortality, that is, $r\ \lambda \ C_\mathrm{PHY}$. Phytoplankton increases via Monod-limited growth, $V_{\max}\dfrac{C_\mathrm{N}}{K+C_\mathrm{N}} \ C_\mathrm{PHY}$, and decreases due to mortality, $\lambda \ C_\mathrm{PHY}$. In these equations, we considered the following notations:

\vspace{0.3cm}
\textbf{Tracers:}
\begin{itemize}
\item $C_\mathrm{N}$: nitrate concentration (mmol m$^{-3}$);
\item $C_\mathrm{PHY}$: phytoplankton nitrogen concentration (mmol m$^{-3}$);
\end{itemize}

\vspace{0.2cm}
\textbf{Parameters:}
\begin{itemize}
\item $V_{\text{max}}$: maximum specific uptake/growth rate (day$^{-1}$);
\item $K$: half-saturation constant for nitrate uptake (mmol m$^{-3}$);
\item $\lambda$: phytoplankton mortality rate (day$^{-1}$);
\item $r$: remineralization fraction returning mortality to nitrate (dimensionless, $0\leq r \leq 1$).
\end{itemize}

\end{Ex}

\vspace{0.5cm}
The right-hand sides of Eqs.~\eqref{eqN01}-\eqref{eqP01} are written in the \emph{source-minus-sink} (SMS) form: each tracer’s net tendency equals the sum of processes that increase it (sources) minus the sum of processes that decrease it (sinks).

For the nitrate concentration, $C_{\mathrm{N}}$, we have
$$\frac{d C_{\mathrm{N}}}{dt}
=\underbrace{r \ \lambda \ C_{\mathrm{PHY}}}_{\text{source: remineralization}}
\;-\;
\underbrace{V_{\max} \ \frac{C_{\mathrm{N}}}{K+C_{\mathrm{N}}} \ C_{\mathrm{PHY}}}_{\text{sink: uptake by phytoplankton}} ~.$$
Thus, nitrate increases when remineralization ($r \ \lambda \ C_{\mathrm{PHY}}$) exceeds biological uptake, and decreases otherwise.

For the phytoplankton concentration, $C_{\mathrm{PHY}}$, we have
$$\frac{d C_{\mathrm{PHY}}}{dt}
=\underbrace{V_{\max} \ \frac{C_{\mathrm{N}}}{K+C_{\mathrm{N}}} \ C_{\mathrm{PHY}}}_{\text{source: Monod-limited growth}}
\;-\;
\underbrace{\lambda \ C_{\mathrm{PHY}}}_{\text{sink: mortality}} ~. $$
Hence, phytoplankton biomass increases when the specific growth rate $V_{\max}\tfrac{C_{\mathrm{N}}}{K+C_{\mathrm{N}}}$ exceeds the mortality rate $\lambda$; if mortality dominates, biomass declines. In both equations, each term has units of mmol m$^{-3}$ day$^{-1}$, so positive SMS implies tracer accumulation and negative SMS implies tracer loss.
We can obtain the prediction for the concentrations of the tracers from the model by solving a Cauchy problem, as illustrated in the following example.

\begin{Ex}\label{exonebox02}
Recalling the mathematical model presented in Example \ref{exSMSmodel}, we now attribute initial values to the tracer concentrations and treat it as a Cauchy problem: $C_\mathrm{N}(0)=80  \ \text{mmol \ m}^{-3}$, $C_\mathrm{PHY}(0)=50  \ \text{mmol \ m}^{-3}$.
If we additionally set the parameter values mentioned above, for example:
 $V_{\text{max}}= 1.4 \ \text{day}^{-1}, ~ K= 0.1 \ \text{mmol \ m}^{-3}; $ $r= 0.7 , \lambda= 0.05  \ \text{day}^{-1}$, the ODE system (\ref{eqN01})-(\ref{eqP01}) can then be solved numerically to simulate the evolution of the concentrations of nitrate and phytoplankton inside the box over time (Figure \ref{1caixaphyNO3grafico}). For more details on the computational experiments, see the \textit{Appendix}.


\begin{figure}[htpb!]
\caption{Evolution of nitrate (upper plot) and phytoplankton (lower plot) concentrations over time for Example~\ref{exonebox02}.}
\begin{center}
\includegraphics[scale = 0.8]{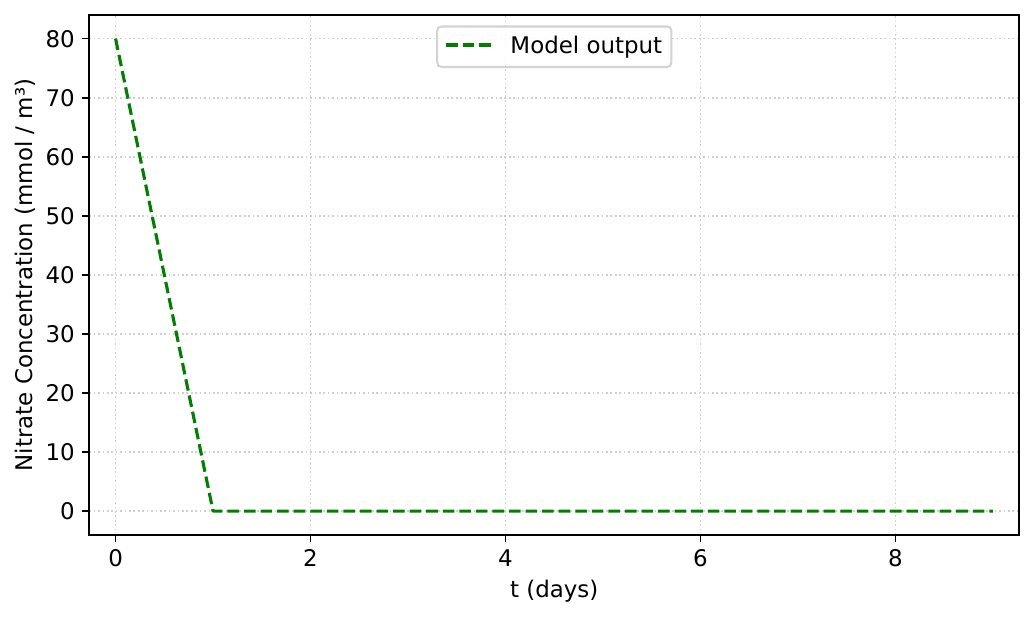}

\vspace{0.3cm}

\includegraphics[scale = 0.8]{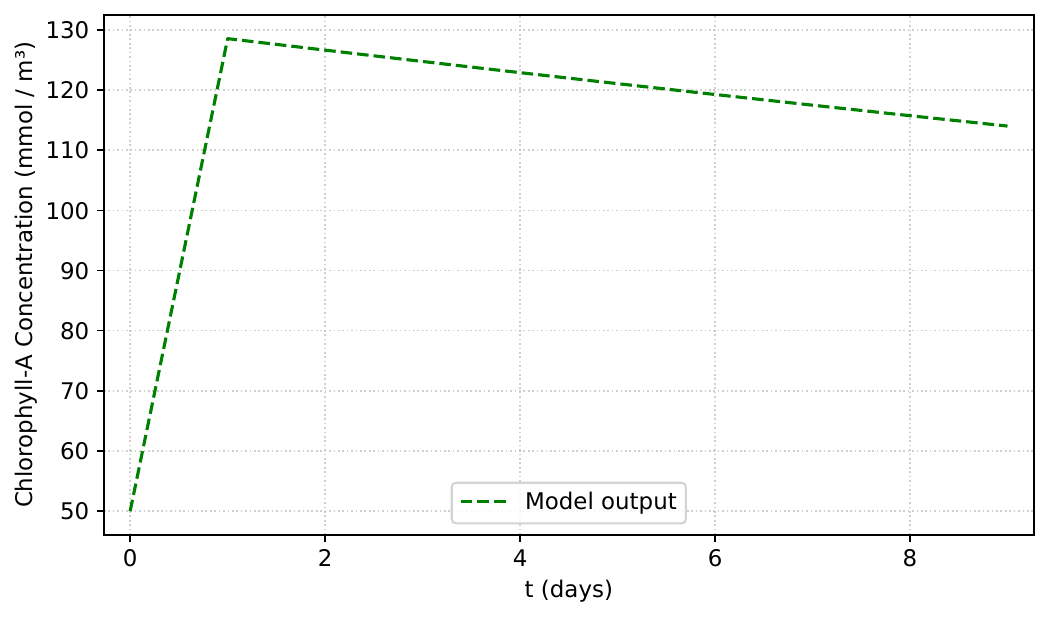}    
\end{center}
\label{1caixaphyNO3grafico}

\smallskip
\textbf{Source:} the author (2026).
\end{figure}    
\end{Ex}

\section{~Increasing ~model ~complexity: how to ~couple a ~biochemical ~conceptual ~model to a ~geophysical box ~model}

The level of complexity required for a biogeochemical model depends on the research question and the available data. While simple models are useful for exploring basic concepts and relationships, making predictions for relatively short time periods, and as a teaching tool, more complex models are necessary for simulating complex biogeochemical cycles, making long-term predictions, and studying the impacts of environmental change on ecosystem processes. A basic increase in the complexity of the model presented in Example \ref{exonebox01} involves adding physical processes while considering the domain divided into two boxes, rather than a single homogeneous box. The idea behind doing so is to achieve greater accuracy in describing the distinct features of different regions and how these features interact with one another.

A box model of a marine region is a representation obtained by either directly discretizing or conceptually depicting the region as a finite set of three-dimensional boxes, where each box is modeled as being uniformly mixed, meaning that properties such as temperature, salinity, and tracer concentration are assumed to be uniform throughout the box volume. These simplified assumptions are considered for the practical implementation of the model, where the exchange between boxes represents the circulation and mixing processes that occur in the real environment. In a biogeochemical box model, the tracers within each box have their own sources and sinks, which can arise from both biochemical interactions and geophysical circulation.

Following the classical derivation of the tracer conservation equation for a fixed control volume \cite[s.~19--23]{sarmientogruber}, we represent transport as the sum of advection and diffusion and group all in-box biogeochemical and external processes into a source–minus–sink (SMS) biochemical term. In a box-discretized aquatic model, advection corresponds to fluxes across box faces driven by the resolved velocity field, whereas diffusion represents down-gradient fluxes parameterized by an effective diffusivity. The SMS term accounts for local production and loss (for example, biological uptake, remineralization, external inputs), ensuring mass conservation at the box scale. For the discrete/matrix treatment of tracer transport in such models, see also \cite{TMM2007}.

In general, once we have an aquatic model where the space is discretized into boxes,  advection and diffusion are the two fundamental processes that describe how substances move and spread in the fluid contained in each box. Advection is the process by which a substance is transported by the flow of a fluid, as occurs when the current of a river carries materials along its course (Figure \ref{adveixox}). Advection is influenced by the speed and direction of fluid flow. Diffusion is the process by which a substance spreads from a region of high concentration to a region of low concentration, due to the random movement of particles (Figure \ref{adveixox}). Even in the absence of macroscopic fluid flow, the individual particles of the substance are always in thermal motion, resulting in their gradual dispersion. Diffusion is influenced by the concentration of the substance, the diffusion coefficient (which determines how quickly dispersion occurs), and the concentration gradient. In summary, advection refers to the transport of the substance by fluid flow, while diffusion refers to the dispersion of the substance due to the random motion of particles. The tracer conservation equation for the volume of fluid contained in a fixed box of the discretized domain has the form:
\begin{equation}\label{generaltransporteq}    
\dfrac{\partial C}{\partial t} = \dfrac{\partial C}{\partial t} \bigg\rvert_{advection}+\dfrac{\partial C}{\partial t} \bigg\rvert_{diffusion}+SMS(C) ~,
\end{equation}
where $C$ is the concentration of such tracer, the first two terms of the sum are the variation of $C$ due to advection and diffusion, and the last term is a source-minus-sink function of $C$ within the box.

\vspace{0.3cm}
\begin{Ex}\label{exampleadv}
    Consider the advection scheme represented in Figure \ref{adveixox}. When the boxes' dimensions are large enough, it is common to ignore the diffusion contributions. In this illustration, the region modeled, that is, the middle box, will have both a source and a sink contribution due to advective flux.
\end{Ex}

\begin{figure}[ht]
\caption{Illustrative representation of the advection and diffusion processes.}
\begin{center}
\input{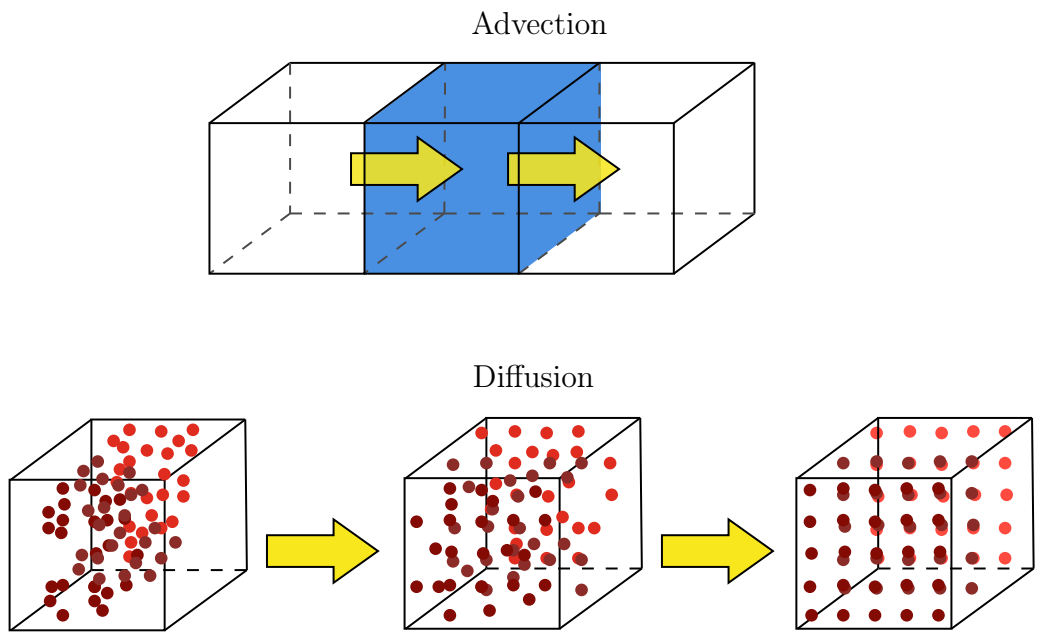}    
\end{center}     
\label{adveixox}

\smallskip
In the upper scheme, advective flux occurs in the positive x-direction; the blue region represents the modeled domain, while the arrows point in the flux direction. In the lower scheme, diffusion occurs over time; dots represent the molecules of a chemical tracer, while the arrows indicate their spatial arrangement when time passes.

\smallskip
\textbf{Source:} the author (2026).
\end{figure}

\vspace{0.2cm}
In a box model, while physical processes are incorporated within the sources and sinks of each tracer, the geographical features of the environment are captured through the boxes' configuration. For example, a complex coastal region can be represented by multiple interconnected boxes, each corresponding to different zones such as the open ocean, continental shelf, and estuarine areas. Or, when more computational processing is available, even small regions can be directly discretized into boxes.

A concrete example of a global biogeochemical box model that combines source-minus-sink formulations with physical transport is given by the Model of Oceanic Pelagic Stoichiometry (MOPS), briefly described in Box \ref{box:MOPS}.

\vspace{0.5cm}
\begin{codebox}[label=box:MOPS]{\textbf{MOPS: a global laboratory for marine biogeochemistry}}

The Model of Oceanic Pelagic Stoichiometry (MOPS) \cite{DFOLSmops2022, mops} is a marine biogeochemical model designed to represent the interactions between specific tracers in the global ocean. It represents the coupled carbon, nitrogen, phosphorus, and oxygen cycles within the ocean interior, combining physical transport with local biogeochemical transformations. \\
The dynamics of each tracer are governed by a system of ODEs representing the SMS equations of tracers, derived from both biochemical and physical processes. Each of these equations is written in the form:
$$
\dfrac{\partial C_i}{\partial t} =
\underbrace{-\ \nabla \cdot (uC_i) + \nabla \cdot (\varepsilon_i \nabla C_i)}_{\text{Transport via TMM}} +
\underbrace{\mathrm{SMS}_i(C_1, \ldots, C_\ell, x, y, z, t)}_{\text{Local biogeochemical processes}},
$$
where $ C_i $ is the concentration of tracer $ i $ (for example, nitrate, phosphate, oxygen, organic matter), $ u $ denotes the velocity field, $ \varepsilon_i $ the effective diffusivity, and $ \mathrm{SMS}_i $ the nonlinear source-minus-sink function representing local biological and chemical interactions.

In its reference configuration (MOPS-1.0) \cite{mops}, the model includes six adjustable parameters, along with several fixed ones, including maximum growth rates, half-saturation constants, and stoichiometric ratios. Once coupled to a general circulation model for the ocean dynamics, the nonlinearity and high dimensionality of the resulting discretized system make MOPS an ideal benchmark for automated calibration and optimization techniques, such as DFO-LS or CMA-ES, particularly under noisy or computationally expensive conditions.

MOPS is implemented using the use of the Transport Matrix Method (TMM) \cite{TMM2007}, which precomputes the global ocean’s advection-diffusion operator into a sparse matrix. This approach decouples the biogeochemical model from the full dynamical ocean circulation model, reducing computational cost by orders of magnitude and enabling large-scale sensitivity and parameter studies.

Overall, MOPS represents a computational laboratory for global marine biogeochemistry, combining physical realism, stoichiometric parsimony, and numerical efficiency, which is a cornerstone framework for exploring parameter calibration and large-scale carbon-nutrient dynamics. As such, it has been used in numerous applications and studies \cite{BURexperiment,keller2016,Niemeyer2019,Francisco2022, Francisco2024}.

\end{codebox}

\section{Developing a model in practice}
A realistic model can be simple or complex, depending on the goals of its representation. For example, although the values assumed for the initial concentrations and constants at Example \ref{exonebox01} were arbitrary, for this model to represent a real environment, that is, be a realistic model, it would need to be initialized with measured or estimated initial values of $C_\mathrm{N}$ and $C_\mathrm{PHY}$, and the model parameters $V{\text{max}}$, $K$, $R$, and $\lambda$ would need to be estimated based on data or literature values. This task is known as the \textbf{calibration} or \textbf{tuning} of the parameters of the model, and will be explained in detail in Chapters 3 and 4.

Developing a biogeochemical model tailored for real-world applications is generally a difficult and long process, which is complicated by the fact that the inclusion of processes in such models is not an objective process \cite{Martin2024}. In practice, such development involves several steps:
\begin{itemize}
    \item Define the system: The first step is to define the system being modeled. This includes specifying the spatial and temporal scales, the components and processes being modeled, and the specific goals of the model.

    \item Gather data: The next step is to gather data on the system being modeled. This data can include observations of environmental variables, measurements of chemical concentrations, and data on the behavior of individual processes.
    
    \item Develop conceptual model: Based on the data and knowledge of the system, a conceptual model is developed that describes the main processes and interactions between the components of the system.

    \item Formulate mathematical model: The conceptual model is then translated into a mathematical representation. This involves developing equations that describe the behavior of each process and the interactions between the components. The equations may be based on empirical relationships or more complex, physically based models.

    \item Validate the model: The model is then validated by comparing its outputs with observed data. This step is crucial for ensuring that the model accurately represents the system being modeled.

    \item Calibrate the model: After the model has been validated, it may be necessary to calibrate it to fine-tune its parameters. This involves adjusting the model's parameters to ensure that its predictions align as closely as possible with the observed data. As final step, some models might require a sensitivity analysis to ensure the model is well-behaved in the spatiotemporal domain and parameter region of interest. 

\end{itemize}

These steps are not always followed in a linear fashion and may be repeated several times as the model is refined and improved. The creation of a biogeochemical model is a complex and iterative process that requires a deep understanding of the system being modeled, as well as a strong background in mathematics and computer programming. Once the model has been validated and calibrated, it can be used to make predictions or to explore the system's behavior under various conditions. The model can also be used to evaluate the impacts of different management strategies or to gain insights into the underlying processes.

Next, we present the development of a conceptual model for the Paranaguá Estuary, based on the criteria outlined in this chapter.
\chapter{A conceptual model for the  Paranaguá Estuarine Complex}
\chaptermark{A conceptual model for the Paranaguá Estuarine Complex}
\label{chapterPEC}

In this chapter, our primary goal is to formulate a conceptual model for the dynamics between nutrients and phytoplankton in the Paranaguá Estuarine Complex (PEC). By doing so, we also provide a detailed illustration of the mathematical modeling applied to marine biogeochemistry. The initial version of the model presented in this chapter was developed in collaboration with Dr. Marines M. Wilhelm during the postgraduate course "Tópicos de Matemática II: Aspectos teóricos e matemáticos do ciclo do carbono no oceano" taught by Prof. Dr. Francisco de Melo Viríssimo, in 2021. That early formulation served as the foundation for the expanded and fully revised model presented here.


\section{The PEC environment and data availability}
We begin this section by defining the model domain. The PEC is located on the central-north coast of Paraná (25°00’S - 25°35’S; 48°15’W - 48°15’W), in southern Brazil, and its ecological and economic relevance has been reinforced by recent geochemical assessments of sedimentary organic matter \cite{marines2023}. Having an area of approximately 612km$^2$, the PEC is subdivided into two axes, the main axis being called `East-West', 56 km long, where the largest bulk port in Latin America is located, the Port of Paranaguá, and a `North-South' axis, about 30 km long, where there is interference from artisanal fishing and agriculture (Figure \ref{PEC}) \cite{Marone2005,Martins2010,Lamour2007,Martins2015}.

\begin{figure}[ht]
\caption{Model domain: Paranaguá Estuarine Complex.}
\begin{center}
\includegraphics[width=\textwidth]{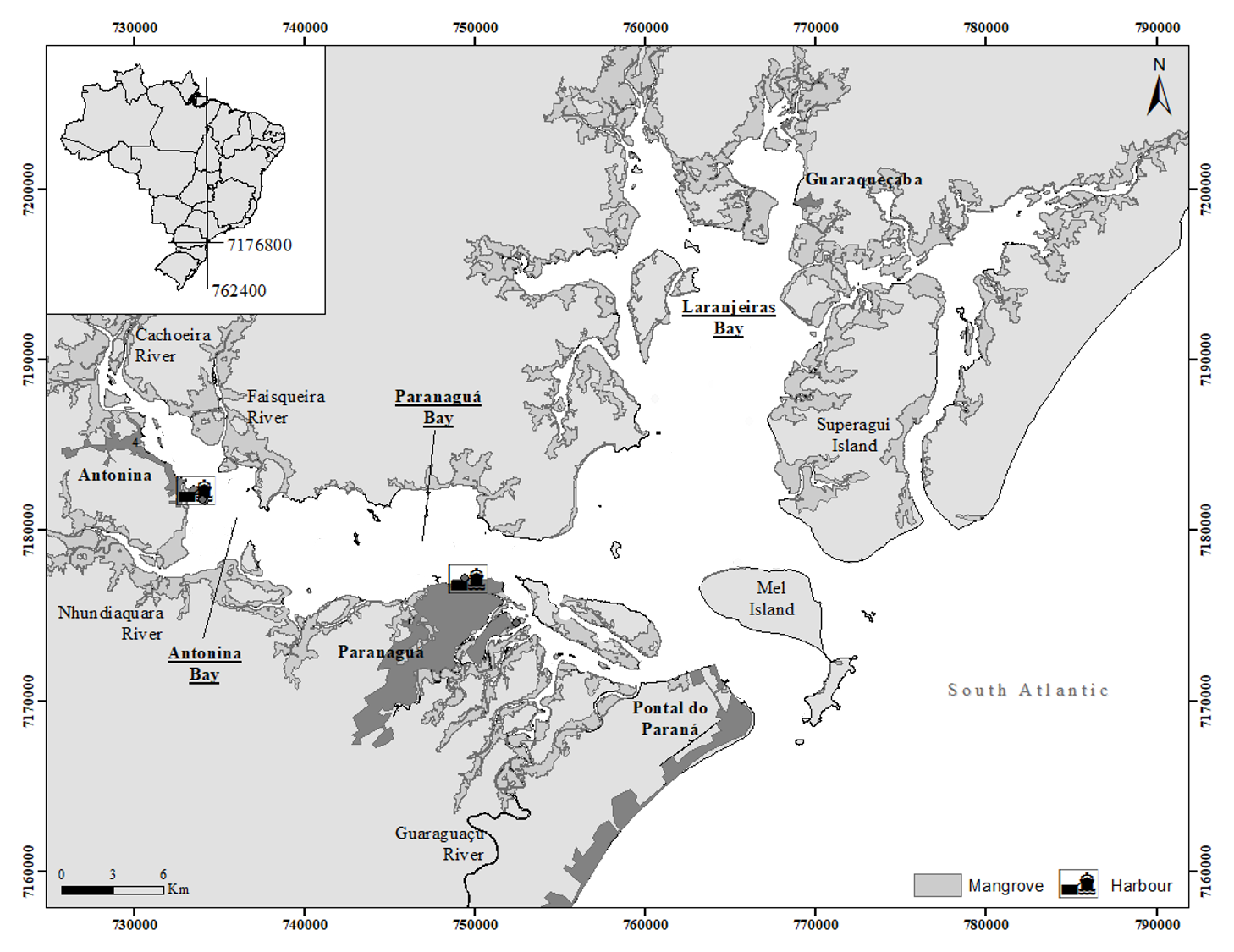}    
\end{center}
\label{PEC}

\smallskip
\textbf{Source:} \citeauthor{passos2012} (\citeyear{passos2012}).
\end{figure}

The development of a conceptual NP model for PEC was motivated by the availability of in situ and literature datasets for the model tracer concentrations and related parameters. This data availability makes the NP formulation both practical and representative of the dominant nutrient-phytoplankton dynamics observed in the system.

The adoption of a two-box circulation model for the PEC is motivated by its classification as a partially mixed estuary, where both riverine inflow and tidal exchange contribute significantly to water renewal and vertical salinity gradients \cite{Marone2005,Verri2020}. In addition, salinity data are available \cite{Machado1997ParanaguaBay}, supporting the parametrization and validation of a simple stratified-exchange framework.
In our conceptual model, the two-box representation constitutes a deliberate simplification of PEC geophysical features, capturing the essential vertical structure of the estuary (surface and bottom layers) while remaining computationally inexpensive, providing a low-cost structure for testing the optimization framework presented in the following chapters by running experiments on a personal computer.



\section{Model development}

As in the previous chapter, we start by defining the biogeochemical SMS equations for this model. As part of that, we calibrate the model ``manually'' by finding parameter values based on the literature and available data. We then present a conceptual circulation box model representing the geophysical aspects of PEC. Finally, we couple both models to obtain the final set of biogeochemical SMS equations.

The data used for the development of the PEC biogeochemical model are:
\begin{enumerate}
    \item Biochemical parameters, $K$, $V_\mathrm{max}$ and $\lambda$, here set as constants, although the last two are based on the PEC phytoplankton population data found in the literature \cite{Machado1997ParanaguaBay,Brandini2022}.
    \item Physical forcing data, as salinity, $S$, and water fluxes, $Q_\text{river}$, $Q_\text{ocean}$ and $Q_\text{ebm}$. This data was inferred from values found in the literature \cite{Machado1997ParanaguaBay,IAT2020}, based on theoretical circulation equations \cite{Verri2020}. 
    \item Initial values for the tracers' concentrations \cite{Machado1997ParanaguaBay}, $C_\mathrm{N}$ and $C_\mathrm{PHY}$, once the mathematical model is already stated as an ODE system.
\end{enumerate}

Further details on the model and the data considered are explored in the following.

\subsection{A conceptual biochemical model for the NP dynamics}

Phytoplankton populations, in general, are extremely sensitive to physical and chemical changes in the aquatic environment, resulting in rapid variations in their rates of reproduction and mortality. In PEC waters, it is known that phytoplankton is dominated by diatoms, especially during the driest season, which occurs in winter. This prevalence may be related to the high competitiveness of diatoms in nutrient-rich environments, such as estuaries. Although estuarine conditions vary significantly in terms of salinity and turbidity \cite{Procopiak}, diatoms thrive and often dominate these ecosystems due to their adaptability and efficient nutrient processing. In addition to diatoms, other phytoplankton groups can also be found in the PEC \cite{Brandini2022}, as we can see in Figure \ref{fig:phyGroups1}), but due to their occurrence in smaller proportions (Figure \ref{fig:phyGroups2}), they will not be considered in this model. The reproduction of phytoplankton in the PEC and its estimated lifespan depend on several factors, including the incidence of sunlight, salinity, water temperature, and the residence time of water in the PEC (see Box \ref{box:ResidenceTime}).

\vspace{0.5cm}
\begin{codebox}[label=box:ResidenceTime]{\textbf{Residence time of a marine environment}}
The residence time of a marine environment, such as the Paranaguá Estuarine Complex, is defined as the average time interval required for the entire volume of water contained in the estuary, as well as the materials dissolved in it, to be completely renewed through exchanges with the ocean. It is estimated that the residence time of the PEC is on the order of 3 to 3.5 days (Lana et al., 2001), a value considered short and indicative of a highly dynamic, efficiently renewed environment.
\end{codebox}

\begin{figure}[ht]
\caption{Boxplot of the main phytoplankton groups observed in the PEC waters.}
\begin{center}
\includegraphics[scale = 0.8, trim={0.8cm 0cm 0cm 0cm}, clip]{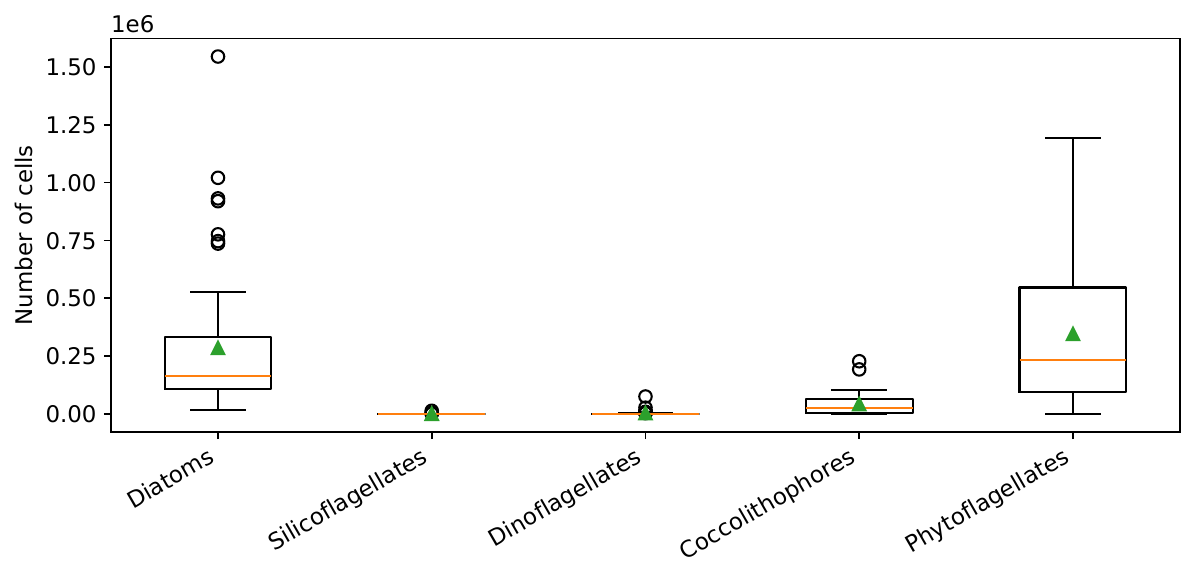}    
\end{center}
\label{fig:phyGroups1}

\smallskip
The figure shows the distribution of cell abundance (cells  m$^{-3}$) across the approximately one-year period time series for each phytoplankton group (Diatoms, Silicoflagellates, Dinoflagellates, Coccolithophores, and Phytoflagellates).
The box central line denotes the median, while the lower and upper edges correspond to the first and third quartiles.
The whiskers represent values within 1.5 × IQR (interquartile range), and points beyond these limits indicate outliers.

\smallskip
\textbf{Source:} the author (2026). 

\smallskip
\textbf{Data:} \citeauthor{Brandini2022} (\citeyear{Brandini2022}).
\end{figure}

\vspace{0.4cm}
Figure \ref{fig:phyGroups2} shows estimates on the diatom contributions for the chlorophyll-a mass in PEC.
Let $C$ be the cellular carbon on phytoplankton cells and $V$ the volume of a phytoplankton cell. For converting the total observed cells into carbon mass, we considered the equations:
\begin{enumerate}
\item For diatoms:
$$C = 0.288 / V^{0.811}$$
\item For the other phytoplankton groups:
$$C = 0.216 / V^{0.939}$$
\end{enumerate}
For each group, we calculate $V$ as the volume of a sphere with the cell's diameter. Last, we use the chlorophyll-a: Carbon ratio, $Chl: C = 0.02$, to calculate the chlorophyll-a mass per phytoplankton cell.

\begin{figure}[ht]
\caption{Estimated chlorophyll-a proportion by phytoplankton group (size-based) on PEC.}
\begin{center}
\includegraphics[scale = 0.8, trim={0.5cm 3cm 0cm 3.8cm}, clip]{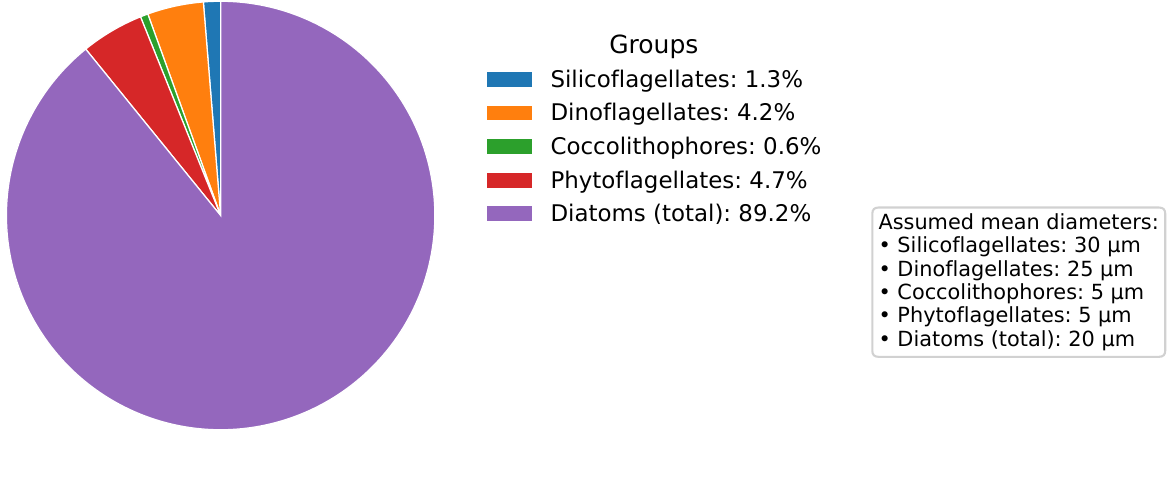}    
\end{center}
\label{fig:phyGroups2}

\smallskip
Data derived from the mean cell abundance and the volume-to-carbon conversion based on the assumed mean cell diameter for each group, assuming a constant chlorophyll-to-carbon ratio among groups. Percentages in the legend indicate each group's relative contribution.

\smallskip
\textbf{Source:} the author (2026).  

\smallskip
\textbf{Data:} \citeauthor{Machado1997ParanaguaBay} (\citeyear{Machado1997ParanaguaBay}); \citeauthor{phySize} (\citeyear{phySize}).
\end{figure}

Let us recall the NP-model diagram, presented in Figure \ref{figConceptualModel}. With the intention of carrying out numerical tests, we will consider nitrate as the nutrient tracer not only due to data availability, but also because it is the most limiting nutrient in marine environments and a strong indicator of eutrophication in coastal areas. Let $C_\mathrm{N}$ be the concentration of N and $C_\mathrm{PHY}$ the concentration of phytoplankton measured in mmol m$^{-3}$, $I$ the light intensity due to solar incidence measured in lm, $T$ the temperature in $^o$C, $S$ is the salinity (dimensionless). For describing the growth and mortality of phytoplankton depending on the nutrient availability, we follow \cite{sarmientogruber} and define the SMS equation for phytoplankton as:
\begin{equation}\label{smsPHYcapmodel}
    \dfrac{d C_\mathrm{PHY}}{d t} =  \left[ V_{\max} \cdot g(C_\mathrm{N},T,S,I) - \lambda \right] \cdot C_\mathrm{PHY} ~,
\end{equation}
The maximum specific growth rate of phytoplankton $V_{\max}$ is defined as the maximum growth of the phytoplankton population depending solely on the existing population, under ideal conditions. The function $g(C_\mathrm{N},T,S,I)$ represents the limitation of P growth due to the availability of nutrients and sunlight, variations in temperature, and salinity. We consider the mortality rate of phytoplankton as a linear function with constant rate $\lambda$. For defining the limiting function $g$, we consider a set of four limiting functions corresponding to each of the limiting elements. We have:
\begin{equation}
    g(C_\mathrm{N},T,S,I) = \eta(C_\mathrm{N}) \cdot \alpha(T) \cdot \beta(S) \cdot \gamma(I) ~, \label{limitP}
\end{equation}
with
\begin{equation*}
    \eta(C_\mathrm{N}) = \dfrac{C_\mathrm{N}}{C_\mathrm{N}+K} ~.
\end{equation*}

The Monod constant $K$ is the concentration of nitrate such that the growth of phytoplankton is equivalent to half of the maximum possible growth, $V_\mathrm{max}$. For nitrate, we have $K$ between $0.1$ and 0.3  mmol  m$^{-3}$. In this model we will also consider $V_{\max}=1.4 \ \text{day}^{-1}$, $K=0.1$ mmol  m$^{-3}$ , $\lambda=0.05 \ \text{day}^{- 1}$ \cite{sarmientogruber}.

The average water temperature varies between 23$^o$–30$^o$ in summer and 18$^o$–25$^o$ in winter \cite{Machado1997ParanaguaBay}, so the water temperature in winter is not a limiting factor on the phytoplankton growth \cite{Khan1998}. However, in summer, phytoplankton production can be limited to up to 60\% of its maximum if the temperature reaches 30$^o$. Since $T$ is the average temperature in $^o$C of the water in the upper tank during the period that the model will predict, we then define
\begin{equation}\label{smsTemp}
  \alpha(T)=\min \left\lbrace  1 ~, ~ 0.6+0.08 \cdot (30-T) \right\rbrace ~.  
\end{equation}
Equation (\ref{smsTemp}) is obtained by assuming that temperatures in the range 18$^o$C to 25$^o$C doesn't present a limiting impact on the PEC phytoplankton growth, thus $\alpha(T)=1$ for these temperatures, and by linear interpolation between temperatures 25$^o$C to 30$^o$C, considering $\alpha(25)=1$ and $\alpha(30)=0.6$ (Figure \ref{fig:alphaT}).

\vspace{0.5cm}
\begin{figure}[htpb!]
\caption{Temperature limiting impact on PEC phytoplankton growth.}
\begin{center}
\includegraphics[width=0.8\linewidth]{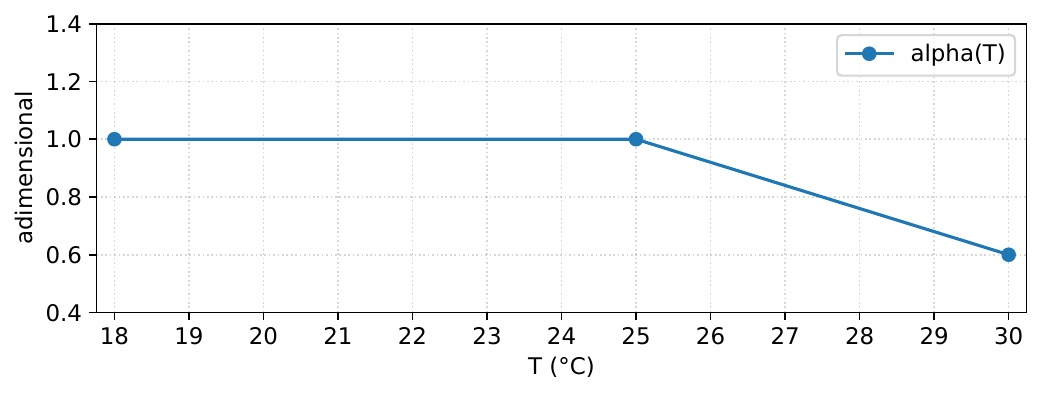}    
\end{center}
\label{fig:alphaT}

\smallskip
\textbf{Source:} the author (2026).  

\smallskip
\textbf{Data:} \citeauthor{Khan1998} (\citeyear{Khan1998}).
\end{figure}

\vspace{0.2cm}
Salinity can also be a limiting factor for phytoplankton growth. Khan et al. \cite{Khan1998} state that the ideal salinity varies from 20 to 35. As in the PEC the salinity does not exceed 33, we define
\begin{equation}\label{smsSal}
  \beta(S)=  \max \left\lbrace 0, \min \left\lbrace 1 ~, ~ 1-0.07 \cdot (20-S) \right\rbrace \right\rbrace ~.
\end{equation}

Equation (\ref{smsSal}) is obtained by assuming that salinities in the range 20 to 35 doesn't present a limiting impact on the PEC phytoplankton growth, thus $\beta(S)=1$ for these salinities, and by linear interpolation between salinities 10 to 20, considering $\beta(10)=0$ and $\beta(20)=0$. We also ensured $\beta(S) \geq 0$, which could result in errors in extreme scenarios, as $S \leq 5$ (Figure \ref{fig:betaS}).

\begin{figure}[ht]
\caption{Salinity limiting impact on PEC phytoplankton growth.}
\begin{center}
\includegraphics[width=0.8\linewidth]{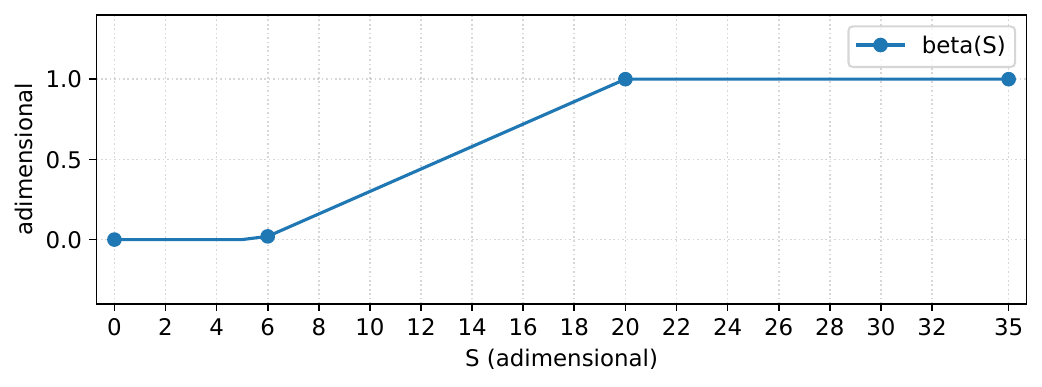}    
\end{center}
\label{fig:betaS}

\smallskip
\textbf{Source:} the author (2026).  

\smallskip
\textbf{Data:} \citeauthor{Khan1998} (\citeyear{Khan1998}).
\end{figure}

The sunlight intensity reaching the PEC surface varies from the minimum value $2.9 \ kWh \ m^{-2} \ d^{-1}$ in winter to the maximum value $5.6 \ kWh \ m^{-2} d^{-1 }$ in summer \cite{LABREN}. Therefore, the incidence of sunlight is not a limiting factor for the growth of phytoplankton in the estuary. We consider
$$\gamma(I)=1 ~.$$

In relation to phytoplankton, the SMS nitrate equation is written as:
\begin{equation}\label{smsNcapmodel}
    \dfrac{d C_\mathrm{N}}{d t}= \left[ - V_{\max} \cdot g(C_\mathrm{N},T,S,I) + r \cdot \lambda \right] \cdot C_\mathrm{PHY} ~,
\end{equation}

The percentage remineralization of dissolved organic matter into nitrate is represented by the parameter $r$ and set 0.7 \cite{sarmientogruber}. All the parameter values used in the modeling of the PEC are presented in Table \ref{ParametrosEstuario}.

\vspace{0.4cm}
\begin{table}[htpb!]
\caption{Parameter values set for the PEC conceptual model developed in this chapter.}
\begin{center}
    \begin{tabular}{l|c|c|l}
    \textbf{Parameter} & \textbf{Value} & \textbf{Unit} & \textbf{Description}\\
    \hline 
    $V_{\max}$ & 1.4 & day$^{-1}$ & \small{Maximum specific growth rate of phytoplankton}\\
    $\lambda$ & 0.05 & day$^{-1}$ & \small{Linear mortality rate of phytoplankton} \\
    $K$ & 0.1 &  mmol  m$^{-3}$ & Monod constant for nitrogen \\
    $r$ & 0.7 & day$^{-1}$ & Nitrogen remineralization rate 
    \end{tabular}
\end{center}
\label{ParametrosEstuario}

\smallskip
\textbf{Source:} the author (2026).

\smallskip
\textbf{Data:} \citeauthor{Khan1998} (\citeyear{Khan1998}); \citeauthor{sarmientogruber} (\citeyear{sarmientogruber}).
\end{table}

\vspace{0.2cm}
Sediment core studies indicate that the PEC functions as an efficient sediment trap, where the estuarine turbidity maximum promotes the retention of fine, organic-rich particles \cite{marines2023}. These conditions support the assumptions in our model regarding light limitation, stratification, and the high remineralization rates applied to phytoplankton biomass.

The set of ODE equations $\left\lbrace \dfrac{d C_\mathrm{N}}{d t}, \dfrac{d C_\mathrm{PHY}}{d t}\right\rbrace$, described on equations (\ref{smsNcapmodel}) and (\ref{smsPHYcapmodel}), represents the biochemical SMS equations for our model. Since we want these equations to depend only on the time $t$, the function $g(C_\mathrm{N},T,S,I)$ needs to be described in terms of $t$ only. As $\gamma(I)$ is taken as constant and $C_\mathrm{N}$ is computed when integrating the ODE system, it lasts to create functions $T(t)$ and $S(t)$ based on the interpolation of sampled data found in the literature or theoretical assumptions. Both of these choices are possible. Here we consider the first of these approaches, which is better described in the following subsection, once it is intrinsically related to the physical model portraying the PEC environment.

\subsection{A conceptual box-model representing geophysical interactions}
We will now present a simple circulation model for the PEC waters. Considering that the PEC is a partially mixed estuary, we will use a three-dimensional two-box model for water circulation based on Verri et al. \cite{Verri2020}, represented in Figure \ref{figCirculation}.

\vspace{0.7cm}
\begin{figure}[htpb!]
\caption{Circulation model for the PEC.}
\begin{center}
\tikzset{every picture/.style={line width=0.75pt}} 

\begin{tikzpicture}[x=0.75pt,y=0.75pt,yscale=-1,xscale=1]

\draw  [fill={rgb, 255:red, 183; green, 214; blue, 255 }  ,fill opacity=1 ] (95,35) -- (441.5,35) -- (441.5,126) -- (95,126) -- cycle ;
\draw  [fill={rgb, 255:red, 74; green, 144; blue, 226 }  ,fill opacity=1 ] (95,126) -- (441.5,126) -- (441.5,217) -- (95,217) -- cycle ;
\draw  [fill={rgb, 255:red, 248; green, 231; blue, 28 }  ,fill opacity=1 ] (26,69) -- (68,69) -- (68,59) -- (96,79) -- (68,99) -- (68,89) -- (26,89) -- cycle ;
\draw  [fill={rgb, 255:red, 248; green, 231; blue, 28 }  ,fill opacity=1 ] (442,69) -- (484,69) -- (484,59) -- (512,79) -- (484,99) -- (484,89) -- (442,89) -- cycle ;
\draw  [fill={rgb, 255:red, 248; green, 231; blue, 28 }  ,fill opacity=1 ] (513,179) -- (471,179) -- (471,189) -- (443,169) -- (471,149) -- (471,159) -- (513,159) -- cycle ;
\draw  [fill={rgb, 255:red, 208; green, 2; blue, 27 }  ,fill opacity=1 ] (355.5,181) -- (188.25,181) .. controls (173.06,181) and (160.75,168.69) .. (160.75,153.5) -- (160.75,153.5) -- (153.5,153.5) -- (167.25,126) -- (181,153.5) -- (173.75,153.5) -- (173.75,153.5) .. controls (173.75,161.51) and (180.24,168) .. (188.25,168) -- (355.5,168) -- cycle ;
\draw  [fill={rgb, 255:red, 208; green, 2; blue, 27 }  ,fill opacity=1 ] (162,124.5) -- (162,100.25) .. controls (162,83.96) and (175.21,70.75) .. (191.5,70.75) -- (325.5,70.75) -- (325.5,62) -- (356.5,77) -- (325.5,92) -- (325.5,83.25) -- (191.5,83.25) .. controls (182.11,83.25) and (174.5,90.86) .. (174.5,100.25) -- (174.5,124.5) -- cycle ;

\draw    (59.83,251.83) -- (60.17,209.28) ;
\draw [shift={(60.19,207.28)}, rotate = 90.46] [color={rgb, 255:red, 0; green, 0; blue, 0 }  ][line width=0.75]    (10.93,-3.29) .. controls (6.95,-1.4) and (3.31,-0.3) .. (0,0) .. controls (3.31,0.3) and (6.95,1.4) .. (10.93,3.29)   ;
\draw    (59.83,251.83) -- (100.19,252.26) ;
\draw [shift={(102.19,252.28)}, rotate = 180.61] [color={rgb, 255:red, 0; green, 0; blue, 0 }  ][line width=0.75]    (10.93,-3.29) .. controls (6.95,-1.4) and (3.31,-0.3) .. (0,0) .. controls (3.31,0.3) and (6.95,1.4) .. (10.93,3.29)   ;

\draw (27,33) node [anchor=north west][inner sep=0.75pt]   [align=left] {River};
\draw (460,113) node [anchor=north west][inner sep=0.75pt]   [align=left] {Ocean};
\draw (447,72) node [anchor=north west][inner sep=0.75pt]  [font=\small] [align=left] {$\displaystyle Q_{ebm}$};
\draw (28,72) node [anchor=north west][inner sep=0.75pt]  [font=\small] [align=left] {$\displaystyle Q_{river}$};
\draw (473,162) node [anchor=north west][inner sep=0.75pt]  [font=\small] [align=left] {$\displaystyle Q_{ocean}$};
\draw (104.19,255.28) node [anchor=north west][inner sep=0.75pt]   [align=left] {$\displaystyle u$};
\draw (42.19,199.45) node [anchor=north west][inner sep=0.75pt]   [align=left] {$\displaystyle w$};

\end{tikzpicture}    
\end{center}
\label{figCirculation}

\smallskip
The blue rectangles represent a side view of the PEC aquatic environment, with the estuarine head on the left and the mouth on the right, defining the model domain. The arrows point in the direction of the advective flux. Yellow arrows indicate interactions with rivers and the ocean, whereas red arrows represent water circulation within the estuary complex.

\smallskip
\textbf{Source:} the author (2026).

\smallskip
\textbf{Data:} \citeauthor{Verri2020} (\citeyear{Verri2020}).
\end{figure}
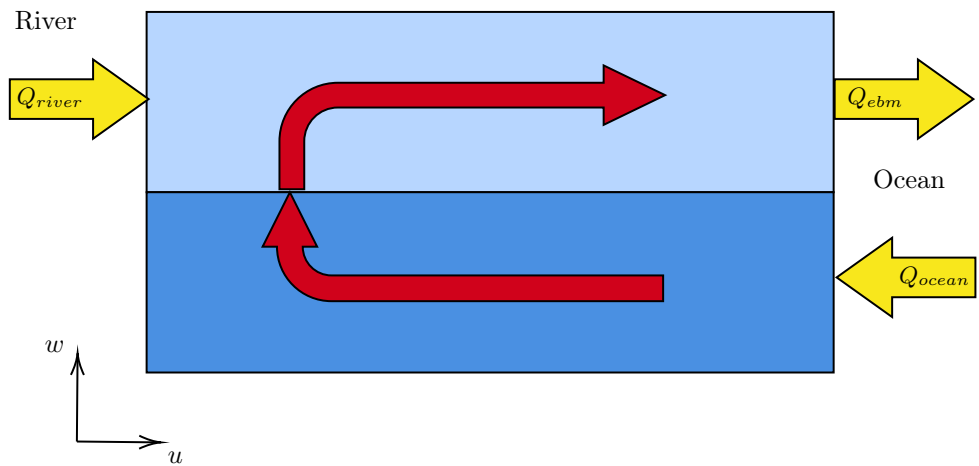

\paragraph{Boxes dimensions}

The surface area of the PEC is estimated as $601 \times 10^6 \text{m}^2$, its volume is estimated as $2 \times 10^9 \text{m}^3$, and the length of its three tidal inlets combined is estimated as $5.7 \times 10^3 \text{m}$ \cite{Cattani2009}. In our model, we represent the PEC environment with two stacked tridimensional boxes whose dimensions are represented in Figure \ref{figDimensionsEstuario}. The motivations for the choice of the boxes' dimensions as well as the dimensions themselves, are described in the following:

\begin{itemize}
    \item[$L_y$:] Approximate sum of the lengths of the tidal inlets of the PEC, that is, $L_y = 5.7 \times 10^3$ m.
    \item[$L_x$:] Approximate PEC surface area divided by $L_y$, that is,  $L_x = 105.4 \times 10^3$ m. This choice comes from the fact that the depth itself does not directly interfere with the amount of nutrients in the water since the maximum depth reached in the estuary is less than 35m, but it directly interferes with the incidence of sunlight, which affects the growth of phytoplankton. Due to the features of the internal circulation of the PEC, phytoplankton will be modeled only in the upper box, which is another motivation to consider that the surface of the entire estuary is contained in the upper box.
    \item [$H$:] The depth of the PEC waters is not homogeneous, with an average between 3 m and 5.4 m, reaching more than 15 m in some regions. For simplicity, we consider that the two boxes have the same height, that is, $H_\text{up}=H_\text{low}=H/2$, where $H_\text{up}$ and $H_\text{low}$ denote the heights of the upper and lower boxes, respectively. We define $H$ as the approximate volume of water in the  divided by its surface area, which is $H=3.4$ m.
\end{itemize} 

\vspace{0.7cm}
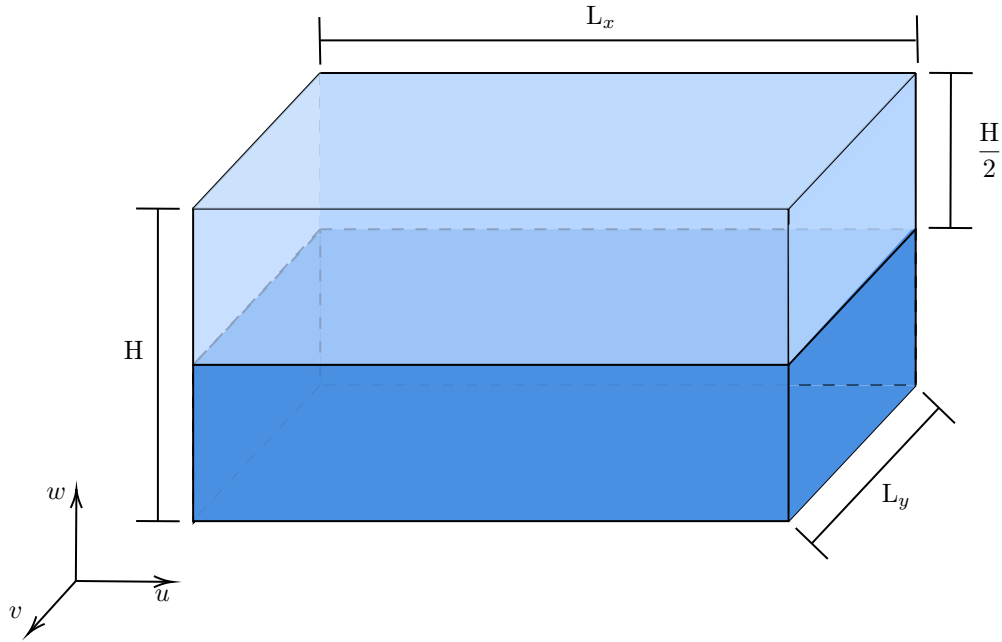
\begin{figure}[htpb!]
\caption{Discretization of the PEC domain into two boxes.} 
\begin{center}
\tikzset{every picture/.style={line width=0.75pt}} 

\begin{tikzpicture}[x=0.65pt,y=0.65pt,yscale=-1,xscale=1]

\draw  [dash pattern={on 4.5pt off 4.5pt}]  (189,95) -- (189.29,186.75) ;
\draw  [color={rgb, 255:red, 0; green, 0; blue, 0 }  ,draw opacity=1 ][fill={rgb, 255:red, 74; green, 144; blue, 226 }  ,fill opacity=1 ][dash pattern={on 4.5pt off 4.5pt}] (115,265) -- (189.29,186.75) -- (189.57,278.49) -- (115.29,356.75) -- cycle ;
\draw  [draw opacity=0][fill={rgb, 255:red, 183; green, 214; blue, 255 }  ,fill opacity=1 ][dash pattern={on 4.5pt off 4.5pt}] (189,95) -- (535.5,95) -- (535.5,186) -- (189,186) -- cycle ;
\draw  [fill={rgb, 255:red, 74; green, 144; blue, 226 }  ,fill opacity=1 ][dash pattern={on 4.5pt off 4.5pt}] (189,186) -- (535.5,186) -- (535.5,277) -- (189,277) -- cycle ;
\draw  [dash pattern={on 4.5pt off 4.5pt}]  (189,277) -- (115,356) ;
\draw    (535.5,277) -- (461.5,356) ;
\draw  [dash pattern={on 4.5pt off 4.5pt}]  (189,186) -- (115,265) ;
\draw    (535.5,186) -- (461.5,265) ;
\draw    (535.5,95) -- (461.5,174) ;
\draw    (189,95) -- (115,174) ;
\draw  [draw opacity=0][fill={rgb, 255:red, 74; green, 144; blue, 226 }  ,fill opacity=0 ] (189,186) -- (535.5,186) -- (461.5,265) -- (115,265) -- cycle ;
\draw  [draw opacity=0][fill={rgb, 255:red, 74; green, 144; blue, 226 }  ,fill opacity=1 ] (189,277) -- (535.5,277) -- (461.5,356) -- (115,356) -- cycle ;
\draw  [fill={rgb, 255:red, 183; green, 214; blue, 255 }  ,fill opacity=0.76 ] (115,174) -- (461.5,174) -- (461.5,265) -- (115,265) -- cycle ;
\draw  [fill={rgb, 255:red, 74; green, 144; blue, 226 }  ,fill opacity=0.91 ] (115,265) -- (461.5,265) -- (461.5,356) -- (115,356) -- cycle ;

\draw  [draw opacity=0][fill={rgb, 255:red, 192; green, 219; blue, 252 }  ,fill opacity=0.82 ] (189,95) -- (535.5,95) -- (461.5,174) -- (115,174) -- cycle ;
\draw    (535.5,95) -- (189,95) ;
\draw    (535.5,95) -- (535.5,277) ;
\draw    (549,290) -- (475,369) ;
\draw    (484.33,378) -- (465.67,360) ;
\draw    (558.33,299) -- (539.67,281) ;

\draw    (189.03,90.12) -- (188.45,62.43) ;
\draw    (536.75,89.12) -- (536.17,61.44) ;
\draw    (94.5,174) -- (94.5,356) ;
\draw    (81.83,173.92) -- (107.17,174.08) ;
\draw    (81.83,355.92) -- (107.17,356.08) ;

\draw    (189,76) -- (535.5,76) ;
\draw    (556,95) -- (555.71,185.25) ;
\draw    (543.33,94.92) -- (568.67,95.08) ;
\draw    (543.04,185.18) -- (568.39,185.33) ;
\draw  [draw opacity=0][fill={rgb, 255:red, 184; green, 216; blue, 255 }  ,fill opacity=0.87 ] (461.59,183.62) -- (534.37,104.18) -- (534.68,184.33) -- (461.89,263.76) -- cycle ;

\draw    (46.75,390.85) -- (20.19,420.23) ;
\draw [shift={(18.85,421.72)}, rotate = 312.11] [color={rgb, 255:red, 0; green, 0; blue, 0 }  ][line width=0.75]    (10.93,-3.29) .. controls (6.95,-1.4) and (3.31,-0.3) .. (0,0) .. controls (3.31,0.3) and (6.95,1.4) .. (10.93,3.29)   ;
\draw    (46.75,390.85) -- (47.08,339.37) ;
\draw [shift={(47.09,337.37)}, rotate = 90.36] [color={rgb, 255:red, 0; green, 0; blue, 0 }  ][line width=0.75]    (10.93,-3.29) .. controls (6.95,-1.4) and (3.31,-0.3) .. (0,0) .. controls (3.31,0.3) and (6.95,1.4) .. (10.93,3.29)   ;
\draw    (46.75,390.85) -- (101.09,391.35) ;
\draw [shift={(103.09,391.37)}, rotate = 180.53] [color={rgb, 255:red, 0; green, 0; blue, 0 }  ][line width=0.75]    (10.93,-3.29) .. controls (6.95,-1.4) and (3.31,-0.3) .. (0,0) .. controls (3.31,0.3) and (6.95,1.4) .. (10.93,3.29)   ;

\draw (73,249) node [anchor=north west][inner sep=0.75pt]   [align=left] {H$ $};
\draw (342,54) node [anchor=north west][inner sep=0.75pt]   [align=left] {L$\displaystyle _{x}$};
\draw (569,122) node [anchor=north west][inner sep=0.75pt]   [align=left] {$\displaystyle \dfrac{\text{H}}{2}$$ $};
\draw (514,332.5) node [anchor=north west][inner sep=0.75pt]   [align=left] {L$\displaystyle _{y}$};
\draw (91,394) node [anchor=north west][inner sep=0.75pt]   [align=left] {$\displaystyle u$};
\draw (6,404) node [anchor=north west][inner sep=0.75pt]   [align=left] {$\displaystyle v$};
\draw (28,335) node [anchor=north west][inner sep=0.75pt]   [align=left] {$\displaystyle w$};

\end{tikzpicture}    
\end{center}
\label{figDimensionsEstuario}

\smallskip
$L_x$ is the horizontal dimension representing the path from the PEC head to its mouth, $L_y$ is the horizontal dimension representing the width of PEC margins and $H$ is the vertical dimension representing PEC depth. The domain is subdivided into two regions with equal volume, each with dimensions $L_x$, $L_y$ and $\dfrac{H}{2}$.

\smallskip
\textbf{Source:} the author (2026).

\smallskip
\textbf{Data:} \citeauthor{Cattani2009} (\citeyear{Cattani2009}).
\end{figure}

\paragraph{Volumetric flow and salinity}

For the approximate calculation of water volume flow and salinity, we will consider the Knudsen model (KEBM), reviewed by Verri et al. \cite{Verri2020}. We will calculate an approximation of horizontal averages for flow and salinity, where we disregard heat fluxes, precipitation, wind and water temperature. We will also consider the average over the tidal cycle, approximating the lunar day by the solar day with the average tide during it, which is consistent with partially mixed estuaries like the PEC. KEBM considers the following water flows entering the estuary: $Q_\text{river}$ represents the flow coming from rivers entering the estuary at the head of the estuary; in the upper box, $Q_\text{ocean}$ represents the flow coming from the ocean entering the estuary through the lower box and $Q_\text{ebm}$ represents the water flow leaving the estuary through the estuary tidal inlets, in the upper box (Figure \ref{figCirculation}). 

All these flows are measured in $\text{m}^3 \ \text{day}^{-1}$ and, as by hypothesis, the volume of water in the estuary is constant, it is worth the relation (\ref{fluxo}). Furthermore, we assume that the salinity $S_\text{low}$ of the lower box is equal to that of the ocean salinity, and the salinity in the upper box, denoted $S_\text{up}$, is calculated as a solution to the equation (\ref{fluxosal}). We consider as input data for this model:
\begin{enumerate}
\item $S_\text{low}(t)$ and $S_\text{up}(t)$ the mean salinity in the lower and upper boxes of the model \cite{Machado1997ParanaguaBay} (Figure \ref{fig:dataSal}), respectively, at time instant $t$;
\item The flow $Q_\text{river}$ is derived from the PEC residence time, which is approximately 3.49 days \cite{Lana2001};
\item Estimates for the flows $Q_\text{ocean}$ and $Q_\text{ebm}$ obtained from the equation (\ref{fluxo}) and available data on current velocities \cite{Cattani2009} (Figure \ref{fig:dataFluxos}).
\end{enumerate}

\begin{equation}\label{fluxo}
    Q_\text{ebm}(t)=Q_\text{river}(t)+Q_\text{ocean}(t) ~,
\end{equation}
\begin{equation}\label{fluxosal}
    S_\text{up}(t) \ Q_\text{ebm}(t)= S_\text{low}(t) \ Q_\text{ocean}(t) ~.
\end{equation}

Estimates for nitrate concentration in rivers based on weighted average data from \cite{Mizerkowski2012} (Figure \ref{fig:dataNriverFlux}) and estimates for water flows are shown in Figure \ref{fig:dataFluxos}, with the rainy season corresponding to the summer and the dry season corresponding to the winter.  Apart from the nitrate input from ocean water, we consider the hypothesis that rivers are not responsible for the totality of the remaining inorganic nitrate input in the PEC. This way, we estimate the concentration of nitrate coming from the atmosphere, human activity, and other sources, including rivers, as proportional to the mass of nitrate coming from the rivers, based on calibration results, presented in \textit{Chapter \ref{chapterCalibratingPEC}}. Recent geochemical analyses of sedimentary organic matter in the PEC \cite{marines2023} show that the system is strongly dominated by terrigenous inputs, with marine-derived material becoming more relevant only near the estuarine mouth and in the northern sector of the complex. This pattern is consistent with the structure of the nutrient sources considered in our model, where riverine nitrogen supply plays a central role relative to oceanic inputs.

As an aside, Box \ref{box:SalEstimate} briefly describes an alternative approach for estimating salinity when no observational data are available.

\vspace{0.5cm}
\begin{codebox}[label=box:SalEstimate]{\textbf{Data extrapolation: \\ How to simulate salinity data, based on theoretical assumptions}}

While we used a slightly different technique to obtain interpolated values for mean tracer concentrations and salinity for each box in our model, one of the first possibilities considered was simply linear interpolation of the available data.

In an even previous approach, we considered the complete absence of salinity data. In this scenario, it is possible to obtain average salinity data for each box in our PEC model from the circulation equations (\ref{fluxo}) and (\ref{fluxosal}).

Once the circulation model is well-defined, we can also generate data on tracer transport from the lower to the upper box in the model by considering a similar assumption: that the tracer concentration in the sea near the PEC, ideally at the mouth of the PEC, is known. Such tracer concentration value is then set as the mean tracer concentration in the lower box.
\end{codebox}

\subsection{A conceptual biogeochemical model for PEC}
We consider some additional assumptions to create a model that couples the SMS equations and the circulation model. We consider the tracers' supply from the lower box to the upper box to occur exclusively through advection proportional to the transport of salinity\footnote{We do this in order to reduce the number of differential equations in the model, thus keeping it computationally inexpensive to be explored on a personal computer.}. We also consider a remineralization rate of phytoplankton that dies equal to 70\%, and the other possible factor for reducing its concentration in the upper box is its export to the ocean, since in our model, there is no advective flow that transports it to the lower box. We will also make use of the following notations:
\begin{itemize}
\item $\text{Vol}_\text{box}$ : The volume of the upper box (which, in this case, is equal to the volume of the lower box), calculated as $\text{Vol}_\text{box}=L_x \cdot L_y \cdot H/2 = 10^9 \ \text{m}^3$.
\item $T$ : The upper box temperature (Figure \ref{fig:dataTemp}), in $^o$C.
\item $C_\mathrm{N}^\text{river}$ : Average nitrate concentration in the Nhundiaquara river, used for estimate the riverine and pluvial water drained to the PEC (Figure \ref{fig:dataNriverFlux}), in mmol \ m$^{-3}$.
\item $C_\mathrm{N}^\text{low}(t)$ : The lower box average nitrate concentration (Figure \ref{fig:dataNitrate}), in mmol \ m$^{-3}$.
\end{itemize}

In Figure \ref{fig:dataNitrate}, note that the behavior of the blue plot (Upper box) is what we want to reproduce from our model for the nitrate tracer concentration throughout the year, while the other data (Lower box) is considered as data input for the model.

We also considered the entry of phytoplankton into the upper tank via advective flow from the lower box. This factor is based on the available data (Figure \ref{fig:dataPhy}) and may be a cause for the seasonal variation in the composition of the phytoplankton population in the PEC. In Figure \ref{fig:dataPhy}, also note that the behavior of the blue plot (Upper box) is what we want to reproduce from our model for the nitrate tracer concentration throughout the year, while the other data (Lower box) is considered as data input for the model. 

\vspace{0.6cm}
\begin{figure}[htpb!]
\caption{PEC daily data estimates on salinity over a one-year period.}
\begin{center}
\includegraphics[scale=0.8]{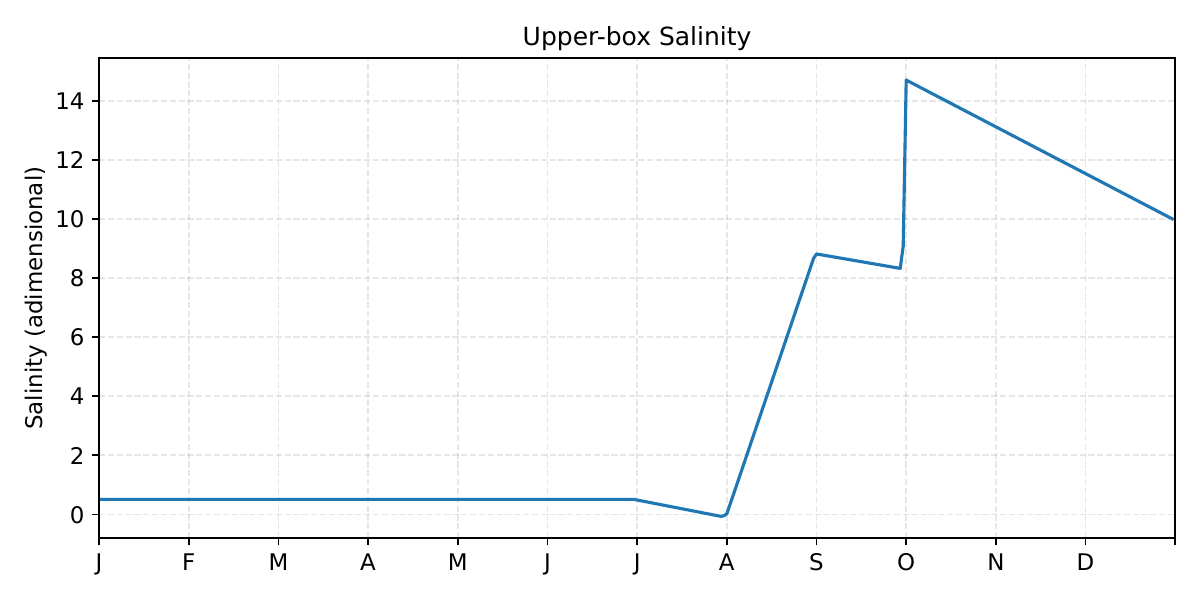}

\vspace{0.3cm}
\includegraphics[scale=0.8]{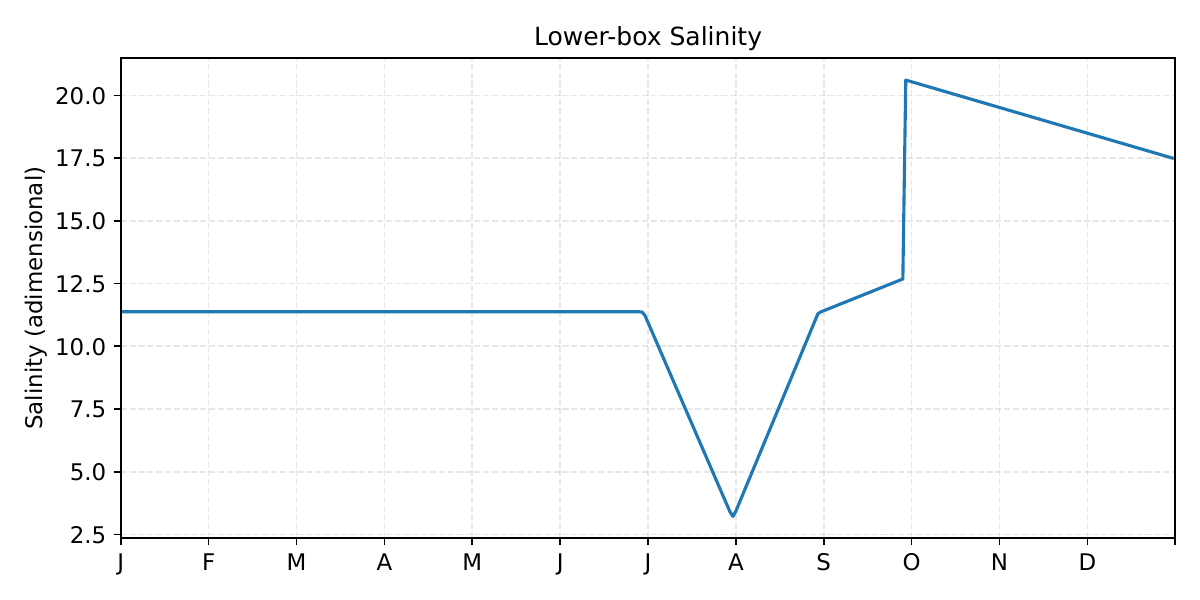}
\end{center}
\label{fig:dataSal}

\smallskip
\textbf{Source:} the author (2026).

\smallskip
\textbf{Data:} \citeauthor{Machado1997ParanaguaBay} (\citeyear{Machado1997ParanaguaBay}).
\end{figure}

\newpage
\begin{figure}[htpb!]
\caption{PEC daily data estimates on circulation fluxes over a one-year period.}

\vspace{-0.2cm}
\begin{center}
\includegraphics[scale=0.8]{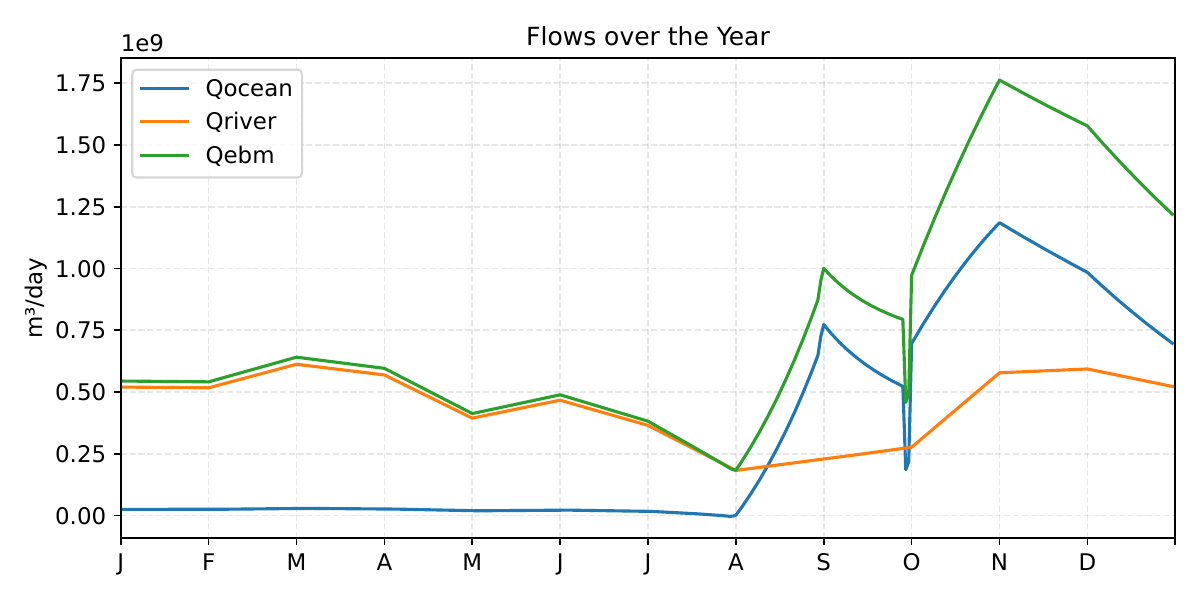}    
\end{center}
\label{fig:dataFluxos}

\smallskip
\textbf{Source:} the author (2026).

\smallskip
\textbf{Data:} IAT (\citeyear{IAT2020}).
\end{figure}

\vspace{0.4cm}
\begin{figure}[htpb!]
\caption{PEC daily data estimates on the water temperature of the upper region of PEC over a one-year period.}

\vspace{-0.4cm}
\begin{center}
\includegraphics[scale=0.8]{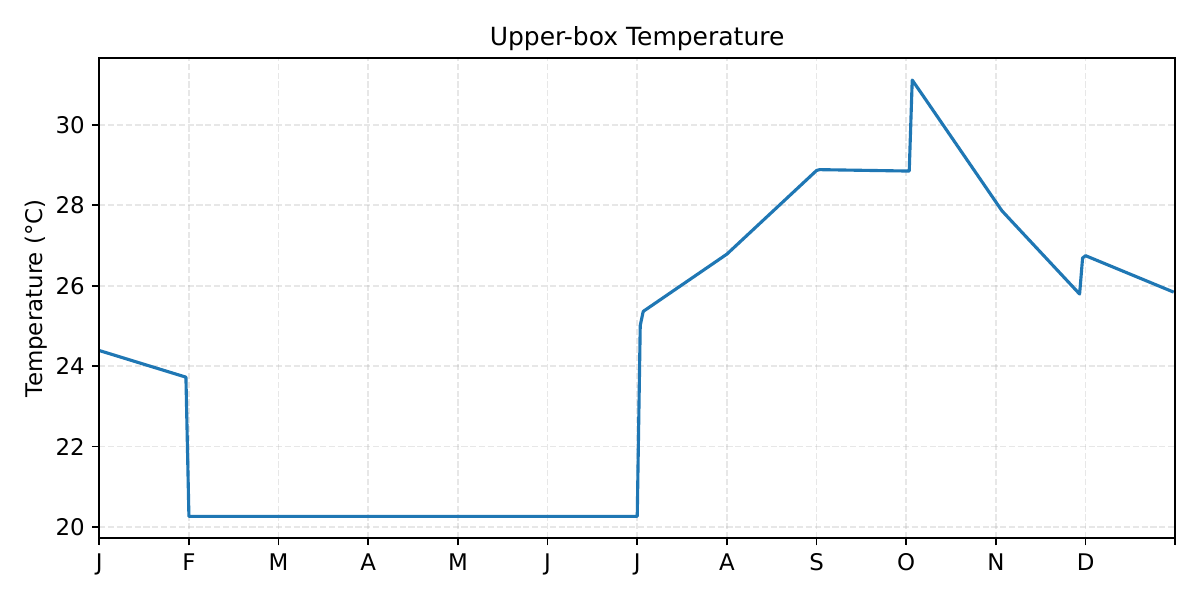}    
\end{center}
\label{fig:dataTemp}

\smallskip
\textbf{Source:} the author (2026).

\smallskip
\textbf{Data:} \citeauthor{Machado1997ParanaguaBay} (\citeyear{Machado1997ParanaguaBay}).
\end{figure}

\newpage
\begin{figure}[htpb!]
\caption{PEC daily data estimates on the Nhundiaquara river nitrate concentration over a one-year period.}

\vspace{-0.4cm}
\begin{center}
\includegraphics[scale=0.8]{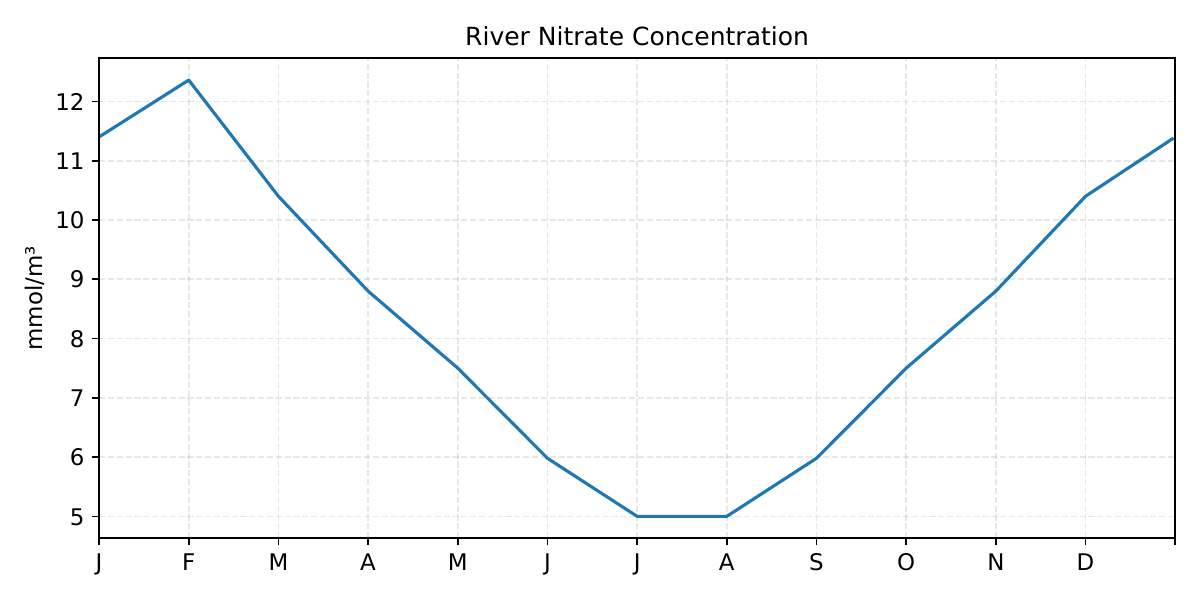}
\end{center}
\label{fig:dataNriverFlux}

\smallskip
\textbf{Source:} the author (2026).

\smallskip
\textbf{Data:} IAT (\citeyear{IAT2020}).
\end{figure}

\begin{figure}[htpb!]
\caption{PEC daily data estimates on nitrate concentrations over a one-year period.}

\vspace{-0.2cm}
\begin{center}
\includegraphics[scale=0.8]{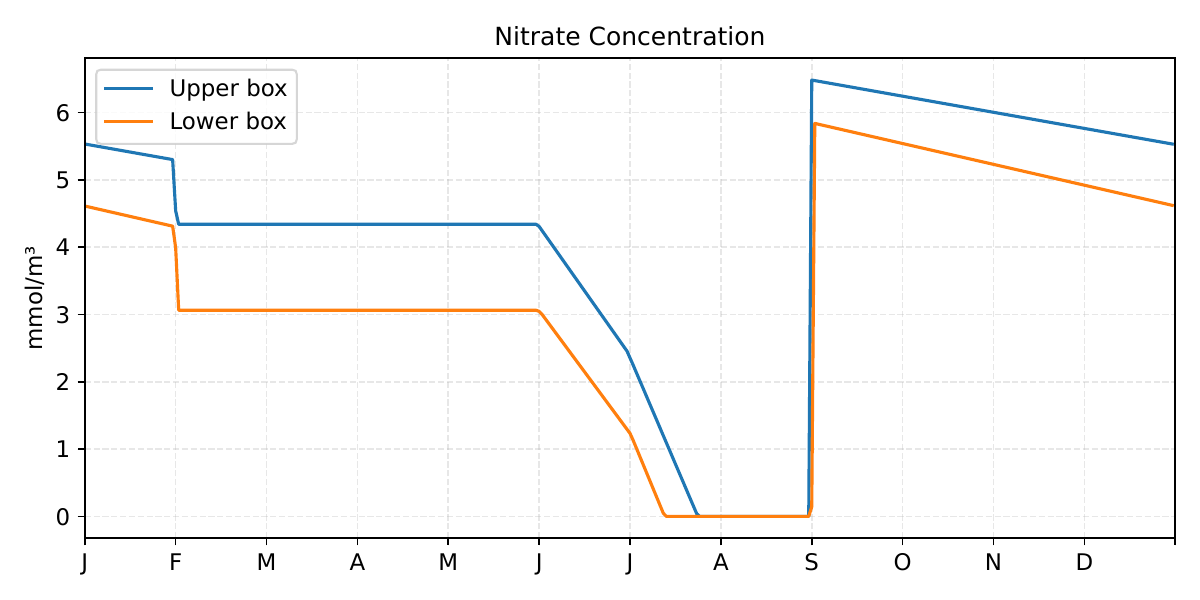}    
\end{center}
\label{fig:dataNitrate}

\smallskip
\textbf{Source:} the author (2026).

\smallskip
\textbf{Data:} \citeauthor{Machado1997ParanaguaBay} (\citeyear{Machado1997ParanaguaBay}).
\end{figure}

\begin{figure}[htpb!]
\caption{PEC daily data estimates on chlorophyll-a concentrations over a one-year period.}

\vspace{-0.6cm}
\begin{center}
\includegraphics[scale=0.8]{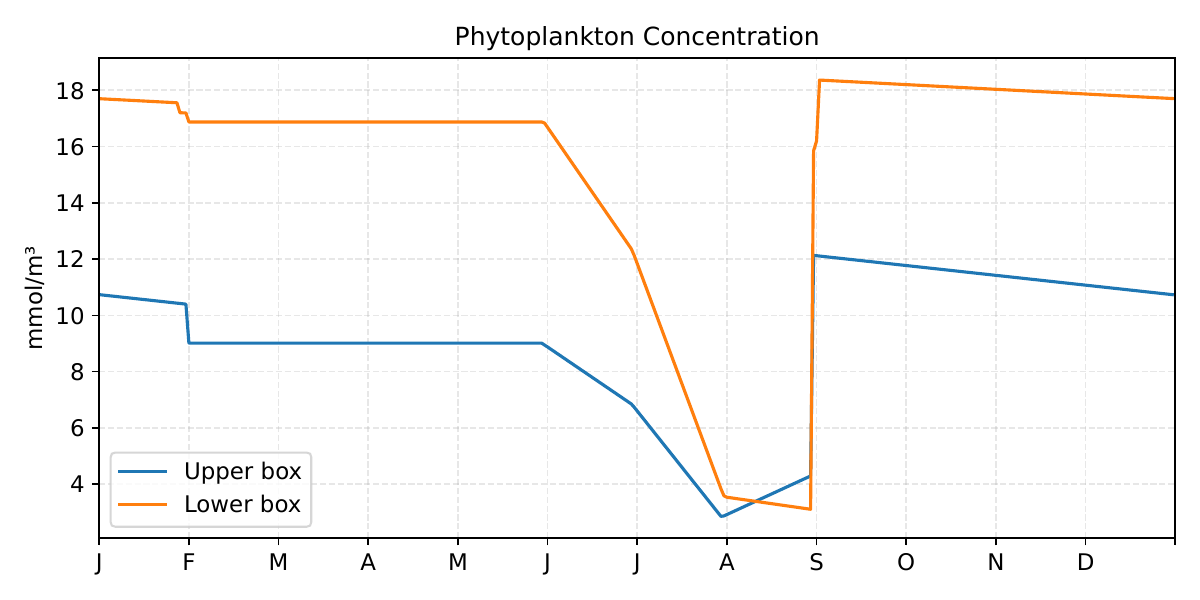}    
\end{center}
\label{fig:dataPhy}

\smallskip
\textbf{Source:} the author (2026).

\smallskip
\textbf{Data:} \citeauthor{Machado1997ParanaguaBay} (\citeyear{Machado1997ParanaguaBay}).
\end{figure}

\vspace{0.2cm}
Recalling that our initial objective was to model the NP dynamics in the PEC and considering that in our circulation model these processes only occur in the upper box, the coupling of the circulation model to the conceptual model we developed before is represented in Figure \ref{modelo}.

Considering the model schematically represented in Figure \ref{modelo}, we obtain the equations for the concentration variations of nutrient (\ref{EstuarioN}) and phytoplankton (\ref{EstuarioP}) in the upper box and consider the stationary model (\ref{estacionario}).

\begin{align}
\dfrac{d C_\mathrm{N}(t)}{dt} ~=~~ & \dfrac{0.588 \ C_\mathrm{N}^\text{river}(t) \cdot Q_\text{river}(t)}{\text{Vol}_\text{box}}+\dfrac{C_\mathrm{N}^\text{low}(t) \cdot Q_\text{ocean}(t)}{\text{Vol}_\text{box}}+ r \cdot \lambda \cdot C_\mathrm{PHY}(t) \nonumber\\ \nonumber \\
& - \alpha(T(t)) \ \beta(S_\text{up}(t)) \  \dfrac{V_{\max} \cdot C_\mathrm{PHY}(t) \ C_\mathrm{N}(t)}{C_\mathrm{N}(t)+K} -  \dfrac{C_\mathrm{N}(t) \cdot Q_\text{ebm}(t)}{\text{Vol}_\text{box}} ~.  \label{EstuarioN}   
\end{align}

\begin{align}
\dfrac{d C_\mathrm{PHY}(t)}{dt} ~=~~ & C_\mathrm{PHY}(t) \cdot \left(  \alpha(T(t)) \ \beta(S_\text{up}(t)) \ \dfrac{V_{\max} \cdot 
 C_\mathrm{N}(t)}{C_\mathrm{N}(t)+K} - \lambda - \dfrac{Q_\text{ebm}(t)}{\text{Vol}_\text{box}} \right) + \nonumber \\ \nonumber \\
 & +\dfrac{C_\mathrm{PHY}^\text{low}(t) \cdot Q_\text{ocean}(t)}{\text{Vol}_\text{box}}~. \label{EstuarioP}
\end{align}

\begin{align}
\dfrac{d C_\mathrm{N}(t)}{dt} ~=~~ & 0 ~, ~ \dfrac{d C_\mathrm{PHY}(t)}{dt} ~=~ 0 ~. \label{estacionario}
\end{align}

Before presenting the results, we briefly introduce the concept of spin-up time (Box \ref{box:spinup}), which is used to interpret model equilibrium.

\vspace{0.5cm}
\begin{codebox}[label=box:spinup]{\textbf{Spin-up time in marine biogeochemical models}}
The spin-up time of a marine biogeochemical model is the period required for the model to reach a dynamic and thermodynamic state close to equilibrium under a fixed set of observational forcings \cite{sarmientogruber}.
When modeling an estuarine environment, the spin-up time tends to be shorter, as these environments are dynamic and variations occur on short time scales.

Mathematically, the spin-up time can be viewed as an integration interval long enough for the differential equations defining the model to reach equilibrium.

Conceptually, we can say that during the spin-up time, the model "forgets" its initial conditions and begins to respond only to observations and to itself.
\end{codebox}

\vspace{0.5cm}
\begin{figure}[htpb!]
\caption{A conceptual biogeochemical NP-model for PEC.}
\begin{center}
\tikzset{every picture/.style={line width=0.75pt}} 

\begin{tikzpicture}[x=0.75pt,y=0.75pt,yscale=-1,xscale=1]

\draw  [fill={rgb, 255:red, 183; green, 214; blue, 255 }  ,fill opacity=1 ] (60,20) -- (406.5,20) -- (406.5,111) -- (60,111) -- cycle ;
\draw  [fill={rgb, 255:red, 74; green, 144; blue, 226 }  ,fill opacity=1 ] (60,111) -- (406.5,111) -- (406.5,202) -- (60,202) -- cycle ;
\draw  [fill={rgb, 255:red, 126; green, 211; blue, 33 }  ,fill opacity=0.86 ] (251,55.19) .. controls (251,50.77) and (254.58,47.19) .. (259,47.19) -- (313,47.19) .. controls (317.42,47.19) and (321,50.77) .. (321,55.19) -- (321,79.19) .. controls (321,83.61) and (317.42,87.19) .. (313,87.19) -- (259,87.19) .. controls (254.58,87.19) and (251,83.61) .. (251,79.19) -- cycle ;

\draw  [fill={rgb, 255:red, 208; green, 2; blue, 27 }  ,fill opacity=1 ] (223.1,72.84) -- (223.1,76.02) -- (251,76.02) -- (251,82.36) -- (223.1,82.36) -- (223.1,85.53) -- (211,79.19) -- cycle ;
\draw  [fill={rgb, 255:red, 189; green, 16; blue, 224 }  ,fill opacity=0.49 ] (141,55.19) .. controls (141,50.77) and (144.58,47.19) .. (149,47.19) -- (203,47.19) .. controls (207.42,47.19) and (211,50.77) .. (211,55.19) -- (211,79.19) .. controls (211,83.61) and (207.42,87.19) .. (203,87.19) -- (149,87.19) .. controls (144.58,87.19) and (141,83.61) .. (141,79.19) -- cycle ;

\draw  [fill={rgb, 255:red, 208; green, 2; blue, 27 }  ,fill opacity=1 ] (238.9,61.53) -- (238.9,58.36) -- (211,58.36) -- (211,52.02) -- (238.9,52.02) -- (238.9,48.84) -- (251,55.19) -- cycle ;
\draw  [fill={rgb, 255:red, 255; green, 255; blue, 255 }  ,fill opacity=1 ] (125.1,73.37) -- (125.1,69.62) -- (61,69.62) -- (61,62.12) -- (125.1,62.12) -- (125.1,58.37) -- (140.1,65.87) -- cycle ;
\draw  [fill={rgb, 255:red, 255; green, 255; blue, 255 }  ,fill opacity=1 ] (391.44,73.37) -- (391.44,69.87) -- (321.1,69.87) -- (321.1,62.87) -- (391.44,62.87) -- (391.44,59.37) -- (405.1,66.37) -- cycle ;
\draw  [fill={rgb, 255:red, 255; green, 255; blue, 255 }  ,fill opacity=1 ] (174,47.37) -- (174,47.37) .. controls (174,39.83) and (180.11,33.72) .. (187.65,33.72) -- (391.1,33.72) -- (391.1,29.37) -- (406.1,37.27) -- (391.1,45.17) -- (391.1,40.82) -- (187.65,40.82) .. controls (184.03,40.82) and (181.1,43.75) .. (181.1,47.37) -- (181.1,47.37) -- cycle ;
\draw  [fill={rgb, 255:red, 255; green, 255; blue, 255 }  ,fill opacity=1 ] (185.1,102.6) -- (181.59,102.6) -- (181.59,159.37) -- (174.56,159.37) -- (174.56,102.6) -- (171.05,102.6) -- (178.07,89.32) -- cycle ;
\draw  [fill={rgb, 255:red, 255; green, 255; blue, 255 }  ,fill opacity=1 ] (182.5,159) -- (244.04,159) .. controls (270.05,159) and (291.14,137.92) .. (291.14,111.91) -- (291.14,99.9) -- (295.5,99.9) -- (287.5,86.56) -- (279.5,99.9) -- (283.86,99.9) -- (283.86,111.91) .. controls (283.86,133.9) and (266.03,151.72) .. (244.04,151.72) -- (182.5,151.72) -- cycle ;
\draw    (34.83,225.25) -- (35.17,182.7) ;
\draw [shift={(35.19,180.7)}, rotate = 90.46] [color={rgb, 255:red, 0; green, 0; blue, 0 }  ][line width=0.75]    (10.93,-3.29) .. controls (6.95,-1.4) and (3.31,-0.3) .. (0,0) .. controls (3.31,0.3) and (6.95,1.4) .. (10.93,3.29)   ;
\draw    (34.83,225.25) -- (75.19,225.68) ;
\draw [shift={(77.19,225.7)}, rotate = 180.61] [color={rgb, 255:red, 0; green, 0; blue, 0 }  ][line width=0.75]    (10.93,-3.29) .. controls (6.95,-1.4) and (3.31,-0.3) .. (0,0) .. controls (3.31,0.3) and (6.95,1.4) .. (10.93,3.29)   ;

\draw (171,58.19) node [anchor=north west][inner sep=0.75pt]   [align=left] {N};
\draw (281,58.19) node [anchor=north west][inner sep=0.75pt]   [align=left] {P};
\draw (17.19,172.88) node [anchor=north west][inner sep=0.75pt]   [align=left] {$\displaystyle w$};
\draw (79.19,228.7) node [anchor=north west][inner sep=0.75pt]   [align=left] {$\displaystyle u$};
\draw (19,46) node [anchor=north west][inner sep=0.75pt]  [font=\small] [align=left] {River\\input};
\draw (416,36) node [anchor=north west][inner sep=0.75pt]  [font=\small] [align=left] {Exporting\\to the ocean};
\draw (61,121) node [anchor=north west][inner sep=0.75pt]  [font=\small] [align=left] {\begin{minipage}[lt]{77.18pt}\setlength\topsep{0pt}
\begin{center}
Internal circulation\\transfer
\end{center}

\end{minipage}};
\draw (216,61.19) node [anchor=north west][inner sep=0.75pt]  [font=\small] [align=left] {SMS};

\end{tikzpicture}    
\end{center}
\label{modelo}

\smallskip
Red arrows represent biochemical interactions, while white arrows represent geophysical circulation transfers. $N$ and $P$ denote the mean concentration of the tracers nitrate and phytoplankton in the upper box, respectively. Arrows pointing towards a tracer box represent a source of the tracer, whereas arrows pointing outward from a tracer box represent a sink of the tracer.

\smallskip
\textbf{Source:} the author (2026).
\end{figure}
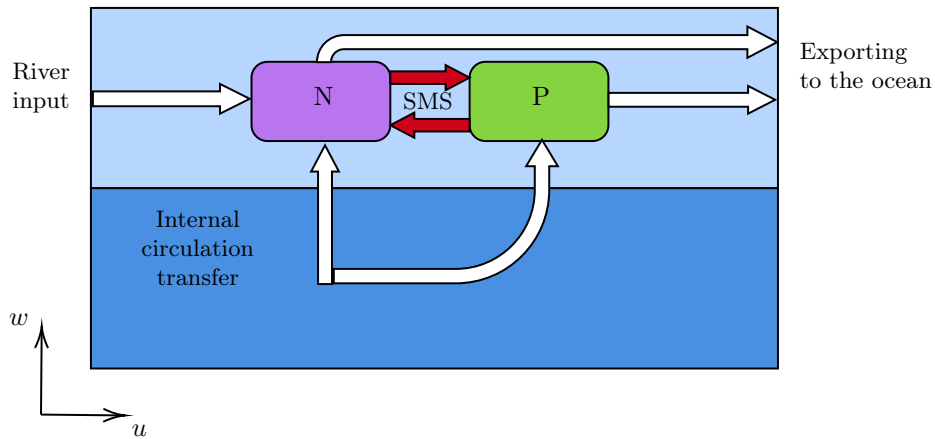

\section{Results and discussion}

This section summarizes the results obtained with the conceptual NP model developed for PEC. The simulations aim to assess the model’s internal consistency and its ability to reproduce the expected seasonal behavior of nutrients and phytoplankton in the PEC environment. Rather than providing quantitative forecasts, the results serve to evaluate how well the simplified structure captures the dominant processes and interactions described in previous field and modeling studies.

This model was implemented and run in Python with some adjustments (see the \textit{Appendix}). 

Although the model presented employs several simplifications and rough estimates, the results obtained are consistent with the available literature data. In Figure \ref{grafico1}, we present a forecast of the concentrations, in mmol m$^{-3}$, of nitrate and phytoplankton in the upper box of the estuary throughout one year. As expected, phytoplankton concentration is generally higher during the dry season than during the rainy season. Estimates for nitrate concentration were also close to the observational data.

According to Mizerkowski et al. \cite{Mizerkowski2012}, there is no consistent relation between the volume of water discharged by rivers in the PEC and the amount of nutrients. Therefore, nutrient concentrations in the rivers that drain the PEC exhibit significant variability, necessitating weekly or even daily monitoring for improved model accuracy. This would also make it possible to identify the input of nutrients transported from the coast to the estuary by rainwater and possibly find a relationship between rainwater discharge and nutrient input in the PEC.
The salinity of the estuary can vary due to precipitation, evaporation, and discharge of water and salts through rivers. Due to this change, salinity can damage cell membranes and change the density of seawater and, therefore, the buoyancy of certain organisms due to the water balance of cells \cite{Garrison2017}, so the data availability for the average salinity in PEC waters is essential for the purposes of limiting phytoplankton growth.

Low nutrient concentration is the main limiting factor for phytoplankton growth in our model, as any excess will be converted into biomass, while nitrate concentration values below 0.00037  mmol m$^{-3}$ inhibit the growth of phytoplankton \cite{sarmientogruber}. The the primary reason for the time-step in our model being counted in days, rather than fractions of a day, is that representing variations in phytoplankton growth and mortality at different daily times would necessitate a significantly greater degree of complexity. The main sources of nitrate for the estuary are the rivers that drain their respective watersheds, the ocean itself through advection and turbulent mixing -- which is not explored in our model, due to complexity increasing --  and the recycling of nearby photoautotrophic organisms \cite{sarmientogruber,carbonbiogeochem}. Furthermore, anthropogenic input is a crucial factor to consider when modeling estuarine environments (see Box \ref{box:humaninput}). 

\begin{figure}[htpb!]
\caption{NP-model: Predicted behavior for the nitrate and phytoplankton concentrations along one year on PEC.}
\begin{center}
\includegraphics[width=0.8\linewidth]{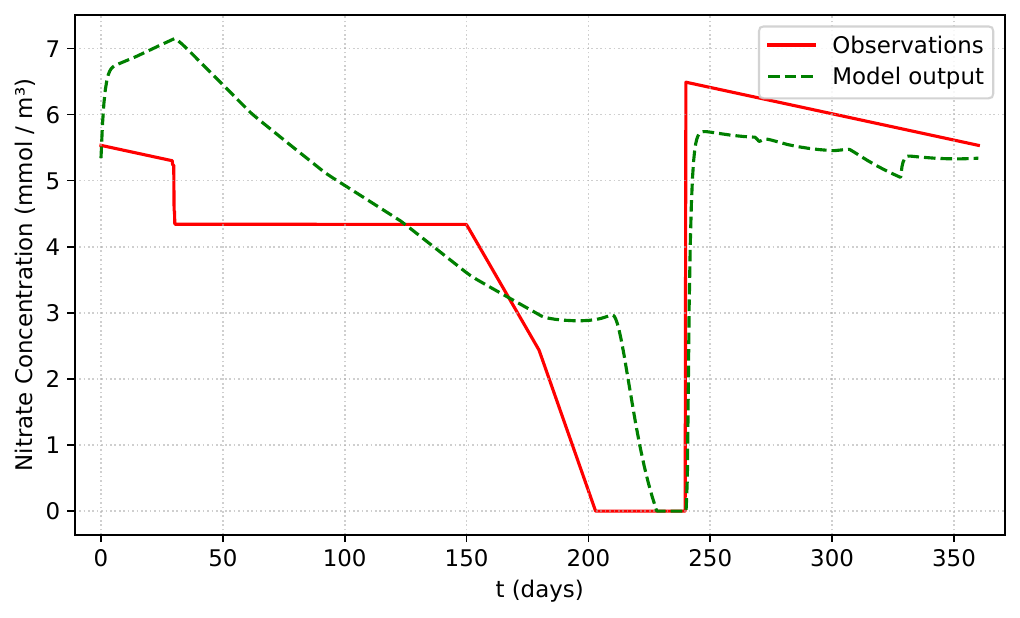}

\vspace{0.3cm}
\includegraphics[width=0.8\linewidth]{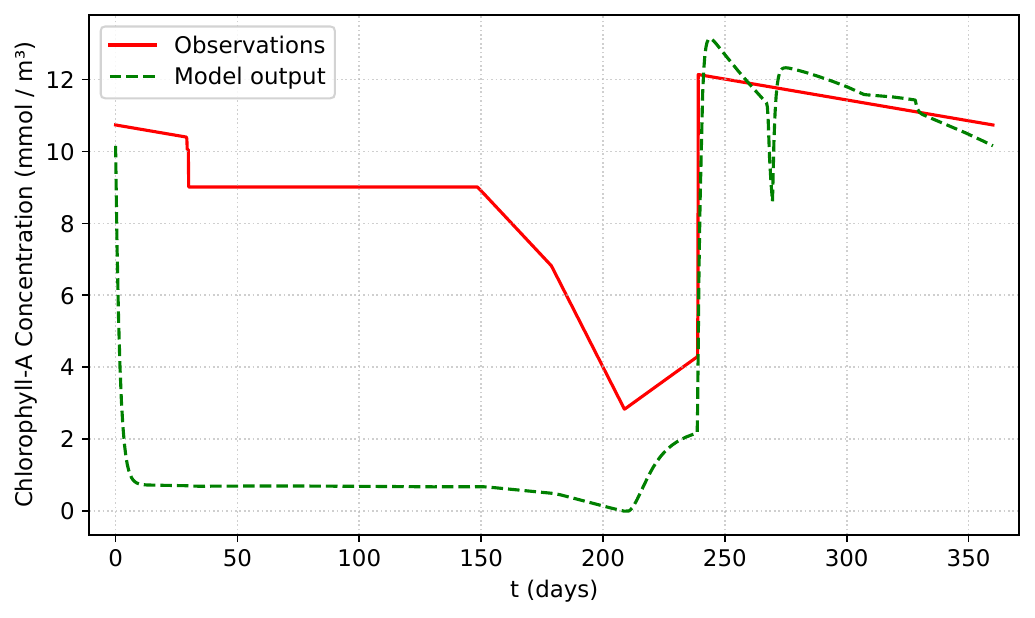}    
\end{center}
\label{grafico1}

\smallskip
Considering the model defined by equations (\ref{EstuarioN}) - (\ref{estacionario}), the settings of Table \ref{ParametrosEstuario}, and the data presented in figures \ref{fig:dataSal} - \ref{fig:dataPhy}.

\smallskip
\textbf{Source:} the author (2026).
\end{figure}

\begin{codebox}[label=box:humaninput]{\textbf{Human impacts on estuarine environments}}
Estuaries are bodies of water characterized by the interconnection between marine and river environments. They play a fundamental role in human development, as, in addition to providing natural protection for ports and industrial facilities by facilitating access to food and energy resources, estuaries also offer a range of ecological benefits, including nurseries for various marine species, water filtration, and flood protection \cite{Lana2001,Brandini1985,Procopiak}.
In this context, mathematical modeling is a powerful tool for studying the impacts of human occupation on estuaries and developing effective measures for their preservation \cite{Verri2020,modelling,Gruber2019}.
The intensification of human occupation in areas close to estuaries generates a series of environmental impacts, putting the natural balance of these ecosystems at risk. Pollution caused by the discharge of domestic and industrial effluents without adequate treatment, in addition to the flow of pesticides and other contaminants, is called \textit{anthropogenic input} and contributes to the degradation of water quality in estuaries, which can lead to eutrophication, death of marine organisms, and harmful proliferation of algae \cite{Martins2010,Martins2015,Mizerkowski2012,Perez2020,Zehr2002}.
This disturbance of the natural balance of estuaries can have serious consequences for the environment and for the communities that depend on these ecosystems for their sustenance, as it can lead to a decrease in fishing productivity and the loss of habitat for threatened species \cite{Canuel2016,Dan2020}.

Regarding the nitrogen input into PEC coming from human activity, we can outline two main types: Diffuse sources  come from surface runoff from agricultural and urban areas, from the fertilizer industry and docks in ports (Figure \ref{grafico2}), while point sources (in concentrated form) of domestic or industrial origin come from sewage that flows into water bodies without prior treatment \cite{Mizerkowski2012,Rodrigues2013}.

    Recent historical reconstructions of the PEC reveal strong anthropogenic influences, where activities such as agriculture, urban tourism, river diversion, and maintenance have reshaped the local hydrodynamics and modified its nutrient deposition patterns \cite{marines2023}, suggesting the need to consider diffuse and point sources of anthropogenic nitrate supply when developing an NP model for the PEC.
\end{codebox}


\vspace{0.5cm}
The euphotic zone in the coastal zone may be restricted to just a few centimeters of the water column, as is the case in estuaries, where the waters are turbid by the presence of sediments, particulate matter, and humic acids \cite{Giane2013}. This fact validates the design of our model, which considers the growth of phytoplankton only in the upper box. In addition to light being a limiting factor in this case, the scarcity of nutrients also plays a prominent role, regulating the development of the phytoplankton community and, consequently, the other organisms that make up the food web in that habitat \cite{Calijuri2013}.
The zone of maximum turbidity (ZMT) present in partially mixed estuaries, such as the PEC, functions as a sediment trap, as it maintains the fine particles that make up the sediment for a long time, generally composed of material particulates, nutrients, organic matter, and pollutants, as they constitute the part that adheres or is adsorbed to these substances. Furthermore, the change in the ZMT can erode the bottom when the tide is falling, and at high tide, it can deposit this material \cite{Zem2008}. This fact supports the choice of the parameter for phytoplankton remineralization rate as 70\%, even though the model does not explicitly consider flows indicating this rate.

The phytoplankton concentration is estimated through the analysis of the chlorophyll-a concentration in the water, and we use here the seasonal values presented by \cite{Brandini1985} for comparison with the results of our model. The results obtained by simulating these scenarios with our model confirmed these assumptions. Furthermore, studies indicate that a longer residence time, salinity, and anthropogenic input directly influence the increase in phytoplankton concentration, which is an indication of environmental imbalance, while frequent rains decrease salinity, nutrient concentration, and residence time of water in an estuary, which can cause a drastic reduction in the total phytoplankton concentration \cite{Barroso2016}.
Studies indicate that a longer residence time, salinity, and anthropogenic input directly influence the increase in the phytoplankton concentration, which is an indication of environmental imbalance, while frequent rains decrease the salinity, nutrient concentration, and residence time of water in an estuary, causing a drastic reduction in the local phytoplankton concentration \cite{Barroso2016}.

Mathematical modeling is a helpful tool for studying the dynamics of estuaries. However, in order to make an effective contribution, it is necessary to know the processes involved to facilitate the creation and maintenance of the model, for example, by understanding how effectively nitrate limits phytoplankton productivity, because this conclusion is based on a few studies that evaluate nitrate limitation indirectly \cite{Zehr2002}.

\begin{figure}[htpb!]
\caption{Diffuse anthropogenic supply simulation on PEC.}
\begin{center}
\includegraphics[width=0.8\linewidth]{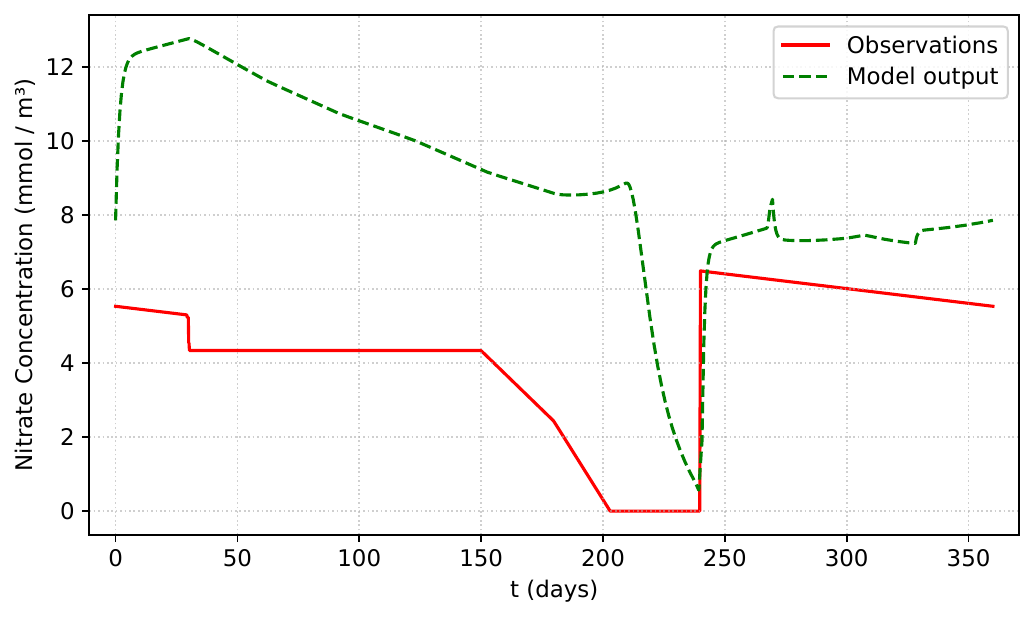}

\vspace{0.3cm}
\includegraphics[width=0.8\linewidth]{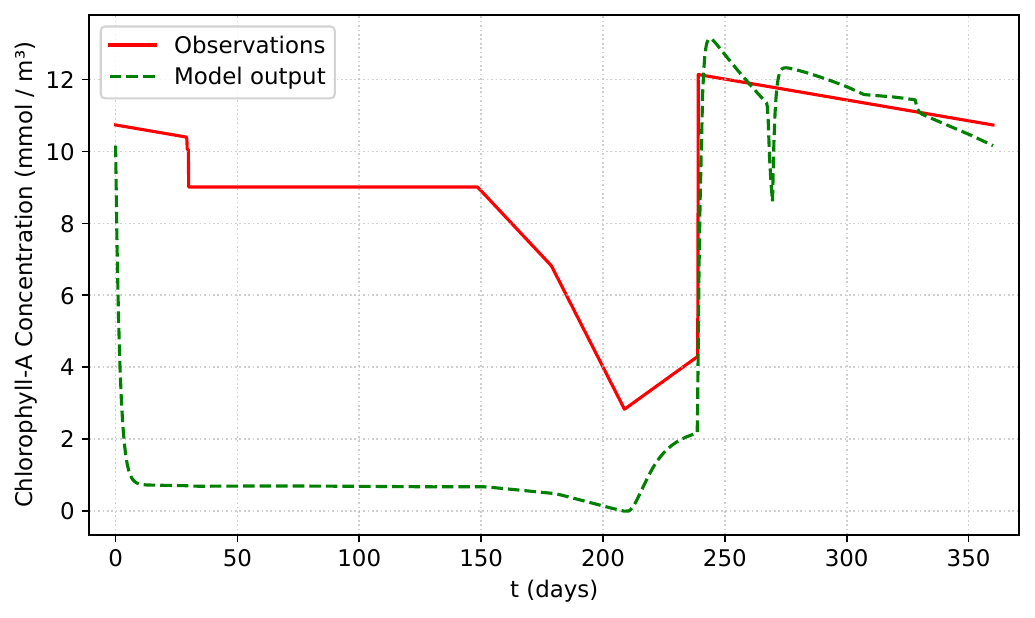}    
\end{center}
\label{grafico2}

\smallskip
The diffuse supply of nutrients directly influences a greater concentration of phytoplankton. To illustrate this situation, we increased the concentration data input from the rivers in 10  mmol m$^{-3}$, while keeping all the other model settings and equations as in Figure \ref{grafico1}. The plot shows the model's forecast during the year.

\smallskip
\textbf{Source:} the author (2026).
\end{figure}

\section{Model limitations and development opportunities}

In this modeling study, the flows were determined based on assumptions, rather than direct measurements. As our goal is to exemplify the development and application of a simple biogeochemical model when considering assumptions for data extrapolation, even with some data available on the river fluxes, there are still unpredictable factors, such as the input of rainwater and non-mapped water sources. In this context, the flow $Q_\text{ebm}$ was estimated based on the average residence time of the estuary, while the flow $Q_\text{river}$ was calculated proportionally to the river discharge with the greatest contribution to the system, including the estimated water input from rainfall. The flows $Q_\text{ebm}$, $Q_\text{river}$, and $Q_\text{ocean}$ were obtained from equations that relate flow rate and salinity, also using observed salinity data in the estuary.
In the context of modeling, incorporating data from direct measurements of flows in the estuary could increase the model's representativeness in relation to the circulation dynamics of the PEC.

A promising possibility would be to replace the simplified two-box model with a more detailed circulation model of the PEC. This more complex model could include a more refined spatial discretization and, ideally, the use of time series of observational data of the flows, which involves increased computational cost.

Another parameter estimated in this study was the riverine input of nitrate into the PEC waters. Although precisely quantifying this input is challenging, the estuary model can be adapted to generate daily estimates of nitrate input, considering the sum of both fluvial and anthropogenic contributions. The use of a more sophisticated circulation model could improve the accuracy of these estimates.

The availability of up-to-date and sufficient data for model execution, in order to reduce the need for data extrapolation, would be ideal for better output performance. The modeling of SMS functions is also an issue that can be revisited in the future.

Finally, we note that the data used for validating this model come from different sources and refer to different years. This impairs the representativeness of the model because, ideally, all data would be related to the same days of the same year. Even so, when simulating different theoretical scenarios, the model showed the expected behavior consistent with what is described in the literature. During dry periods, there was an increase in phytoplankton, which agreed with the expected estimate. Recent sedimentary reconstructions \cite{marines2023} suggest that long-term changes in the PEC are closely linked to land-use evolution, supporting the notion that simplified conceptual models can still capture the dominant local processes shaping nutrient and phytoplankton dynamics, while showing that the PEC has been well-monitored from a geochemical perspective, which is an interesting aspect to be explored further. Thus, rather than aiming for predictive accuracy, this model is intended as a computational testbed for the optimization and calibration procedures presented in \textit{Chapter \ref{chapterCalibratingPEC}}. Although more complex biogeochemical models could represent the system in greater detail, they also require higher computational resources and remain constrained by limited observational coverage.

\chapter{Model calibration and derivative-free optimization}\label{chapterdfols}

In this chapter, we present the optimization strategies used to calibrate the model parameters, which will be applied in the following chapter. Parameter calibration is treated as an adjustment process: we seek parameter values that make the model outputs match the observed data as closely as possible. 
These methods are particularly suitable for problems where the objective function is expensive to evaluate, noisy, or non-smooth, which exploration is a natural progression of the work presented here. We also discuss how to define a misfit function that properly measures the difference between model results and observations, considering normalization, regional weighting, and data uncertainty. \\
First, we review the basic concepts of optimization, including local and global minima, constrained and unconstrained problems, and how multi-objective problems can be reformulated as least-squares problems. Next, we describe the construction of the misfit function and the techniques used to minimize it. At last, we review the Derivative-Free Optimization for Least Squares (DFO-LS) algorithm \cite{principal, DFOLSmops2022}. Practical aspects are also discussed, setting the stage for the applications presented in the next chapter.

\section{Basic concepts of optimization}
Optimization is a branch of Mathematics that deals with problems of minimization or maximization of an objective under possible constraints. Such a problem must be described within a general formulation as 


\begin{align}
& \text{minimize} ~~  f( \mathbf{\omega} )\nonumber\\
& \text{subject to:} ~~   \mathbf{\omega} \in U \subset \R^n  \label{objective1constrained}
\end{align}
where $U=\left\lbrace  \mathbf{\omega} \in \R^n \ : \ g_i( \mathbf{\omega} )\leq 0 \ ,\ h_j( \mathbf{\omega} )=0\ ;\ i=1,\dots,k; j=1,\dots,\ell \right\rbrace$ is the feasible set, $f: \R^n \rightarrow \R$ is the objective function and $g:\R^n \rightarrow \R^k$, $h:\R^n \rightarrow \R^\ell$ are the functions that define the constraints.

There are two standard solutions to \eqref{objective1constrained}: local and global minimizers. We can define them as follows.

\begin{defi}[Local and global minimizers]\label{deflocalglobalmin}
    Considering the minimization problem (\ref{objective1constrained}), we say that $\Bar{ \mathbf{\omega} }_L$ is a local minimizer or a local solution for the problem if, and only if
    $$f(\Bar{ \mathbf{\omega} }_L)\leq f( \mathbf{\omega} ) , ~ \forall  \mathbf{\omega} \in B(\Bar{ \mathbf{\omega} }_L, \epsilon) \cap U ,$$
    for some $\epsilon>0$. Also, we say that $\Bar{ \mathbf{\omega} }_G$ is a global minimizer or a global solution for the problem (\ref{objective1constrained}) if, and only if
    $$f(\Bar{ \mathbf{\omega} }_G)\leq f( \mathbf{\omega} ) , ~ \forall  \mathbf{\omega} \in  U .$$
\end{defi}

For practical reasons, it is unusual to search for a global minimizer for an optimization problem. In practice, finding global minimizers is rarely guaranteed; thus, a common strategy is to explore multiple local minimizers and select the best among them.

\vspace{0.5cm}
It is also possible to use previous knowledge to estimate which optimization algorithms are faster in calculating a solution for the optimization problem. This would be done by analyzing the features of the functions $f$, $g$, and $h$ (for example, smoothness, noise, convexity) and finding an algorithm known by producing good results related to such features, but also by empirical testing on representative problems.

\vspace{0.5cm}
Another situation arises when we are dealing with \textit{multi-objective optimization}. In this case, we would want to find a local optimizer for the problem (\ref{objective1constrained}), with the addition that $f: \R^n \rightarrow \R^m$, for some $m \in \mathbb{N}$. Several optimization techniques are directly applicable to such problems, although not every optimization method is specifically designed to handle multi-objective optimization. Yet, sometimes, it is possible to convert a multi-objective optimization problem into a \textit{single-objective} optimization problem, as is the case of a least squares problem. This reformulation will be explored in the following sections.

\section{Calibrating parameters from observations}

In many applications, the forward model to be calibrated is, in practice, treated a \emph{black-box}: the optimization algorithm sees only inputs (parameters) and outputs (diagnostics), but does not have access to the analytical form of the objective function. In ocean biogeochemistry, for example, the inputs are parameter values, and the outputs are predicted tracer fields. Calibration then seeks parameters whose model outputs best match an observation data set.



The known strategies for automatic parameter calibration \cite{principal, DFOLSmops2022} involve optimizing an objective function known as the \textit{misfit function}. This function quantifies the difference between model outputs and observation data and can be seen as a measure of calibration quality. We outline a generic construction and weighting scheme suitable for global applications (and used in complex settings such as \cite{DFOLSmops2022}).

Let $\omega$ be a vector whose entries are the parameters we want to calibrate; $\mathcal{O}_{q,i,j}$ the observed concentration of the tracer $A_q$ on a subregion $R_{i,j}$, with volume $V_{i,j}$, inside a larger region $R_j$, with volume $V_j$, at a fixed moment in time; 
$m_{q,i,j}( \mathbf{\omega} )$ the model prediction for the concentration of the tracer $A_q$ on $R_{i,j}$, at the moment corresponding to the observation $\mathcal{O}_{q,i,j}$; where $ \mathbf{\omega} \in \R^{\ell_\text{parameters}}$, $q=1, \dots, \ell_\text{tracers}$, $i=1, \dots, \ell_\text{j}$, $j=1, \dots, \ell_\text{regions}$. We want to simultaneously minimize a set of functions $f_{q,i,j}: \R^{\ell_\text{parameters}} \rightarrow \R_+$, defined as:
\begin{equation}
    f_{q,i,j}( \mathbf{\omega} )= \lvert m_{q,i,j}( \mathbf{\omega} )- \mathcal{O}_{q,i,j} \rvert ~,
\end{equation}
for $ \mathbf{\omega} \in \R^{\ell_\text{parameters}}$, $q=1, \dots, \ell_\text{tracers}$, $i=1, \dots, \ell_j$, $j=1, \dots, \ell_\text{regions}$. Taking into account the fact that the parameters must lie on some defined intervals to make sense, there is also a set of constraints that must be satisfied, which we will not define yet, but call it $U \subset \R^{\ell_\text{parameters}}$. We could handle this task by using a strategy for simultaneously solving the constrained minimization problems:
\begin{align}
    & \text{minimize}  ~~ f_{q,i,j}( \mathbf{\omega} )\nonumber\\
    & \text{subject to:}  ~~  \mathbf{\omega} \in U ~, \label{minMisfit00}
\end{align}
for $ \mathbf{\omega} \in \R^{\ell_\text{parameters}}$, $q=1, \dots, \ell_\text{tracers}$, $i=1, \dots, \ell_\text{j}$, $j=1, \dots, \ell_\text{regions}$.

\vspace{0.5cm}
One can notice that the problem (\ref{minMisfit00}) may be converted into a \textit{least-squares problem}. This strategy reduces the problem's dimensionality, transforming it from a multi-objective optimization problem into a single-objective one. Considering that we can accommodate the problem constraints within the strategies to solve the optimization problem, we will now describe how to rewrite the objective function of the problem (\ref{minMisfit00}) as one corresponding to a least-squares constrained problem:
\begin{align}
    & \text{minimize}  ~ f(\omega) =  \sum_{q=1}^{\ell_\text{tracers}} r_q(\omega)^2 \nonumber\\
    & \text{subject to:}  ~ \omega \in U \subset \R^{\ell_\text{parameters}} \label{lqproblem}
\end{align}
where $r_q$ is the residual function corresponding to the tracer $A_q$, which will be better described in the following. We now discuss details on the residual formulation presented in \cite{onesize}. First, consider a tracer $A_q$, a region $R_j$ with its corresponding subregions $R_{i,j}$, and a set of corresponding observations. A first attempt to write a residual function $r_{q,j}(\omega)$ could be:

\begin{equation}
    r_{q,j}(\omega)= \sqrt{\sum_{i=1}^{\ell_j} f_{q,i,j}( \mathbf{\omega} )^2 }
\end{equation}

\vspace{0.5cm}
Mathematically, this is a valid attempt. But, numerically, it may lead to \textit{ill-conditioning} due to several factors. For instance, in the context of marine biogeochemistry models, the observations for some tracers can be typically very small in comparison to other tracers, so the minimization of $f(\omega)$ may end up not taking into account these observations. Moreover, the impact of the observations over the misfit function can be disproportional to the volume of the region sampled. To prevent both of these possible outcomes, we can consider the attribution of weights, leading to the following formulation for the residual function:

\begin{equation}
    \Tilde{r} _{q,j}(\omega)= \sqrt{\dfrac{V_j}{\sum_{j=1}^{\ell_\text{regions}} V_j}} \dfrac{\sqrt{\sum_{i=1}^{\ell_j} f_{q,i,j}( \mathbf{\omega} )^2 \frac{V_{i,j}}{V_j}}}{\sum_{i=1}^{\ell_j} \mathcal{O}_{q,i,j} \frac{V_{i,j}}{V_j}}
\end{equation}

\noindent where it is easy to notice that the weightings satisfy
$\sum_{j=1}^{\ell_\text{regions}} \dfrac{V_j}{\sum_{j=1}^{\ell_\text{regions}} V_j}=1$ ~ and ~ $\sum_{i=1}^{\ell_j} \frac{V_{i,j}}{V_j}=1$.

\vspace{0.5cm}
Thus, we may define the residual function corresponding to a tracer $A_q$ as:

\begin{equation}
r_q(\omega)=\sqrt{\sum_{j=1}^{\ell_\text{regions}} \Tilde{r} _{q,j}(\omega)^2} ~.    
\end{equation}

\vspace{0.5cm}
 During the calibration and validation steps of the biogeochemical model, the sparsity of data and noise pose significant obstacles to obtaining precise and reliable tracer observations. While the first problem can be mitigated by filtering on the optimized parameters, the latter can be handled by modifying the function $f_{q,i,j}$ to account for some noise. If we consider the modification of $f_{q,i,j}$ of the function $f_{q,i,j}^\text{Noise}:\R^{\ell_\text{parameters}} \rightarrow \R_+$ :

\begin{equation}
    f_{q,i,j}^\text{Noise}( \mathbf{\omega} )= \lvert m_{q,i,j}( \mathbf{\omega} )- \left( \mathcal{O}_{q,i,j} + \varepsilon_{q,i,j} \right) \rvert ~,
\end{equation}

\noindent we can now solve a minimization problem that accounts for uncertainties over the observations by solving the problem:
\begin{align}
    \text{minimize} & ~~ f^\text{Noise}( \mathbf{\omega} ) = \sum_{q=1}^{\ell_\text{tracers}} r_q^\text{Noise}(\omega)^2 \nonumber \\
    \text{subject to:} & ~~  \mathbf{\omega} \in U ~, \label{ProblemMisfit01}
\end{align}
where the residual function corresponding to a tracer $A_q$ is defined as:

\begin{equation}
r_q^\text{Noise}(\omega)=\sqrt{\sum_{j=1}^{\ell_\text{regions}} \Tilde{r} _{q,j}(\omega)^2} ~,     
\end{equation}

\noindent with 

\begin{equation}
    \Tilde{r}_{q,j}^\text{Noise}(\omega)= \sqrt{\dfrac{V_j}{\sum_{j=1}^{\ell_\text{regions}} V_j}} \dfrac{\sqrt{\sum_{i=1}^{\ell_j} f_{q,i,j}^\text{Noise}( \mathbf{\omega} )^2 \frac{V_{i,j}}{V_j}}}{\sum_{i=1}^{\ell_j} \mathcal{O}_{q,i,j} \frac{V_{i,j}}{V_j}} ~.
\end{equation}

\vspace{0.5cm}
In applied studies, specific index sets and data choices instantiate $f^\text{Noise}$; diverse DFO algorithms \cite{DFOLSoriginal,DFOreview1,Larson2019} have been used to minimize \eqref{ProblemMisfit01}. Due to their advantages, these methodologies have already been applied to optimize models in more complex settings, such as in \cite{Kriest2012}, where a sensitivity analysis reduced parameter dimension, and in \cite{principal, DFOLSmops2022}, where the optimization strategies Covariance Matrix Adaptation Evolution Strategy (CMA-ES) and Derivative Free Optimization for Least Squares (DFO-LS) delivered competitive fits under limited computational budgets.

\section{Performance measure for an optimization algorithm}

The performance of an optimization algorithm in model calibration depends mainly on two aspects: its computational cost and the quality of the fit it produces.
At the same time, the accuracy with which the calibrated parameters reproduce the observed data must be evaluated using problem-dependent metrics, such as the least squares error or observational parameter recovery, as discussed in the following chapter.
A reliable algorithm must therefore balance these two dimensions: obtaining a sufficiently accurate fit while keeping the computational costs compatible with the application's constraints. In well-defined calibration problems, this translates into convergence to a consistent set of parameters that performs well across all chosen metrics.\cite{onesize}.

When comparing the performance of two algorithms, several common situations must be considered. Some of these situations are directly related to the modeling and implementation of the fitting strategy, such as the model's or optimizer's ability to handle noise in the observational data and their sensitivity to parameter variations. 

Another common problem is the scarcity of observational data, especially in the context of marine biogeochemistry. In the context of global ocean modeling, collecting data in remote environments, such as the deep ocean and oceanic regions near the poles, poses a significant challenge. Even when data collection is possible, observations (such as tracer concentrations) can vary significantly in time and space, making it difficult to obtain a complete dataset. Still, even in ideal situations in the aforementioned contexts, data may not be available because they are not yet being monitored. These limitations should also be taken into account when evaluating the performance of the algorithms.

A standard comparison protocol involves running both algorithms under identical conditions and recording the real-time iteration count and error obtained. These methodologies have already been applied to systematically compare the CMA-ES and DFO-LS algorithms in the context of global ocean biogeochemical modeling.\cite{DFOLSmops2022}.

\vspace{0.5cm}
\begin{codebox}[label=box:cmaes]{\textbf{An optimization alternative: evolutionary computation}}

Evolutionary computation algorithms are optimization methods inspired by the process of natural selection. They work by iteratively generating, evaluating, and improving a population of candidate solutions to a problem, mimicking biological evolution through mechanisms such as mutation, recombination, and selection. In this context, contrasting with the DFO-LS algorithm, which is based on a deterministic trust-region approach, the Covariance Matrix Adaptation Evolution Strategy (CMA-ES) \cite{cmaes01,cmaes02} is a stochastic algorithm that relies on systematic sampling and the gradual improvement of the best-performing candidates for the parameters being optimized. Figure \ref{fig:cmaes} shows a schematic representation of the CMA-ES algorithm iteration near convergence.
\end{codebox}

\begin{figure}[h]
\caption{Iterative scheme of the CMA-ES algorithm.}
\begin{center}
\input{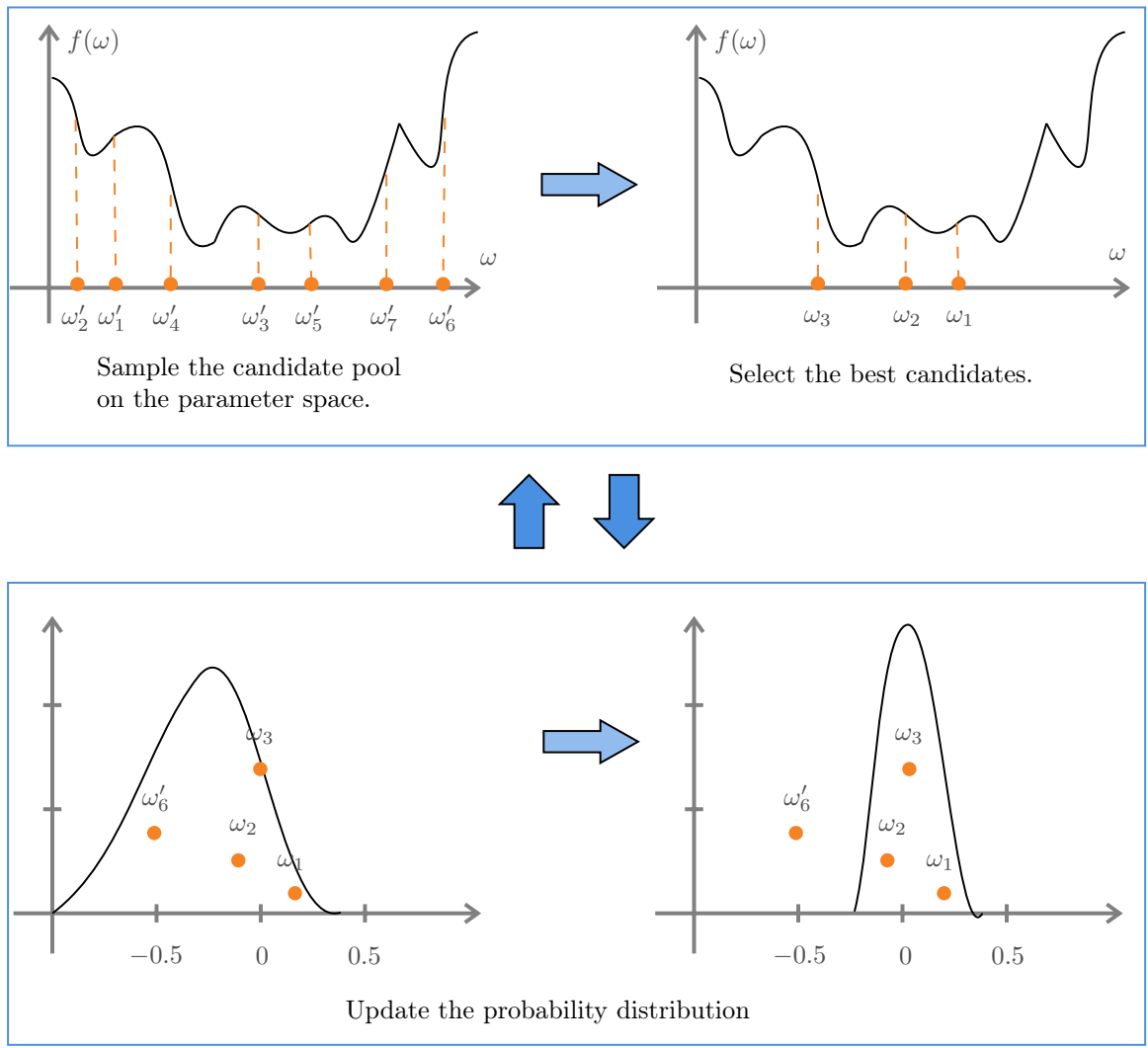}    
\end{center}
\label{fig:cmaes}

\smallskip 
\textbf{Source:} the author (2026).
\end{figure}

\section{DFO-LS: Derivative Free Optimization for Least-Squares}

An optimization strategy recently explored for parameter calibration is DFO-LS, which is a derivative-free trust-region method explicitly tailored to nonlinear least-squares objectives. It builds local surrogate models of the residual vector from function evaluations only and chooses steps by minimizing a quadratic approximation within a trust region, which can markedly reduce the number of expensive model evaluations when the objective is a sum of squared residuals \cite{DFOGN,DFOLSmops2022}. Thanks to its residual-aware design, DFO-LS is well suited to noisy, nonconvex black-box calibrations with simple constraints.

In the following, we outline the core elements of DFO-LS, as interpolation-set management, linear residual modeling, quadratic objective modeling, trust-region subproblems, acceptance/updates, and restarts. We also note that methods of this class have been applied in complex biogeochemical settings \cite{DFOLSmops2022}, illustrating their utility beyond conceptual problems.

DFO-LS is a trust-region least-squares minimization method. Compared with evolutionary strategies (see Box \ref{box:cmaes}), DFO-LS exploits the least-squares structure explicitly, often reducing the number of expensive model evaluations on noisy, nonconvex black-box problems \cite{DFOLSmops2022}. Owing to these advantages (no gradients required, robustness to noise, and explicit exploitation of the residual structure), DFO-LS has been used to calibrate increasingly complex models. Applications include global ocean biogeochemistry (for example, MOPS) \cite{DFOLSmops2022}, climate and atmospheric chemistry models \cite{shoemaker2015}, among many more \cite{Wild_2015, ALARIE2021100011}.

Consider a continuous function $\mathrm{r}(\mathbf{\omega})$, where
\begin{align}
\mathrm{r}: & \R^n \rightarrow \R^m \label{Fdef}\\
& \mathbf{\omega} \mapsto (r_1, ... , r_m)  \nonumber
\end{align}
and the constraints $g(\mathbf{\omega}) \leq 0$ , $h(\mathbf{\omega})=0$, where
\begin{align}
g: & \R^n \rightarrow \R^{m'} \label{gdef}\\
& x \mapsto (g_1(\mathbf{\omega}), ... , g_{m'}(\mathbf{\omega})) \nonumber \\
h: & \R^n \rightarrow \R^{m''} \label{hdef}\\
& x \mapsto (h_1(\mathbf{\omega}), ... , h_{m''}(\mathbf{\omega})) \nonumber   
\end{align}

The DFO-LS method is an iterative algorithm that searches for a local solution for a problem equivalent to the least-squares one for $\mathrm{r}(\mathbf{\omega})$ under the constraints $g(\mathbf{\omega})\leq 0$, $h(\mathbf{\omega})=0$. That is, considering the parameter space 
\begin{equation}\label{Sdef}
U=\{ \mathbf{\omega} \in \R^n : g(\mathbf{\omega}) \leq 0, h(\mathbf{\omega})=0 \} \ ,    
\end{equation}
DFO-LS iteratively search for a local minimizer for the problem:
\begin{equation}\label{minProblem}
\min_{\mathbf{\omega} \in U} f(\mathbf{\omega})=\sum_{i=1}^{m}[r_i(\mathbf{\omega})]^2 .    
\end{equation}

In our setting, the objective $f$ is precisely the least-squares misfit, and feasibility is addressed within the trust-region framework.

DFO-LS is a development of the earlier Derivative-Free Gauss-Newton method (DFO-GN) \cite{DFOGN}, with enhancements in geometry management and restarts. The DFO-LS outline can be seen as:
\begin{enumerate}
    \item Choose an initial point  $\mathbf{\omega}_0 \in \R^n$.
    \item Choose $n$ points around $\mathbf{\omega}_0$.
    \item Iteratively:
    \begin{enumerate}[label=\arabic*.]
        \item Build a quadratic regression model for $f(\mathbf{\omega})$ around $\mathbf{\omega}_k$, based on the n+1 points already selected.
        \item Exclude one old point and add another using a trust-region strategy.
        \item Choose the best point in the pool to be $\mathbf{\omega}_{k+1}$.
    \end{enumerate}
\end{enumerate}
These steps are illustrated in Figure \ref{dfols}. Details of each component of the DFO-LS algorithm are presented next.

DFO-LS must be given a starting location within the parameter space to use for initialization.  The strategy chosen to increase confidence that DFO-LS had found a good minimizer was to start from multiple initial points. DFO-LS can also be restarted to improve global exploration: \emph{hard restarts} reinitialize the entire interpolation set at new locations; \emph{soft restarts} shift part of the set toward geometry-improving points \cite{DFOLSmops2022}.

\subsection{Building the quadratic regression model}

From a regression standpoint, DFO-LS builds its surrogate by posing a \emph{local least-squares fit} of the residual vector. Given an iteration $k$, we sample a set of points $Y_k = \left\lbrace y_1, \dots, y_p \right\rbrace$ such that $y_t = \omega_k + s_t$ and $\lVert s_t \rVert$ is small, for $t = 1, \dots, p$, that is, we sample a set of points in a close neighborhood of $\omega_k$. Based on these evaluations, we build the equivalent of a linear regression model for the misfit function gradient (which is not defined, since it is a black-box function) in a neighborhood of $\omega_k$. This approximation is then used to construct the equivalent of a Hessian for the misfit function at $\omega_k$. Based on both of these approximations, DFO-LS builds the equivalent of a second-order Taylor approximation for the misfit function in a neighborhood of $\omega_k$. Minimizing this local quadratic surrogate model for the misfit function defines the trust-region subproblem. This regression view clarifies both efficiency (one model serves all residuals) and robustness to noise (least-squares averaging), while keeping constraint handling natural via the trust-region geometry \cite{DFOGN, DFOLSmops2022}.

\begin{figure}[htpb!]
\caption{Iterative scheme of the DFO-LS algorithm.}
\begin{center}
\input{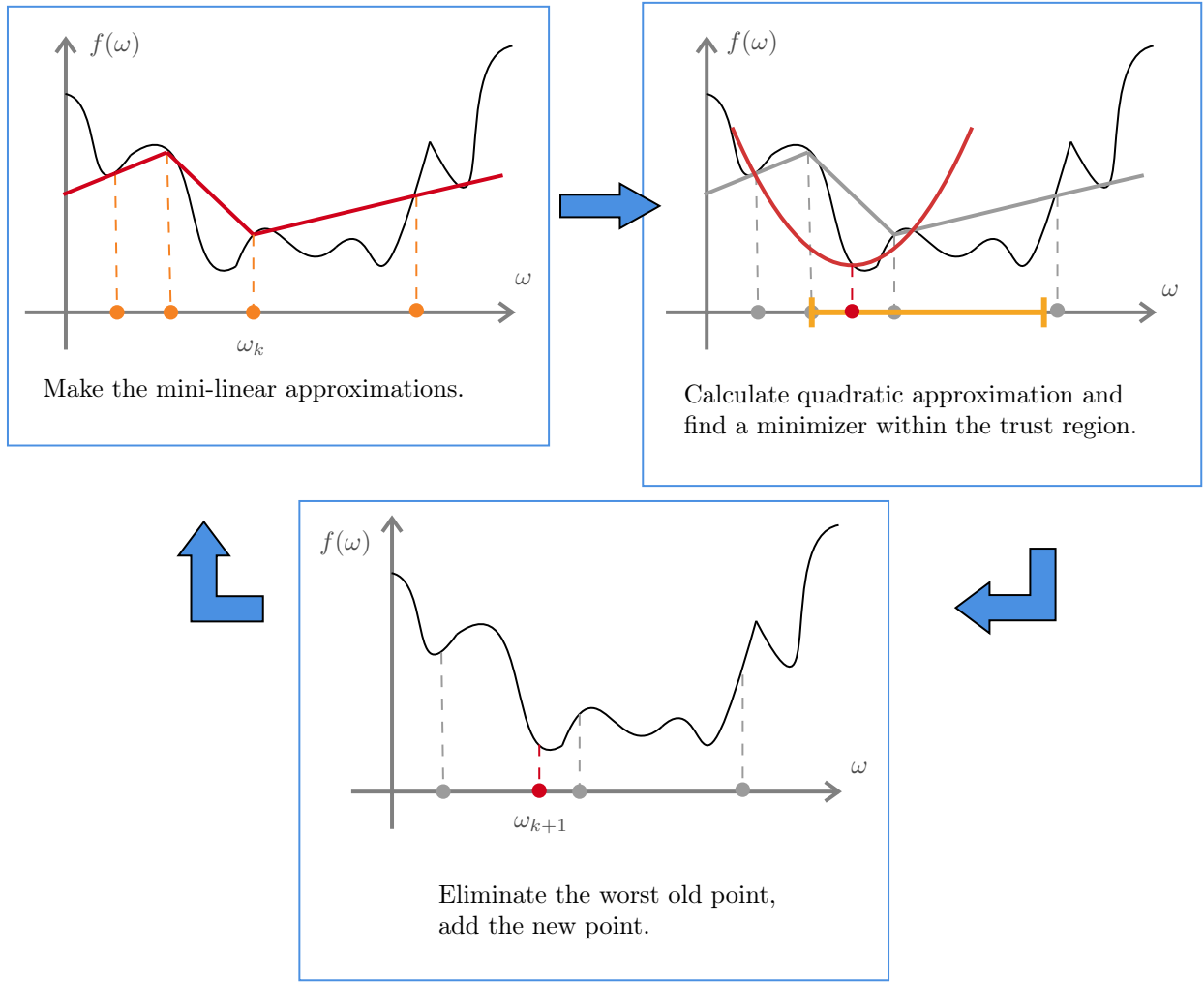}    
\end{center}
\label{dfols}

\smallskip
After evaluating the objective function at a set of points, the DFO-LS algorithm iteratively performs the sequence of steps represented by the upper-right panel, upper-left panel, and lower panel, respectively. In all panels, dots represent candidates $\omega$, the background black curve represents the plot of the black-box misfit function $f(\omega)$. In the upper-left panel, the dots represent each element of the candidate pool for the iteration $k$, where $\omega_k$ is the best fit at this iteration, while the red solid line segments represent the information used for building a quadratic local model for $f(\omega)$. In the upper-right panel, the red curve represents the quadratic local model for $f(\omega)$ in a neighborhood of $\omega_k$, while the interval marked by the yellow horizontal line represents the search bounds for its minimizer.
The highlighted red point in the lower panel represents the candidate considered the best fit at iteration $k+1$, while a removed gray point corresponds to the least fit at this iteration.

\smallskip
\textbf{Source:} the author (2026).
\end{figure}

Consider the functions $\mathrm{r}(\mathbf{\omega})$, $f(\mathbf{\omega})$, and the constraints set $S$ as described in (\ref{Fdef})-(\ref{minProblem}). The DFO-LS method chooses a trust region method to search for a good local solution\footnote{A better characterization of a good solution is given in the next section.} to the problem (\ref{minProblem}), which is equivalent to finding a solution to:
\begin{equation}\label{ls2}
    \min_{\mathbf{\omega} \in U} ~ \dfrac{1}{2} \ \lVert \mathrm{r}(\mathbf{\omega}) \rVert^2 ~ .
\end{equation}
We can notice that when $U=\R^n$, the problem (\ref{ls2}) is the least squares minimization one, which is an unconstrained minimization problem. Despite this, we will now see how DFO-LS seeks to solve a constrained problem.

Consider given $p$ points\footnote{Here $p$ is a positive integer, whose value can be set accordingly to different purposes. For example, when seeking exploration in the early iterations, it could be useful to choose $p<n$ for reducing the cost of the algorithm, while for exploitation it could be best to choose $p \geq n$. DFO-LS initialize with $p=n+1$ as default.} in the sample space $S$, that is, $\left\lbrace y_1, \dots, y_p \right\rbrace \in U$, where $p \in \N$, $p \geq n$, and a point $\mathbf{\omega} \in U$ (for simplicity of notation, in this section we are going to write just $\mathbf{\omega}$ instead of $\mathbf{\omega}^k$). First, we define, for convenience, $y_0 := \mathbf{\omega}$. We now have a set of $p+1$ points, $\left\lbrace y_0, y_1, \dots, y_p \right\rbrace \in U $.
So we are going to build a linear local model $m_{\ell}(\cdot)$ for the function $\mathrm{r}(\cdot)$ around $\mathbf{\omega}$, that is:
\begin{equation}\label{prob003}
\mathrm{r}(\mathbf{\omega}+s) \approx m_{\ell}(s) := \Bar{F}+\Bar{J} s ~,    
\end{equation}
where $\Bar{F} \in \R^m$ and $\Bar{J} \in \R^{m \times n}$. The entries of $\Bar{F}$ and $\Bar{J}$ are found by solving the following regression problem, which is equivalent to an (unconstrained) least squares problem: 
\begin{equation}\label{prob001}
    \min_{\Bar{r}, \Bar{J}} ~ \sum_{t=0}^p \ \lVert m_{\ell}(y_t- \mathbf{\omega})-\mathrm{r}(y_t) \rVert^2 ~.
\end{equation}

Consider the notation:
\begin{itemize}
    \item $\Bar{r}_i$ being the $i$-th entry of the vector $\Bar{F}$;
    \item $\Bar{J}(i,j)$ to the entry at the $i$-th row and $j$-th column of the matrix $\Bar{J}$;
    \item $\Bar{J}(i,:)$ to the $i$-th row of the matrix $\Bar{J}$;
    \item $\Bar{J}(:,j)$ to the $j$-th column of the matrix $\Bar{J}$.
\end{itemize}

\vspace{0.4cm}
\begin{prop} \label{prop01}
Problem (\ref{prob001}) is equivalent to the linear system of equations:
\begin{equation}\label{prob002}
    W \ \left[ \begin{matrix}
        \Bar{r}_i \\
        \Bar{J}(i,:)^T
    \end{matrix} \right] = \left[ \begin{matrix}
r_i(y_0) \\
\vdots \\
r_i(y_p)
\end{matrix} \right] ~,
\end{equation}
where the matrix $W \in \R^{(p+1) \times (n+1)}$ is defined as
\begin{equation}\label{defW}
    W :=  \left[ \begin{matrix}
1 & (y_0- \mathbf{\omega})^T \\
\vdots & \vdots \\
1 & (y_p- \mathbf{\omega})^T \end{matrix} \right]  ~.
\end{equation}
\end{prop}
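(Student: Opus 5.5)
The plan is to decouple problem (\ref{prob001}) row by row and then read off the equations. Since $m_{\ell}(y_t-\mathbf{\omega})=\Bar{F}+\Bar{J}(y_t-\mathbf{\omega})$, the $i$-th component of $m_{\ell}(y_t-\mathbf{\omega})-\mathrm{r}(y_t)$ is
\[
\Bar{r}_i+\Bar{J}(i,:)(y_t-\mathbf{\omega})-r_i(y_t)=\left[\begin{matrix}1 & (y_t-\mathbf{\omega})^T\end{matrix}\right]\left[\begin{matrix}\Bar{r}_i\\ \Bar{J}(i,:)^T\end{matrix}\right]-r_i(y_t).
\]
This is exactly the $t$-th row of $W z_i-b_i$, where $z_i=[\Bar{r}_i;\Bar{J}(i,:)^T]\in\R^{n+1}$ and $b_i=(r_i(y_0),\dots,r_i(y_p))^T$. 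Swapping the order of summation over $t$ and over the components gives
\[
\sum_{t=0}^p\lVert m_{\ell}(y_t-\mathbf{\omega})-\mathrm{r}(y_t)\rVert^2=\sum_{i=1}^m\lVert W z_i-b_i\rVert^2 .
\]
Each summand depends only on its own block $z_i$ of unknowns, so the minimization splits into $m$ independent linear least-squares problems $\min_{z_i}\lVert Wz_i-b_i\rVert^2$, all sharing the matrix $W$ in (\ref{defW}).

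It then remains to identify each subproblem with the system (\ref{prob002}), and this is the step that needs care. When $p=n$ and $W$ is square and nonsingular (the DFO-LS default $p=n+1$ points $y_0,\dots,y_n$ in general position, i.e. $y_t-\mathbf{\omega}$ spanning $\R^n$), the unique minimizer attains zero residual. In that case it is exactly the solution of $Wz_i=b_i$, so the equivalence is literal.

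For $p>n$, the system (\ref{prob002}) is overdetermined, and it must be read in the least-squares sense. I would state this explicitly. Setting the gradient of $\lVert Wz_i-b_i\rVert^2$ to zero gives the normal equations $W^TWz_i=W^Tb_i$. Convexity makes this condition sufficient as well as necessary. If $W$ has full column rank, which holds exactly when the displacements $y_t-\mathbf{\omega}$ span $\R^n$ (the geometry condition DFO-LS maintains), then $W^TW$ is invertible and the solution is unique.

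In every case, $(\Bar{F},\Bar{J})$ solves (\ref{prob001}) if and only if each row block $z_i$ solves (\ref{prob002}), interpreted as an interpolation system when $p=n$ and in the least-squares sense when $p>n$. This completes the equivalence.
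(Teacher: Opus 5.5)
Your proposal is correct and follows the paper's proof. You write each component of $m_\ell(y_t-\mathbf{\omega})-\mathrm{r}(y_t)$ as the $t$-th row of $Wz_i-b_i$, swap the sums over $t$ and $i$, and decouple the problem into $m$ independent least-squares problems sharing $W$, exactly as the paper does. Your reading of (\ref{prob002}) through the normal equations $W^TWz_i=W^Tb_i$ makes precise what the paper leaves implicit when it says only that the overdetermined systems are solved ``by least squares''. One small correction: when $p=n$ and $W$ is singular, (\ref{prob002}) must also be read in this least-squares sense, not as exact interpolation, so your closing ``in every case'' sentence should say that.
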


\noindent\textbf{Proof of Proposition \ref{prop01}:}

Note that, for each $t=0, 1, \dots, p$, we have:

\begin{equation}\label{RC01}
    m_{\ell}(y_t- \mathbf{\omega}) - \mathrm{r}(y_t) ~  = ~ \Bar{F} + \Bar{J} \ (y_t- \mathbf{\omega}) - \mathrm{r}(y_t)  ~ = ~ \left[ \begin{matrix}
\Bar{F} & \Bar{J}
\end{matrix} \right] \ \left[ \begin{matrix}
1 \\ (y_t- \mathbf{\omega}) \end{matrix} \right] 
- \mathrm{r}(y_t) ~.
\end{equation}

From (\ref{RC01}), we have that the $i$-th entry of the vector $E_t:=(m_{\ell}(y_t- \mathbf{\omega}) - \mathrm{r}(y_t)) \in \R^m$, $i=1,2, \dots, m$, is given by:
\begin{equation}\label{RC02}
E_t(i) = \left[ \begin{matrix}
\Bar{r}_i & \Bar{J}(i,:)
\end{matrix} \right] \ \left[ \begin{matrix}
1 \\ (y_t- \mathbf{\omega}) \end{matrix} \right] - r_i(y_t) ~. 
\end{equation}

So, by (\ref{RC02}) we have that:
$$\lVert m_{\ell}(y_t- \mathbf{\omega}) - \mathrm{r}(y_t) \rVert^2 = \sum_{i=1}^m E_t(i)^2 ~.$$

And, therefore,
\begin{align*}
\sum_{t=0}^p \ \lVert m_{\ell}(y_t- \mathbf{\omega}) - \mathrm{r}(y_t) \rVert^2 ~& = ~\sum_{t=0}^p \sum_{i=1}^m E_t(i)^2 \\
& = ~ \sum_{i=1}^m \sum_{t=0}^p E_t(i)^2 \\
& = ~ \sum_{i=1}^m  \ \Bigg\lVert \left[ \begin{matrix}
1 & (y_0- \mathbf{\omega})^T \\
\vdots & \vdots \\
1 & (y_p- \mathbf{\omega})^T \end{matrix} \right] \ \left[ \begin{matrix}
        \Bar{F}(i) \\
        \Bar{J}(i,:)^T
    \end{matrix} \right] - \left[ \begin{matrix}
r_i(y_0) \\
\vdots \\
r_i(y_p)
\end{matrix} \right] \Bigg\rVert ^2 ~.
\end{align*}

Then, solving the problem (\ref{prob001}) by least squares method is equivalent to solve $m$ independent problems in the form:
$$\Bigg\lVert \left[ \begin{matrix}
1 & (y_0- \mathbf{\omega})^T \\
\vdots & \vdots \\
1 & (y_p- \mathbf{\omega})^T \end{matrix} \right] \ \left[ \begin{matrix}
        \Bar{r}_i \\
        \Bar{J}(i,:)^T
    \end{matrix} \right] - \left[ \begin{matrix}
r_i(y_0) \\
\vdots \\
r_i(y_p)
\end{matrix} \right] \Bigg\rVert ^2 ~,$$
$i=1, \dots, m$ by least squares method. To achieve that, we have to find solutions to the overdetermined linear systems:
\begin{equation*}
    W \ \left[ \begin{matrix}
        \Bar{r}_i \\
        \Bar{J}(i,:)^T
    \end{matrix} \right]:=  \left[ \begin{matrix}
1 & (y_0- \mathbf{\omega})^T \\
\vdots & \vdots \\
1 & (y_p- \mathbf{\omega})^T \end{matrix} \right] \ \left[ \begin{matrix}
        \Bar{r}_i \\
        \Bar{J}(i,:)^T
    \end{matrix} \right] = \left[ \begin{matrix}
r_i(y_0) \\
\vdots \\
r_i(y_p)
\end{matrix} \right] ~.
\end{equation*}
\fim

\vspace{0.4cm}
\begin{prop}\label{prop02}
The set $\{ y_1- \mathbf{\omega} , \cdots , y_p- \mathbf{\omega} \}$ spans $\R^n$ iff the matrix $W$ defined in (\ref{defW}) has full column rank.    
\end{prop}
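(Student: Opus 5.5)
The proof is a direct rank argument on $W$. The key observation is that the first row of $W$ is special. Since $y_0 = \mathbf{\omega}$, that row is $\left[\,1 ~~ 0^T\,\right]$. The remaining rows are $\left[\,1 ~~ (y_t-\mathbf{\omega})^T\,\right]$ for $t=1,\dots,p$.

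The plan is to subtract the first row from every other row. This is an invertible elementary operation, so it preserves the rank of $W$. The result is the block matrix
\begin{equation*}
\widetilde{W} = \left[ \begin{matrix} 1 & 0^T \\ 0 & D \end{matrix} \right], \qquad D := \left[ \begin{matrix} (y_1-\mathbf{\omega})^T \\ \vdots \\ (y_p-\mathbf{\omega})^T \end{matrix} \right] \in \R^{p \times n}.
\end{equation*}
It then follows that $\operatorname{rank} W = \operatorname{rank}\widetilde{W} = 1 + \operatorname{rank} D$.

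Next, I would use that $W$ has $n+1$ columns, so full column rank means $\operatorname{rank} W = n+1$. By the rank identity above, this holds exactly when $\operatorname{rank} D = n$. The rank of $D$ equals its row rank, which is the dimension of $\operatorname{span}\{y_1-\mathbf{\omega},\dots,y_p-\mathbf{\omega}\}$. That dimension equals $n$ if and only if the set spans $\R^n$. Chaining these equivalences proves both directions at once.

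Alternatively, I would check the equivalence through the kernel, which gives the same conclusion. Suppose $W[c;v]=0$ with $c\in\R$ and $v\in\R^n$. The first row forces $c=0$. The other rows then give $(y_t-\mathbf{\omega})^Tv=0$ for every $t$, i.e.\ $v$ is orthogonal to all the differences. If the differences span $\R^n$, this forces $v=0$, so $W$ has full column rank. Conversely, if they do not span $\R^n$, pick a nonzero $v$ in the orthogonal complement of their span; then $[0;v]$ is a nonzero kernel element.

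I do not expect a real obstacle. The only point requiring care is the convention $y_0=\mathbf{\omega}$, which is what makes the first row equal to $[1~~0^T]$. It also makes clear that the span condition involves only $y_1,\dots,y_p$. A further consequence is that $p\ge n$ is necessary, which is consistent with the standing assumption.
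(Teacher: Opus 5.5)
Your proof is correct and essentially the paper's argument: the paper also uses $y_0=\mathbf{\omega}$ to write $W$ as a block matrix with first row $[\,1~~0_n\,]$ and lower block $[\,1_p~~W'\,]$ (your $D$), and then argues through linear independence of the columns, which is exactly your kernel version. Your row reduction to $\operatorname{diag}(1,D)$ packages this a bit more cleanly, since $\operatorname{rank}W = 1+\operatorname{rank}D$ gives both directions at once and makes explicit a step the paper leaves implicit in its ($\Rightarrow$) direction, namely that the first row forces the coefficient of the constant column to vanish.
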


\noindent \textbf{Proof of Proposition \ref{prop02}:}

Let the matrices $W' \in \R^{p \times n}$, $0_n \in \R^{1 \times n}$ and $1_p \in \R^{p \times 1}$ be defined as:
$$W':= \left[ \begin{matrix}
    (y_1- \mathbf{\omega})^T \\
    \vdots \\
    (y_p- \mathbf{\omega})^T
\end{matrix} \right] ~, ~ 0_n:= \left[ \begin{matrix}
   0 & \cdots & 0 
\end{matrix} \right] ~ , ~ 1_p:= \left[ \begin{matrix}
   1 \\
    \vdots \\
    1
\end{matrix} \right] ~.$$
Then we can view the $W$ as a block matrix:
$$W=\left[ \begin{matrix}
    1 ~& 0_n \\
    1_p & W'
\end{matrix} \right] = \left[ \begin{matrix}
    1 ~& 0 & \cdots & 0\\
    1_p & W'(:,1) & \cdots & W'(:,n)
\end{matrix} \right]~.$$

\noindent ($\Rightarrow$) Assuming that $\text{span} \left( \{ y_1- \mathbf{\omega} , \cdots , y_p- \mathbf{\omega} \} \right) = \R^n$, we have that $\text{rank} \left(W' \right) = \text{rank} \left(W'^T \right) = n$. Since $W' \in \R^{p \times n}$, we have that $W'$ has full column rank and so its columns are linearly independent. That implies that the set of vectors:
$$W_{col} := \left\lbrace \left[ \begin{matrix}
    1 ~ \\
    1_p
\end{matrix} \right] ~ , ~ \left[ \begin{matrix}
    0\\
    W'(:,j)
\end{matrix} \right] ~ : ~ j=1, \dots, n \right\rbrace$$
is linearly independent. That is, the columns of $W$ are linearly independent and so $W$ has full column rank. 

\noindent ($\Leftarrow$) On the other hand, assuming that $W$ has full column rank we have that $W_{col}$ is a linearly independent set. In particular, that implies that the columns of $W'$ are linearly independent, which is equivalent to  $rank(\{ y_1- \mathbf{\omega} , \cdots , y_p- \mathbf{\omega} \})=n$. Therefore $\text{span} \left( \{ y_1- \mathbf{\omega} , \cdots , y_p- \mathbf{\omega} \} \right) = \R^n$.

\fim

As a consequence of Proposition \ref{prop02}, we have the following two corollaries.

\begin{corollary}
The linear system (\ref{prob002}) is overdetermined iff the set $\{ y_1- \mathbf{\omega} , \cdots , y_p- \mathbf{\omega} \} $ spans $\R^n$.
\end{corollary}

\begin{corollary}
When $p<n$, the linear system (\ref{prob002}) has at least one exact solution.
\end{corollary}

So, in the case where $p \geq n$ it is possible that there are no exact solutions to the system (\ref{prob002}), which justifies using the least-squares method to solve it.  On the other hand, when we take $p < n$, there will always be some exact solution to this system, which can reduce the computation costs in the early iterations while providing sufficient exploration of the solution space. 

Given these results, the first question is how exactly DFO-LS builds the mini local linear approximations represented by $m_{\ell}(s)$. The answer, already given, is:
\begin{itemize}
    \item When $p<n$, we can make use of a direct solver to the linear system (\ref{prob002});
    \item and, when $p \geq n$, we use the least-squares strategy to solve it.
\end{itemize}
In the second of these cases, we solve the new problem:
\begin{equation}\label{prob002ls}
    W^T \ W \ \left[ \begin{matrix}
        \Bar{r}_i \\
        \Bar{J}(i,:)^T
    \end{matrix} \right] = W^T \ \left[ \begin{matrix}
r_i(y_0) \\
\vdots \\
r_i(y_p)
\end{matrix} \right] ~,
\end{equation}
which always has at least one solution. Once we have found $\Bar{F}$ and $\Bar{J}$, we can use them in the model $m_{\ell}(s)$ described by (\ref{prob003}), which will be our mini local linear models for $\mathrm{r}(y_t+s)$, $t=1, \dots, p$. Considering that this procedure is made at every iteration $k$ of the DFO-LS algorithm, we can see $\mathbf{\omega}$ mentioned in this section as the iterate $\mathbf{\omega}_k$ and the model $m_{\ell}(s)$ as a local model around $\mathbf{\omega}_k$, namely $m_{\ell}^k(s)$. The other points and matrices used in this step are somewhat disposable. 

Now that we obtained $m_{\ell}(s)$, it's time to create the quadratic regression model, $m_q(s) \approx \lVert \mathrm{r}(\mathbf{\omega}+s) \rVert^2$. This model is created in the most intuitive way possible:
$$m_q(s) := \lVert m_{\ell}(s) \rVert^2 ~.$$
We can notice that, as in classical quasi-Newton methods, one maintains an iterate-dependent matrix $B_k\approx\nabla^2\phi(\mathbf{\omega}_k)$ updated by secant conditions (see Box \ref{box:hessian}). 

\vspace{0.7cm}
\begin{codebox}[label=box:hessian]{\textbf{Hessian Approximation strategies in Optimization}}
    Given a function $f: \R^n \rightarrow \R^n$, computing its Hessian $\nabla^2 f(x)$ directly may be expensive, unavailable, or unstable in the presence of noise. Quasi-Newton methods offer a practical alternative to computing the Hessian matrix explicitly when solving nonlinear optimization problems, as they build an approximation that is iteratively refined as the algorithm progresses.

One of the most widely used quasi-Newton schemes is the \textbf{BFGS} (Broyden-Fletcher-Goldfarb-Shanno) update, which constructs a symmetric positive-definite matrix $B_k$ meant to approximate the true Hessian. The update combines curvature information extracted from successive iterates,
$$
s_k = x_{k+1} - x_k, 
$$
and
$$
y_k = \nabla f(x_{k+1}) - \nabla f(x_k),
$$
through the formula
$$
B_{k+1} = B_k
- \frac{B_k s_k s_k^\top B_k}{s_k^\top B_k s_k}
+ \frac{y_k y_k^\top}{y_k^\top s_k}.
$$
This construction ensures that $B_{k+1}$ satisfies the secant condition $B_{k+1} s_k = y_k$, while preserving positive definiteness whenever $y_k^\top s_k > 0$. As a result, BFGS often achieves superlinear convergence without requiring second-order derivatives.

Other quasi-Newton updates follow similar principles but differ in numerical properties and robustness. The \textbf{DFP} (Davidon-Fletcher-Powell) update is an earlier alternative with a dual structure to BFGS, while the \textbf{SR1} (Symmetric Rank-1) update allows curvature corrections that may violate positive definiteness but can provide useful approximations in problems with highly non-quadratic structure. Limited-memory variants such as \textbf{L-BFGS} (Limited-memory BFGS) store only a small number of past curvature pairs, making quasi-Newton methods applicable to large-scale problems.

A thorough introduction to these updates and their theoretical foundations can be found in \emph{Nocedal \& Wright, Numerical Optimization} \cite{nocedal2006}.
\end{codebox}

\vspace{0.7cm}
Here, $B_k=\Bar{J}_k^T\Bar{J}_k$ arises \emph{implicitly} from the regression fit of residuals and is refreshed whenever the interpolation set is updated. Thus DFO-LS behaves like a Gauss--Newton-type quasi-Newton method whose curvature matrix is built from data-driven Jacobian surrogates rather than explicit secant updates; regularization (for example, $B_k+\lambda I$) can be used if $\Bar{J}_k$ is rank-deficient.
 
 Now that we already built the quadratic local model $m_q^k(s)$ around $\mathbf{\omega}_k$, which for simplicity we will again be calling $m_q(s)$ and $\mathbf{\omega}$, respectively. At this step, we choose a new point to replace an old one in our candidate pool $\{ y_0, \dots, y_p \}$. For simplicity, we will be calling this new point $\Bar{x}$. This point will be the next iterate, that is, $\mathbf{\omega}_{k+1}:= \Bar{x}$. We find the point $\Bar{x}$ by finding a solution $\Bar{s}$ to the problem:
\begin{equation}\label{probTR}
    \min_{\lVert s \rVert \leq \Delta} m_q(s) ~,
\end{equation}
and setting $\Bar{x} := \mathbf{\omega}+\Bar{s}$, where $\Delta$ is the trust-region radius chosen for this iteration. 

The DFO-LS method steps are shown in Algorithm 1, in the \textit{Appendix}.

\subsection{The DFO-LS algorithm explained}
We now proceed to describe the DFO-LS scheme in detail.
In the following, we will consider the notation $\mathbf{\Omega}^k \subset \R^n$ to denote the set of candidates considered in iteration $k$, that is, $ \mathbf{\Omega}^k = \lbrace \omega^k_0, \omega^k_1, \dots , \omega^k_p \rbrace$. The exception is the first set, denoted $\mathbf{\Omega}^\text{Ini}$, which is not associated with any iteration $k = 0,1,2, \dots$, but has the exactly same structure as a candidate pool containing $p$ candidates.
A short version of the DFO-LS algorithm we will approach is the following pseudocode:

\begin{enumerate}
    \item Set the initial candidate pool $\mathbf{\Omega}^\text{Ini} = \left\{ \omega_0^{-1}, \dots, \omega_p^{-1} \right\}$ (the last candidate does not need to be chosen yet). \\
    Set $k=-1$.
    \begin{enumerate}[start=2, label=\arabic*.]
        \item[\textbf{While}] the stop criteria is not met \textbf{:}
        \item[] Set $k=k+1$.
        \item Choose a candidate $\Tilde{\omega}^{k}$ for replacing the last column of the previous candidate pool,  $\omega_p^{k-1}$, which will lead to a new candidate pool $\mathbf{\Omega}^k = \left\{ \omega_0^{k}, \dots, \omega_p^{k} \right\}$. \\
        Reorder the columns of $\mathbf{\Omega}^k$ to ensure property (\ref{property1}).
        \item Create a quadratic local model for $f_\text{Misfit}(\omega)$ around $\omega_0^k$.
        \item Perform a trust-region strategy using the quadratic model to find a local minimum $\bar{\omega}^k$ for $f_\text{Misfit}(\omega)$ around $\omega_0^k$. 
        \item[\textbf{If}] the resulting point is a good enough improvement \textbf{:}
            \begin{enumerate}[start=5, label=\arabic*.]
            \item Replace $\omega_p^k$ with $\bar{\omega}^k$ in the set $\mathbf{\Omega}^k$. \\
            Reorder the columns of $\mathbf{\Omega}^k$ to ensure property (\ref{property1}).
            \end{enumerate}
        
    \end{enumerate}

\end{enumerate}

In the following, we provide some details about this algorithm, which are especially useful when calibrating a biogeochemical model. The DFO-LS algorithm ensures the convergence (or else the termination or restart of the algorithm executing) of a sequence of points $\left\lbrace \omega^k \right\rbrace$ in the parameter space to a calibrated parameter $\bar{\omega}$ in the parameter space; such sequence is represented by $\left\lbrace \omega_0^k \right\rbrace$. Although for $k\geq 1$ all the points $\omega_0^k$ are generated by the algorithm, the first point $\omega_0^0$ is required for initializing the algorithm. Here, we always choose to set $\omega_0^0$ as the best parameter calibration we have at the moment, but any point in the parameter space is also a possible choice.
All the columns of $\mathbf{\Omega}^\text{Ini}$ in step 1, except for the first one, may be chosen randomly; this also applies to the choice of $\Tilde{\omega}^{k}$ in step 2. 

\vspace{0.2cm}
\subsection{Creating the quadratic model}

When creating a quadratic local model for the misfit function in step 3 of the pseudocode presented in the previous subsection, we follow the procedure described by \textit{C. Cartis et al. (2019, p.32:6)}~\cite{DFOGN}. We first create a linear local model for the residual function $r_\text{Misfit}(\omega)$ around $\omega_0^k$, which can be seen as mimicking or predicting the behavior of the first order Taylor approximation for the residual function around $\omega_0^k$ in situations where its first derivatives are available. This is a regression model using data from the columns of $\mathbf{\Omega}^k$. 
Therefore, when defining the local linear models for the residual function, we will be considering the following notation:

\vspace{0.5cm}
\begin{tabular}{l l}
$s$ : & A step (that is, a vector) in the parameter space. \\
$m_\ell^k(s)$ : & Linear local model for the residual function $r_\text{Misfit}(\omega)$ around the point $\omega_0^k$. See (\ref{linearproperty01}). \\
$m_q^k(s)$ : & Quadratic local model for the misfit function $f_\text{Misfit}(\omega)$ around the point $\omega_0^k$. See (\ref{quadproperty01}).
\end{tabular}

\vspace{0.7cm}
For the linear local model of the residual function, we have:
\begin{equation}\label{linearproperty01}
    m_\ell^k(s) \approx r_\text{Misfit}(\omega_0^k + s) ~,
\end{equation}

\noindent for every $s$ sufficiently small. We take the following notation:

\vspace{0.2cm}
\begin{tabular}{l l}
$r_k$ : & Approximation for $r_\text{Misfit}(\omega_0^k)$. \\
$J_k$ : & Approximation for the Jacobian matrix of $r_\text{Misfit}(\omega)$ at $\omega_0^k$.
\end{tabular}

\vspace{0.5cm}
Thus, in our linear local model, we will use a linear regression strategy for defining $r_k$ and $J_k$ at each iteration and set:
\begin{equation}
    m_\ell^k(s) = r_k + J_k s ~.
\end{equation}

\vspace{0.5cm}
For defining $r_k$ and $J_k$, we solve the following linear regression problem about the residual of the misfit function evaluated over the candidate pool:
\begin{equation}\label{linearReg01}
   \min_{r_k,J_k} \sum_{i=0}^\lambda \rVert 
m_\ell^k(\omega_i^k-\omega_0^k) - r_\text{Misfit}(\omega_i^k) \lVert^2 ~. 
\end{equation}

Solving problem (\ref{linearReg01}) is equivalent to finding a solution for the linear system
\begin{equation}
    W_k \left[ \begin{matrix}
        r_{k,i} \\
        J_{k,i}^T 
    \end{matrix} \right] = b_{k,i} ~, ~ \forall i=1,2, \dots, N ~,
\end{equation}

\noindent where
\begin{equation*}
    W_k = \left[ \begin{matrix}
        1 & (\omega_0^k-\omega_0^k) \\
        1 & (\omega_1^k-\omega_0^k) \\
        \vdots & \vdots \\
        1 & (\omega_p^k-\omega_0^k)
    \end{matrix} \right] ~\text{, and} ~ b_{k,i} = \left[ \begin{matrix}
        r(\omega_0^k) \\
        r(\omega_1^k) \\
        \vdots  \\
        r(\omega_p^k)
    \end{matrix} \right] ~\text{for all} ~i=1, \dots, N ~,
\end{equation*}

\begin{equation*}
    r_k=\left[ \begin{matrix}
        r_{k,1} \\
        r_{k,2} \\
        \vdots \\
        r_{k,N}
    \end{matrix}\right]_{N} ~ \text{, and } ~ J_k=\left[ \begin{matrix}
        J_{k,1} & J_{k,2} & \cdots & J_{k,N}
    \end{matrix}\right]_{n \times N} ~.
\end{equation*}

\vspace{0.5cm}
For the quadratic local model of the misfit function, we have:
\begin{equation}\label{quadproperty01}
    m_q^k(s) \approx f_\text{Misfit}(\omega_0^k + s) ~,
\end{equation}

for every $s$ sufficiently small. We take the following notation:

\vspace{0.2cm}
\begin{tabular}{l l}
$f_k$ : & Approximation for $f_\text{Misfit}(\omega_0^k)$. \\
$g_k$ : & Approximation for the gradient vector of $f_\text{Misfit}(\omega)$ at $\omega_0^k$. \\
$H_k$ : & Approximation for the Hessian matrix of $f_\text{Misfit}(\omega)$ at $\omega_0^k$.
\end{tabular}

\vspace{0.2cm}
We set our quadratic model as:
\begin{equation}
    m_q^k(s)=f_k+g_k^T s + \dfrac{1}{2} s^T H_k s ~,
\end{equation}
where we derive $f_k$, $g_k$ and $H_k$ from some operations with $r_k$ and $J_k$ obtained for the linear model $m_\ell^k(s)$. They are defined as:
\begin{equation*}
        f_k  = r_k^T r_k ~, ~ g_k  = 2 J_k^T r_k  ~\text{, and } ~ H_k  = 2 J_k^T J_k ~.
\end{equation*}

\vspace{0.2cm}
\subsection{Minimizing the quadratic model with a trust-region strategy}

At the iteration $k$, we want to find a minimizer step $s$ for the following constrained optimization problem:
\begin{align}
    \min ~ & m_q^k(s) \nonumber\\
    \text{Subject to:} ~ & s \in \Gamma^k \label{trSubproblem}
\end{align}

\noindent where the set of constraints, namely the trust region, is defined as:
\begin{equation}
    \Gamma^k = \left[ \Delta_{\ell 1}^k, \Delta_{u 1}^k \right] \times \left[ \Delta_{\ell 2}^k, \Delta_{u 2}^k \right] \times \cdots \times \left[ \Delta_{\ell n}^k, \Delta_{u n}^k \right] ~,
\end{equation}
for some $\Delta_{\ell i}^k, \Delta_{u i}^k, ~ i = 1, \dots, n$, satisfying the condition:
\begin{equation*}
    \Delta_{\ell i}^k < \Delta_{u i}^k, ~ \forall i = 1, \dots, n ~.
\end{equation*}
If we consider the notations $s = (s_1, \dots, s_n)$, note that the constraints for problem (\ref{trSubproblem}) are equivalent to:
\begin{equation}\label{searchboundstr}
    \Delta_{\ell i}^k \leq s_i < \Delta_{u i}^k, ~ \forall i = 1, \dots, n ~.
\end{equation}
The set of inequalities (\ref{searchboundstr}) characterizes each interval $\left[ \Delta_{\ell i}^k, \Delta_{u i}^k \right]$ as search bounds for $s_i$, for $i=1, 2, \dots, n$.


For examples on the basic usage of DFO-LS for parameter calibration on Python, see the \textit{Appendix}. 
The DFO-LS algorithm searches for the vector of parameters that minimizes the misfit function, leading to what we are calling the calibration of such parameters. The output obtained by DFO-LS function consists of convergence info, including the vector of parameters found by this method and the number of misfit function evaluations. All of these will become clearer with the examples presented in the next chapter.

In this chapter, we explored how to frame calibration as an optimization task, why we use a least-squares misfit, and how derivative-free methods such as DFO-LS approaches black-box, noisy, and constrained problems. Building on that, the next chapter outlines a step-by-step approach for calibrating calibrating models and test the approach in practice with few examples.

\chapter{A systematic approach for calibrating conceptual models}
\label{chapterframework}

The aim of chapter is to describe in detail how the optimization tools introduced in \textit{Chapter \ref{chapterdfols}} can be used in practice for systematically calibrating models, with a focus on conceptual models based on systems of ordinary differential equations (ODEs).

In general, the success of a calibration strategy for the parameters of a mathematical model using optimization methods depends on three main ingredients:
\begin{itemize}
    \item A sufficiently accurate mathematical model for the phenomenon of interest;
    \item An adequate framing of the fitting as an optimization problem;
    \item A proper strategy for solving the optimization problem.
\end{itemize}

To illustrate and test this approach in practice, we consider a sequence of increasingly complex test problems that resemble the structure and dimension of conceptual models such as the PEC model.
We focus on the appropriate choice of residual function, initial guess, and bounds. Then, we present a calibration case study, in which we design and analyze several numerical experiments built around a small set of representative ODE models.
In all the computational experiments presented in this section, we used the DFO-LS optimization Python solver \cite{roberts2025dfo-ls}.

Throughout this chapter, the term \emph{parameters} denotes the unknown quantities to be calibrated, \emph{observational parameters} denotes the reference values we aim to recover via calibration, \emph{model output} denotes the dataset produced by the model when setting a particular choice of parameters, and \emph{observations} as the dataset generated for reference, which is used for recovering the observational parameters.

\paragraph{Chapter notation and DFO-LS setting:}
In the following, we recall the basic DFO-LS setting in a matrix-based notation that is close to the Python implementation used later. We do not consider advanced options related to restarts or noise on the DFO-LS solver, since the basic usage and the advanced options related to bounds are sufficient for the examples presented in this chapter.

At an iteration $k$, DFO-LS builds and updates a \emph{candidate pool} of parameter vectors. We represent this pool by the matrix:
$$\mathbf{\Omega}^k = \left[ \begin{matrix}
\omega^k_0 &  \omega^k_1 & \cdots & \omega^k_p     
\end{matrix}  \right]_{n \times (p+1)} ~,$$
where each column $\omega^k_j \in \R^n$ is a candidate parameter vector (that is, a full set of values for the $n$ parameters being calibrated), and $p+1$ is a fixed number of candidates considered at each iteration $k$.

For each candidate $\omega^k_j$, we evaluate the misfit function:
$$f_\text{Misfit} : \mathbb{R}^n \to \mathbb{R}_+ ~,$$
and collect the corresponding values into the vector:
$$f_\text{Misfit}\left(\mathbf{\Omega}^k \right) = \left[ \begin{matrix}
f_\text{Misfit}(\omega^k_0) \\
f_\text{Misfit}(\omega^k_1) \\
\vdots \\
f_\text{Misfit}(\omega^k_p)     
\end{matrix}  \right]_{p+1} ~.$$

We order the candidates so that:
\begin{equation}\label{property1}
 f_\text{Misfit}(\omega^k_j) \geq f_\text{Misfit}(\omega^k_{j-1}) \geq 0 ~, ~  \forall j=1, 2, \dots, p ~.  
\end{equation}
Thus, $\omega^k_0$ is the current best candidate at iteration $k$.

The misfit function itself is defined as the squared norm of a residual misfit function $r_\text{Misfit} : \mathbb{R}^n \to \mathbb{R}^m$,
$$f_\text{Misfit}(\omega) = \|r_\text{Misfit}(\omega)\|^2 ~,$$
where $m$ is the number of scalar residual entries of the residual function (typically related to the number of data points considered during its definition).

At iteration $k$, the residual misfit function evaluated at each candidate is represented by the matrix:
$$r_\text{Misfit}\left(\mathbf{\Omega}^k \right) = \left[ \begin{matrix}
r_\text{Misfit}(\omega^k_0) &  r_\text{Misfit}(\omega^k_1) & \cdots & r_\text{Misfit}(\omega^k_p)     
\end{matrix}  \right]_{m \times (p+1)} ~,$$
whose columns are the $m$-dimensional residual vectors $r_\text{Misfit}(\omega^k_j)$.

The DFO-LS algorithm builds, at each iteration, a local quadratic model for $f_\text{Misfit}$ based on these residual evaluations and solves a trust-region subproblem to obtain the next candidate. The details of this process are discussed in \textit{Chapter \ref{chapterdfols}}; here we focus on how to construct $r_\text{Misfit}$ in a way that is meaningful for our models and data. The following example illustrates one possible way of defining $r_\text{Misfit}$ for a conceptual biogeochemical model.

\vspace{0.5cm}
\begin{Ex}\label{exNotation}
    Consider the two-box NP model for the Paranagu\'a Estuarine Complex presented in \textit{Chapter \ref{chapterPEC}} (Equations (\ref{EstuarioN})-(\ref{estacionario})), where the observations for the monthly mean concentration of two different tracers in the upper box, nitrate $C_N$ and phytoplankton $C_P$, are available for 12 different times, $t_1, t_2, \dots, t_{12}$, summing up to the total of $m=24$ observations. These observations will be denoted $N_\text{obs}(t_i)$ and $P_\text{obs}(t_i)$, respectively. Let us also denote the concentrations of nitrate and phytoplankton, respectively, predicted by our model for each time $t_i$ and depending on the choice of parameters $\omega$, as $N_\text{predicted}(\omega,t_i)$ and $P_\text{predicted}(\omega,t_i)$. In this example, we denote the entries of  the residual function as 
    $$r_\text{Misfit}(\omega) = \left(r_{1,\text{Misfit}}(\omega), \ r_{2,\text{Misfit}}(\omega), \ \dots, \ r_{m,\text{Misfit}(\omega)} \right) ~,$$
    and set
    \begin{align*}
        r_{i,\text{Misfit}}(\omega) \ = & \begin{cases}
            & \dfrac{N_\text{predicted}(\omega, t_i) - N_\text{obs}(t_i)}{24 \cdot N_\text{obs}(t_i)}, ~ \forall i=1,2, \dots, 12, \\
            & \dfrac{P_\text{predicted}(\omega, t_{i-12}) - P_\text{obs}(t_{i-12})}{24 \cdot P_\text{obs}(t_{i-12})}, ~ \forall i=13,14, \dots, 24.
        \end{cases}
    \end{align*}

Thus, the misfit function is defined as: 
\begin{equation*}
    f_\text{Misfit}(\omega) = \sum_{i=1}^{24} \left[ r_{i,\text{Misfit}}(\omega) \right]^2 = \lVert r_\text{Misfit}(\omega) \rVert^2 ~.
\end{equation*}
\end{Ex}

\vspace{0.5cm}
Example \ref{exNotation} will serve as a template for the construction of residual functions in the subsequent sections, where we will adapt it to simpler ODE models that are easier to analyze in detail.


In the following section, we perform a sequence of tests designed to clarify the role of the residual function, initial guess, and bounds in parameter calibration, starting with a very simple model. We focus on ODE models that share some key characteristics with conceptual models such as the PEC model:
\begin{itemize}
    \item The model is defined by a system of ODEs' equilibrium, which is usually periodic.
    \item The parameters are calibrated based on data from observations.
    \item Each calibrated parameter is constrained by predefined lower and upper bounds.
\end{itemize}

In what follows we briefly recall the concepts of fixed point and limit cycle, then introduce a first test problem with a scalar ODE.

A \textbf{fixed point} (also called an \textit{equilibrium point} or \textit{steady state}) of a dynamical system described by a set of ordinary differential equations (ODEs)
$$
\dot{x} = F(x),
$$
is a point \( x^* \in \mathbb{R}^n \) such that
$$
F(x^*) = 0.
$$
At this point, the system does not change in time: if the system starts at $ x^* $, it remains there for all $ t $. A stable fixed point acts as an \textit{attractor}: trajectories starting near it move toward and remain near that equilibrium,
as fixed points represent time-invariant solutions.

On the other hand, a \textbf{limit cycle} is a closed, isolated periodic orbit.
Formally, it is a trajectory $ \gamma(t)$ satisfying
$$
\gamma(t + T) = \gamma(t),
$$
for some period $T > 0$. We say that a limit cycle is \textit{stable} when trajectories starting near the cycle converge to  $ \gamma(t) $ as  $ t \to \infty $ . 

The use of observations for parameter calibration is more clearly illustrated in the following example.

\vspace{0.5cm}
\begin{Ex}[A constant equilibrium]\label{ex:const_equil}
Let $a,b>0$ be constants. Consider the linear ODE
\begin{equation}\label{dfolstest1.0}
    \dfrac{dx}{dt} = a - b x ~.
\end{equation}
The equilibrium solution is $\bar{x} = \dfrac{a}{b}$. Every solution of (\ref{dfolstest1.0}) converges to the equilibrium, meaning that, when integrating this ODE system for a sufficiently large time interval, we obtain:
$$x(t) \approx \bar{x} ~.$$

We use this simple model to illustrate the basic DFO-LS calibration scheme. We fix a value $a>0$ and an observational equilibrium $\bar{x}>0$, and we use DFO-LS to calibrate the parameter $b$. Since $\bar{x} = a/b$, we expect $b$ to converge to $a/\bar{x}$. Equivalently, we can think of the observational parameter vector as:
$$[a_\text{Obs}, b_\text{Obs}] = [2,10] ~,$$
so that $\bar{x} = a_\text{Obs}/b_\text{Obs} = 0.2$, and the calibration aims to recover $b_\text{Obs}$.

Because this model has a single state variable, the residual can be defined as a direct comparison between the model equilibrium $\bar{x}$ (considered as observational data) and one sampled value from the system integration:
$$r_\text{Misfit}(b) = x(b) - \bar{x},$$
where $x_(b)$ denotes the numerical solution obtained by integrating system (\ref{dfolstest1.0}) and sampling the last integration point. The misfit function is then
$$f_\text{Misfit}(b) = \left[r_\text{Misfit}(b)\right]^2.$$

For generating the model output to be compared with the observational data, we performed the numerical integration with:
\begin{itemize}
    \item $t_\text{max}$ large enough for the solution to be close to equilibrium,
    \item a fixed time step $\Delta t$,
    \item an initial condition for the integration, $x(0) = x_0$.
\end{itemize}
When varying $a$, $b$ and $\bar{x}$, it is important to verify that the choices of $t_\text{max}$, $\Delta t$ and $x_0$ remain appropriate. We then choose an initial guess $b_0$ for the parameter calibration and run DFO-LS.
The Python implementation and the detailed settings for this test are given in the \textit{Appendix}. The DFO-LS search for $b$ is shown in Figure~\ref{fig:teste1b}.
\end{Ex}

\vspace{0.6cm}
\begin{figure}[htpb!]
\caption{DFO-LS calibration of parameter $b$ in Example~\ref{ex:const_equil}.}
\begin{center}
\includegraphics[scale=0.7]{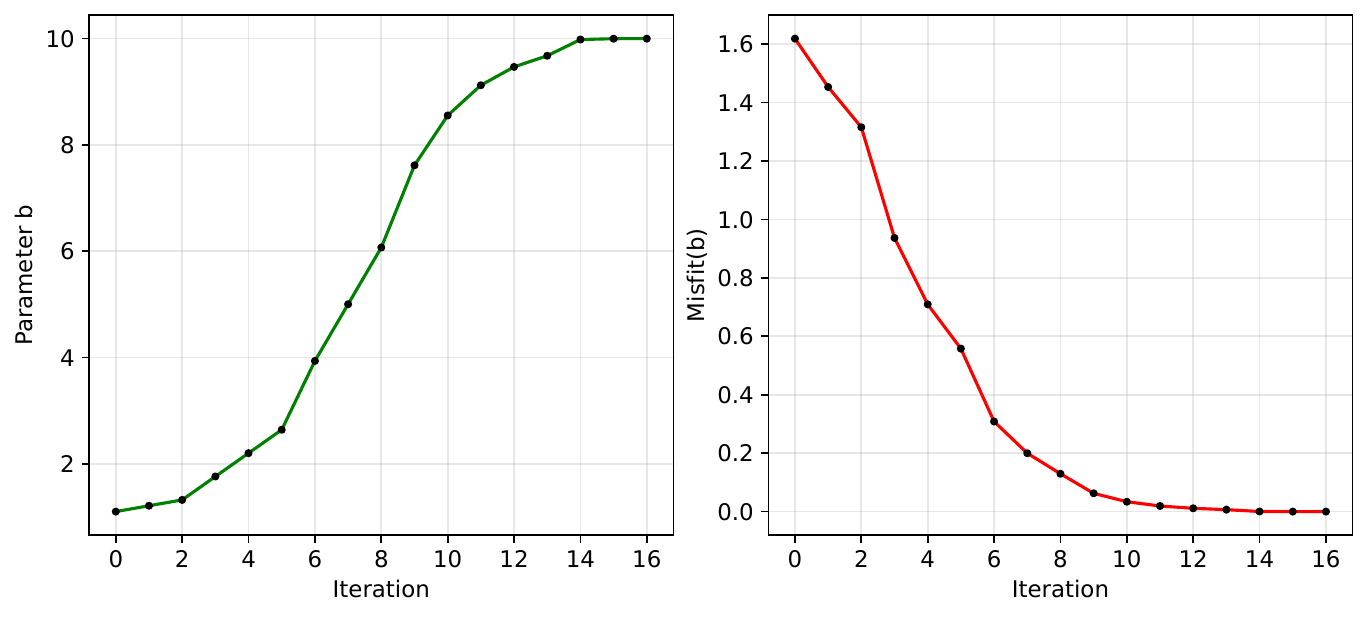}    
\end{center} 
\label{fig:teste1b}

\smallskip
In the left-hand plot, the points represent the values of $b$ obtained at each iteration. In the right-hand plot, the points represent the corresponding misfit values $f_\text{Misfit}(b)$ attained at each iteration. The observational parameter $b_\text{Obs} = 10$ is recovered in 16 iterations, with $f_\text{Misfit}(b_\text{Obs}) = 0$.

\smallskip 
\textbf{Source:} the author (2026).
\end{figure}

\vspace{0.5cm}
This example shows that even a minimal residual definition can be sufficient for parameter calibration when the dynamics are simple and the equilibrium is well-behaved. In the next section, we extend this framework to nonlinear systems with limit cycles and more parameters.

\section{Calibration case study}\label{sec:calibcasestudy}

The previous scalar example illustrates the basic mechanics of parameter calibration with DFO-LS. We now consider a case study based on ODE systems with cyclic or nonlinear behavior, designed to emulate some key features of conceptual biogeochemical models:
\begin{itemize}
    \item the presence of limit cycles (periodic equilibria),
    \item the simultaneous calibration of several parameters,
    \item possible ill-conditioning of the misfit function,
    \item the need for model-based choices of residuals, observation sets, and bounds.
\end{itemize}

The case study is organized into four classes of models:
\begin{enumerate}
    \item An ODE system with a cyclic equilibrium (Part~1): calibration of a single parameter in a two-dimensional system with a known limit cycle;
    \item An ODE system with a cyclic equilibrium (Part~2): simultaneous calibration of two parameters in a similar system;
    \item A difficult case: a sinusoidal system where different parameter values produce indistinguishable observations, illustrating limitations of systematic calibration;
    \item A nonlinear example: a two-dimensional nonlinear system (Sel'kov model) with a limit cycle and no closed-form solution.
\end{enumerate}

In each subsection we follow the same structure:
\begin{itemize}
    \item \emph{Model and observations:} definition of the ODE system and the way observational data are generated.
    \item \emph{Residual definition:} construction of $r_\text{Misfit}$ from samples and observations.
    \item \emph{Numerical experiment:} a concrete experiment (or a family of experiments) based on the same example, varying the number and distribution of observations, the length of the integration window, and/or other design parameters in order to investigate calibration performance.
\end{itemize}

In all tests, the ultimate goal is to recover the chosen observational parameters from the synthetic observational data, which was generated by the model itself.

\subsection{An ODE system with cyclic equilibrium -- Part 1}\label{subsec:ODE001}

\paragraph{Model and observations:} We first consider a two-dimensional ODE system with a periodic equilibrium (limit cycle) and a single parameter to be calibrated. Let
\begin{equation}\label{edo21label}
\begin{cases}
    x' & = -y - \left( \dfrac{(t+c)^{-2}}{e^a+(t+c)^{-1}} \right) + \ln \left( e^a+(t+c)^{-1} \right) + y_0 ~,\\ \\
    y' & = x - \left( \dfrac{(t+c)^{-2}}{e^a+(t+c)^{-1}} \right) + \ln \left( e^a+(t+c)^{-1} \right) - x_0 ~,
\end{cases}    
\end{equation}
with fixed constants $c>0$, $x_0$, $y_0$ and parameter $a \in \mathbb{R}$. 
System (\ref{edo21label}) is well-defined for any $t>0$, and the analytical solution of the system (\ref{edo2label}) is given by:
\begin{equation}\label{Soledo21label}
\begin{cases}
    x(t) & = \ln \left( e^a+(t+c)^{-1} \right) + \cos(t) + x_0  ~,\\ \\
    y(t) & = \ln \left( e^a+(t+c)^{-1} \right) +\sin(t) + y_0 ~.
\end{cases}    
\end{equation}

When $x_0 = y_0 = 0$, and for $t>>0$ (that is, $t$ very large), we obtain:
$$\ln \left( e^a+(t+c)^{-1} \right) \approx a ~,$$
so that
\begin{equation}\label{Limitedo21label}
    (x(t), y(t)) \approx (a + \cos(t), a + \sin(t)) ~,
\end{equation}
which corresponds to the convergence for a periodic equilibrium (limit cycle). 

We fix $c>0$ and address the problem of calibrating the parameter $a$, which can assume any real value. We define an observational parameter $a_\text{Obs}$ and generate synthetic observations from the asymptotic expression~\eqref{Limitedo21label}, that is,
$$(x_\text{Obs}(t),y_\text{Obs}(t)) = (a_\text{Obs} + \cos t,\; a_\text{Obs} + \sin t) ~.$$

To control the sampling along the limit cycle, we introduce the rescaled time variable $s = \pi t$, so that $(x(s),y(s))$ has period~2 in $s$ instead of period $2\pi$ in $t$. All observation times are then chosen in a final time window corresponding to one or more periods after a sufficiently long integration time, ensuring that the numerical solution is close to the limit cycle.

To control the sampling along the limit cycle, we introduce the rescaled time variable $s = \pi t$, so that $(x(s),y(s))$ has period 2 in $s$ instead of period $2 \pi$ in $t$. All observation times are then chosen in a final time window corresponding to one or more periods after a \textit{spinup} (a sufficiently long integration period used to stabilize the model, see Box \ref{box:spinup}), ensuring that the numerical solution is close to the limit cycle.

\paragraph{Misfit residual definition:}

We consider $N+1$ sample points along the last cycle (after the spinup). For each observation time $t \in T_\text{Obs}$ we compute both the model output $(x(t),y(t))$ and the observational state $(x_\text{Obs}(t),y_\text{Obs}(t))$. The residual vector is defined by collecting pointwise Euclidean distances,
$$r_\text{Misfit}(a) =
\left(r_1(a),\dots,r_{N+1}(a)\right) ~,$$
with
$$r_i(a) = \left\| 
(x(t_i; a),y(t_i; a)) - (x_\text{Obs}(t_i),y_\text{Obs}(t_i))
\right\| ~.$$

The misfit function is
$$f_\text{Misfit}(a) = \sum_{i=1}^{N+1} r_i(a)^2 ~.$$

Figure~\ref{fig:018} illustrates the pointwise fitting on the limit cycle: the red curve is the observation set, the green curve is a model trajectory, and the points correspond to $N+1$ samples used in the residual. In the illustrated case, the time interval chosen for the model integration, $t \in [1,398]$, was not long enough to be considered a spinup of the model. This situation highlights the need for caution when defining the spinup time for the model, as it may lead to accuracy loss when attempting to recover the observational parameters.

\vspace{0.6cm}
\begin{figure}[htpb!]
\caption{Illustration of the pointwise fitting process on the limit cycle.} 
\begin{center}
\includegraphics[scale=0.7]{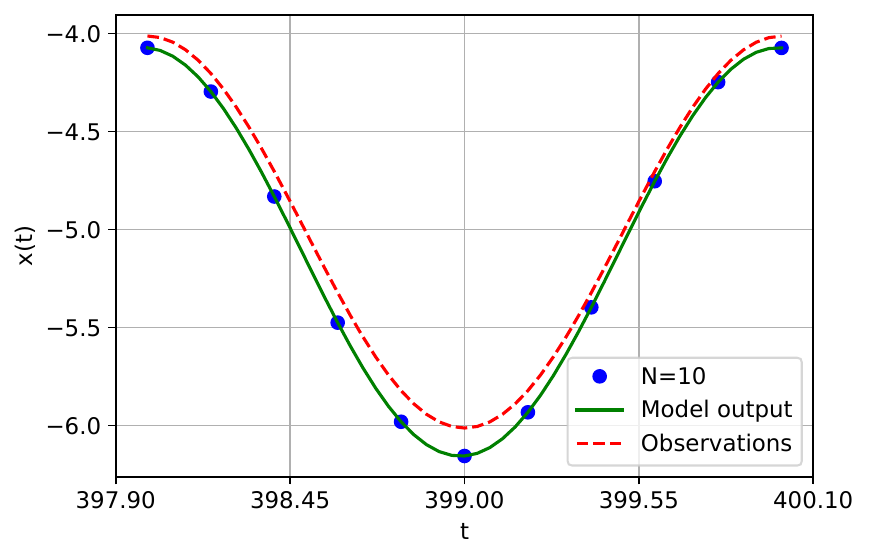}

\vspace{0.3cm}
\includegraphics[scale=0.7]{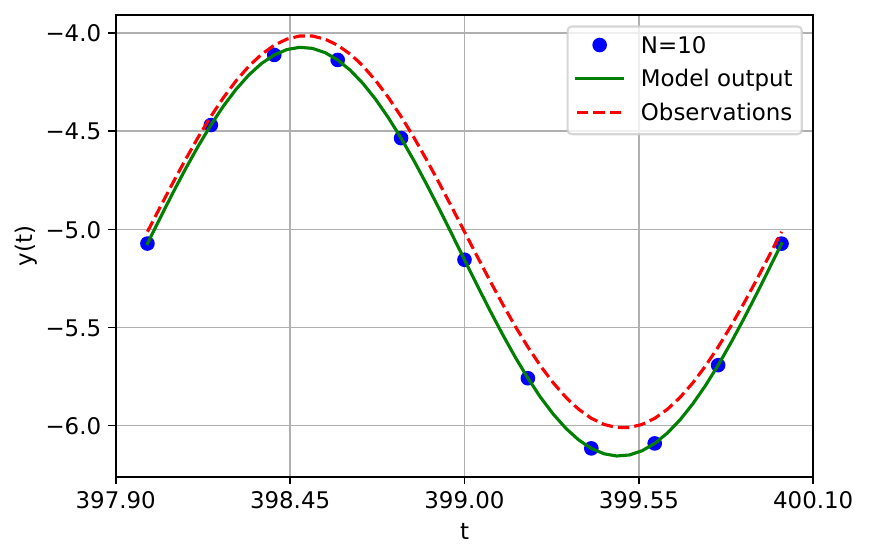}    
\end{center}
\label{fig:018}

\smallskip
The red dashed line represents the observational data, while the green solid line represents the model output. Dots represent a set of $N$ sampled points used in computing the misfit residual.

\smallskip 
\textbf{Source:} the author (2026).
\end{figure}


\paragraph{Numerical experiment (single-parameter limit-cycle calibration):}

\paragraph{Experiment 1.}
We now design a specific experiment based on this example. The goal is to recover the parameter $a_\text{Obs}$ from synthetic observations using DFO-LS and to study the influence of:
\begin{itemize}
    \item the integration time (length of spinup),
    \item the number of samples $N+1$,
    \item the need for a heuristic step when defining the initial guess for the parameter search.
\end{itemize}

We proceed as follows:
\begin{enumerate}
    \item Fix $c>0$ (in this experiment, we set $c=100$), choose $a_\text{Obs}$, and generate observations via~\eqref{Limitedo21label}.
    \item Integrate~\eqref{edo21label} with an initial condition away from the limit cycle, for a sufficiently long time.
    \item In a first set of runs, use a heuristic stage: evaluate $f_\text{Misfit}$ on a grid of $a$ values in a neighborhood of $a_\text{Obs}$ and select the best value as initial guess for DFO-LS.
    \item In a second set of runs, bypass the heuristic and start directly from a generic guess further away from $a_\text{Obs}$.
    \item In both cases, define bounds as a fixed-size neighborhood of $a_\text{Obs}$. Note that the bounds are informed to the solver without any information about the observational parameter.
\end{enumerate}

Figure~\ref{fig:Aug121} shows a typical heuristic evaluation for $a_\text{Obs}=5.01$. The misfit displays a nearly linear trend around the observational parameter, and the best point found by the heuristic is close to $a_\text{Obs}$.

\vspace{0.5cm}
\begin{figure}[htpb!]
\caption{Heuristic evaluation of $f_\text{Misfit}(a)$ around $a_\text{Obs}=4.77$.}
\begin{center}
\includegraphics[scale=0.7]{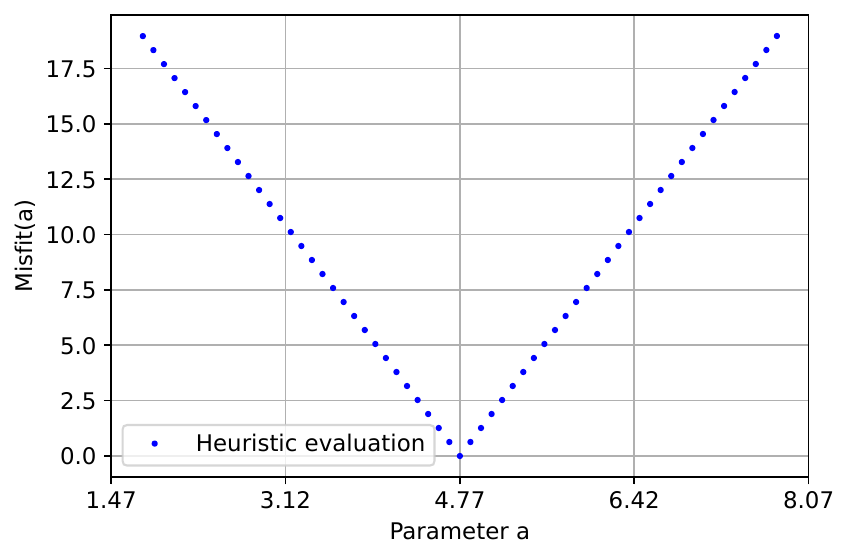}    
\end{center}
\label{fig:Aug121}

\smallskip
The seemingly linear behavior of the misfit function suggests that DFO-LS can converge to the observational parameter if  given an initial guess from a  relatively broad neighborhood of $a_\text{Obs}$, provided that the integration time is long enough.

\smallskip 
\textbf{Source:} the author (2026).
\end{figure}

\vspace{0.2cm}
Figure \ref{fig:Aug122} shows typical fittings in $x$ and $y$ for the observational parameter  $a_\text{Obs}=5.1$ and two different numbers of observations $N+1$. In all tested settings, the observational parameter is recovered with good accuracy when the integration time is sufficiently long.

\begin{figure}[htpb!]
\caption{Fitting of sampled model output values to the corresponding observations.}
\begin{center}
\includegraphics[scale=0.7]{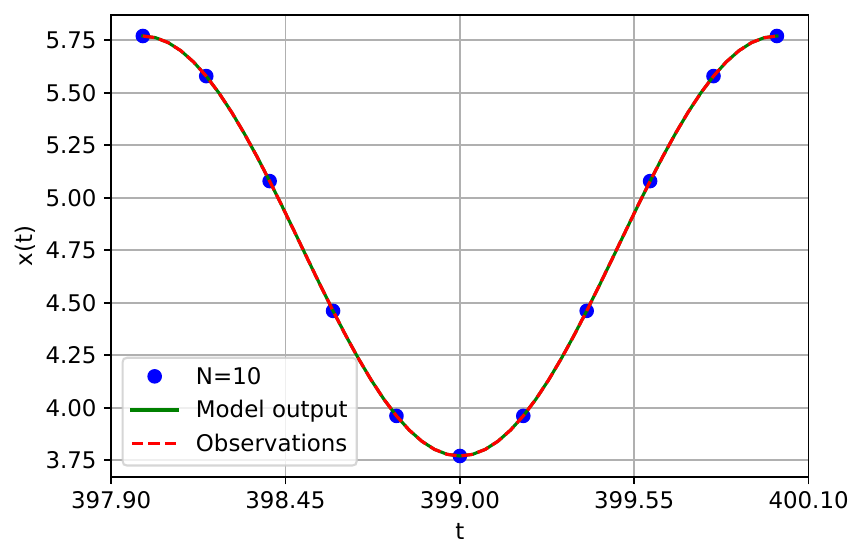}

\vspace{0.3cm}
\includegraphics[scale=0.7]{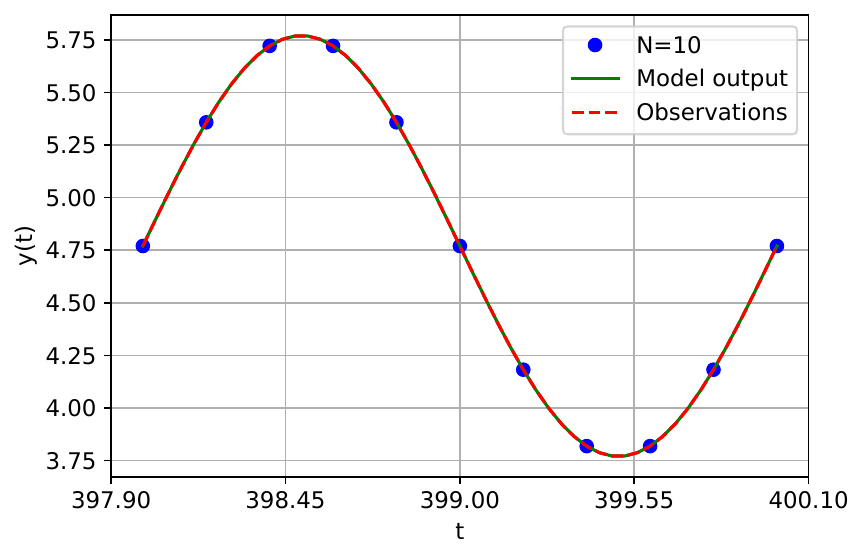}    
\end{center}
\label{fig:Aug122}

\smallskip
Considering the model defined by the equilibrium of equations (\ref{edo21label}), for $a_\text{Obs}=4.77$ and  $N+1 = 11$, following the structure in Figure~\ref{fig:018}.

\smallskip 
\textbf{Source:} the author (2026).
\end{figure}

Figures~\ref{fig:Aug123} and \ref{fig:Aug124} summarize the dependence of the calibration accuracy on the number of observations and on the integration time. Due to the convergence to the limit cycle being fast, relatively short spinup time definitions were sufficient. For the experiments performed in this subsection, convergence to the limit cycle was relatively fast, and accurate recovery is possible over a broad range of $N$ and integration windows.

\begin{figure}[htpb!]
\caption{Dependence on the number of observations $N+1$ for the calibration accuracy of the parameter $a$.}
\begin{center}
\includegraphics[scale=0.85]{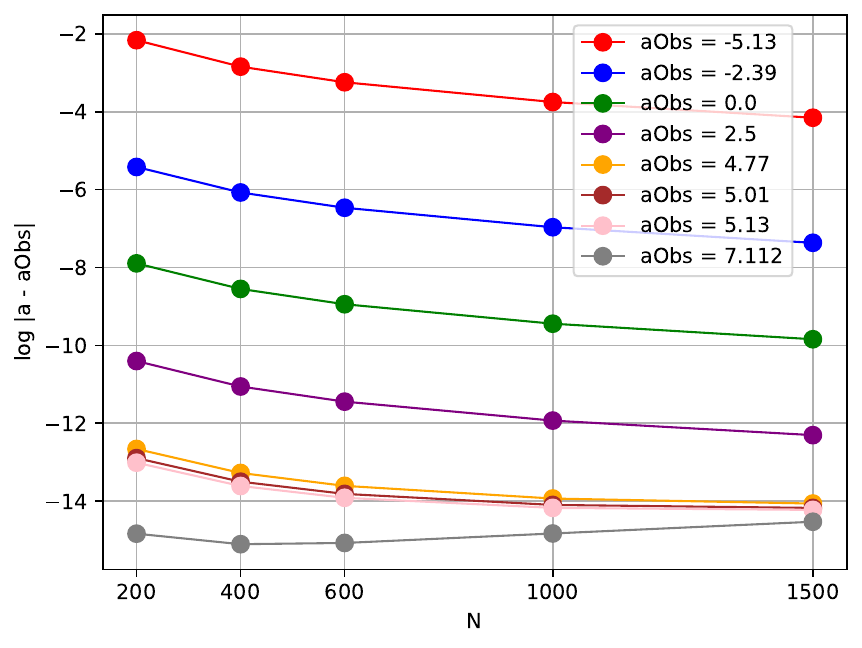}    
\end{center}
\label{fig:Aug123}

\smallskip
The dots represent the accuracy attained for the calibration experiment defined in Subsection \ref{subsec:ODE001} with the corresponding observational parameter  $a_\text{Obs}$.

\smallskip
\textbf{Source:} the author (2026).
\end{figure}

\begin{figure}[htpb!]
\caption{Dependence of calibration accuracy for $a$ on the integration time.}
\begin{center}
\includegraphics[scale=0.85]{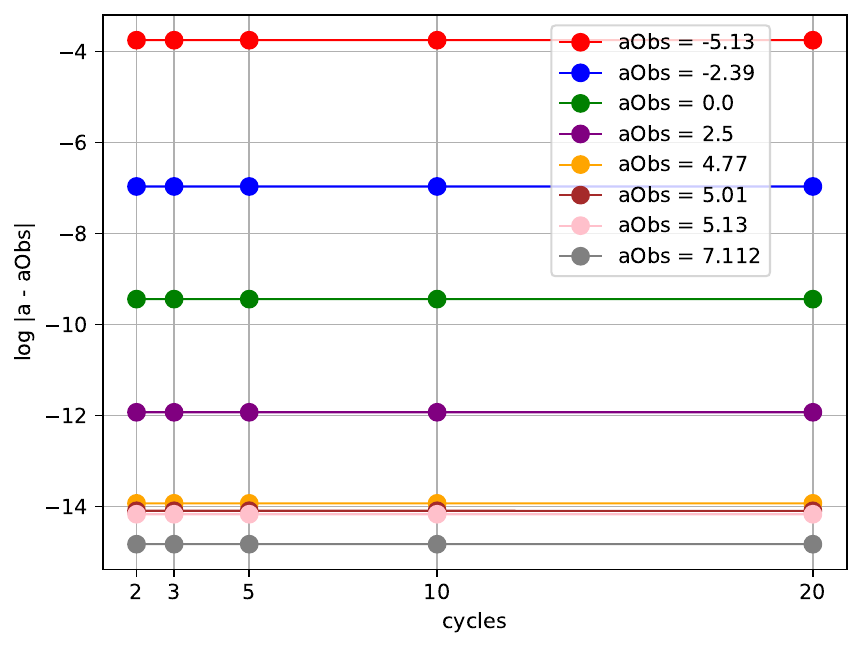}    
\end{center}
\label{fig:Aug124}

\smallskip
The length of spinup was considered as approximately $[1,T_{cycles} \cdot 2]$, where $T_{cycles}$ is the number of cycles considered, each cycle with length equal to 2.
The dots represent the accuracy attained for the calibration experiment defined in Subsection \ref{subsec:ODE001} with the corresponding observational parameter  $a_\text{Obs}$.

\smallskip
\textbf{Source:} the author (2026).
\end{figure}

\newpage\newpage
\subsection{An ODE system with cyclic equilibrium -- Part 2}\label{subsec:ODE002}

We now generalize the previous mathematical model in order to calibrate two parameters simultaneously.

\paragraph{Model and observations:}

Consider the ODE system:
\begin{equation}\label{edo2label}
\begin{cases}
    x' & = -y - \left( \dfrac{(t+c)^{-2}}{e^a+(t+c)^{-1}} \right) + \ln \left( e^b+(t+c)^{-1} \right) + y_0  ~,\\ \\
    y' & = x - \left( \dfrac{(t+c)^{-2}}{e^b+(t+c)^{-1}} \right) + \ln \left( e^a+(t+c)^{-1} \right) - x_0 ~,
\end{cases}    
\end{equation}
with fixed $c>0$, constants $x_0$, $y_0$ and parameters $a,b$.
System (\ref{edo2label}) can be seen as a generalization of system (\ref{edo21label}) and is well-defined for any $t > 0$. The analytical solution of the system (\ref{edo2label}) is:

\begin{equation}\label{Soledo2label}
\begin{cases}
    x(t) & = \ln \left( e^a+(t+c)^{-1} \right) + \cos(t) + x_0  ~,\\ \\
    y(t) & = \ln \left( e^b+(t+c)^{-1} \right) +\sin(t) + y_0 ~.
\end{cases}    
\end{equation}

For $x_0 = y_0 = 0$ and $t \gg 0$, we obtain:
\begin{align*}
    \ln \left( e^a+(t+c)^{-1} \right) & \approx a ~, \\
    \text{and ~~~} \ln \left( e^b+(t+c)^{-1} \right) & \approx b ~,
\end{align*}
so that 
\begin{equation}\label{Limitedo2label}
    (x(t), y(t)) \approx (a + \cos(t), b + \sin(t)) ~,
\end{equation}
which is again a limit cycle. 

We fix $c>0$ and address the problem of calibrating parameters $a,b>0$. Given observational parameters $a_\text{Obs}, b_\text{Obs}$, we generate synthetic observations as:
$$(x_\text{Obs}(t), y_\text{Obs}(t)) = (a_\text{Obs} + \cos t,\; b_\text{Obs} + \sin t) ~,$$
at fixed time instants chosen as in Part 1 (subsection \ref{subsec:ODE001}): after a spinup, over one or more cycles, with $s=\pi t$ for convenience.

\paragraph{Misfit residual definition:}

As in Part~1, we consider $N+1$ points along the last cycle and define the residual vector by pointwise distances between the model trajectory $(x(t),y(t))$ and the observational trajectory $(x_\text{Obs}(t),y_\text{Obs}(t))$. The misfit function is
$$  f_\text{Misfit}(a,b) = \sum_{i=1}^{N+1} r_i(a,b)^2 ~,$$
with $r_i$ defined in the same way as before.

\paragraph{Numerical experiment (two-parameter limit-cycle calibration):} 

\paragraph{Experiment 2.}
In this experiment, we apply DFO-LS to simultaneously recover the parameters $a_\text{Obs}$ and $b_\text{Obs}$, using a similar setup as in Part~1 except that here we set $c=10^{-4}$ and now we consider a two-dimensional parameter space:
\begin{enumerate}
    \item We set observational parameters $[a_\text{Obs}, b_\text{Obs}]$ and generate observations from~\eqref{Limitedo2label}.
    \item We integrate system~\eqref{edo2label} from an initial point away from the limit cycle and record samples after a spinup.
    \item We perform a heuristic step where we evaluate $f_\text{Misfit}(a,b)$ on a grid of parameter values in a rectangular neighborhood around $[a_\text{Obs},b_\text{Obs}]$ and select the best point as the initial guess for DFO-LS.
    \item We set bounds as a smaller rectangle centered on the initial guess and then run DFO-LS.
\end{enumerate}

Figure~\ref{fig:Aug1241} shows contour plots of the heuristic evaluation of the misfit function for observational parameters $[a_\text{Obs}, b_\text{Obs}] = [2.5, 3.5]$ and two choices for the number of observation, $N+1$. In both cases the observational parameters appear as the global minimizer among the sampled points.

\begin{figure}[htpb!]
\caption{Heuristic contour plot of $f_\text{Misfit}(a,b)$.} 
\begin{center}
\includegraphics[scale=0.7]{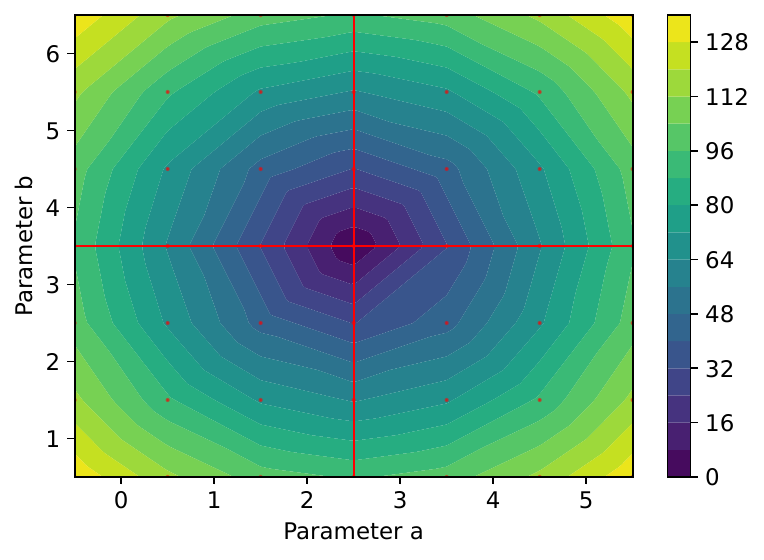}    
\end{center}
\label{fig:Aug1241}

\smallskip
Observational parameters are $[a_\text{Obs}, b_\text{Obs}] = [2.5, 3.5]$, considering $N=10$ observations, and function evaluation at 49 parameter pairs. The highlighted lines intersect at the location of the observational parameters.

\smallskip
\textbf{Source:} the author (2026).
\end{figure}

Figure \ref{fig:Aug125} shows the resulting fitting of the predicted values to the observations for $x$ and $y$, again using the sampling structure of Figure \ref{fig:018}. The behavior is qualitatively similar to the one-parameter case, with successful recovery of $a_\text{Obs},b_\text{Obs}$ over a range of settings.

\subsection{A difficult case: when systematic calibration may fail}

We now consider a model that illustrates intrinsic limitations of systematic calibration when the parameter is non-identifiable from the observations, that is, there is no way to ensure the recovering of observational parameters by fitting model outputs to observations.

\paragraph{Model and observations:} Consider the ODE system:
\begin{equation}\label{odeSinCos}
    \begin{cases}
        x' & = a y  ~,\\
        y' & = -a x ~.
    \end{cases}
\end{equation}

The non-trivial solution of system (\ref{odeSinCos}) is given by:
\begin{equation}\label{SolodeSinCos}
    \begin{cases}
        x(t) & = \sin(a t) ~, \\
        y(t) & = \cos(a t) ~.
    \end{cases}
\end{equation}

Let $a_\text{Obs}$ be the observational parameter. If suitable bounds were not set for the search of parameter $a$, it may become impossible to recover  $a_\text{Obs}$ via calibration. In this example, for any $a_\text{Obs} \in \R$, we see that:
\begin{equation}\label{equivSolodeSinCos}
    \begin{cases}
        x(t) & = \sin(\bar{a} t) ~, \\
        y(t) & = \cos(\bar{a} t) ~,
    \end{cases}
\end{equation}
whenever $\bar{a} = a_\text{Obs} + 2\pi \cdot k$, for any $k \in \mathbb{Z}$. Thus, in this case, it is important to ensure that the length of the search range for the parameter $a$ is less than $2\pi$.

To remove this ambiguity, we define a constrained optimization problem as
\begin{align*}
    \text{minimize} ~~& f_\text{Misfit}(a)  \\
    \text{subject to:} ~~& a \ \in  [0.001, 2\pi - 0.001]
\end{align*}
and set $a_\text{Obs}=1$ as the parameter to be recovered. Observations are defined as
$$(x_\text{Obs}(t),y_\text{Obs}(t)) = (\sin(a_\text{Obs} t), \cos(a_\text{Obs} t)) ~,$$
at a predefined set of time instants.

\vspace{0.7cm}
\begin{figure}[htpb!]
\caption{Fitting of the model output to the observational data.}
\begin{center}
\includegraphics[scale=0.7]{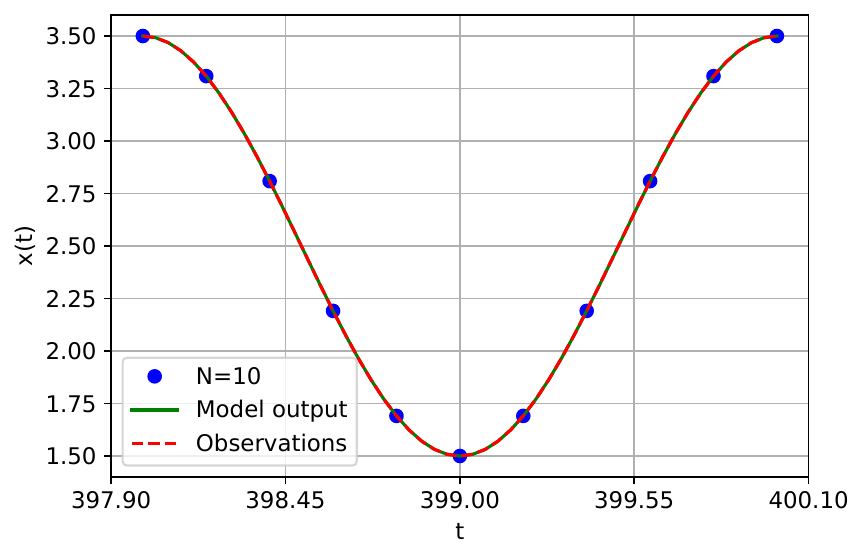}

\vspace{0.3cm}
\includegraphics[scale=0.7]{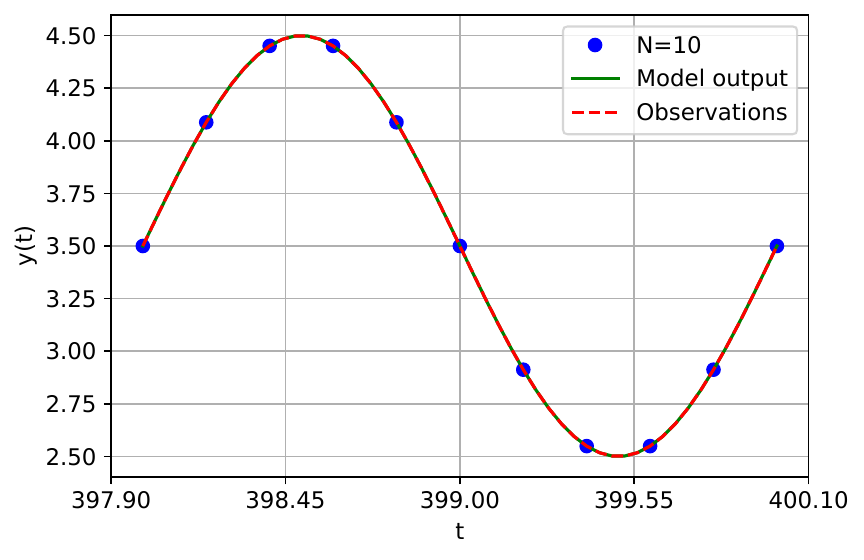}    
\end{center}
\label{fig:Aug125}

\smallskip
Considering the model defined by the equilibrium of equations (\ref{edo2label}), for $[a_\text{Obs}, b_\text{Obs}] = [2.5, 3.5]$ and  $N+1 = 11$ samples, following the structure of Figure~\ref{fig:018}.

\smallskip
\textbf{Source:} the author (2026).
\end{figure}

\paragraph{Misfit residual definition:} We consider two types of observations and corresponding residual entries:
\begin{enumerate}
    \item \textbf{Type-1 (pointwise)}: direct comparisons between $(x(t),y(t))$ and $(x_\text{Obs}(t),y_\text{Obs}(t))$ at selected time instants.
    \item \textbf{Type-2 (aggregated)}: groupings of type-1 observations into blocks and comparison of their averages.
\end{enumerate}

Let $m_1$ be the number of type-1 residuals and $m_2$ the number of type-2 residuals, with $m = m_1 + m_2$. We introduce a weighting constant $\alpha \in [0,1]$ and define:
\begin{equation*}
   r_\text{Misfit}(a) = \left[ \dfrac{\alpha^2}{m_1} \ r_1(a)\ , \ \dfrac{(1-\alpha)^2}{m_2} \ r_2(a) \right] 
\end{equation*}
where $r_1(a)\in\mathbb{R}^{m_1}$ contains type-1 residuals, and $r_2(a)\in\mathbb{R}^{m_2}$ contains type-2 residuals (for example, the averages over fixed groups of type-1 misfits). The misfit function is again the squared norm of $r_\text{Misfit}(a)$.

The choice of $(m_1,m_2,\alpha)$ controls the dimension and structure of the residual and hence the cost and conditioning of the least-squares problem.

\paragraph{Numerical experiment (non-identifiable sinusoidal system):}

\paragraph{Experiment 3.}
We consider the problem of recovering $a_\text{Obs}=1$ with feasible region $\Omega = [0.001, 2\pi-0.001]$. We test several configurations of $(m_1,m_2,N,\alpha)$, where $N+1$ is the number of time samples used to define type-1 residuals:
$$m_1 = 5,\quad N=20,\quad \alpha = 1 ~,$$
and variants with $m_2>0$ and $\alpha\in(0,1)$.

Figures~\ref{figsincos:001} and \ref{figsincos:002} show heuristic plots of $f_\text{Misfit}(a)$ in a neighborhood of $a_\text{Obs}$ for two different residual settings. Although the global minimum is close to $a_\text{Obs}$ and $f_\text{Misfit}\approx 0$ near it, the misfit function has many local minima. 
A zoom around $a_\text{Obs}$ (Figure~\ref{figsincos:003}) shows the highly oscillatory structure that can trap DFO-LS in a local minima away from $a_\text{Obs}$. 

\vspace{0.5cm}
\begin{figure}[htpb!]
\caption{Heuristic evaluation of $f_\text{Misfit}(a)$.} 

\vspace{-0.1cm}
\begin{center}
\includegraphics[scale=0.7]{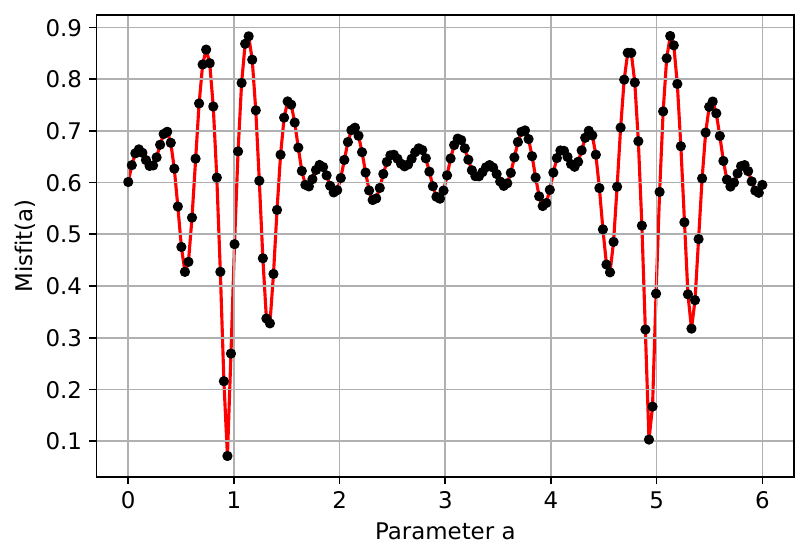}    
\end{center}
\label{figsincos:001}

\vspace{-0.1cm}
Evaluation at 180 points in a neighborhood of  $a_\text{Obs}=1$ (excluding $a_\text{Obs}$ itself) for $m_1=5$, $N=20$, $\alpha=1$ (type-1 residuals only). Black dots indicate function evaluations, and the red line is an interpolation.

\smallskip
\textbf{Source:} the author (2026).
\end{figure}

\begin{figure}[htbp!]
\caption{Heuristic evaluation of $f_\text{Misfit}(a)$.} 
\begin{center}
\includegraphics[scale=0.7]{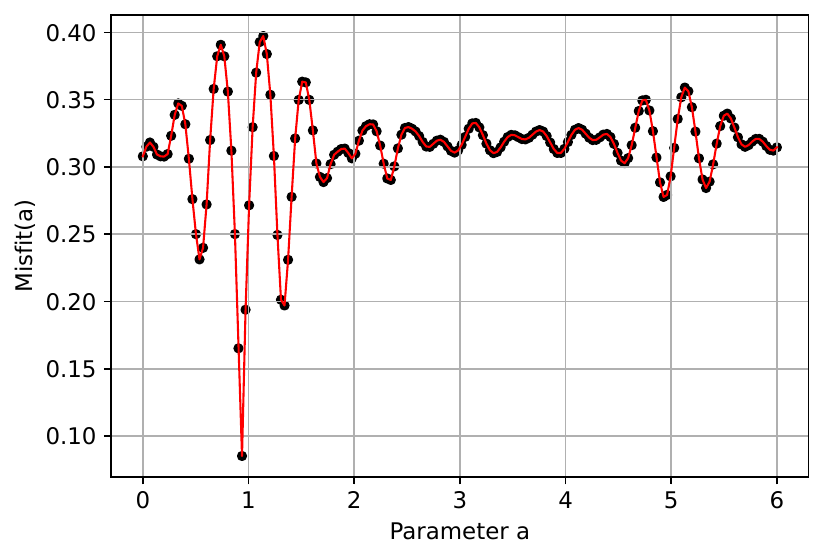}    
\end{center}
\label{figsincos:002}

\smallskip
Evaluation at 180 points in a neighborhood of  $a_\text{Obs}=1$ (excluding $a_\text{Obs}$ itself) for $m_1=5$, $m_2=3$, $N=20$, and $\alpha=0.5$ (combining type-1 and type-2 residuals). Black dots indicate function evaluations, and the red line is an interpolation.

\smallskip
\textbf{Source:} the author (2026).
\end{figure}

\begin{figure}[htbp!]
\caption{Heuristic evaluation of $f_\text{Misfit}(a)$.} 
\begin{center}
\includegraphics[scale=0.7]{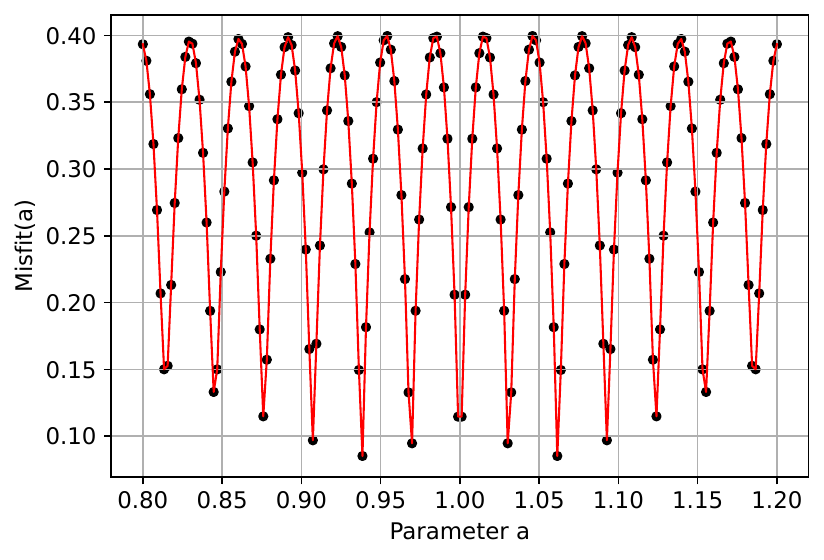}    
\end{center}
\label{figsincos:003}

\smallskip
Evaluation at 180 points in a close  neighborhood of  $a_\text{Obs}=1$ (excluding $a_\text{Obs}$ itself) for the same setting as Figure~\ref{figsincos:002}. Black dots indicate function evaluations, and the red line is an interpolation.

\smallskip
\textbf{Source:} the author (2026).    
\end{figure}

If we artificially restrict the search region to a very small interval around $a_\text{Obs}$ (Figure~\ref{figsincos:004}), DFO-LS can typically recover $a_\text{Obs}$. However, this requires prior knowledge of the true parameter and therefore does not constitute a practical calibration strategy.

\vspace{0.5cm}
\begin{figure}[htbp!]
\caption{Heuristic evaluation of $f_\text{Misfit}(a)$.} 

\vspace{-0.2cm}
\begin{center}
\includegraphics[scale=0.7]{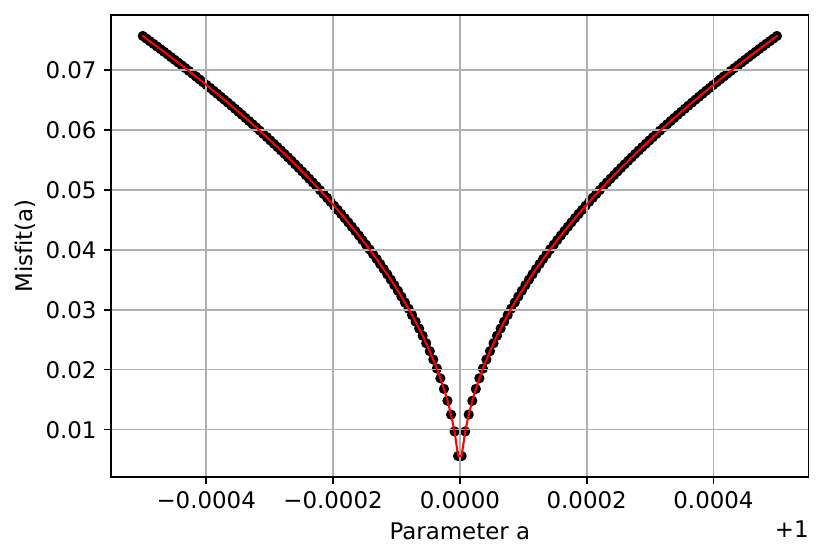}    
\end{center}
\label{figsincos:004}

Evaluation at 180 points in a very small neighborhood of $a_\text{Obs}=1$ (excluding $a_\text{Obs}$ itself) for the same setting as Figure~\ref{figsincos:002}. Black dots indicate function evaluations, and the red line is an interpolation.

\smallskip
\textbf{Source:} the author (2026).
\end{figure}

\paragraph{Last comments:}
\begin{itemize}
    \item For simplicity, we set the observational parameter as $a_\text{Obs}=1$ in all these experiments, since for other observational parameters it is possible to follow the same procedure.
    
    \item As we will see in the following subsection, using heuristics to obtain a good initial guess is possible, which helps the DFO-LS solver find the solution more quickly. However, it is also possible to use an arbitrary initial guess, such as approximately the midpoint of the feasible interval, $a_0 = 3.14$.

    \item Regarding the type-2 of observations, we initially considered taking the value of the definite integral of the half-period ODE system as an observation and comparing it to the Riemann sum, obtained by manipulating the samples. However, this is another limitation for this calibration: When the estimate produced by the misfit function is coarse, comparing it to a precise theoretical reference does not benefit the calibration's accuracy: on the contrary, it can worsen it. In this sense, an important feature in defining observations is taking into account the model's ability to replicate them.
\end{itemize}

This example shows that even with a carefully constructed residual and a sophisticated optimizer, successful calibration may be impossible when the parameter is effectively non-identifiable from the available observations.

\subsection{A nonlinear example: when the analytical solution is unknown}\label{subsec:Selkov}

We now turn to a nonlinear system with a limit cycle and no closed-form solution, allowing us to test the calibration strategy in a setting closer to real biogeochemical models.

\paragraph{Model and feasible region:} We consider the Sel'kov model (Strogatz \cite{strogatz2018nonlinear}, Example~7.3.3, p.~208):
\begin{equation}\label{eqStrogatz}
\begin{cases}
    \dfrac{dx}{dt} & = -x + y \cdot (a+x^2)  ~,\\ \\
    \dfrac{dy}{dt} & = b - y \cdot (a+x^2)  ~.
\end{cases}
\end{equation}
where $a>0$ and $b>0$ are parameters. For suitable choices of $(a,b)$, the system admits a stable fixed point or a stable limit cycle. In particular, for parameters in the region bounded by
$$b^2 = \frac{1}{2}\left(1-2a \pm \sqrt{1-8a}\right), \qquad a,b>0 ~,$$
which we define as the feasible region for the parameters in the following experiments (Figure \ref{StrogatzFeasibleRegion}), trajectories starting away from the equilibrium converge to a limit cycle, while trajectories starting sufficiently close to the equilibrium may converge to the fixed point.

\vspace{0.5cm}
\begin{figure}[htpb!]
\caption{Convergence to a limit cycle.}

\vspace{-0.2cm}
\begin{center}
\includegraphics[scale=0.7]{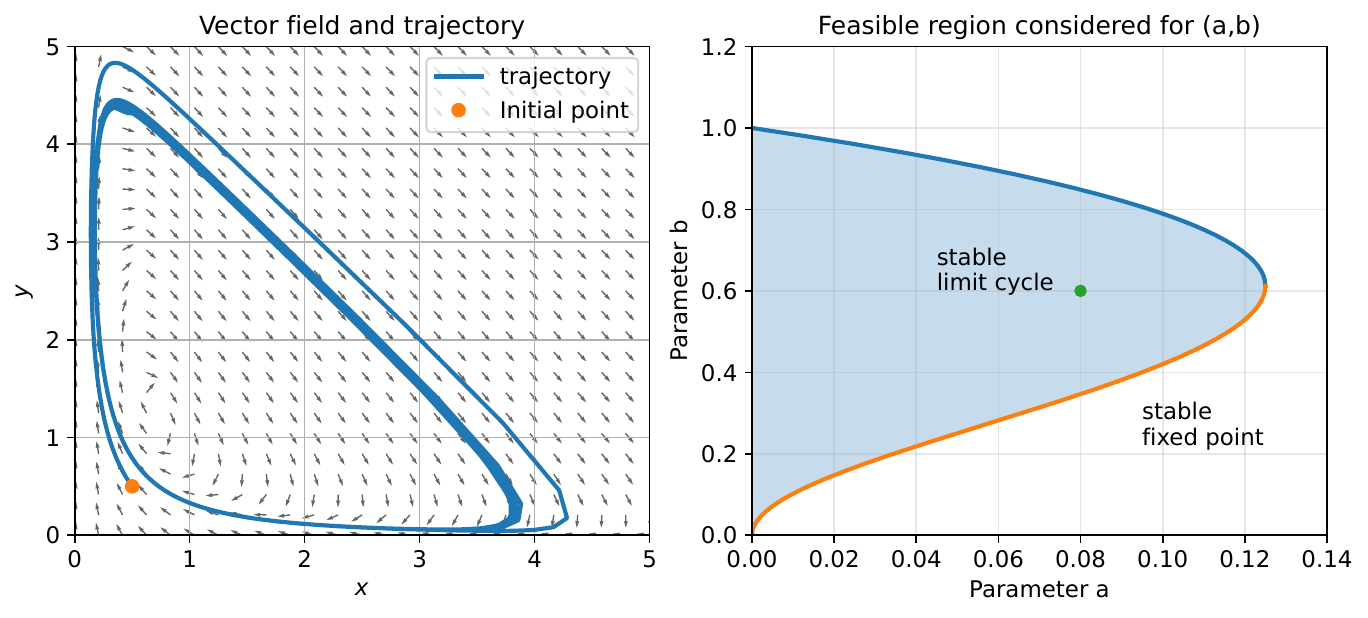}    
\end{center} 
\label{StrogatzFeasibleRegion}

Right-hand panel shows a point $(\Bar{a},\Bar{b})$ in the region defined by $b^2 = \frac{1}{2}\big(1-2a \pm \sqrt{1-8a}\big)$, $a,b>0$. Left-hand panel shows the trajectories obtained by integrating the system (\ref{eqStrogatz}) with parameters $(a,b) = (\Bar{a},\Bar{b})$ and an initial point away from the equilibrium.

\smallskip
\textbf{Source:} the author (2026).
\end{figure}

\paragraph{Producing the observations:}

Once we choose a pair of observational parameters inside the feasible region presented in Figure \ref{StrogatzFeasibleRegion}, the observations are constructed in two stages:
\begin{enumerate}
    \item \textbf{Finding a point on the limit cycle:} we integrate~\eqref{eqStrogatz} from a point slightly displaced from the fixed point and let the trajectory evolve for a long time, so that it converges to the limit cycle. The last point of this integration, denoted $y_0$, is taken as an approximation of a point on the observational limit cycle.
    \item \textbf{Sampling and interpolation:} starting from $y_0$, we integrate~\eqref{eqStrogatz} again over a time interval long enough to cover at least one full period of the limit cycle, and we save $N+1$ points along this integration. The resulting set of points constitutes the raw observational set. We then perform a linear interpolation to densify the observational curve so that the distance between consecutive points does not exceed a prescribed value $\Delta c$ (Figure~\ref{fig:001}).
\end{enumerate}

The parameter $\Delta c$ is initially chosen empirically and can later be scaled relative to the diameter of the rectangle containing the limit cycle. In this way, we balance two competing goals: 
filling in gaps along the observational limit cycle and avoiding an excessive number of interpolation points, which would increase computational cost. Figure~\ref{fig:001} compares the observational data, sampled from the approximated limit cycle obtained by integrating system (\ref{eqStrogatz}) with observational parameters $[a_\text{Obs}, b_\text{Obs}] = [0.02, 0.6]$ and $N+1=401$ observations, and the interpolated dataset.

\vspace{0.5cm}
\begin{figure}[htpb!]
\caption{Observational data sampled from the approximated limit cycle for $[a_\text{Obs}, b_\text{Obs}] = [0.001, 0.6]$, $N+1=401$ points (left panel), and the dataset obtained when adding interpolation points (right panel).}
\begin{center}
\includegraphics[scale=0.7]{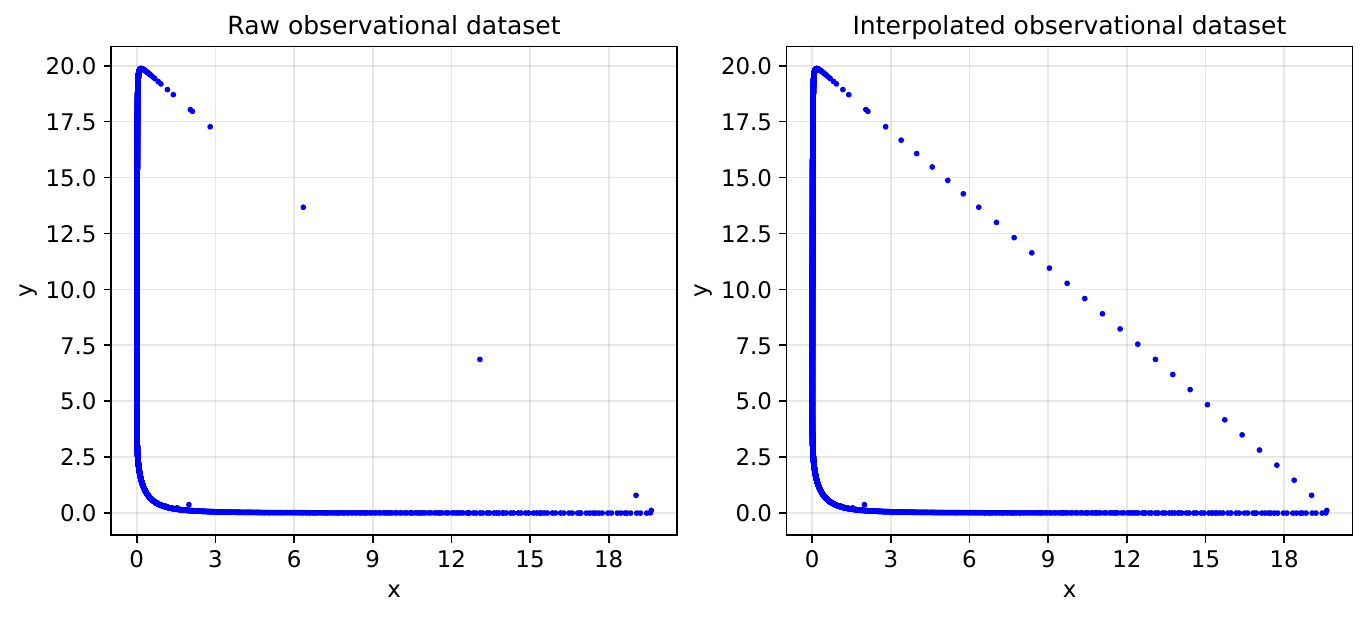}  
\end{center}
\label{fig:001}

\smallskip
\textbf{Source:} the author (2026).
\end{figure}

To implement the data interpolation, we first reorder the points obtained from the integration of the observational limit cycle. Then, the distance between consecutive points is calculated. If this distance is greater than $\Delta c$, points from the line segment connecting these two consecutive points are added to the observation set.
This construction assumes that the observational limit cycle is (approximately) convex, ensuring that the piecewise linear interpolation is well-defined and does not introduce large geometric distortions.

In the following, we considered two options for setting the integration interval from which we sample the observations:
\begin{enumerate}
\item The last 100 integration times;
\item A time interval empirically expected to contain at least one, and at most two loops of the observational limit cycle.
\end{enumerate}
In all the experiments presented here, the first option corresponded to a larger interval than the second one.

\paragraph{Optimization heuristic step:} For the heuristic step, we define a simplified misfit function $f_H(a,b)$ comparing the distances between a small number of model output points and the observational dataset. The ODE system~\eqref{eqStrogatz} is integrated from $y_0$ over a relatively short time interval, and $N+1$ samples are taken at pre-defined time instants (which also need to be included when generating the observational set). The misfit $f_H(a,b)$ is taken as the least-squares sum of a four-dimensional version of the residual function, $r_H(a,b)$, defined as the first four entries of the residual function $r_\text{Misfit}$, which will be better described in the following.

We then build a coarse grid of parameter values $(a,b)$ within the feasible region and evaluate $f_H$ on this grid (Figure~\ref{fig:006}, left panel). The set of grid points that appear as local minimizers of $f_H$ is retained for further refinement.

\begin{figure}[ht]
\caption{Heuristic evaluation for recovering the parameter $a_\text{Obs} = 0.08$, where the observational parameter $b_\text{Obs} = 0.6$ was fixed.}

\vspace{-0.2cm}
\begin{center}
\includegraphics[scale=0.7]{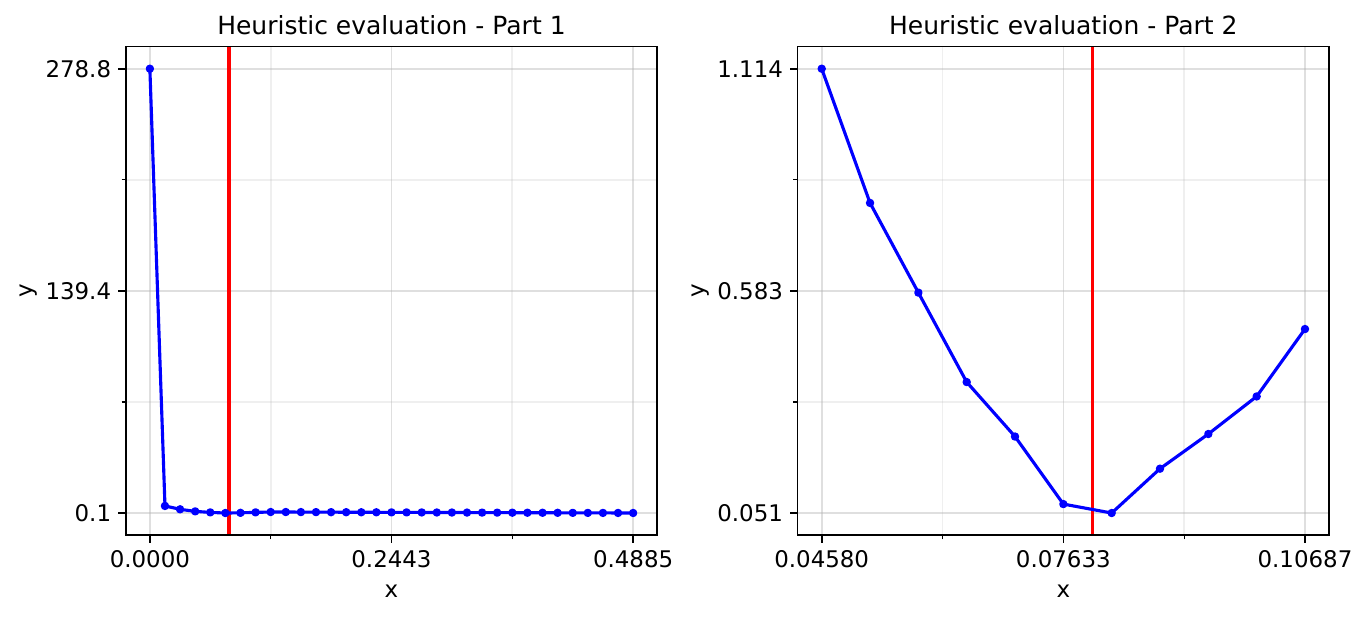}    
\end{center}
\label{fig:006}

Left panel: First part of the heuristic evaluation, where a light version of the misfit function, $f_H(a)$, was evaluated over a discretization of the feasible interval for $a$ into 33 points. 
Right panel: Heuristic refinement, with the evaluation of $f_H(a)$ at 11 points around each coarse local minimizer identified in the left panel, where a ceil is applied. In both plots, blue dots represent function evaluations, blue lines represent linear interpolation between these points, and the red vertical line identifies
the observational parameter $a_\text{Obs}$ on the x-axis.

\smallskip 
\textbf{Source:} the author (2026).

\vspace{0.4cm}
\end{figure}

\paragraph{Initial guess and bounds:}
After evaluating $f_H$ on the coarse grid, we select all local minimizers and, for each of them, we evaluate $f_H$ at a finer set of points in a small neighborhood. The median of the resulting misfit values is used to define $\rho_\text{max}$ (which we recall later), while the best local minimizer in this refined search is chosen as the initial guess for DFO-LS (Figure~\ref{fig:006}, right panel).

We then define personalized rectangular bounds around the initial guess, with width up to 20\% of the total feasible range in each parameter. Assuming the initial guess is reasonable, this restriction helps DFO-LS focus on the most relevant region and reduces the number of iterations.

\paragraph*{Misfit residual definition:}We now define the residual function used by DFO-LS to calibrate $(a,b)$.
After the heuristic step, we construct the residual function $r_\text{Misfit}(a,b)$, which has five components:
\begin{itemize}
    \item The first four components are the distance measures between the model output points and the observational dataset;
    \item A fifth component is an external penalty enforcing the feasibility constraints  for the parameters $(a,b)$.
\end{itemize}

The observational limit cycle is divided into four quadrants (Figure~\ref{fig:003}). For each of the $N+1$ points of the model output, we:
\begin{enumerate}
    \item find the closest observational point;
    \item compute the distance between this observation and the model output point;
    \item and assign this distance to the residual component associated with the quadrant containing that observational point.
\end{enumerate}  
To promote continuity between quadrants, we also use transition regions (Figure~\ref{fig:003}): points near the boundaries contribute to more than one component via a smooth transition.

\vspace{0.7cm}
\begin{figure}[htpb!]
\caption{Division of the observational limit cycle into four quadrants.}
\begin{center}
\includegraphics[scale=0.7]{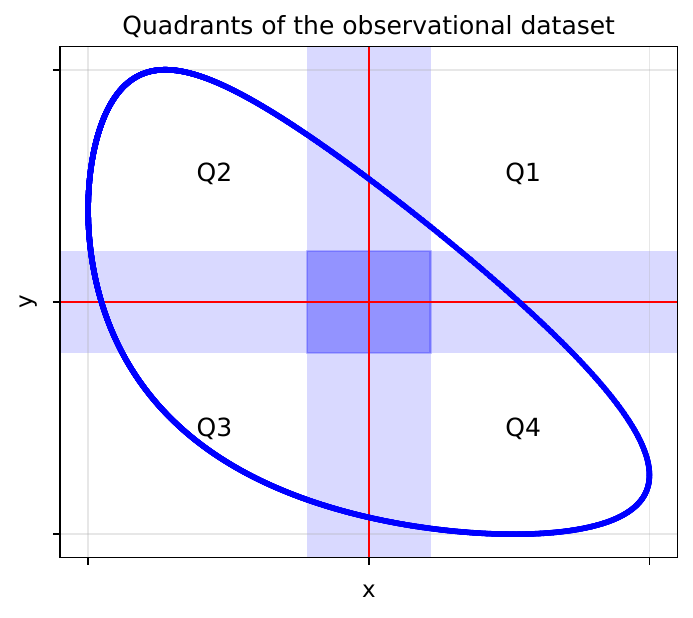}    
\end{center}
\label{fig:003}
\smallskip
The red lines demarcate the limits of each quadrant. Blue rectangles identify transition regions between quadrants.

\smallskip 
\textbf{Source:} the author (2026).

\vspace{0.4cm}
\end{figure}

\vspace{0.2cm}
The fifth residual component is a penalty term, denoted $r_\text{penalty}(a,b)$, that enforces the feasibility conditions
\[
8a \le 1,
\qquad
b^2 \le \frac{1}{2}\Big(1-2a+\sqrt{1-8a}\Big),
\qquad
b^2 \ge \frac{1}{2}\Big(1-2a-\sqrt{1-8a}\Big).
\]
We define
\[
r_\text{penalty}(a,b) =
\begin{cases}
  \rho_\text{max}, & \text{if } v_1(a,b) > 0,\\[0.5ex]
  \displaystyle
  \min\left\{\rho_\text{max},
  p_\text{scale}\,\max\big\{0,\, v_2(a,b),\, v_3(a,b)\big\}\right\},
  & \text{if } v_1(a,b)\le 0,
\end{cases}
\]
where
\begin{align*}
    v_1(a,b) & = 8a - 1,\\
    v_2(a,b) & = b^2 - \dfrac{1}{2} \Big(1-2a+\sqrt{1-8a}\Big),\\
    v_3(a,b) & = -b^2 + \dfrac{1}{2} \Big(1-2a-\sqrt{1-8a}\Big),
\end{align*}
$\rho_\text{max}>0$ is a ceiling value for the residual components, and $p_\text{scale}$ is a positive scaling constant (set to 1 in the experiments). The penalty ensures that points outside the feasible set incur a large residual.

The corresponding misfit function, $f_\text{Misfit}(a,b)$, is the squared norm of this five-dimensional residual vector.

\paragraph{Numerical experiments:} Once the five-dimensional residual function and the initial guess are defined, we run DFO-LS to calibrate $(a,b)$. The experiments are grouped into three tests, each corresponding to different calibration scenarios but all based on the same underlying example.

\paragraph{Experiment 4 (calibration of $a$ for fixed $b$).} We fix $b=b_\text{Obs}=0.6$ and consider
$$[a_\text{Obs}, b_\text{Obs}] \in
\left\{
[0.001,0.6], [0.02,0.6], [0.04,0.6], [0.06,0.6],
[0.08,0.6], [0.1,0.6], [0.12,0.6]
\right\} ~.$$
For each one of these seven cases, we run DFO-LS for $N+1\in\{26,101,401\}$ samples and compare the accuracy of the optimized parameter $\Bar{a}$ in the recovery of $a_\text{Obs}$.

Figure \ref{fig:test2_a_001} summarizes the results for two sampling strategies:
\begin{itemize}
    \item Samples taken over the last 100 units of integration time;
    \item Samples taken over the last 1-2 periods of the limit cycle (that is, a shorter time window but still covering at least one full cycle).
\end{itemize}

\vspace{0.4cm}
\begin{figure}[htpb!]
\caption{Experiment 4: accuracy in recovering $a_\text{Obs}$, with $b_\text{Obs} = 0.6$ fixed, depending on the number of samples $N+1$ $N+1$.}
\begin{center}
\includegraphics[scale = 0.7]{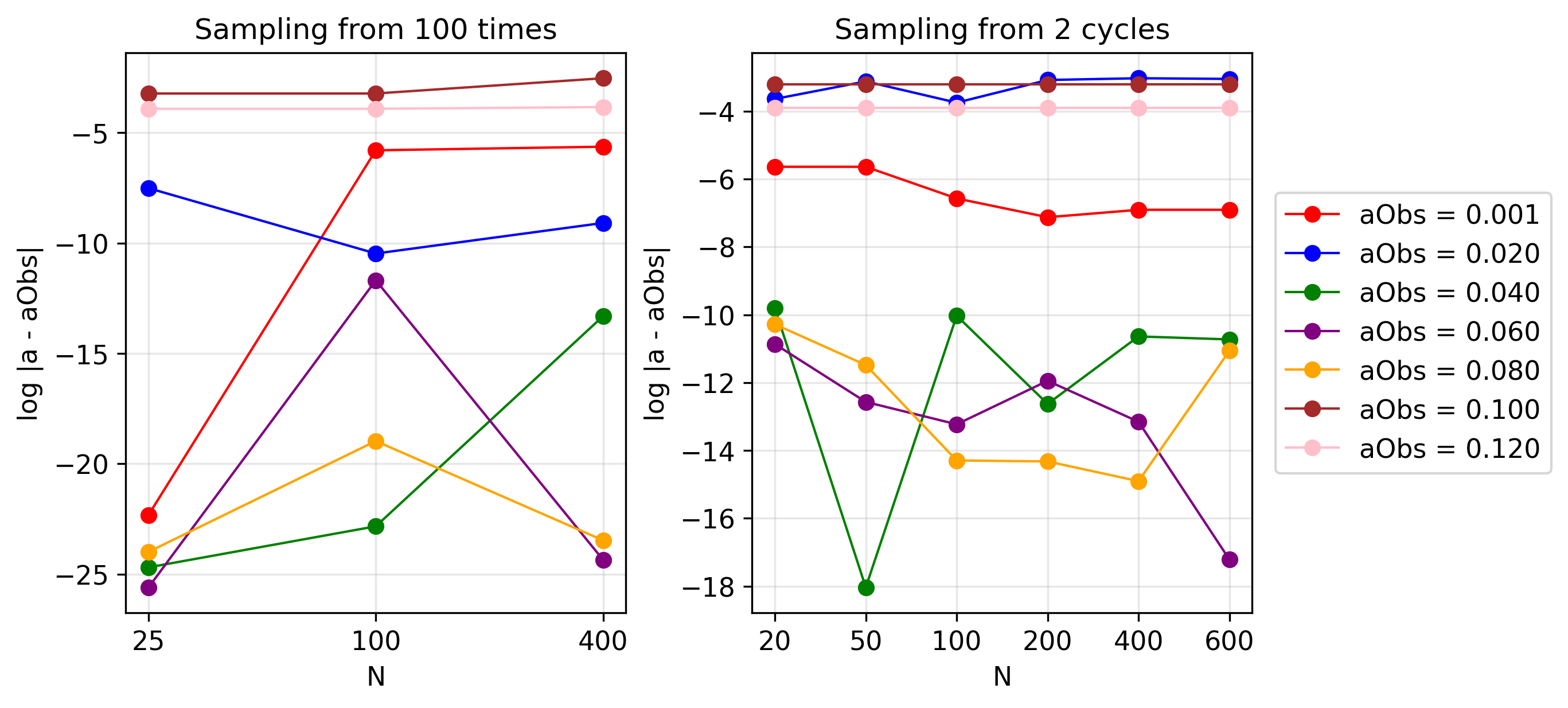}    
\end{center}
\label{fig:test2_a_001}

\smallskip
Results when samples were taken over the last 100 time units of the model integration (left), and results when samples were taken over approximately 1--2 periods of the limit cycle obtained by the model integration (right).

\smallskip
\textbf{Source:} the author.
\end{figure}

\paragraph{Experiment 5 (calibration of $b$ for fixed $a$).} Here we fix $a=a_\text{Obs}=0.04$ and vary $b_\text{Obs}$:
$$[a_\text{Obs}, b_\text{Obs}] \in
\left\{
[0.04,0.3], [0.04,0.4], [0.04,0.6],
[0.04,0.7], [0.04,0.9]
\right\} ~.$$

Again, we consider $N+1\in\{26,101,401\}$ and compare sampling over the last 100 time units versus sampling over the last 1--2 periods. The results are shown in Figure \ref{fig:test3_b_001}.

\begin{figure}[htpb!]
\caption{Experiment 5: accuracy in recovering $b_\text{Obs}$, with $a_\text{Obs} = 0.04$ fixed, depending on the number of samples $N+1$ $N+1$.}
\begin{center}
\includegraphics[scale=0.7]{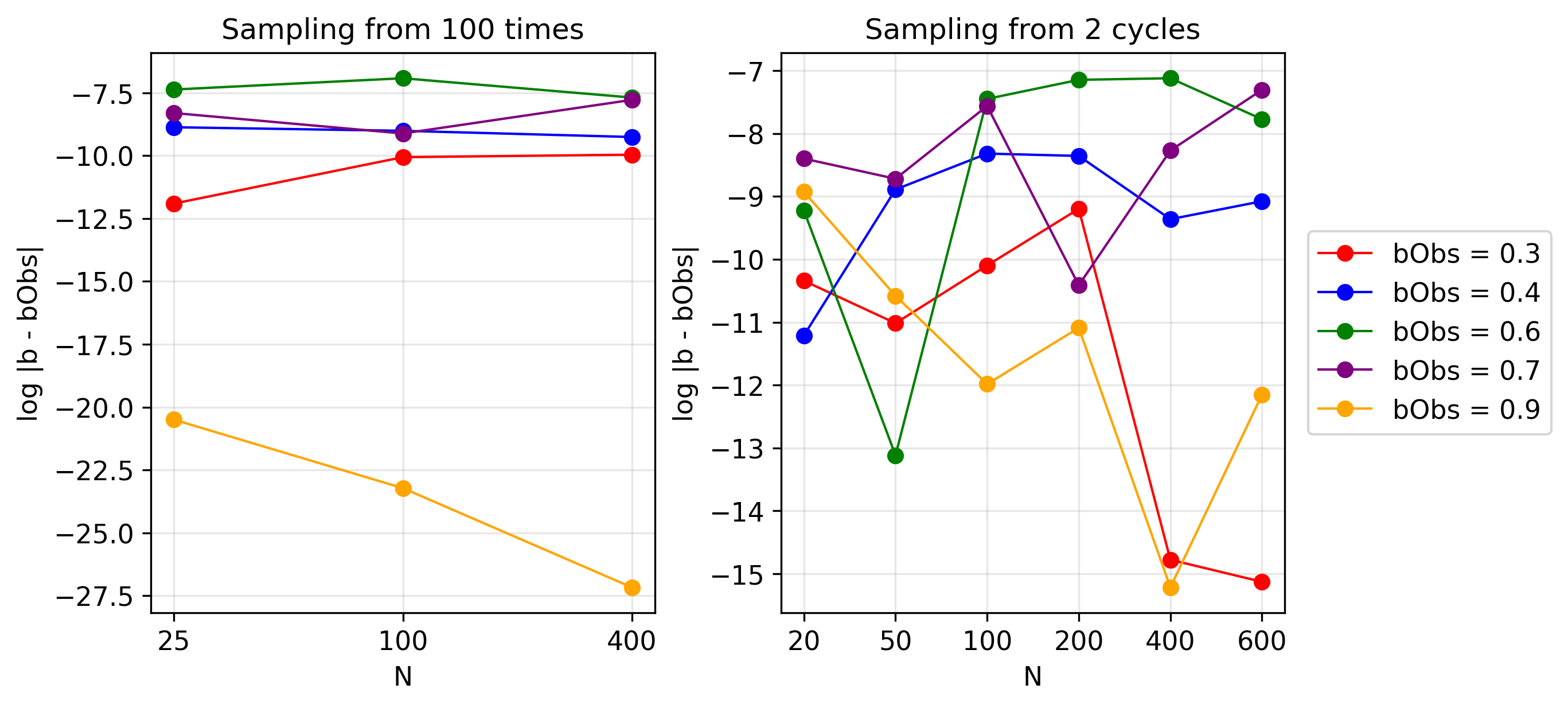}    
\end{center}

\smallskip
Results when samples were taken over the last 100 time units of the model integration (left), and results when samples were taken over approximately 1--2 periods of the limit cycle obtained by the model integration (right).
\label{fig:test3_b_001}

\smallskip
\textbf{Source:} the author.

\vspace{0.4cm}
\end{figure}

\paragraph{Experiment 6 (simultaneous calibration of $a$ and $b$).}
In the final set of experiments, we calibrate both parameters simultaneously for
$$[a_\text{Obs}, b_\text{Obs}] \in
\left\{
[0.02,0.4], [0.02,0.6], [0.02,0.8],
[0.06,0.6], [0.12,0.6]
\right\} ~,$$
and $N+1\in\{26,101,401\}$. Samples are taken over the last 100 time units. Figure \ref{fig:test4_ab_001} shows the calibration accuracy for different $(a_\text{Obs}, b_\text{Obs})$ and values of $N+1$.

\paragraph{Last comments:}
\begin{itemize}
    \item To obtain the first 4 entries of the residual array, we divide the set of observations into quadrants. This strategy enables us to assign different weights to each quadrant, for example, proportionally to the number of observations located in each.

    \item For very small values of $a_\text{Obs}$, the Sel'kov system becomes stiff, and the misfit function may become ill-conditioned in a neighborhood of the observational parameters. This behavior becomes clear when attempting to recover $a_\text{Obs}=0.001$ while $b_\text{Obs}=0.6$ is fixed. In this case, the heuristic misfit $f_H(a)$ is poorly behaved near $a_\text{Obs}$ (Figure~\ref{fig:013}), leading to a poor initial guess. As a consequence, the personalized bounds may exclude $a_\text{Obs}$, and even when the full feasible region is used, the DFO-LS search tends to move away from the observational parameter $a_\text{Obs}$. 
\end{itemize}

\vspace{0.7cm}
\begin{figure}[htpb!]
\caption{Experiment 6: accuracy in simultaneously recovering $a_\text{Obs}$ and $b_\text{Obs}$ depending on the number of samples $N+1$.}
\begin{center}
\includegraphics[scale=0.7]{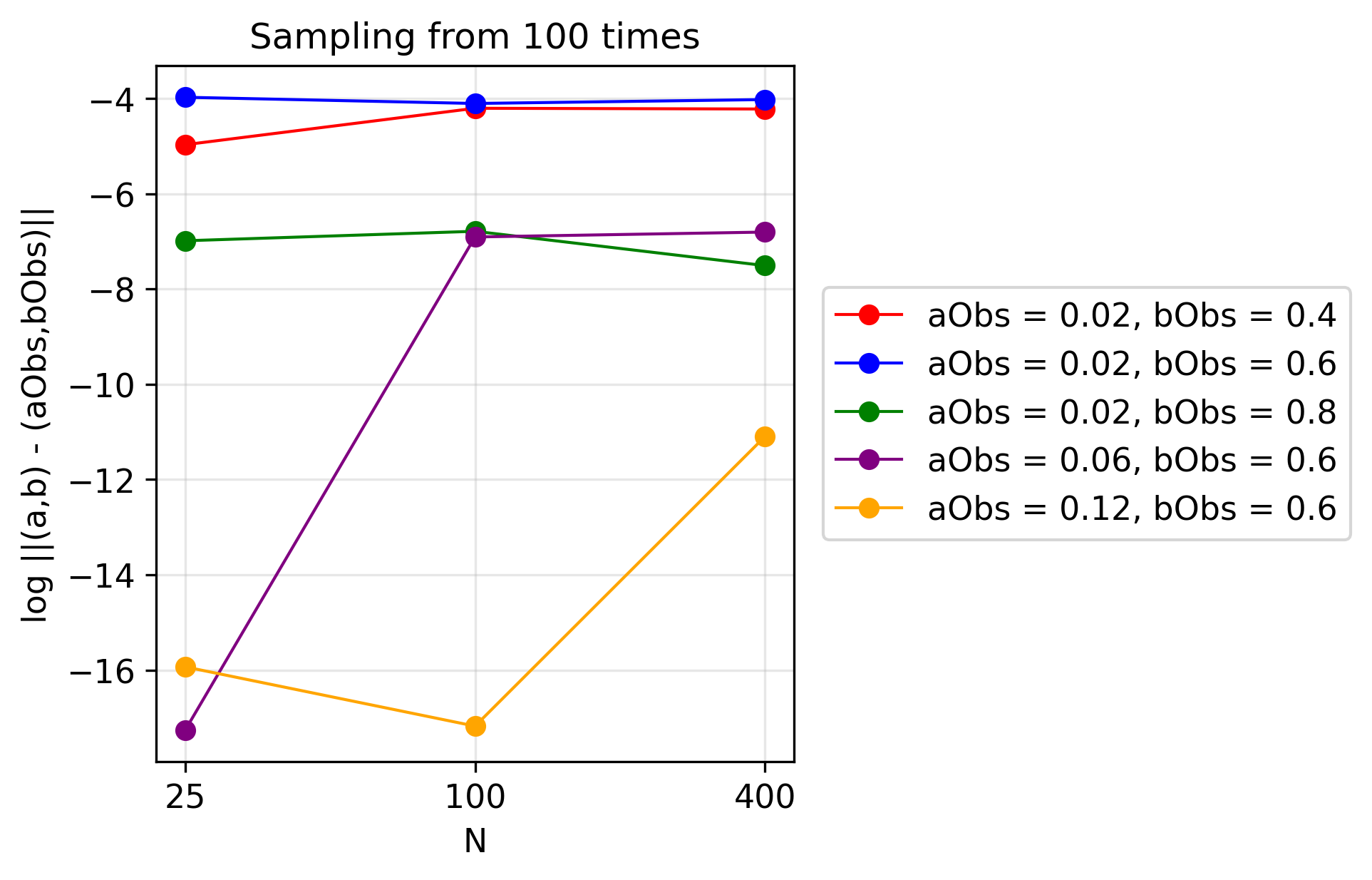}    
\end{center}
\label{fig:test4_ab_001}

\smallskip
Samples were taken over the last 100 time units of the model integration.

\smallskip
\textbf{Source:} the author.
\end{figure}

\begin{figure}[htpb!]
\caption{Heuristic evaluation for recovering the parameter $a_\text{Obs} = 0.001$, where the observational parameter $b_\text{Obs} = 0.6$ was fixed.} 
\begin{center}
\includegraphics[scale=0.7]{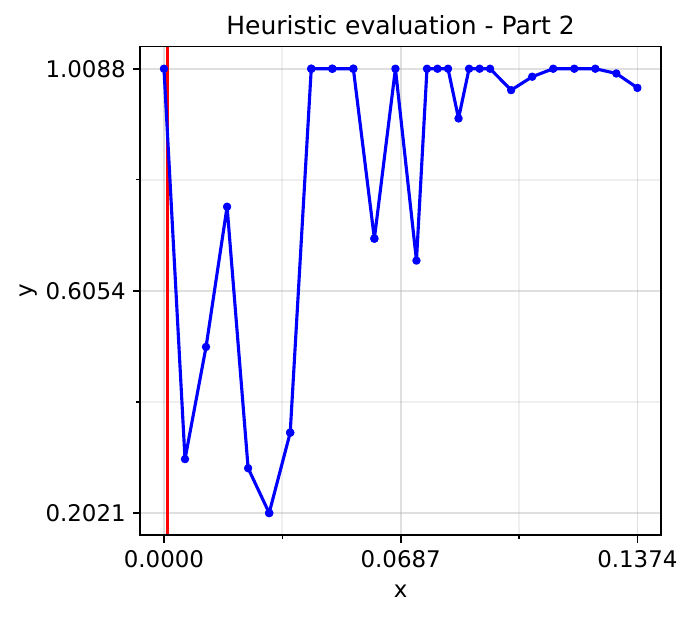}
\end{center}
\label{fig:013}

The function is ill-conditioned around $a_\text{Obs}$, leading to a bad choice of the initial guess. Points represent $f_H(a)$ evaluations with ceil, the red vertical line identifies the observational parameter $a_\text{Obs}$ on the x-axis, which is a global minimizer for $f_H(a)$, not included in the evaluations. See the right panel of Figure \ref{fig:006}.

\smallskip
\textbf{Source:} the author.

\vspace{0.3cm}
\end{figure}

\section{Results and Discussion}
The experiments in this chapter highlight several modeling choices that strongly influence the quality and robustness of parameter calibration:

\paragraph{Heuristic steps:} During the experiments in this chapter, we noted that not all calibration problems require a heuristic step. Well-behaved problems can even recover observational parameters using simpler strategies than those presented here.
However, when dealing with more complex problems, performing a heuristic search to obtain a good initial guess generally reduces the computational cost of the calibration process, especially when the function used for the heuristic is computationally lighter than the function used for the calibration parameter search. Still, note that we cover examples of calibration with only one or two parameters: In problems with several parameters to be calibrated, it may be impractical to reproduce the heuristic step that considers a discretization of the feasible space of each variable, and so we may restrict the heuristic step to sample a set of points with less coverage of the search space.

\paragraph{Misfit residual features:}
From the experiments in this chapter, we note that certain modeling and optimization choices can impact the quality of the results. One of them is the choice of an upper limit for the residual function entries (denoted as $\rho_\text{max}$ in Subsection \ref{subsec:Selkov}), which we will refer to as a \textit{ceil}. This feature ensures that the search algorithm considers regions above the ceil as unattractive while mitigating scale variations that could compromise the accuracy of the results. By setting the ceil after the heuristic step, we ensure that this value is not too small, which would overly restrict the search region and make it difficult for DFO-LS to find a descent path to reach the minimizer. First, we have set the ceil as the maximum value of the misfit function obtained during the heuristic. This has allowed us to preserve all possible descent directions of the misfit function within the feasible set, a strategy that prioritizes exploration. Another choice was setting the ceil as the median of the heuristic misfit function values sampled in a neighborhood of local minimizers, in a strategy that prioritizes optimization accuracy. We considered the second strategy for the experiments in Subsection \ref{subsec:Selkov}.

\paragraph{Effect of sample distribution and quadrant weights:}
In the Subsection \ref{subsec:Selkov}, we represented the parameter space into quadrants, and considered transition regions between quadrants to ensure that the first four entries of the residual function effectively serve as coordinates. These transition regions aim to serve as a continuity factor between the first four entries of the residual array, achieved through a covering of the plane by four cards, and transition homeomorphisms between them.

When sampling more observations, we noticed that the accuracy of recovering observational parameters does not necessarily increase. Nevertheless, the overall accuracy remains very good. This phenomenon is due to the uneven distribution of points along the observational limit cycle obtained from the integration, although evenly distributed over time.
Thus, in some regions of the limit cycle, the residual function compares the model output points only to the interpolated observational data. If there are model output points in a quadrant with these characteristics, there will be an additional error to the calibration. Even so, one precaution that should be taken is to avoid assigning weights too close to zero to quadrants containing few or no observations, as ignoring one of the quadrants leads to a calibration strategy similar to comparing model output and observational points one-on-one, which may only be effective in a small neighborhood of the optimal parameters.

In this context, an additional improvement is obtained by selecting the $N+1$ samples from a shorter interval that empirically contains one to three periods of the limit cycle, rather than from a very long time window. In this case, the observational points tend to be more evenly distributed among quadrants, and interpolation becomes less critical.

\paragraph{{Framework limitations and possible developments:}} 
There is a wide variety of optimization algorithms (for example, trust-region, direct search, model-based, evolutionary, surrogate, and Bayesian methods), and their performance depends strongly on characteristics of the problem, such as the amount of noise in the data, how smooth the function is, and the number of parameters \cite{Larson2019}. For computationally expensive models, optimization methods that construct simpler probabilistic or regression-type approximations of the misfit function are particularly useful, as they can reduce the number of costly ODE integrations and handle noisy data in a natural manner \cite{Muller2019, Ambrosio2017}. In problems involving the calibration of multiple parameters simultaneously, multi-objective versions of MADS and other derivative-free methods can be used to approximate Pareto fronts in a fully black-box setting \cite{custodio2011}.

From a modeling perspective, the framework proposed here can be extended in several directions. First, identifiability and sensitivity analyses (for example, variance-based methods) could be included as a preliminary step to help decide which parameters should be calibrated and how to design the residuals and observation sets. Second, the heuristic step, ceilings, and residual weights could be chosen adaptively, for example, using robust statistics of the misfit values or simple learning rules that update these choices during optimization \cite{Ambrosio2017,Andres2024}. Third, the current form of the residuals (based on quadrants, penalties, and aggregated distances) could be altered to reflect better realistic data features of the problem approached. Finally, the full framework could be tested on larger conceptual models and real datasets, where model error, structural mismatch, and computational features are evaluated in detail.

In this sense, the examples presented in this chapter serve as a proof of concept, revealing both the strengths and limitations of systematic calibration with derivative-free methods, and pointing to a possible path for combining problem-specific residual design with more advanced and diverse optimization strategies.

\paragraph{Computational implementation:} All models in this chapter were solved numerically using an implicit variable-step integration scheme based on the Backward Differentiation Formula (BDF) method. Relative and absolute tolerances were adaptively adjusted between $10^{-12}$ and $10^{-3}$ to ensure stable convergence and completion of the integration while keeping numerical errors as small as possible. As the examples show, the numerical integration settings can directly affect the smoothness and conditioning of the misfit function, and hence the performance of the calibration procedure. 
Further implementation details, including Python code and additional experiments, are provided in the \textit{Appendix}.

\newpage

\vspace{1cm}

\chapter{A data-constrained model for the PEC biogeochemistry dynamics}
\chaptermark{A data-constrained model for the PEC}
\label{chapterCalibratingPEC}

In this chapter, we apply the parameter-calibration framework developed in \textit{Chapter \ref{chapterframework}} to the conceptual model of the Paranaguá Estuarine Complex (PEC) developed in \textit{Chapter \ref{chapterPEC}}. To this end, we fit the model outputs to an observed dataset of nitrate (NO$_3$) and phytoplankton (measured in terms of Chlorophyll-A) over a one-year period. Finally, we use the resulting data-constrained model to test a few scenarios involving increased riverine nitrate loads. The Python code for these experiments is available in the \textit{Appendix}.

\section{General modeling setup}
We consider the conceptual two-box model for the PEC described in \textit{Chapter \ref{chapterPEC}}. Specific details on our modeling approach are presented below. The data were identified from plots in \cite{Machado1997ParanaguaBay} using \textit{Web Plot Digitalizer} \cite{WebPlotDigitizer}. For the PEC dynamics using literature data, we initially applied linear interpolation to the observations, but the model–data fit was poor. We then hypothesized the presence of measurement noise and applied a smoothing strategy, in which each monthly observation was replaced by the average of that month and the two previous months. These data was presented in\textit{ Chapter \ref{chapterPEC}}, as the upper box tracers' concentration in figures \ref{fig:dataNitrate} and \ref{fig:dataPhy}.

\paragraph{Forcing, initial and boundary conditions:}
Initial conditions for the model integration are obtained from linear interpolation of observational data reported in the literature \cite{Machado1997ParanaguaBay}, as presented in \textit{Chapter \ref{chapterPEC}}.


\paragraph{Target parameters:}
The parameters to be calibrated in this model are the daily maximum phytoplankton growth rate, $V_\text{max}$, and the mortality rate $\lambda$.

\paragraph{Heuristic:}
For the heuristic, we considered a version of the misfit function in which the last year of the model integration over a period of 51 years is compared against 13 observations distributed over one year. For the benchmark tests, we implemented only one heuristic step, where we searched for a good initial guess for the local search on the best-fitting parameters. During the computational implementation, this stage was parallelized.

\paragraph{Optimization approach:}
We used the DFO-LS solver for parameter calibration with an initial guess determined by the heuristic and bounds estimated from theoretical considerations in the literature \cite{Brandini1985, Cloern1978Skeletonema, Olmstead2011PilotStudy}. The misfit residual array compares the last year of the model integration over 201 years to $N{+}1$ observations distributed across one year, with $N \in \{12,120\}$. In all cases, we used the basic configuration of DFO-LS implementation \cite{roberts2025dfo-ls}, including bounds and the option \texttt{scaling\_within\_bounds}. The DFO-LS Python package also offers options to handle noise, which were not explored in this study.

\section{Benchmarking and validating the approach}
To validate the fitting capability of our framework, we conducted tests with the following objectives:
\begin{enumerate}
    \item Ideally, to recover the observational parameters through the optimization strategy.
    \item To reproduce the behavior of observational data using the calibrated model.
\end{enumerate}
In this section, we will set chosen observational parameters and artificially generated observational data as outputs obtained from our model, based on the observational parameters and the previously considered forcing data. To obtain the observational data, we integrated our model with the observational parameters set over the 1000-year integration period, which is considered sufficient time for the model to reach an equilibrium state. Then, we sample the observations from the last year of the integration period at the same time instants considered in the definitions of each misfit function version. 

For benchmark tests, the parameters $V_\text{max}$ and $\lambda$ were considered constant throughout the year, that is, invariant with respect to time. This simplification was related to the low level of complexity of the conceptual model, although it is not the only possibility. In the next section, we will see an example of calibration involving a time-varying parameter.

Considering the assumption that the behavior of the PEC phytoplankton population can be represented as a combination of the diatom species \textit{S. costatum} and \textit{A. glacialis} \cite{Brandini2022}, we set bounds for the parameter space as:
\begin{align}
    0 & \leq V_\text{max} \leq \max \left \lbrace V_{\text{max},1} , V_{\text{max},2} \right\rbrace , \\
    0 & \leq \lambda \leq 1 ,    
\end{align}
where $V_{\text{max},1} = 2.5$ and $V_{\text{max},2} = 1.5$ are set as the maximum daily growth parameter estimates for the \textit{S. costatum} phytoplankton population \cite{Khan1998}, and \textit{A. glacialis} phytoplankton population \cite{Olmstead2011PilotStudy}, respectively. As we did not perceive a significant enhancement in accuracy due to taking 120 samples instead of 12 in the previous simulations, here we considered only the second option. We now present the case studies used as benchmark tests.

\paragraph{Case 1:} Here, we choose the observational parameters:
\begin{align*}
    V_\text{max}^\text{Obs} & = 1.0 ,\\
    \lambda^\text{Obs} & = 0.05 .
\end{align*}
That means the observations were generated under the assumptions of low daily growth and moderate daily mortality of the phytoplankton population. The artificial observations generated by the model outputs when considering these observational parameters and the fitting attained after the calibration process are presented in figures \ref{fig:fittingBench01N} and \ref{fig:fittingBench01P}. 

\vspace{0.7cm}
\begin{figure}[htbp!]
\caption{Fitting of average nitrate concentrations in the upper part of the PEC over a one-year period, for the Case 1 experiment.}
\begin{center}
\includegraphics[scale=0.8]{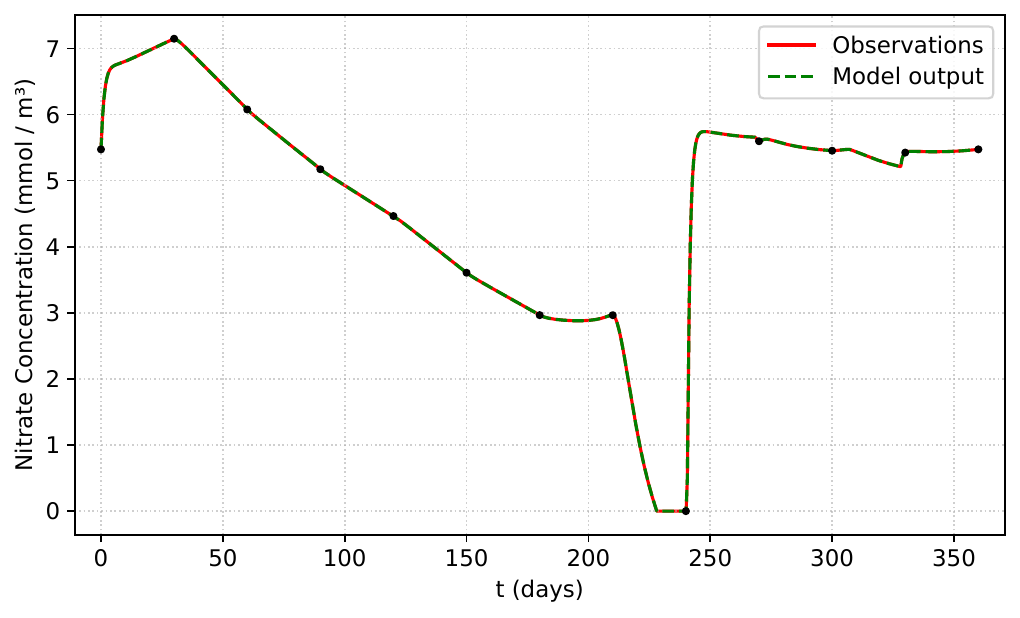}
\end{center}
\label{fig:fittingBench01N}

\smallskip
Observations were artificially generated by integrating the model with observational parameters $V_\text{max}^\text{Obs} = 1.0$, and $\lambda^\text{Obs} = 0.05$.
The model output plot was obtained by integrating the model described by equations (\ref{EstuarioN}) - (\ref{estacionario}) with the calibrated parameters $\Bar{V}_\text{max} \approx V_\text{max}^\text{Obs} + 8 \cdot 10^{-8}$, and $\Bar{\lambda} \approx \lambda^\text{Obs} + 2 \cdot 10^{-9}$.
The points represent the $N+1$ fitting points used to calculate the misfit residual array; here, $N=12$.
    
\smallskip
\textbf{Source:} the author.
\end{figure}

\begin{figure}[htbp!]
\caption{Fitting of average phytoplankton concentrations in the upper part of the PEC over a one-year period, for the Case 1 experiment.}
\begin{center}
\includegraphics[scale=0.8]{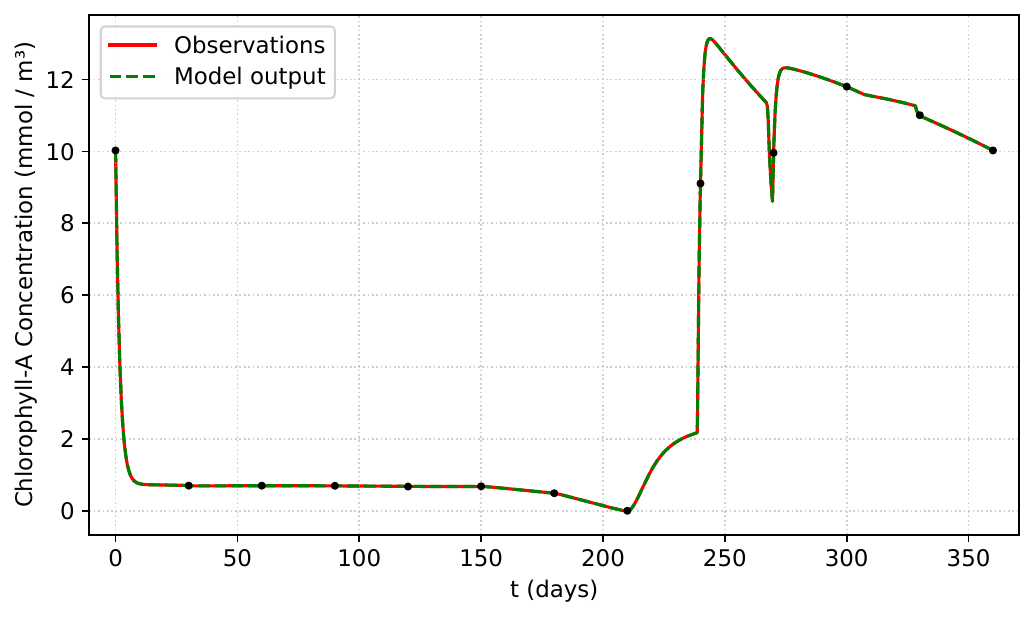}  
\end{center}

\smallskip
See the full description of the experiment settings in Figure \ref{fig:fittingBench01N}.
\label{fig:fittingBench01P}

\smallskip
\textbf{Source:} the author.

\vspace{0.2cm}
\end{figure}

\newpage

The parameter calibration is presented in figures \ref{fig:searchBench01.1} and \ref{fig:searchBench01.2}, as the one-step heuristic search for an adequate initial guess, which was found as:
\begin{align*}
    V_\text{max}^\text{Initial} & = 2.1700000000000004 ,\\
    \lambda^\text{Initial} & = 0.089 ,
\end{align*}
and the DFO-LS search, which converged to the optimized parameters:
\begin{align*}
    \Bar{V}_\text{max} & = 1.000000084741517 ,\\
    \Bar{\lambda} & = 0.05000000269829062 .
\end{align*}

\begin{figure}[htpb!]
\caption{Heuristic step on the calibration process of the parameters $V_\text{max}$ and $\lambda$, in the Case 1 experiment.}
\begin{center}
\includegraphics[scale=0.8]{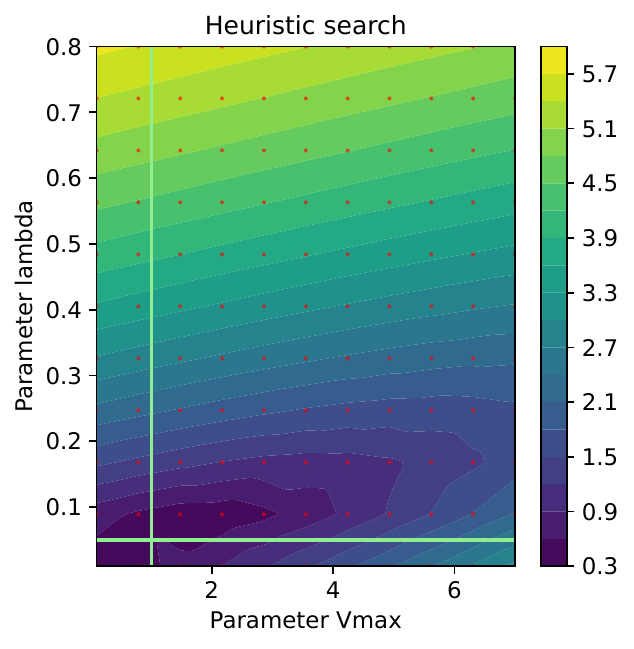} 
\end{center}
\label{fig:searchBench01.1} 

\smallskip
The figure shows a contour plot representing the heuristic step used to determine the initial guess. The sidebar associates colors with the orders of magnitude of the heuristic misfit function evaluations, and the grid of points in the background identifies the parameters evaluated during the heuristic.
The intersection of the highlighted horizontal and vertical lines marks the location of the observational parameters.

\smallskip
\textbf{Source:} the author.
\end{figure}

\begin{figure}[htbp!]
\caption{DFO-LS search on the calibration process of the parameters $V_\text{max}$ and $\lambda$, in the Case 1 experiment.}
\begin{center}
\includegraphics[scale=0.8]{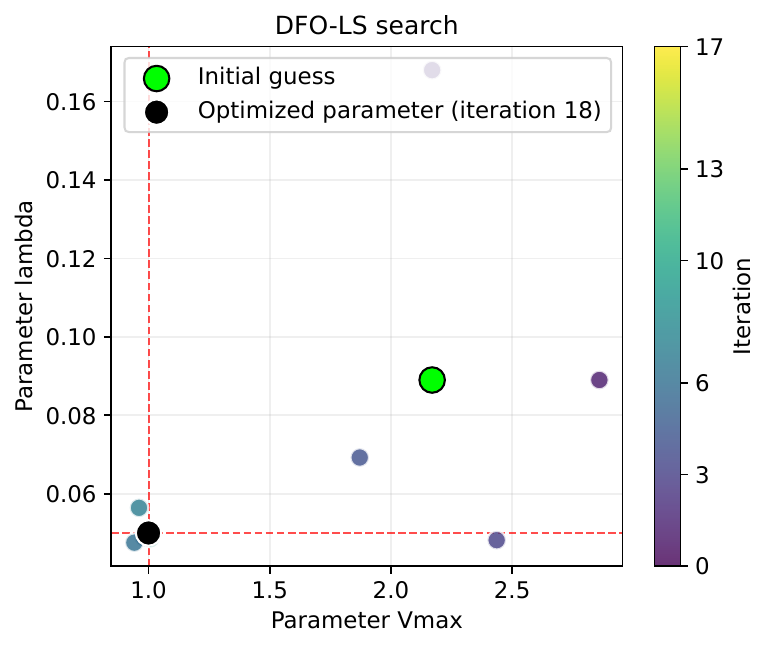}    
\end{center}
\label{fig:searchBench01.2}

\smallskip
Each point corresponds to the parameters obtained in an iteration of the DFO-LS optimization algorithm, being the initial guess obtained from the heuristic step (Figure \ref{fig:searchBench01.1}). The sidebar assigns a color scale to the iterations, allowing identification of the convergence pattern. The intersection of the highlighted horizontal and vertical lines marks the location of the observational parameters.
    
\smallskip
\textbf{Source:} the author.
\end{figure}

\newpage
\paragraph{Case 2:} Here, we choose the observational parameters:
\begin{align*}
    V_\text{max}^\text{Obs} & = 1.4 ,\\
    \lambda^\text{Obs} & = 0.05 .
\end{align*}
That means the observations were generated under the assumptions of moderate daily growth and moderate daily mortality of the phytoplankton population. The artificial observations generated by the model outputs when considering these observational parameters and the fitting attained after the calibration process are presented in figures \ref{fig:fittingBench02N} and \ref{fig:fittingBench02P}. 

\vspace{0.7cm}
\begin{figure}[htbp!]
\caption{Fitting of average nitrate concentrations in the upper part of the PEC over a one-year period, for the Case 2 experiment.}
\begin{center}
\includegraphics[scale=0.8]{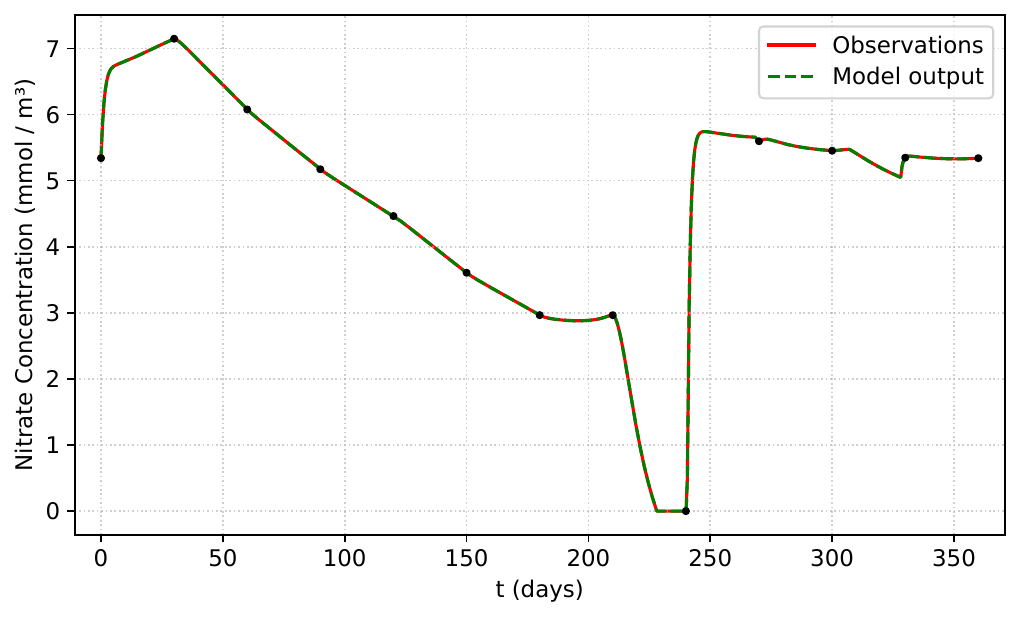}  
\end{center}
\label{fig:fittingBench02N}

\smallskip
Observations were artificially generated by integrating the model with observational parameters $V_\text{max}^\text{Obs} = 1.4$, and $\lambda^\text{Obs} = 0.05$.
The model output plot was obtained by integrating the model described by equations (\ref{EstuarioN}) - (\ref{estacionario}) with the calibrated parameters $\Bar{V}_\text{max} \approx V_\text{max}^\text{Obs} + 9 \cdot 10^{-8}$, and $\Bar{\lambda} \approx \lambda^\text{Obs} - 1 \cdot 10^{-9}$.
The points represent the $N+1$ fitting points used to calculate the misfit residual array; here, $N=12$.
    
\smallskip
\textbf{Source:} the author.
\end{figure}

\begin{figure}[htbp!]
\caption{Fitting of average phytoplankton concentrations in the upper part of the PEC over a one-year period, for the Case 2 experiment.}
\begin{center}
\includegraphics[scale=0.8]{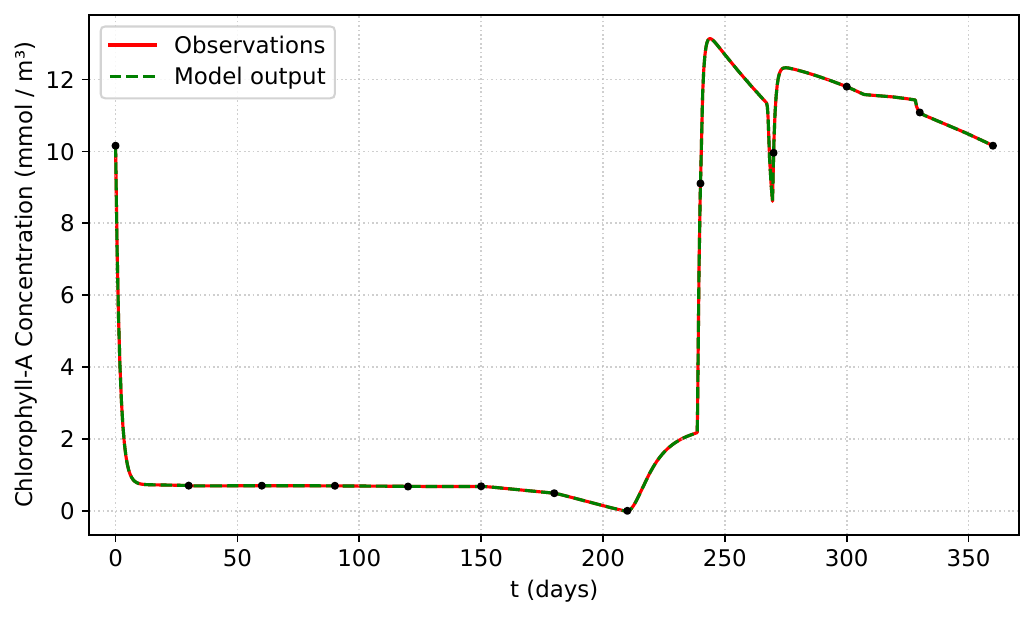}  
\end{center}
\label{fig:fittingBench02P}

\smallskip
See the full description of the experiment settings in Figure \ref{fig:fittingBench02N}.
    
\smallskip
\textbf{Source:} the author.

\vspace{0.2cm}
\end{figure}

\newpage
The parameter calibration is presented in figures \ref{fig:searchBench02.1} and \ref{fig:searchBench02.2}, as the one-step heuristic search for an adequate initial guess, which was found as:
\begin{align*}
    V_\text{max}^\text{Initial} & = 2.1700000000000004 ,\\
    \lambda^\text{Initial} & = 0.089 ,
\end{align*}
and the DFO-LS search, which converged to the optimized parameters:
\begin{align*}
    \Bar{V}_\text{max} & = 1.4000000989350427 ,\\
    \Bar{\lambda} & = 0.04999999880062813 .
\end{align*}

\begin{figure}[htbp!]    
\caption{Heuristic step on the calibration process of the parameters $V_\text{max}$ and $\lambda$, in the Case 2 experiment.}
\begin{center}
\includegraphics[scale=0.8]{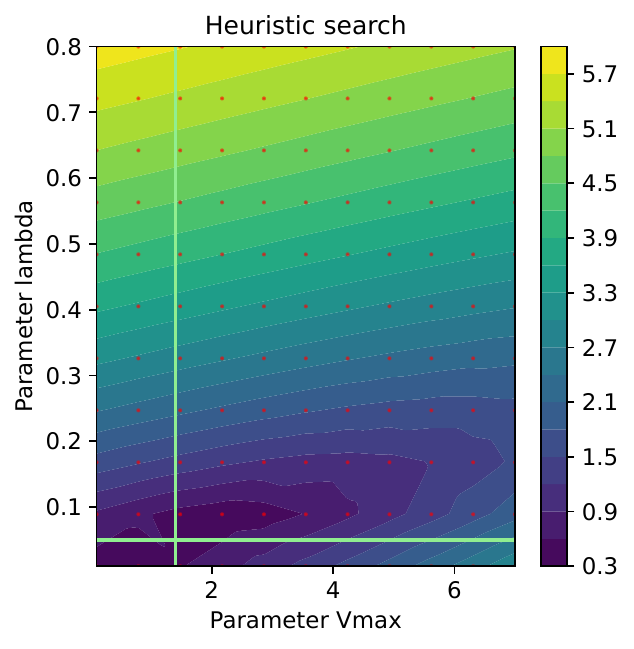}
\end{center}
\label{fig:searchBench02.1}

\smallskip
The figure shows a contour plot representing the heuristic step used to determine the initial guess. The sidebar associates colors with the orders of magnitude of the heuristic misfit function evaluations, and the background grid of points identifies the parameters evaluated during the heuristic.
The intersection of the highlighted horizontal and vertical lines marks the location of the observational parameters.
    
\smallskip
\textbf{Source:} the author.
\end{figure}

\begin{figure}[htbp!]
\caption{DFO-LS search on the calibration process of the parameters $V_\text{max}$ and $\lambda$, in the Case 2 experiment.}
\begin{center}
\includegraphics[scale=0.8]{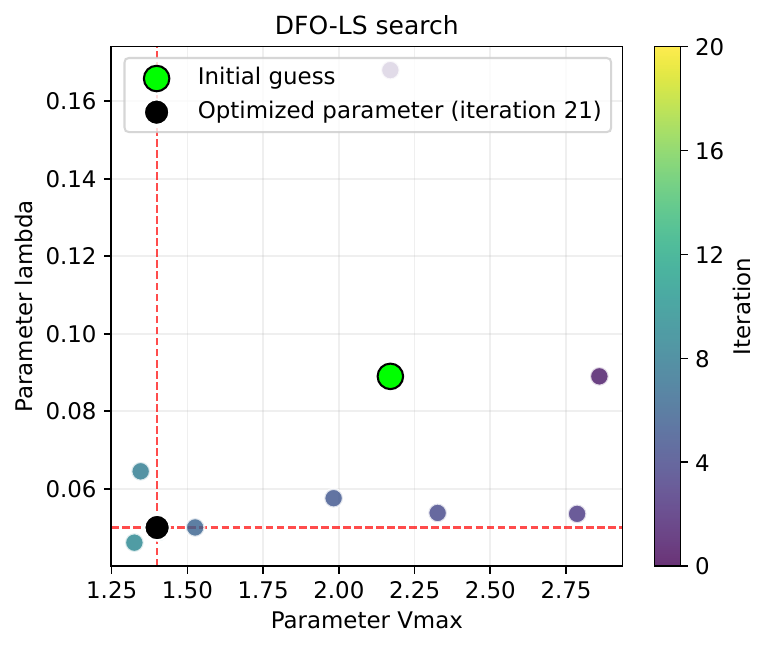}    
\end{center}
\label{fig:searchBench02.2}

\smallskip
Each point corresponds to the parameters obtained in an iteration of the DFO-LS optimization algorithm, being the initial guess obtained from the heuristic step (Figure \ref{fig:searchBench02.1}). The sidebar assigns a color scale to the iterations, allowing identification of the convergence pattern. The intersection of the highlighted horizontal and vertical lines marks the location of the observational parameters.
    
\smallskip
\textbf{Source:} the author.
\end{figure}

\newpage
\paragraph{Case 3:} Here, we choose the observational parameters:
\begin{align*}
    V_\text{max}^\text{Obs} & = 1.4 ,\\
    \lambda^\text{Obs} & = 0.3 .
\end{align*}
That means the observations were generated under the assumptions of moderate daily growth and high daily mortality of the phytoplankton population. The artificial observations generated by the model outputs when considering these observational parameters and the fitting attained after the calibration process are presented in figures \ref{fig:fittingBench03N} and \ref{fig:fittingBench03P}. 

\vspace{0.7cm}
\begin{figure}[htbp!]
\caption{Fitting of average nitrate concentrations in the upper part of the PEC over a one-year period, for the Case 3 experiment.}
\begin{center}
\includegraphics[scale=0.8]{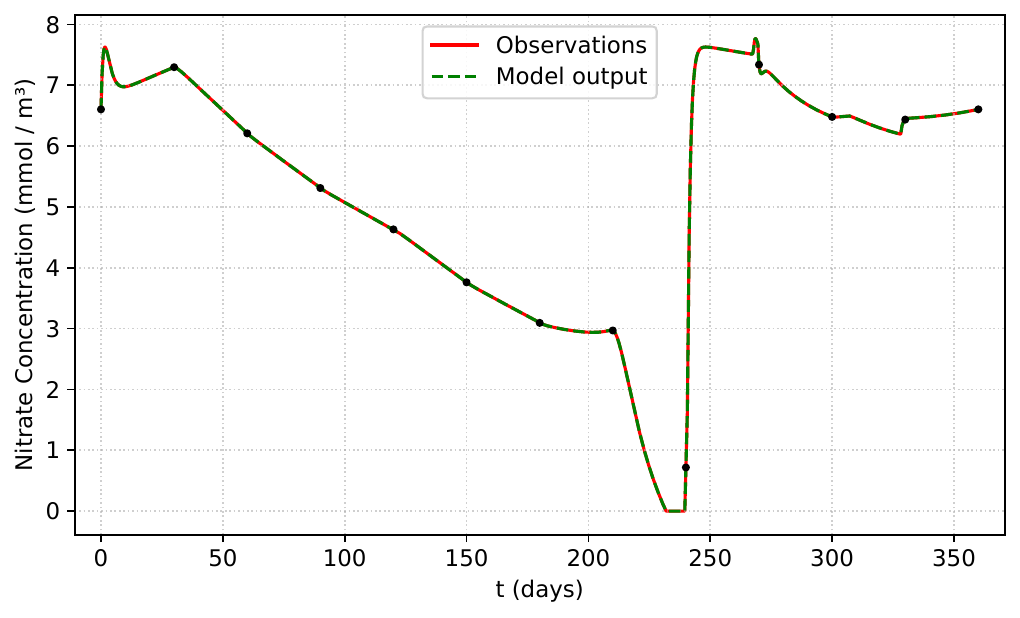}  
\end{center}
\label{fig:fittingBench03N} 

\smallskip
Observations were artificially generated by integrating the model with observational parameters $V_\text{max}^\text{Obs} = 1.4$, and $\lambda^\text{Obs} = 0.3$.
The model output plot was obtained by integrating the model described by equations (\ref{EstuarioN}) - (\ref{estacionario}) with the calibrated parameters $\Bar{V}_\text{max} \approx V_\text{max}^\text{Obs} + 3 \cdot 10^{-8}$, and $\Bar{\lambda} \approx \lambda^\text{Obs} + 8 \cdot 10^{-10}$.
The points represent the $N+1$ fitting points used to calculate the misfit residual array; here, $N=12$.
    
\smallskip
\textbf{Source:} the author.
\end{figure}

\begin{figure}[htbp!]
\caption{Fitting of average phytoplankton concentrations in the upper part of the PEC over a one-year period, for the Case 3 experiment.}
\begin{center}
\includegraphics[scale=0.8]{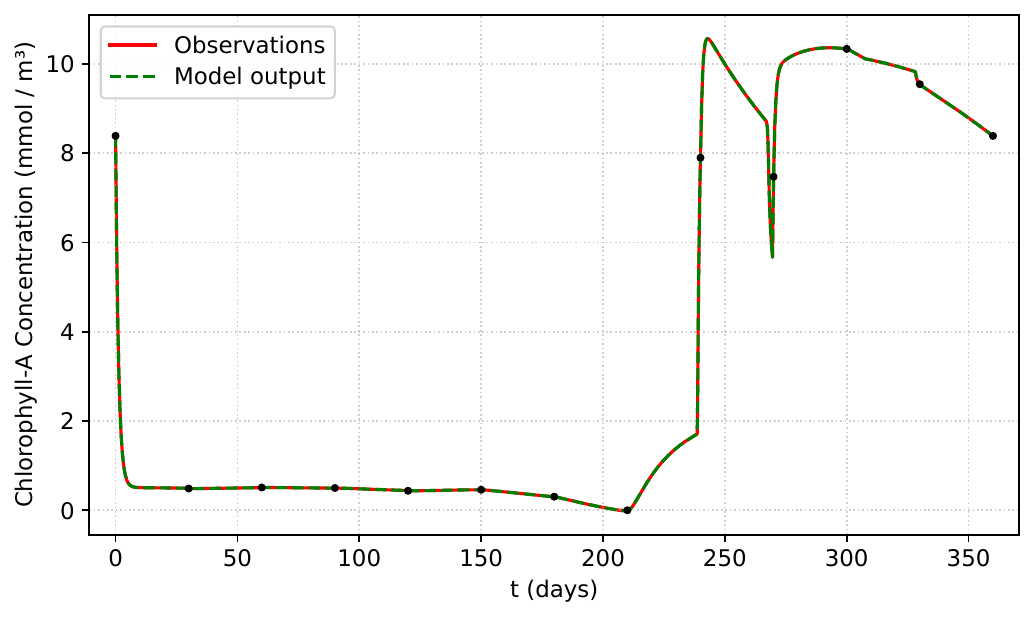}  
\end{center}
\label{fig:fittingBench03P} 

\smallskip
See the full description of the experiment settings in Figure \ref{fig:fittingBench03N}.

\smallskip
\textbf{Source:} the author.

\vspace{0.2cm}
\end{figure}

\newpage
The parameter calibration is presented in figures \ref{fig:searchBench03.1} and \ref{fig:searchBench03.2}, as the one-step heuristic search for an adequate initial guess, which was found as:
\begin{align*}
    V_\text{max}^\text{Initial} & = 2.1700000000000004 ,\\
    \lambda^\text{Initial} & = 0.326 ,
\end{align*}
and the DFO-LS search, which converged to the optimized parameters:
\begin{align*}
    \Bar{V}_\text{max} & = 1.400000032860641 ,\\
    \Bar{\lambda} & = 0.3000000008494046 .
\end{align*}

\begin{figure}[htbp!]
\caption{Heuristic step on the calibration process of the parameters $V_\text{max}$ and $\lambda$, in the Case 3 experiment.}
\begin{center}
\includegraphics[scale=0.8]{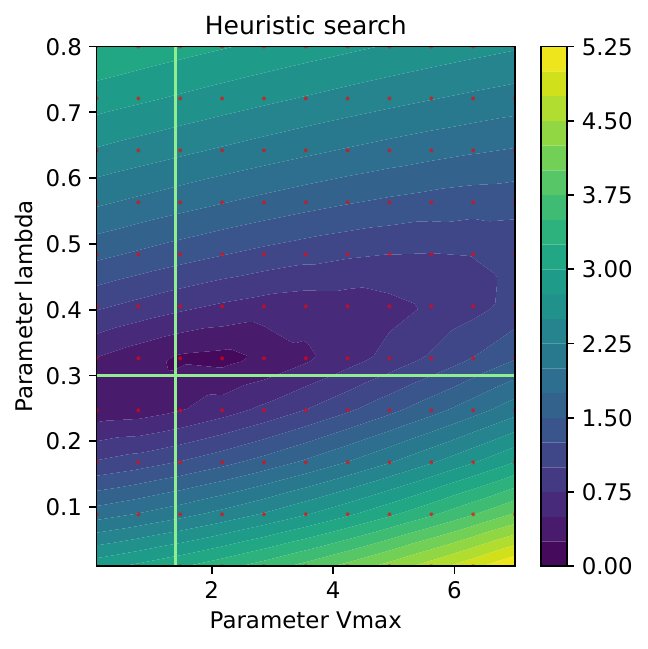} 
\end{center}
\label{fig:searchBench03.1}

\smallskip
The figure shows a contour plot representing the heuristic step used to determine the initial guess. The sidebar associates colors with the orders of magnitude of the heuristic misfit function evaluations, and the background grid of points identifies the parameters evaluated during the heuristic.
The intersection of the highlighted horizontal and vertical lines marks the location of the observational parameters.

\smallskip
\textbf{Source:} the author.
\end{figure}

\begin{figure}[htbp!]
\caption{DFO-LS search on the calibration process of the parameters $V_\text{max}$ and $\lambda$, in the Case 3 experiment.}
\begin{center}
\includegraphics[scale=0.8]{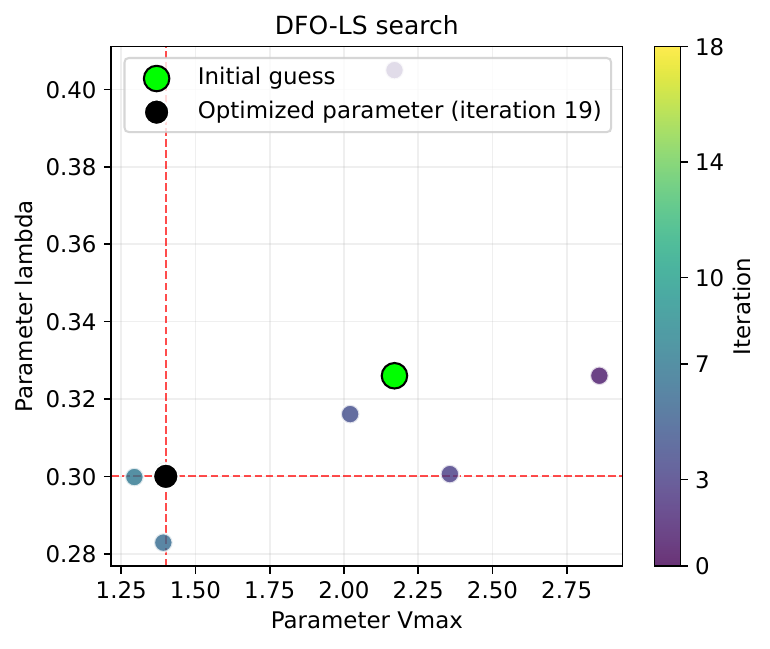}
\end{center}
\label{fig:searchBench03.2}

\smallskip
Each point corresponds to the parameters obtained in an iteration of the DFO-LS optimization algorithm, being the initial guess obtained from the heuristic step (Figure \ref{fig:searchBench03.1}). The sidebar assigns a color scale to the iterations, allowing identification of the convergence pattern. The intersection of the highlighted horizontal and vertical lines marks the location of the observational parameters.

\smallskip
\textbf{Source:} the author.
\end{figure}

\newpage
\paragraph{Case 4:} Here, we choose the observational parameters:
\begin{align*}
    V_\text{max}^\text{Obs} & = 2.0 ,\\
    \lambda^\text{Obs} & = 0.05 .
\end{align*}
That means the observations were generated under the assumptions of high daily growth and moderate daily mortality of the phytoplankton population. The artificial observations generated by the model outputs when considering these observational parameters and the fitting attained after the calibration process are presented in figures \ref{fig:fittingBench04N} and \ref{fig:fittingBench04P}. 

\vspace{0.7cm}
\begin{figure}[htbp!]
\caption{Fitting of average nitrate concentrations in the upper part of the PEC over a one-year period, for the Case 4 experiment.}
\begin{center}
\includegraphics[scale=0.8]{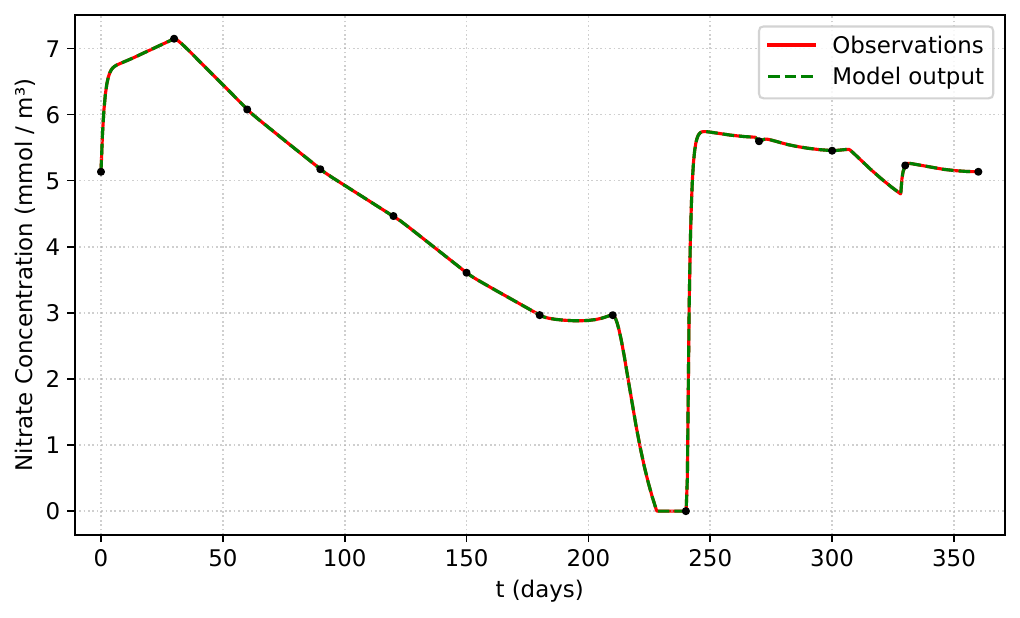}
\end{center}
\label{fig:fittingBench04N}

\smallskip
Observations were artificially generated by integrating the model with observational parameters $V_\text{max}^\text{Obs} = 2.0$, and $\lambda^\text{Obs} = 0.05$.
The model output plot was obtained by integrating the model described by equations (\ref{EstuarioN}) - (\ref{estacionario}) with the calibrated parameters $\Bar{V}_\text{max} \approx V_\text{max}^\text{Obs} + 1 \cdot 10^{-6}$, and $\Bar{\lambda} \approx \lambda^\text{Obs} + 9 \cdot 10^{-9}$.
The points represent the $N+1$ fitting points considered for the calculation of the misfit residual array, here we consider $N=12$.

\smallskip
\textbf{Source:} the author.
\end{figure}

\begin{figure}[htbp!]
\caption{Fitting of average phytoplankton concentrations in the upper part of the PEC over a one-year period, for the Case 4 experiment.}
\begin{center}
\includegraphics[scale=0.8]{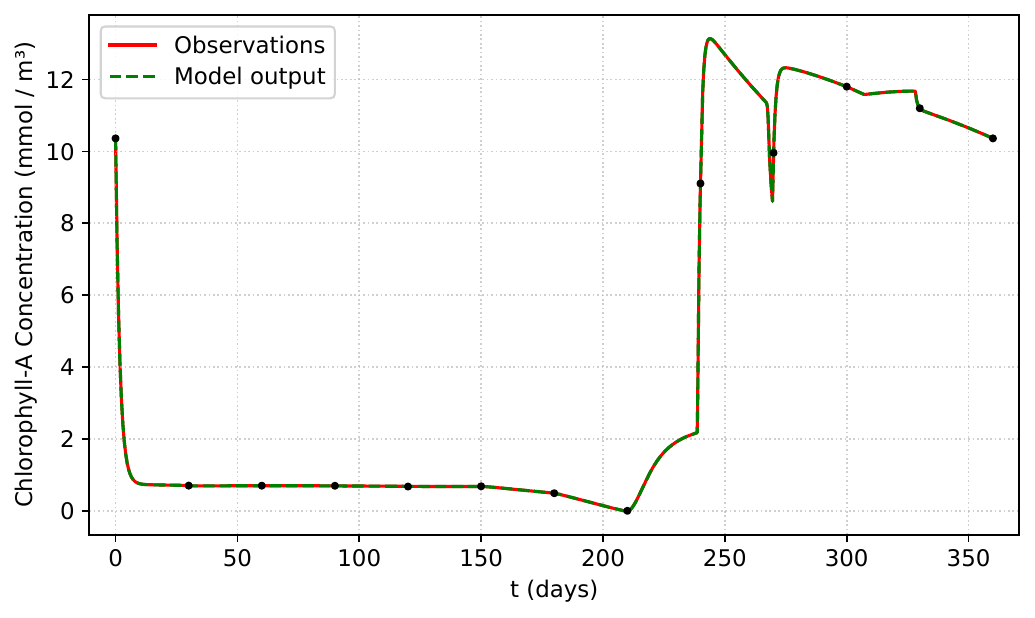}  
\end{center}
\label{fig:fittingBench04P}

\smallskip
See the full description of the experiment settings in Figure \ref{fig:fittingBench04N}.

\smallskip
\textbf{Source:} the author.

\vspace{0.2cm}
\end{figure}

\newpage
The parameter calibration is presented in figures \ref{fig:searchBench04.1} and \ref{fig:searchBench04.2}, as the one-step heuristic search for an adequate initial guess, which was found as:
\begin{align*}
    V_\text{max}^\text{Initial} & = 2.8600000000000003 ,\\
    \lambda^\text{Initial} & = 0.089 ,
\end{align*}
and the DFO-LS search, which converged to the optimized parameters:
\begin{align*}
    \Bar{V}_\text{max} & = 2.0000014823868333 ,\\
    \Bar{\lambda} & = 0.05000000913800312 .
\end{align*}

\begin{figure}[htbp!]
\caption{Heuristic step on the calibration process of the parameters $V_\text{max}$ and $\lambda$, in the Case 4 experiment.}
\begin{center}
\includegraphics[scale=0.8]{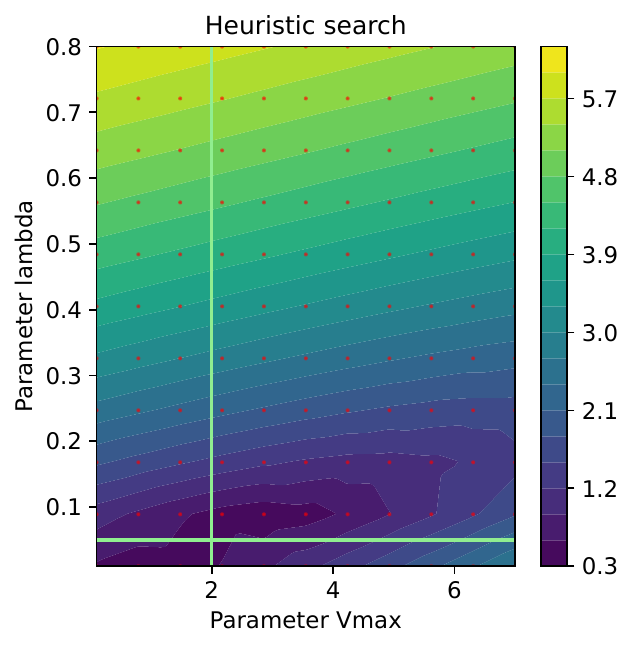}
\end{center}
\label{fig:searchBench04.1}

\smallskip
The figure shows a contour plot representing the heuristic step used to determine the initial guess. The sidebar associates colors with the orders of magnitude of the heuristic misfit function evaluations, and the background grid of points identifies the parameters evaluated during the heuristic.
The intersection of the highlighted horizontal and vertical lines marks the location of the observational parameters.

\smallskip
\textbf{Source:} the author.
\end{figure}

\begin{figure}[htbp!]
\caption{DFO-LS search on the calibration process of the parameters $V_\text{max}$ and $\lambda$, in the Case 4 experiment.}
\begin{center}
\includegraphics[scale=0.8]{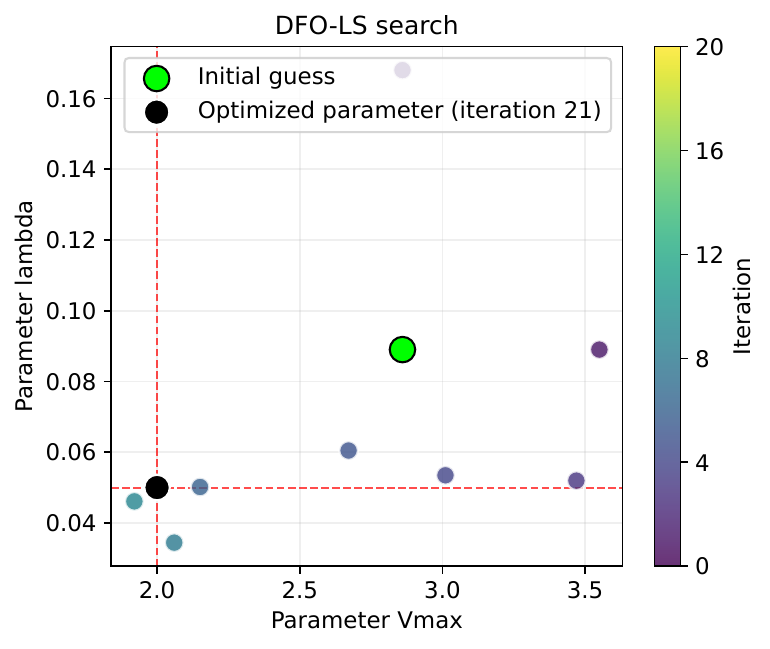}    
\end{center}
\label{fig:searchBench04.2}

\smallskip
Each point corresponds to the parameters obtained in an iteration of the DFO-LS optimization algorithm, being the initial guess obtained from the heuristic step (Figure \ref{fig:searchBench04.1}). The sidebar assigns a color scale to the iterations, allowing identification of the convergence pattern. The intersection of the highlighted horizontal and vertical lines marks the location of the observational parameters.

\smallskip
\textbf{Source:} the author.
\end{figure}

\newpage
\paragraph{Case 5:} Here, we choose the observational parameters:
\begin{align*}
    V_\text{max}^\text{Obs} & = 2.0 ,\\
    \lambda^\text{Obs} & = 0.3 .
\end{align*}
That means the observations were generated under the assumptions of high daily growth and high daily mortality of the phytoplankton population. The artificial observations generated by the model outputs when considering these observational parameters and the fitting attained after the calibration process are presented in figures \ref{fig:fittingBench05N} and \ref{fig:fittingBench05P}. 

\vspace{0.7cm}
\begin{figure}[htbp!]
\caption{Fitting of average nitrate concentrations in the upper part of the PEC over a one-year period, for the Case 5 experiment.}
\begin{center}
\includegraphics[scale=0.8]{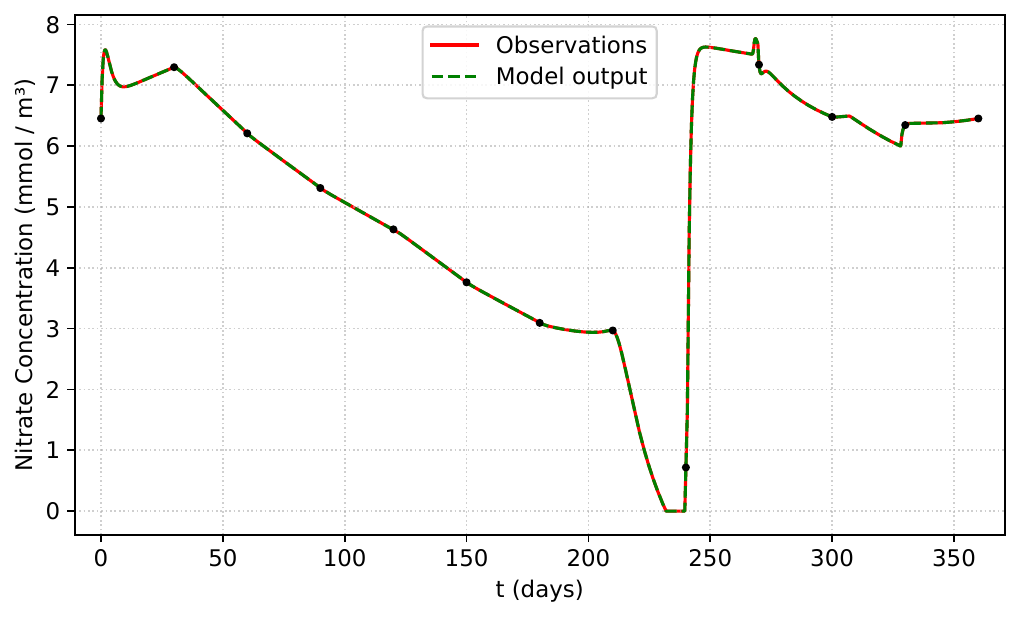}
\end{center}
\label{fig:fittingBench05N}

\smallskip
Observations were artificially generated by integrating the model with observational parameters $V_\text{max}^\text{Obs} = 2.0$, and $\lambda^\text{Obs} = 0.3$.
The model output plot was obtained by integrating the model described by equations (\ref{EstuarioN}) - (\ref{estacionario}) with the calibrated parameters $\Bar{V}_\text{max} \approx V_\text{max}^\text{Obs} + 1 \cdot 10^{-6}$, and $\Bar{\lambda} \approx \lambda^\text{Obs} - 1 \cdot 10^{-8}$.
The points represent the $N+1$ fitting points considered for the calculation of the misfit residual array, here we consider $N=12$.

\smallskip
\textbf{Source:} the author.
\end{figure}

\begin{figure}[htbp!]
\caption{Fitting of average phytoplankton concentrations in the upper part of the PEC over a one-year period, for the Case 5 experiment.}
\begin{center}
\includegraphics[scale=0.8]{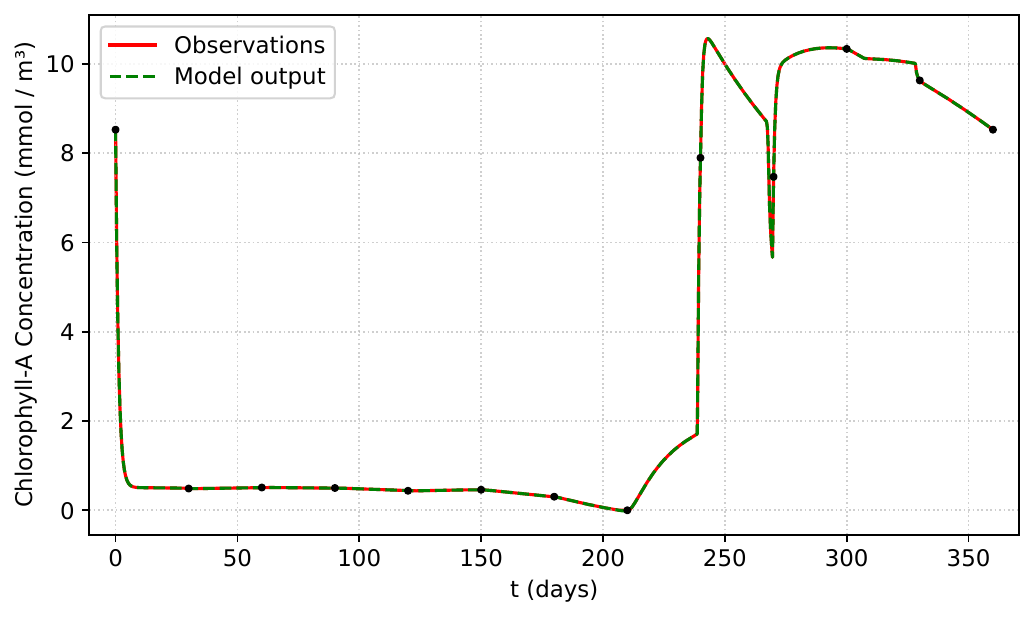}  
\end{center}
\label{fig:fittingBench05P}

\smallskip
See the full description of the experiment settings in Figure \ref{fig:fittingBench05N}.

\smallskip
\textbf{Source:} the author.

\vspace{0.2cm}
\end{figure}

\newpage
The parameter calibration is presented in figures \ref{fig:searchBench05.1} and \ref{fig:searchBench05.2}, as the one-step heuristic search for an adequate initial guess, which was found as:
\begin{align*}
    V_\text{max}^\text{Initial} & = 2.8600000000000003 ,\\
    \lambda^\text{Initial} & = 0.326 ,
\end{align*}
and the DFO-LS search, which converged to the optimized parameters:
\begin{align*}
    \Bar{V}_\text{max} & = 2.000001216474024 ,\\
    \Bar{\lambda} & = 0.2999999872607704 .
\end{align*}

\begin{figure}[htbp!]
\caption{Heuristic step on the calibration process of the parameters $V_\text{max}$ and $\lambda$, in the Case 5 experiment.}
\begin{center}
\includegraphics[scale=0.8]{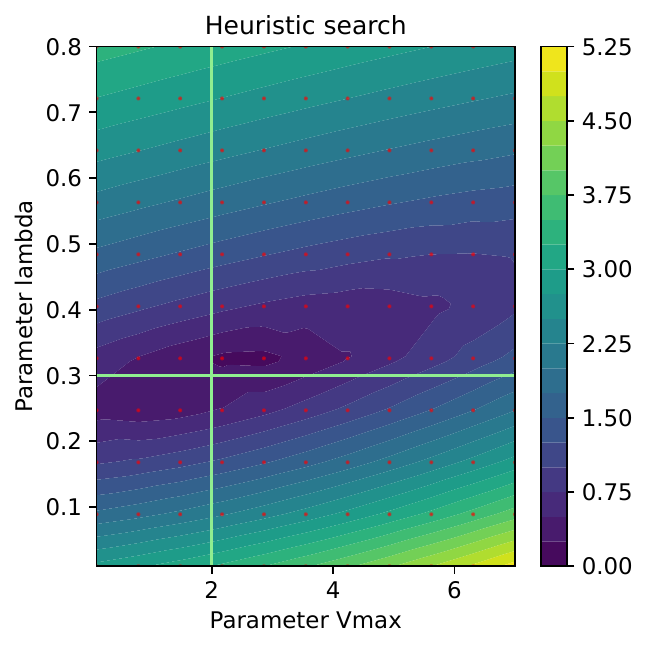}
\end{center}
\label{fig:searchBench05.1}

\smallskip
The figure shows a contour plot representing the heuristic step used to determine the initial guess. The sidebar associates colors with the orders of magnitude of the heuristic misfit function evaluations, and the background grid of points identifies the parameters evaluated during the heuristic.
The intersection of the highlighted horizontal and vertical lines marks the location of the observational parameters.

\smallskip
\textbf{Source:} the author.
\end{figure}

\begin{figure}[htbp!]
\caption{DFO-LS search on the calibration process of the parameters $V_\text{max}$ and $\lambda$, in the Case 5 experiment.}
\begin{center}
\includegraphics[scale=0.8]{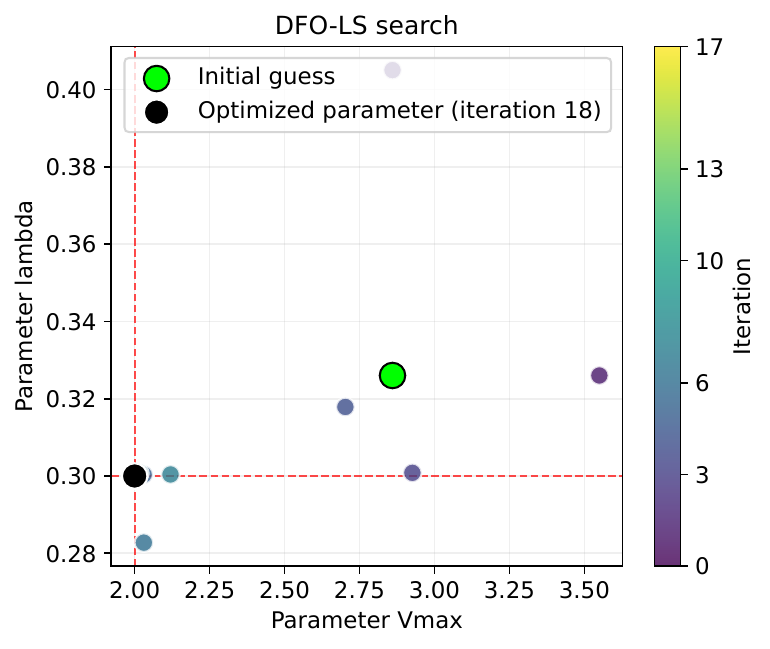}    
\end{center}
\label{fig:searchBench05.2}

\smallskip
Each point corresponds to the parameters obtained in an iteration of the DFO-LS optimization algorithm, being the initial guess obtained from the heuristic step (Figure \ref{fig:searchBench05.1}). The sidebar assigns a color scale to the iterations, allowing identification of the convergence pattern. The intersection of the highlighted horizontal and vertical lines marks the location of the observational parameters.

\smallskip
\textbf{Source:} the author.
\end{figure}

\newpage
\section{Fitting the model to observations}

Here we present two experiments on fitting the model to the data available, without knowing the values of observational parameters. In our first experiment, we calibrate both parameters ${V}_\text{max}$ and $\lambda$ simultaneously. In our second experiment, we set time-dependent values for ${V}_\text{max}$ and calibrate only the parameter $\lambda$.

\paragraph{Case 6:} Calibration of two parameters simultaneously.
The fitting attained after the calibration process is presented in figures \ref{fig:fittingPEC2paramN} and \ref{fig:fittingPEC2paramP}. 

\vspace{0.7cm}
\begin{figure}[htbp!]
\caption{Fitting of average nitrate concentrations in the upper part of the PEC over a one-year period, for the Case 6 experiment.}
\begin{center}
\includegraphics[scale=0.8]{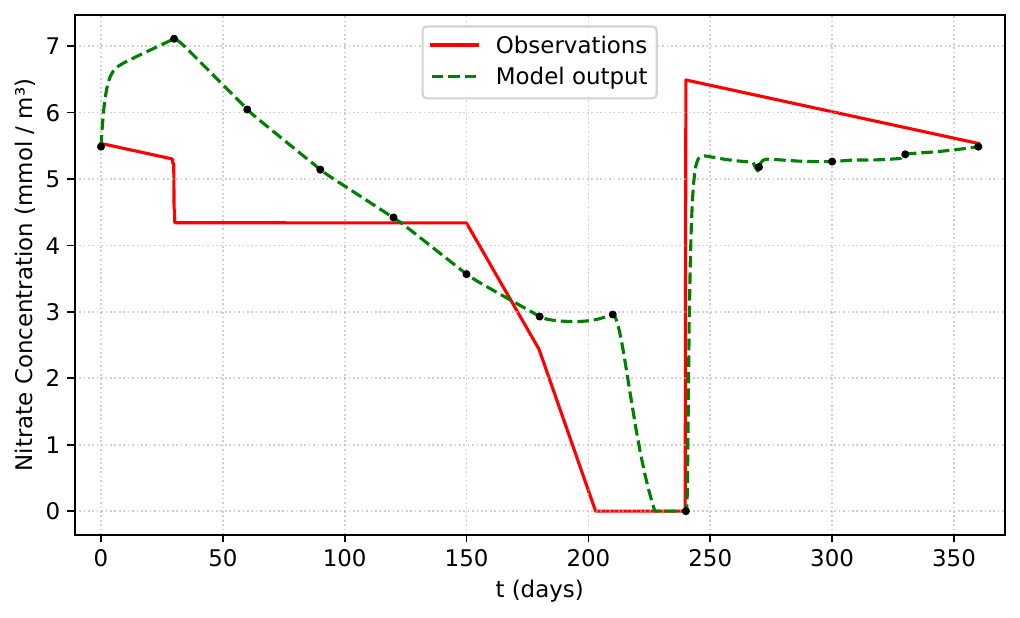}   
\end{center}
\label{fig:fittingPEC2paramN}

\smallskip
Observations were generated by interpolation of tracers' concentration data (figures \ref{fig:dataNitrate} and \ref{fig:dataPhy}).
The model output plot was obtained by integrating the model described by equations (\ref{EstuarioN}) - (\ref{estacionario}) with the calibrated parameters $\Bar{V}_\text{max} \approx 0.24$, and $\Bar{\lambda} \approx 0.01$.
The points represent the $N+1$ fitting points used to calculate the misfit residual array; here, $N=12$.

\smallskip
\textbf{Source:} the author.
\end{figure}

\begin{figure}[htbp!]
\caption{Fitting of average phytoplankton concentrations in the upper part of the PEC over a one-year period, for the Case 6 experiment.}
\begin{center}
\includegraphics[scale=0.8]{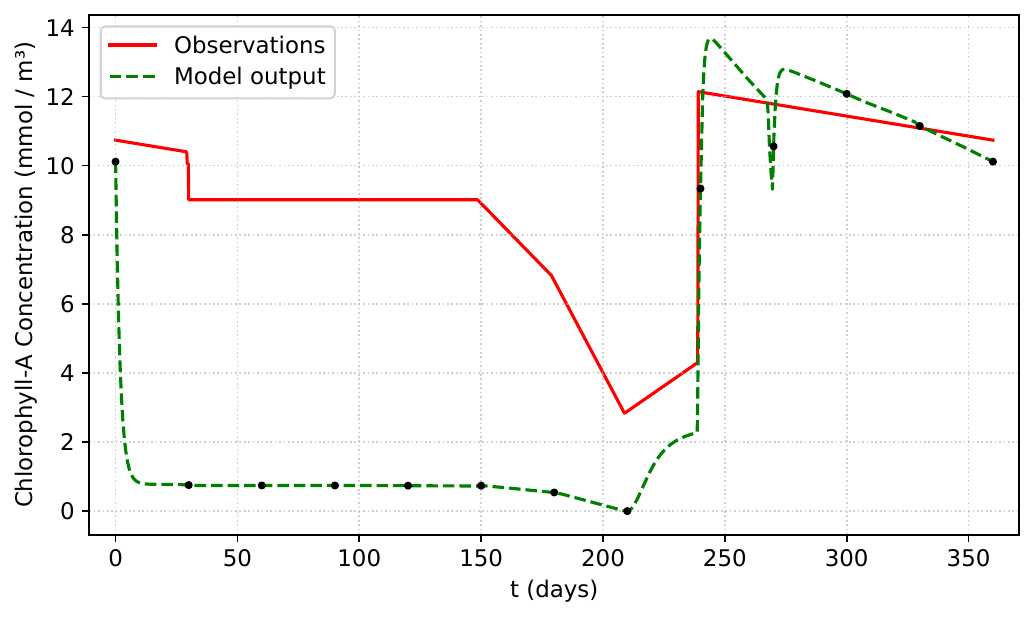}    
\end{center}
\label{fig:fittingPEC2paramP}

\smallskip
See the full description of the experiment settings in Figure \ref{fig:fittingPEC2paramN}.

\smallskip
\textbf{Source:} the author.

\vspace{0.2cm}
\end{figure}

\newpage
The parameter calibration is presented in Figures \ref{fig:searchPEC2param} and \ref{fig:searchPEC2paramDFOLS}, as the one-step heuristic search for an adequate initial guess, which was found as:
\begin{align*}
    V_\text{max}^\text{Initial} & = 0.7899999999999999 ,\\
    \lambda^\text{Initial} & = 0.010000000000000009 ,
\end{align*}
and the DFO-LS search, which converged to the optimized parameters:
\begin{align*}
    \Bar{V}_\text{max} & = 0.24001317934207675 ,\\
    \Bar{\lambda} & = 0.01 .
\end{align*}
This represents both a very low daily growth of the phytoplankton population and a very low daily mortality rate. This may result from a correlation between the two parameters or even from the water circulation flux of the PEC. To evaluate the first of these possibilities, we carried out the following experiment.

\begin{figure}[htbp!]
\caption{Heuristic step on the calibration process of the parameters $V_\text{max}$ and $\lambda$, in the Case 6 experiment.}
\begin{center}
\includegraphics[scale=0.8]{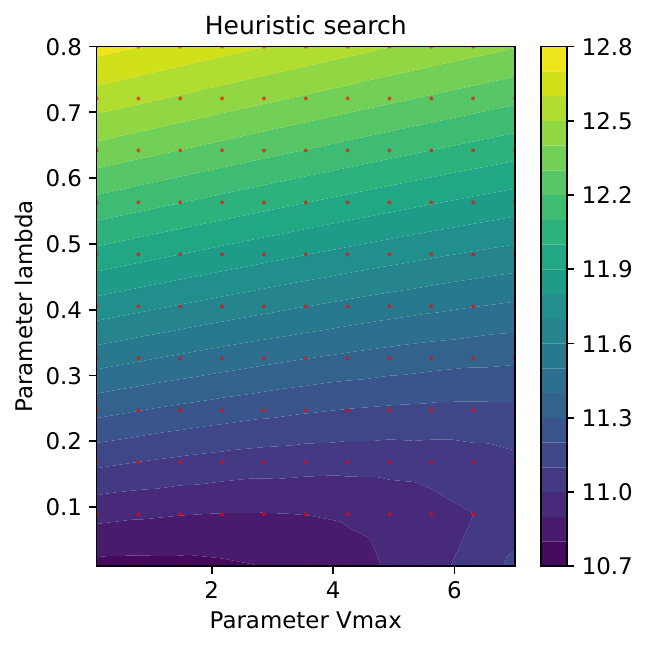}    
\end{center}
\label{fig:searchPEC2param}

\smallskip 
The figure shows a contour plot representing the heuristic step used to determine the initial guess. The sidebar associates colors with the orders of magnitude of the heuristic misfit function evaluations, and the background grid of points identifies the parameters evaluated during the heuristic.

\smallskip
\textbf{Source:} the author.
\end{figure}

\begin{figure}[htbp!]
\caption{DFO-LS search on the calibration process of the parameters $V_\text{max}$ and $\lambda$, in the Case 6 experiment.}
\begin{center}
\includegraphics[scale=0.8]{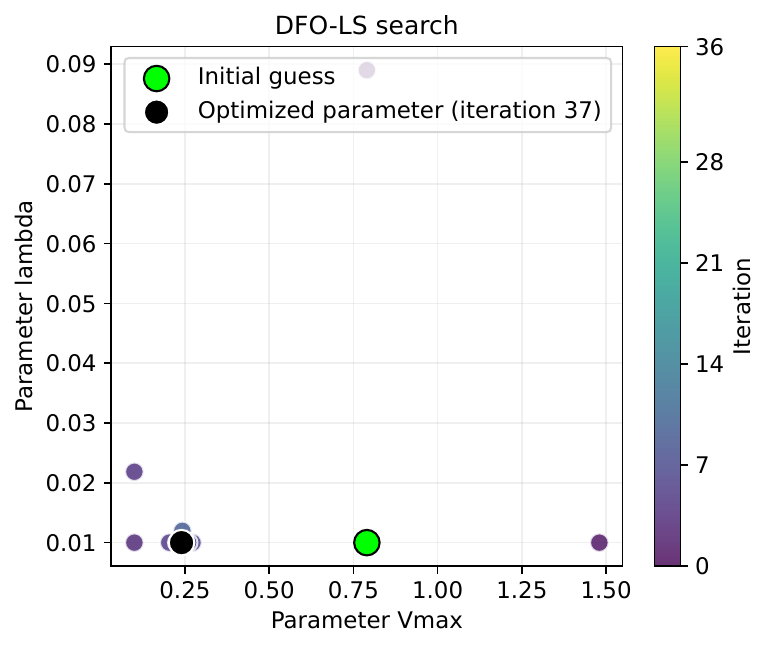}    
\end{center}
\label{fig:searchPEC2paramDFOLS}

\smallskip
Each point corresponds to the parameters obtained in an iteration of the DFO-LS optimization algorithm, being the initial guess obtained from the heuristic step (Figure \ref{fig:searchPEC2param}). The sidebar assigns a color scale to the iterations, allowing identification of their convergence pattern.

\smallskip
\textbf{Source:} the author.
\end{figure}

\newpage
\paragraph{Case 7:} Calibration of one parameter only. In this experiment, we calibrated only the parameter $\lambda$, corresponding to the average daily mortality rate of the entire phytoplankton population in the PEC, as a constant. To do so, we set the parameter  $V_\text{max}$, the maximum daily growth rate of the phytoplankton population in PEC, as a function of time, $V_\text{max}(t)$. This setting was based on sampled data found in the literature, in a manner analogous to the temperature and salinity forcing described previously. More specifically, we assumed the hypothesis that the phytoplankton population in the PEC is dominated by the diatom species \textit{Skeletonema costatum} and \textit{Asterionellopsis glacialis}, which is supported by the data presented in \cite{Brandini2022}. For both of these species, laboratory data on daily growth rates are available \cite{Khan1998,Olmstead2011PilotStudy}, which were considered in setting $V_\text{max}(t)$. We consider the equation:
\begin{equation}\label{assumption1Case7}
    C_\mathrm{PHY}(t) = C_{\mathrm{PHY},1}(t) + C_{\mathrm{PHY},2}(t) ~,
\end{equation}
where $C_{\mathrm{PHY},1}(t)$ and $C_{\mathrm{PHY},2}(t)$ corresponds to the Clorophill-A concentration associated to \textit{S. costatum} and \textit{A. glacialis}, respectively, in the upper region of PEC at a given time instant $t$. Considering the data represented in Figure \ref{fig:phyProportions}, we defined a weighting function $w_\mathrm{PHY}(t)$ such that:
\begin{align}
    C_{\mathrm{PHY},1}(t) & = w_\mathrm{PHY} \cdot C_\mathrm{PHY}(t) ~, \label{assumption2Case7}\\
    \text{and} ~~~~~~  C_{\mathrm{PHY},2}(t) & = \left( 1- w_\mathrm{PHY}(t) \right) \cdot C_\mathrm{PHY}(t) ~, \label{assumption3Case7}
\end{align}
where $0 \leq w_\mathrm{PHY}(t) \leq 1$, at any time instant $t$. In Figure \ref{fig:phyProportions}, $w_\mathrm{PHY}(t)$ is represented as the blue line plot.

\vspace{0.7cm}
\begin{figure}[htbp!]
\caption{Percentage distribution of diatom cell counts in the PEC, when considering only the species \textit{S. costatum} and \textit{A. glacialis}.}
\begin{center}
\includegraphics[scale=0.8]{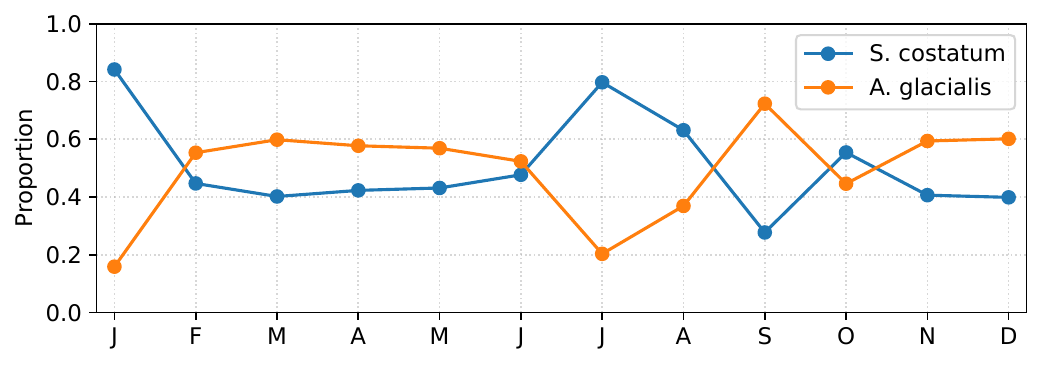}
\end{center}
\label{fig:phyProportions}

\smallskip
The points correspond to monthly averages, and the line represents daily values computed by linear interpolation of the monthly data.

\smallskip
\textbf{Source:} the author.

\smallskip
\textbf{Data:} \citeauthor{Brandini2022} (\citeyear{Brandini2022}).

\vspace{0.2cm}
\end{figure}

\newpage
We can summarize the assumptions presented in equations (\ref{assumption1Case7}) - (\ref{assumption3Case7}) by setting $V_\text{max}(t)$ (Figure \ref{fig:estimatesVmax}) as:
\begin{equation}
    V_\text{max}(t) = w_\mathrm{PHY}(t) \cdot  V_\text{{max},1} + \left( 1- w_\mathrm{PHY}(t) \right) \ V_\text{{max},2} ~,
\end{equation}
recalling that $V_\text{{max},1}$ and $V_\text{{max},2}$ are constant maximum daily growth rates for the \textit{S. costatum} and \textit{A. glacialis} phytoplankton populations, respectively, and considering a slightly modified version of our model, described by the equations
(\ref{ODEcase7calibration01}) - (\ref{ODEcase7calibration03}):

\begin{align}
\dfrac{d C_\mathrm{N}(t)}{dt} ~=~~ & \dfrac{0.588 \ C_\mathrm{N}^\text{river}(t) \cdot Q_\text{river}(t)}{\text{Vol}_\text{box}}+\dfrac{C_\mathrm{N}^\text{low}(t) \cdot Q_\text{ocean}(t)}{\text{Vol}_\text{box}}+ r \cdot \lambda \cdot C_\mathrm{PHY}(t) \nonumber\\ \nonumber \\
& - \alpha(T(t)) \ \beta(S_\text{up}(t)) \  \dfrac{V_{\max}(t) \cdot C_\mathrm{PHY}(t) \ C_\mathrm{N}(t)}{C_\mathrm{N}(t)+K} -  \dfrac{C_\mathrm{N}(t) \cdot Q_\text{ebm}(t)}{\text{Vol}_\text{box}} ~. \label{ODEcase7calibration01}   
\end{align}

\begin{align}
\dfrac{d C_\mathrm{PHY}(t)}{dt} ~=~~ & C_\mathrm{PHY}(t) \cdot \left(  \alpha(T(t)) \ \beta(S_\text{up}(t)) \ \dfrac{V_{\max}(t) \cdot 
 C_\mathrm{N}(t)}{C_\mathrm{N}(t)+K} - \lambda - \dfrac{Q_\text{ebm}(t)}{\text{Vol}_\text{box}} \right) + \nonumber \\ \nonumber \\
 & +\dfrac{C_\mathrm{PHY}^\text{low}(t) \cdot Q_\text{ocean}(t)}{\text{Vol}_\text{box}}~. \label{ODEcase7calibration02}
\end{align}

\begin{align}
\dfrac{d C_\mathrm{N}(t)}{dt} ~=~~ & 0 ~, ~ \dfrac{d C_\mathrm{PHY}(t)}{dt} ~=~ 0 ~. \label{ODEcase7calibration03}
\end{align}

\begin{figure}
\caption{Daily estimates on the parameter $V_\text{max}(t)$.} 
\begin{center}
\includegraphics[scale=0.8]{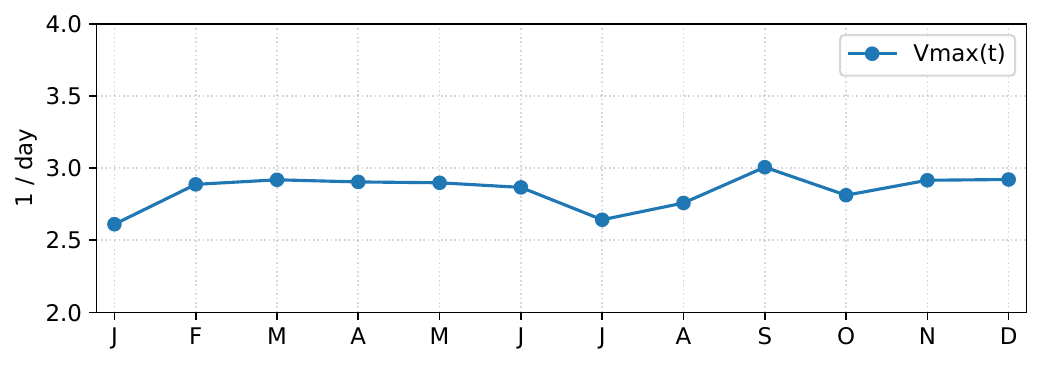}    
\end{center}
\label{fig:estimatesVmax}

\smallskip
Based on data from Figure \ref{fig:phyProportions}.

\smallskip
\textbf{Source:} the author.
\end{figure}

The value obtained by calibrating $\lambda$ was $\Bar{\lambda}=0.01$. This information may be representative of the average mortality rate of the phytoplankton population in the PEC. However, since the lower limit established for the parameter is reached, it also reinforces the possibility that the estuary's circulation flow is so intense that phytoplankton mortality does not have time to make a relevant impact on phytoplankton concentration in the PEC. For the calibrated value of $\lambda$, the fitting obtained from the comparison between observational data and data produced by the model, in relation to concentrations of biochemical tracers in the upper part of the PEC, is presented in figures \ref{fig:fittingPEC1paramN} and \ref{fig:fittingPEC1paramP}.

\vspace{0.7cm}
\begin{figure}[htbp!]
\caption{Fitting of average nitrate concentrations in the upper part of the PEC over a one-year period, for the Case 7 experiment.}
\vspace{-0.3cm}
\begin{center}
\includegraphics[scale=0.8]{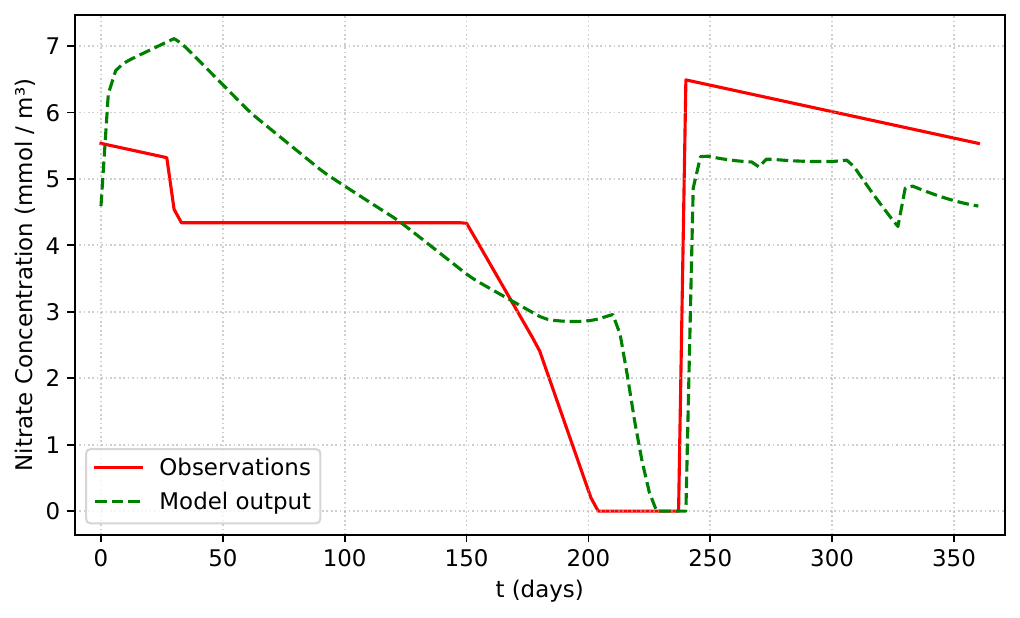}    
\end{center}
\label{fig:fittingPEC1paramN}

\vspace{-0.2cm}
Observations were generated by interpolation of tracers' concentration data (figures \ref{fig:dataNitrate} and \ref{fig:dataPhy}).
The model output plot was obtained by integrating the model described by equations (\ref{ODEcase7calibration01}) - (\ref{ODEcase7calibration03}) with the calibrated parameter $\Bar{\lambda} = 0.01$. Here we consider $N=120$ fitting points.

\smallskip
\textbf{Source:} the author.

\vspace{0.3cm}
\end{figure}

\begin{figure}[htbp!]
\caption{Fitting of average phytoplankton concentrations in the upper part of the PEC over a one-year period, for the Case 7 experiment.}
\begin{center}
\includegraphics[scale=0.8]{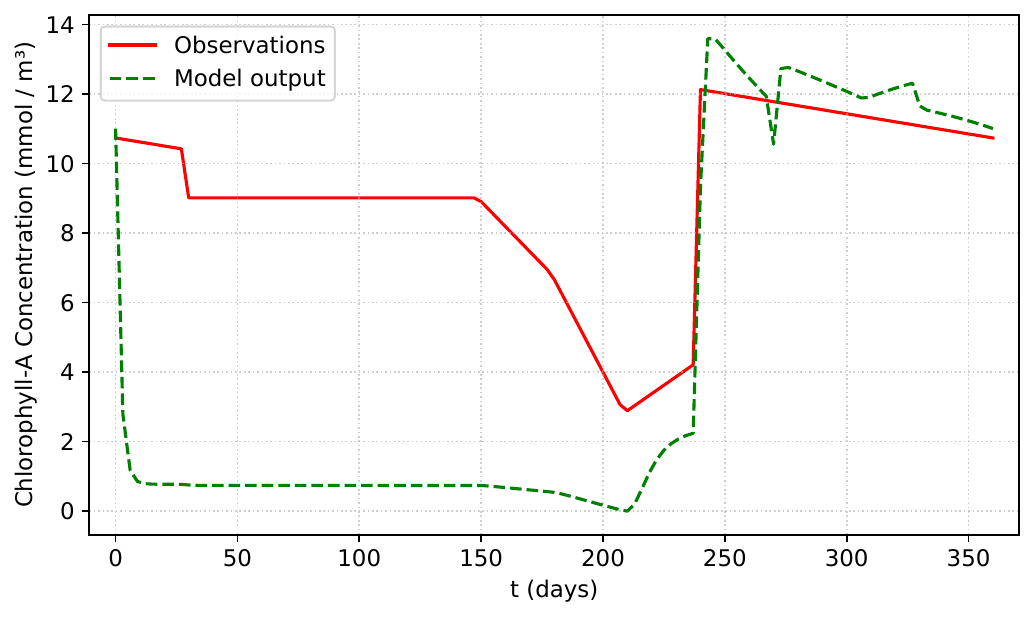}    
\end{center}
\label{fig:fittingPEC1paramP}

\smallskip
See the full description of the experiment settings in Figure \ref{fig:fittingPEC1paramN}.

\smallskip
\textbf{Source:} the author.
\end{figure}

Unlike the previous experiments in this chapter, in an attempt to obtain maximum precision in this calibration, we considered $N=120$ observations and two heuristic steps to find a good initial guess, in addition to imposing a ceiling for the misfit function that was calibrated (figures \ref{fig:searchPEC1param1} and \ref{fig:searchPEC1param2}). The use of increased complexity in the calibration strategy did not guarantee a significant performance improvement, suggesting that in this case the simpler version of the framework, with only one heuristic step, without limiting the ceiling of the misfit function and considering only $N=12$ observations, was much more advantageous to use, presenting lower computational cost, greater ease of implementation, and similar performance to the more complex version.

\begin{figure}[htbp!]
\caption{Heuristic steps on the calibration process of the parameter $\lambda$, in the Case 7 experiment.}
\begin{center}
\includegraphics[scale=0.8]{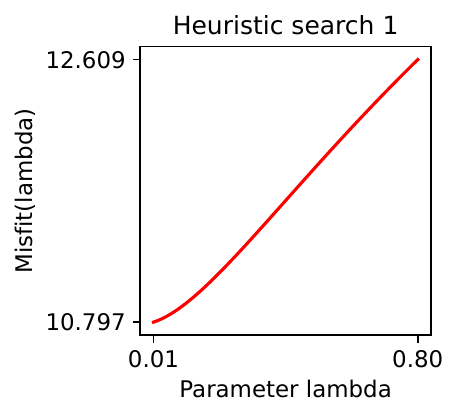} \hspace{0.3cm} 
\includegraphics[scale=0.8]{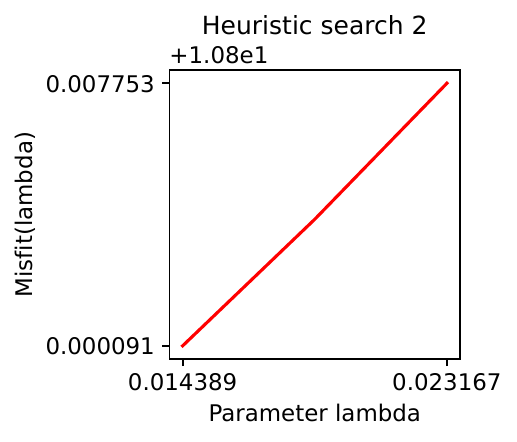}    
\end{center}
\label{fig:searchPEC1param1}

\smallskip
The left-hand and the right-hand plots represent the first and second heuristic steps used to determine the initial guess. The vertical axis shows the values attained by the evaluations of the respective heuristic misfit functions.

\smallskip
\textbf{Source:} the author.
\end{figure}

\begin{figure}[htbp!]
\caption{Optimization step on the calibration process of the parameter $\lambda$, in the Case 7 experiment.}
\begin{center}
\includegraphics[scale=0.8]{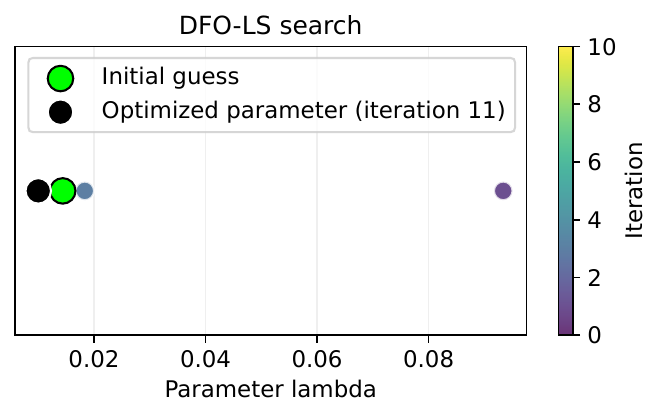}    
\end{center}
\label{fig:searchPEC1param2}

\smallskip
The figure shows the iterations of the parameter search to optimize the fit between model output and observational data for the average tracer concentrations in the upper region of PEC. Each point corresponds to the parameters obtained in an iteration of the DFO-LS optimization algorithm, being the initial guess obtained from the heuristic steps (Figure \ref{fig:searchPEC1param1}). The sidebar assigns a color scale to the iterations, allowing identification of their convergence pattern.

\smallskip
\textbf{Source:} the author.

\vspace{0.2cm}
\end{figure}


The data produced by the model, when using the calibrated parameter, presents behavior similar to that of the observational data. This result reinforces the argument that even a conceptual model, when well-calibrated and based on appropriate assumptions, can reproduce the modeled data with some degree of accuracy.
It would be valuable to explore the ideal horizon for the integration period used in the heuristic misfit and the residual misfit definitions in each case, as using a smaller interval reduces computational cost, although in some cases it may also impair calibration accuracy.
In the benchmark experiments, we observed that the target parameters were accurately recovered, while the fit between the outputs produced by the optimized parameters and those produced by the observational parameters evidenced the applicability of the calibration framework.

\section{Estimating the nitrate input from rivers, rain, and human activities}

Initially, we have set the nitrate concentration related to the flow $Q_\text{river}(t)$ to be 1.47 times the nitrate concentration in the Nhundiaquara River, $C_\mathrm{N}^\text{river}(t)$, which was an assumption partially supported by the literature. However, the model's output did not seem to fit the PEC data adequately. To solve this problem, we considered a change in the model and calibrated a positive constant $w_\text{river}$ that multiplies $C_N^\text{river}(t)$. We considered only the observational data for phytoplankton (Figure \ref{fig:dataPhy}), and denoted the concentration observations for phytoplankton in the upper part of PEC at a time instant $t$ by $C_\mathrm{PHY}^\text{up}(t)$. The mathematical model considered for this experiment is defined by equations (\ref{ODEnitratecalibration01}) - (\ref{ODEnitratecalibration03}).

\begin{align}
\dfrac{d C_\mathrm{N}(t)}{dt} ~=~~ & \dfrac{w_\text{river} \ C_\mathrm{N}^\text{river}(t) \cdot Q_\text{river}(t)}{\text{Vol}_\text{box}}+\dfrac{C_\mathrm{N}^\text{low}(t) \cdot Q_\text{ocean}(t)}{\text{Vol}_\text{box}}+ r \cdot \lambda \cdot C_\mathrm{PHY}^\text{up}(t) \nonumber\\ \nonumber \\
& - \alpha(T(t)) \ \beta(S_\text{up}(t)) \  \dfrac{V_{\max}(t) \cdot C_\mathrm{PHY}^\text{up}(t) \ C_\mathrm{N}(t)}{C_\mathrm{N}(t)+K} -  \dfrac{C_\mathrm{N}(t) \cdot Q_\text{ebm}(t)}{\text{Vol}_\text{box}} ~. \label{ODEnitratecalibration01}   
\end{align}

\begin{align}
\dfrac{d C_\mathrm{N}(t)}{dt} ~=~~ & 0 ~. \label{ODEnitratecalibration03}
\end{align}

During calibration, we noticed a downward trend in the optimized value for parameter $w_\text{river}$, but since our idea was not to reduce it too much in order to keep conservative changes, we considered search bounds of 40\% to 60\% of the previously set value for $w_\text{river}$, which was 1.47. The fitting attained after the calibration process is presented in Figure \ref{fig:fittingNitrate}.

\begin{figure}[htbp!]
\caption{Fitting of average nitrate concentrations in the upper part of the PEC over a one-year period.}
\begin{center}
\includegraphics[scale=0.8]{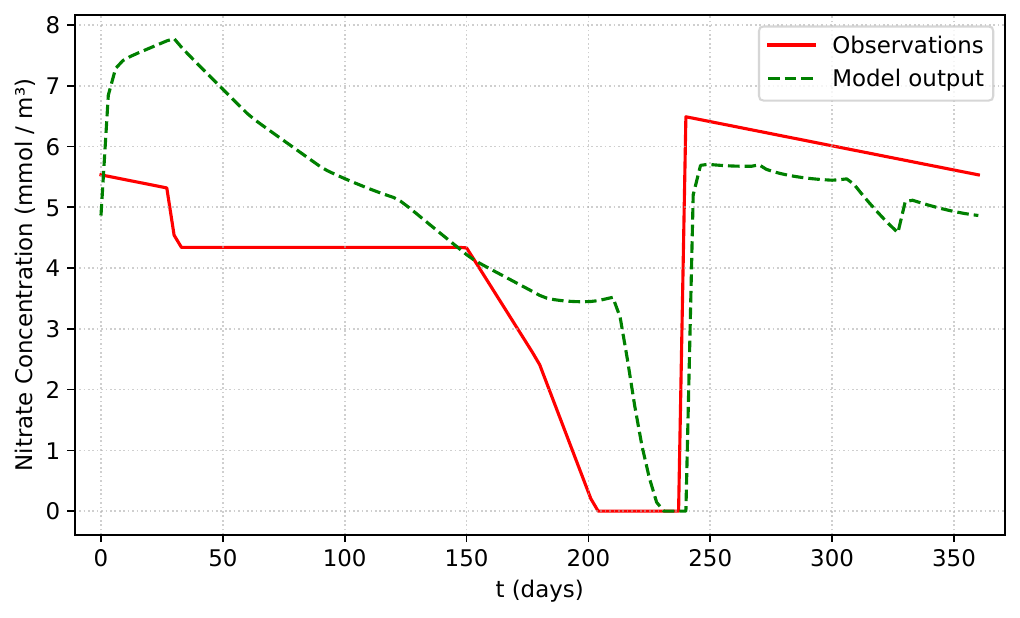}    
\end{center}
\label{fig:fittingNitrate}

\smallskip
Fitting for the experiment on calibrating the nitrate input on PEC from riverine, pluvial, and human sources. Observations were generated by interpolation of tracers' concentration data (figures \ref{fig:dataNitrate} and \ref{fig:dataPhy}).
The model output plot was obtained by integrating the model described by equations (\ref{ODEnitratecalibration01}) - (\ref{ODEnitratecalibration03}) with the calibrated parameter $\Bar{w}_\text{river} = 0.588$.
Here we consider $N=120$ fitting points.

\smallskip
\textbf{Source:} the author.
\end{figure}

We set the parameters $V_\text{max}(t)$, based on the \textbf{Case 7} data, and $\lambda = 0.05$, constant. The parameter calibration is presented in figures \ref{fig:searchnitrate1} and \ref{fig:searchnitrate2}, as the two-step heuristic search for an adequate initial guess, which was found as the lowest permitted by the searching bounds.
The DFO-LS search converged this same value:
\begin{align*}
    \Bar{w}_\text{river} & = 0.588 .
\end{align*}

\begin{figure}[htbp!]
\caption{Calibration process of the parameter $w_\text{river}$.}
\begin{center}
\includegraphics[scale=0.8]{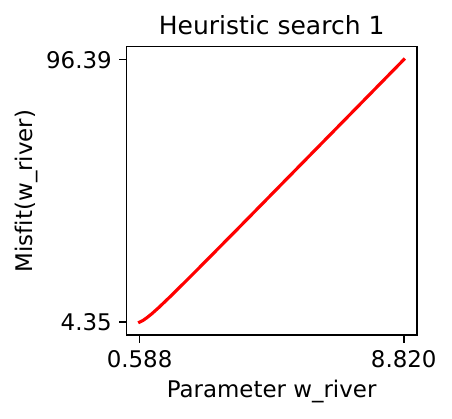} \hspace{0.3cm} 
\includegraphics[scale=0.8]{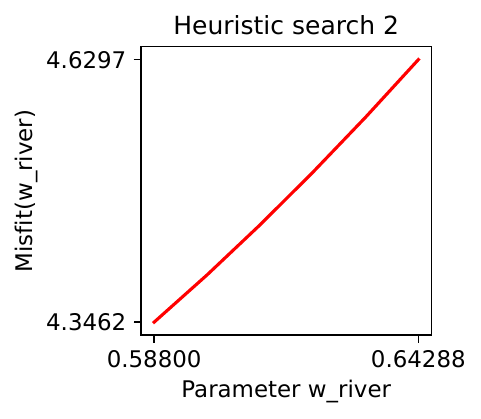}    
\end{center}
\label{fig:searchnitrate1}

\smallskip
Heuristics for the experiment to calibrate the nitrate input to the PEC from riverine, pluvial, and human sources. The left-hand and the right-hand plots represent the first and second heuristic steps used to determine the initial guess. The vertical axis shows the values attained by the evaluations of the respective heuristic misfit functions.

\smallskip
\textbf{Source:} the author.
\end{figure}

\begin{figure}[htbp!]
\caption{Optimization step on the calibration process of the parameter $w_\text{river}$.}
\begin{center}
\includegraphics[scale=0.8]{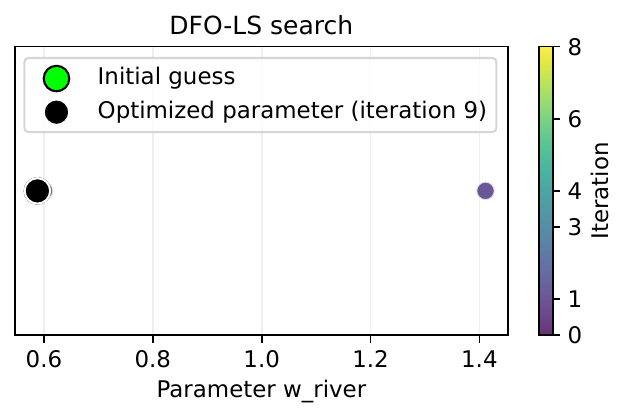}    
\end{center}
\label{fig:searchnitrate2}

\smallskip
Parameter search for the experiment on calibrating the nitrate input on PEC from riverine, pluvial, and human sources. 
The figure shows the iterations of the parameter search to optimize the fit between model output and observational data for the average tracer concentrations in the upper region of PEC. Each point corresponds to the parameters obtained in an iteration of the DFO-LS optimization algorithm, being the initial guess obtained from the heuristic steps (Figure \ref{fig:searchnitrate1}). The sidebar assigns a color scale to the iterations, allowing identification of their convergence pattern.

\smallskip
\textbf{Source:} the author.
\end{figure}

Since we considered the fit obtained to be satisfactory, we considered the result of this calibration for our model presented in \textit{Chapter \ref{chapterPEC}}. This modeling strategy proved useful for extrapolating data in the model based on correlated data available.
\chapter{Conclusions}

\section*{Summary}
This thesis developed a bespoke conceptual biogeochemical model for the PEC, and introduced a data-constrained framework for calibrating its parameters based on a derivative-free optimization method. We explored the tracer-conservation problem and the formulation of a least-squares problem to accommodate the problem features in different scenarios. The framework was exercised in synthetic (twin) experiments and with one year of NO$_3$ and phytoplankton observations from the PEC, followed by a scenario analysis of increased riverine nitrate load, exemplifying how to apply the same framework for a marine region when data is available.

\section*{Implications, Relevance, and Limitations}
This work delivers a reproducible formulation of parameter calibration for marine biogeochemical tracers as a weighted least-squares problem, as well as a demonstration on how to use a modern derivative-free optimizer for calibrating a conceptual estuarine model using both synthetic and real data, and a small, documented pipeline -- from problem specification to optimization runs and diagnostics -- that can be adapted to other systems.

Within the Brazilian context, the study is, to our knowledge, among the first to combine data-constrained calibration of an estuarine biogeochemical model with state-of-the-art derivative-free optimization, with several possibilities for future development. Even though a conceptual model was used, the exercise opens a path for more ambitious national applications as observational datasets and computational resources expand.

The work also clarifies where the approach falls short. The biogeochemical state can be reproduced reasonably well even when individual parameters are not precisely recovered (equifinality), calling for performance metrics and uncertainty analyses that prioritize state trajectories. Temporal and spatial data sparsity -- and measurement noise -- limit identifiability; simple smoothing helped, but more explicit noise modeling (for example, within DFO-LS) remains to be explored. By design, the conceptual model does not capture all seasonal and spatial heterogeneity of the PEC present in the bibliography, which caps attainable fit and suggests structural extensions (for example, seasonal terms in $V_{\text{max}}$ and $\lambda$). Finally, time/CPU constraints limited restarts and methodological variants, which matter in nonconvex settings.
Practically, this study lays groundwork for conceptual modeling studies that seek to integrate local data, marine biogeochemical formulations, and modern optimization in a reproducible way.

\section*{Future avenues for this work}
Although this thesis focused on experimenting with conceptual models, the methodology developed here can be applied, with minor adaptations, to complex, multidimensional settings. Below we exemplify two possible avenues for further work.

 A natural next step is to extend the basic framework developed here to a medium-complexity model of the PEC by coupling a (potentially more complex) biogeochemical module to a circulation model that better represents the region’s geography, introducing seasonal structure in $V_{\text{max}}$ and $\lambda$ (harmonics or splines), add additional tracers when available (PO$_4$, O$_2$), and exploring the optimization options available, such as noise-aware DFO-LS with multi-start restarts and CMA-ES. Scenario analysis can be expanded (for example, joint changes in riverine NO$_3$ loads and freshwater discharge) with uncertainty quantification via ensembles of calibrated parameter sets. The anticipated availability of updated data for the PEC in the near future will enable insights into contemporary environmental issues and the impacts of human activities.
 In fact, historical reconstructions show that human interference on PEC has profoundly altered biogeochemical functioning over the last decades \cite{marines2023}. Together with evidence of eutrophication and shifts in organic matter sources \cite{Martins2010,Martins2015}, this scenario underscores the need for predictive tools that can explore the effects of various stressors. Incorporating new observational datasets into the current modeling framework is a future aim that will enable more robust scenario testing, such as assessing thresholds for harmful algal blooms \cite{Brandini1985, Procopiak} or evaluating strategies for mitigating anthropogenic nutrient loads. The combination of updated empirical data with model-based experiments represents a promising approach to enhancing environmental management and informing policy interventions in estuarine environments.

\cleardoublepage
\thispagestyle{empty}
\centerline{\Large\bfseries BIBLIOGRAPHY}

\vspace{1cm}

\phantomsection
\addcontentsline{toc}{chapter}{BIBLIOGRAPHY}

\printbibliography[heading=none]


\cleardoublepage
\thispagestyle{empty}

\begin{center}
\begin{minipage}{0.9\textwidth}
\centering
\Large\bfseries
APPENDIX -- Supplementary Materials for Computational Implementation and Experiments
\end{minipage}
\end{center}
\vspace{1cm}

\phantomsection
\addcontentsline{toc}{chapter}{APPENDIX -- Supplementary Materials for Computational Implementation and Experiments}

This appendix complements the references made throughout the thesis, gathering all supplementary materials related to the computational experiments conducted here. It includes: (i) the organization of the source code, (ii) implementation details of the mathematical models, (iii) instructions for executing the experiments, (iv) the DFO-LS optimization method used for parameter calibration, and (v) practical examples of applying the DFO-LS method to the models.

\section*{A.1 Repository structure and code organization}
To ensure reproducibility and transparency, all source code used in this thesis is publicly available. The repositories are logically organized as follows:

\begin{itemize}
    \item \textbf{/Optimization/} – Functions associated with the DFO-LS method presented in \textit{Chapter \ref{chapterdfols}}.
    \item \textbf{/Experiments/} – Scripts used in the computational experiments discussed in \textit{Chapter \ref{chapterframework}}.
    \item \textbf{/PECmodel/} – Implementation of the mathematical models described in \textit{Chapter \ref{chapterModelref}}, including differential equations and auxiliary functions, files related to the PEC model introduced in \textit{Chapter \ref{chapterPEC}}, and calibration experiments for the estuary model used in \textit{Chapter \ref{chapterCalibratingPEC}}.
\end{itemize}

All repositories are available at:

\begin{center}
\url{https://github.com/leticiabecher/ThesisExperiments}
\end{center}

\section*{A.2 Computational environment and execution procedures}
The experiments described in chapters \ref{chapterModelref} and \ref{chapterframework} were conducted under the following computational environment:

\begin{itemize}
    \item \textbf{Python}: version 3.10 or newer.
    \item \textbf{Libraries}: NumPy, SciPy, Matplotlib, Pandas.
    \item \textbf{Operating system}: Linux or Windows with Python support.
\end{itemize}

To run any experiment, clone the repository and execute:

\begin{verbatim}
python experiment_name.py
\end{verbatim}

Each folder includes further explanations inside its respective \texttt{README.md} file.

\section*{A.3 Implementation of the mathematical models}
The mathematical models described in the main chapters were fully implemented in Python. The differential equations from \textit{Chapter \ref{chapterModelref}} were solved using SciPy’s \texttt{solve\_ivp} integrator.

Example implementation snippet:

\begin{verbatim}
def model_equations(t, y, params):
    k1, k2, vmax = params
    dydt = [
        k1 * y[0] - k2 * y[1],
        vmax * y[1] / (0.1 + y[1])
    ]
    return dydt
\end{verbatim}

\section*{A.4 The DFO-LS optimization method}
The DFO-LS method described in \textit{Chapter \ref{chapterdfols}} is a derivative-free optimization algorithm for least-squares problems. The adapted pseudocode is shown in Algorithm~\ref{algDFOLS}.

\begin{algorithm}[H]
\caption{DFO-LS Method adapted from \cite{DFOLSmops2022}}
\label{algDFOLS}
\begin{algorithmic}[1]
\STATE \textbf{Input:} number of parameters $n$, initial point $\theta_0$, initial trust region radius $\Delta_0$, parameter bounds, and maximum number of function evaluations.
\STATE Construct the initial interpolation set $Y_0$ by evaluating the misfit function at $n+1$ points.
\FOR{each iteration}
    \STATE Build a local quadratic model of the misfit.
    \STATE Solve the trust-region subproblem.
    \STATE Evaluate the candidate point.
    \STATE Update the interpolation set.
    \STATE Adjust the trust-region radius.
\ENDFOR
\STATE \textbf{Output:} estimated calibrated parameters.
\end{algorithmic}
\end{algorithm}

Examples and explanations of the DFO-LS Python solver usage, including basic and advanced settings, can be found in \cite{roberts2025dfo-ls}.

\section*{A.5 Calibration framework tests}
In \textit{Chapter \ref{chapterframework}}, several calibration examples are presented. 
The complete scripts are available at:
\begin{center}
\url{https://github.com/leticiabecher/ThesisExperiments/tree/main/Framework_testing}
\end{center}

\section*{A.6 PEC model experiments}
The PEC model introduced in \textit{Chapter \ref{chapterPEC}} was calibrated using the scripts contained in the PEC directory. The materials include:

\begin{itemize}
    \item differential equations of the PEC model;
    \item parameter sets used in calibration experiments;
    \item optimization scripts;
    \item result validation codes and visualizations.
\end{itemize}

Full implementations are located at:

\begin{center}
\url{https://github.com/leticiabecher/ThesisExperiments/tree/main/PEC_experiments}
\end{center}

\section*{A.7 Estuary calibration experiments}
The experiments described in \textit{Chapter \ref{chapterCalibratingPEC}} use real environmental data and apply the DFO-LS method for parameter estimation.

The supplementary files include:

\begin{itemize}
    \item data preprocessing routines;
    \item the estuary dynamical model;
    \item the misfit function for calibration;
    \item experiment execution scripts and visualization tools.
\end{itemize}

Complete code available at:

\begin{center}
\url{https://github.com/leticiabecher/ThesisExperiments/tree/main/PEC_experiments}
\end{center}

\bigskip
This appendix supports and documents all computational methods and experiments performed in the thesis, ensuring full reproducibility of the presented results.

\end{document}